\documentclass[11pt]{article}
\usepackage[a4paper,margin=1in]{geometry}
\usepackage[T1]{fontenc}
\usepackage{lmodern,microtype}
\usepackage{amsmath,amssymb,amsthm,mathtools}
\usepackage{booktabs,array,enumitem,needspace}
\usepackage{xcolor}
\usepackage[round,authoryear]{natbib}
\usepackage{xurl}
\usepackage{hyperref}
\usepackage[capitalise,noabbrev]{cleveref}
\hypersetup{colorlinks=true,linkcolor=blue!40!black,
  citecolor=blue!40!black,urlcolor=blue!40!black,
  pdftitle={Quantum Submodular Maximization},
  pdfauthor={Yonggang Jiang; Xiaoming Sun; Penghui Yao; Zekun Ye; Jialin Zhang; Zhijie Zhang}}
\allowdisplaybreaks
\numberwithin{equation}{section}
\setlist[enumerate]{leftmargin=2em,itemsep=0.3em,topsep=0.4em}
\setlist[itemize]{leftmargin=2em,itemsep=0.3em,topsep=0.4em}
\newtheorem{theorem}{Theorem}[section]
\newtheorem{lemma}[theorem]{Lemma}
\newtheorem{corollary}[theorem]{Corollary}
\theoremstyle{definition}

\newcounter{algorithm}
\renewcommand{\thealgorithm}{\arabic{algorithm}}
\newsavebox{\algorithmframe}
\newenvironment{pseudocode}[1]{%
  \par\addvspace{0.6\baselineskip}%
  \begin{lrbox}{\algorithmframe}%
  \begin{minipage}{\linewidth}%
  \small\setlength{\parindent}{0pt}%
  \refstepcounter{algorithm}\hrule\smallskip
  \textbf{Algorithm \thealgorithm.\ #1}\par\smallskip
}{%
  \smallskip\hrule\end{minipage}\end{lrbox}%
  \Needspace{\dimexpr\ht\algorithmframe+\dp\algorithmframe+\baselineskip\relax}%
  \noindent\usebox{\algorithmframe}\par\addvspace{0.6\baselineskip}}
\newcommand{\Comment}[1]{\quad{\footnotesize\(\triangleright\) \textit{#1}}}
\newcommand{\E}{\mathbb E}
\newcommand{\Prb}{\mathbb P}
\newcommand{\R}{\mathbb R}
\newcommand{\OPT}{\mathrm{OPT}}
\newcommand{\one}{\mathbf1}
\newcommand{\cF}{\mathcal F}
\newcommand{\enc}{\operatorname{enc}}
\newcommand{\clip}{\operatorname{clip}}
\newcommand{\Tr}{\operatorname{Tr}}
\newcommand{\ii}{\mathrm i}
\newcommand{\GEst}{\mathsf{Gradient}}
\newcommand{\QDG}{\mathsf{DoubleGreedy}}
\newcommand{\QUSM}{\mathsf{QUSM}}
\newcommand{\Resp}{\mathsf R}
\newcommand{\Repair}{\mathsf{Repair}}

\newcommand{\eps}{\varepsilon}

\title{\bfseries Quantum Submodular Maximization}

\author{
 Yonggang Jiang\thanks{MPI-INF, D-66123 Saarbr\"ucken, Germany. Email: yjiang@mpi-inf.mpg.de}
\and
Xiaoming Sun
\thanks{Institute of Computing Technology, Chinese Academy of Sciences, Beijing 100190, China. Email: sunxiaoming@ict.ac.cn}
\and
Penghui Yao\thanks{State Key Laboratory for Novel Software Technology, Nanjing University, Nanjing 210023,
China;
Hefei National Laboratory, Hefei 230088, China. Email: phyao1985@gmail.com}
\and
Zekun Ye
\thanks{College of Computer and Data Science, Fuzhou University, Fuzhou 350108, China. Email: yezekun@fzu.edu.cn}
\and
Jialin Zhang
\thanks{Institute of Computing Technology, Chinese Academy of Sciences, Beijing 100190, China. Email: zhangjialin@ict.ac.cn}
\and
Zhijie Zhang
\thanks{School of Mathematics and Statistics, Fuzhou University, Fuzhou 350108, China.
Email: zzhang@fzu.edu.cn
}}
\date{}
\begin{document}
\maketitle
\begin{abstract}
We study the quantum query complexity of maximizing a non-negative
submodular function, considering both the unconstrained setting and,
for monotone functions, a cardinality constraint \(k\) on an
\(n\)-element ground set. In the exact reversible digital value-oracle
model, our unconstrained algorithm achieves an expected
\((1/2-\varepsilon)\)-approximation using only
\(O_\varepsilon(\log n)\) queries. In contrast, any classical randomized
algorithm that attains a fixed expected ratio above \(1/4\) requires
\(\Omega(n/\log n)\) queries \citep*{LiEtAl2022}, establishing an
exponential separation in query complexity. 
For cardinality-constrained maximization, we give a bounded-error quantum algorithm that achieves a $(1-1/e-\varepsilon)$-approximation using $\widetilde O_\varepsilon(\min\{\sqrt n,n/k\})$ queries. When $k=o(n)$, our algorithm achieves at least a quadratic speedup up to logarithmic factors over classical randomized algorithms \citep*{MirzasoleimanEtAl2015, PengRubinstein2025}. Moreover, when $k=cn$ for any fixed rational $c<1-1/e-\varepsilon$, the query complexity reduces to $O_{\varepsilon,c}(\log n)$, yielding an exponential separation from the classical $\Omega(n/\log n)$ lower bound \citep*{LiEtAl2022}.
%For the cardinality constraint, our improved algorithm returns a \((1-1/e-\varepsilon)\)-approximation with high probability using \(\widetilde O_\varepsilon(\min\{\sqrt n,n/k\})\) queries, yielding a quadratic speedup over the classical Stochastic Greedy algorithm \citep*{MirzasoleimanEtAl2015}, which uses \(O(n\log(1/\varepsilon))\)queries. 
We further prove quantum lower bounds of
\(\exp(\Omega(\varepsilon^2n))\) queries for achieving a ratio beyond
\(1/2+\varepsilon\) without constraints, and
\(\exp(\Omega(\varepsilon^2k))\) queries for exceeding
\(1-1/e+\varepsilon\) when \(k/n\le\varepsilon\).
These barriers demonstrate that quantum computation offers no
exponential speedup at these approximation thresholds.

The algorithms accelerate marginal evaluation by encoding many
random-context marginal gains into the local coherences of a single
constant-query phase experiment, allowing all coordinates to be
sampled together. Fourier reconstruction supplies approximate
gradients for unconstrained double greedy. Under a cardinality bound,
simultaneous marginal estimates guide pool filtering, while amplitude
amplification finds valuable batches; charging each search to the
gain it produces avoids paying separately for every greedy round.
\end{abstract}
\clearpage
\setcounter{tocdepth}{3}
\begingroup
\small
\tableofcontents
\endgroup
\clearpage

\section{Introduction}
Submodular maximization is a general model for selecting a collection
of useful objects whose contributions overlap. Its defining property
is diminishing returns: the benefit of adding an object is no larger
when more objects have already been selected. For example, a sensor
may be informative when few nearby locations are observed but
redundant after those locations have been instrumented. A useful
optimization algorithm must therefore account for interactions among
the selected objects, rather than simply rank their individual values.

This structure appears in influence maximization, where one chooses
seed users to maximize the expected spread of information in a
network \citep*{KempeKleinbergTardos2003}; in sensor placement, where
one chooses observations to reduce uncertainty
\citep*{KrauseSinghGuestrin2008}; and in document summarization,
where one selects representative sentences while controlling
redundancy \citep*{LinBilmes2011}. Coverage and graph-cut objectives
are basic combinatorial examples. An objective is \emph{monotone}
when every additional object can only help. In other applications
the objective is non-monotone: additional selections can have a
cost or create undesirable interactions. A cut function, for instance,
is submodular but need not increase when a vertex is added to one side.

We consider two fundamental problems on a ground set \(N\) with
\(n\) elements. In unconstrained maximization, the objective is a
nonnegative submodular function, and any subset is feasible. In
cardinality-constrained maximization, the function is also monotone,
and the output must contain at most \(k\) elements. For a set function
\(f\), denote the respective optimal values by
\[
 \OPT_{\mathrm u}(f)=\max_{S\subseteq N}f(S),\qquad
 \OPT_k(f)=\max_{\substack{S\subseteq N\\|S|\le k}}f(S).
\]
We abbreviate \(\OPT_{\mathrm u}(f)\) to \(\OPT_{\mathrm u}\) in the
unconstrained algorithm. When the objective is available only through
function-value queries, two questions arise: what approximation ratio
can be achieved, and how many queries are needed to achieve it?
These questions can have very different answers.

\paragraph{Classical approximation thresholds.}
For monotone cardinality-constrained maximization,
\citet*{NemhauserWolseyFisher1978} proved that greedy selection
achieves \(1-(1-1/k)^k\ge1-1/e\). The factor \(1-1/e\) is optimal
for polynomial-query algorithms in the general classical
value-oracle model \citep*{NemhauserWolsey1978}. For explicitly
represented instances, maximum coverage already excludes a
polynomial-time improvement by a fixed constant unless \(P=NP\)
\citep*{Feige1998}. These statements concern worst-case general
instances; they do not imply hardness for every individual budget,
such as the trivial monotone case \(k=n\).

The unconstrained problem has a different threshold. A uniformly
random subset achieves expected ratio \(1/4\), without making any
queries. \citet*{FeigeMirrokniVondrak2011}, whose conference version
appeared in 2007, obtained ratio \(1/3\) by deterministic local
search and \(2/5\) by randomized local search. They also proved an
exponential classical oracle barrier above \(1/2\). Subsequent
approximation algorithms reached \(0.41\) through simulated annealing
\citep*{OveisGharanVondrak2011} and \(0.42\) through structural
continuous greedy \citep*{FeldmanNaorSchwartz2011Structural}.
Randomized double greedy finally attained the optimal expected
ratio \(1/2\) \citep*{BuchbinderEtAl2015}.

\paragraph{Reducing classical query complexity.}
A direct implementation of cardinality greedy evaluates the remaining
elements in each of \(k\) rounds and uses \(O(nk)\) value queries.
Lazy greedy avoids many evaluations in practice
\citep*{Minoux1978}, but retains this general worst-case bound.
Decreasing-threshold greedy achieves \(1-1/e-\varepsilon\) using
\(O((n/\varepsilon)\log(n/\varepsilon))\) queries
\citep*{BadanidiyuruVondrak2014}. Stochastic Greedy further reduces
the count to \(O(n\log(1/\varepsilon))\), with the approximation
guarantee holding in expectation \citep*{MirzasoleimanEtAl2015}.
Clean linear-time algorithms give a deterministic
\(1-1/e-\varepsilon\) guarantee with \(O(n/\varepsilon)\) queries
\citep*{LiEtAl2022}. Thus the fastest established general classical
algorithms have \(O_\varepsilon(n)\) query complexity for fixed
accuracy, although their confidence and accuracy dependence differs.

For unconstrained maximization, randomized double greedy already uses
only \(O(n)\) queries \citep*{BuchbinderEtAl2015}. Simultaneous
fractional double greedy also gives \(1/2-\varepsilon\) with
\(O_\varepsilon(n)\) queries and a number of adaptive rounds
depending only on \(\varepsilon\)
\citep*{ChenFeldmanKarbasi2019}. On the lower-bound side,
\citet*[Theorem~4.3]{LiEtAl2022} show that any fixed expected ratio
above \(1/4\) requires \(\Omega(n/\log n)\) classical randomized
queries. Their cardinality lower bounds require \(\Omega(n)\)
queries for constant \(k\); for \(k=cn\) with fixed rational
\(c\), improving on the expected ratio \(c\) of a uniformly
random feasible set by a fixed amount requires
\(\Omega(n/\log n)\) queries. More recently, \citet*{PengRubinstein2025} prove an
\(\Omega(n/\log^2 n)\) query lower bound for every fixed constant
approximation ratio when \(k=o(n)\), even for additive objectives.
They also prove a near-linear lower bound for estimating the optimum
of a monotone submodular function to within a fixed factor in this
regime. Thus near-linear classical query complexity remains necessary
well beyond constant budgets. The distinction between these
parameter regimes will be important in our quantum comparison.

\paragraph{From classical to quantum value queries.}
Quantum computation permits reversible queries to coherent
superpositions of inputs. It can use interference to combine
information about many query addresses before measurement.
Grover search reduces an unstructured search over \(n\) candidates
from linear to \(O(\sqrt n)\) queries \citep*{Grover1996}, and
amplitude amplification extends this improvement to general
randomized sources \citep*{BHMT}. These operations suggest faster
submodular algorithms, but simply applying quantum search in every
greedy round would still pay separately for many rounds.
Moreover, one measurement of a superposition does not return all
its function values.

Our approach uses the value oracle to prepare a phase state.
Single-qubit measurements then sample marginal information for all
elements together. The main task is to turn those joint samples
into reliable optimization decisions, without assuming that the
multilinear extension or its gradient is directly queryable.
\citet*{LSZ} explicitly posed the quantum query complexity of
submodular maximization as an open question.
Quantum submodular minimization had already been studied by
\citet*{HRRS}, and \citet*{QuantumDR2026} give quantum algorithms
for continuous DR-submodular maximization with direct access to a
smooth continuous objective. The latter oracle is different from
the reversible set-value oracle studied here. To our knowledge,
this work gives the first quantum value-oracle approximation
algorithms for the two general discrete maximization problems
considered in this paper.

\subsection{Our results}
All theorems use an exact reversible digital value oracle. A query
can address any subset; only the returned solution must obey the
cardinality constraint. We count oracle calls, including calls
inside coherent computations and their inverses. The model is
specified in Section~\ref{sec:prelim}; the query bounds do not
also bound gate complexity, classical processing, or output length.

\begin{theorem}\label{usm:thm:main-upper}
Let \(f:2^N\to\R_{\ge0}\) be submodular. For every
\(0<\varepsilon<1/10\), a quantum algorithm returns \(S\subseteq N\)
with
\[
 \E f(S)\ge(1/2-\varepsilon)\OPT_{\mathrm u},
\]
using at most
\[
 O\!\left(\varepsilon^{-4}\log\frac{2n}{\varepsilon}
                              \log\frac2\varepsilon\right)
\]
queries on every execution path.
\end{theorem}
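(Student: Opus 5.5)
The plan is to quantize the \emph{simultaneous fractional double greedy} of \citet*{ChenFeldmanKarbasi2019}. That algorithm runs on the multilinear extension \(F\) with two fractional solutions \(x\uparrow\) and \(y\downarrow\). It needs only \(O_\varepsilon(1)\) adaptive rounds, and in each round it uses the partial derivatives \(a_i=\partial_i F(x)\) and \(b_i=-\partial_i F(y)\) for all \(i\) at once. If every coordinate's derivative is known to additive accuracy \(\delta\,\OPT_{\mathrm u}\) with \(\delta=\mathrm{poly}(\varepsilon)\), the robust version of that analysis still gives \(\E F\ge(1/2-\varepsilon)\OPT_{\mathrm u}\). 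Pipage or independent rounding then turns the fractional point into a set \(S\) with the same expectation, at no query cost because \(F\) is multilinear. The real work is therefore a quantum subroutine \(\GEst\) that, with \(O(\varepsilon^{-2}\log(2n/\varepsilon))\) queries, outputs estimates of \emph{all} \(n\) coordinates of \(\nabla F\) that are simultaneously \(\delta\)-accurate with probability \(1-\varepsilon/10\). With \(O(\varepsilon^{-2}\log(2/\varepsilon))\) rounds this gives exactly the stated bound \(O(\varepsilon^{-4}\log\frac{2n}{\varepsilon}\log\frac2\varepsilon)\). The budget is the same on every path, so the query count is worst case and any failure only costs expectation.

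The subroutine \(\GEst\) is built as follows. Draw a random context \(R\sim x\) and rescale \(f\) so its values lie in \([0,1]\), using a crude \(O(1)\)-query estimate of \(\OPT_{\mathrm u}\) as the normalizer. Prepare the uniform superposition \(2^{-n/2}\sum_{z\in\{0,1\}^n}|z\rangle\), where \(z\) indexes \(R\triangle\) (the set of coordinates being toggled). Query \(f\) digitally, apply the phase \(e^{\ii\theta f(\cdot)}\) for a small \(\theta\), and uncompute; this costs two queries. Measuring every qubit in the \(X\)/\(Y\) basis then gives a local coherence \(\langle 0|\rho_i|1\rangle\) on qubit \(i\) whose deviation from \(1/2\) is, to first order in \(\theta\), proportional to \(\theta\,\E_{z_{-i}}[f(z_{-i},1)-f(z_{-i},0)]\). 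This quantity is the marginal of \(i\) averaged over a random context. The context distribution is then made to match the multilinear point by biasing the amplitudes, preparing \(\sqrt{1-p_i}|0\rangle+\sqrt{p_i}|1\rangle\) and adjusting the basis. Each shot gives a \(\pm1\) sample for every coordinate at once, with bias \(\Theta(\theta)\partial_iF\).

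To remove the \(O(\theta^2)\) nonlinearity, I will use Fourier reconstruction: run the experiment at several phases \(\theta_j\) and take a linear combination that cancels the higher harmonics. Then \(O(\delta^{-2}\log(n/\varepsilon))\) shots, combined by a Hoeffding bound and a union bound over the \(n\) coordinates, give uniform \(\delta\)-accuracy. Choosing \(\delta\approx\varepsilon\) fixes the \(\varepsilon^{-2}\) factor.

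I expect the main obstacle to be this estimator. It must be unbiased enough, because the bias must be \(o(\delta)\) uniformly over all \(i\) while the per-shot signal scales with \(\theta\). I would first handle it by fixing \(\theta\) to a constant, say \(\pi/2\), and reading \(\partial_iF\) exactly from the Fourier coefficient \(\E[e^{\ii\pi(f(R+i)-f(R))}]\)-type quantities. Where that fails to be linear, I would instead randomize \(\theta\) uniformly so that the first Fourier mode equals the marginal exactly. The secondary check is that the double-greedy analysis tolerates uniform additive gradient error with loss \(O(\delta\cdot\#\text{rounds})\).
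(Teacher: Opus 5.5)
\textbf{The error tolerance you assume for double greedy is false, and this removes the $O(\log n)$ bound.} Simultaneous double greedy does not tolerate a uniform additive gradient error $\delta\,\OPT_{\mathrm u}$ per coordinate with a loss independent of $n$. Its per-step inequality is a sum of coordinatewise local inequalities. Hence the error enters through $\sum_i(|\widehat a_i-a_i|+|\widehat b_i-b_i|)$; compare Lemmas~\ref{usm:lem:repair-local} and~\ref{usm:lem:finite-step}. A uniform per-coordinate bound $\delta\OPT_{\mathrm u}$ controls this only by $2n\delta\OPT_{\mathrm u}$.

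This is not an artifact of the proof. Take the modular function $f(S)=c|S\cap A|+c|B\setminus S|$ with $N=A\mathbin{\dot\cup}B$ and $|A|=|B|=n/2$. Then $\OPT_{\mathrm u}=cn$ and $\partial_iF\equiv\pm c$. Estimates with every sign flipped are within $2\OPT_{\mathrm u}/n\le\delta\OPT_{\mathrm u}$ of the truth once $n\ge2/\delta$. Yet they drive the process from $(0,\one)$ to $\one_B$, which has value $0$. Your scheme therefore needs $\delta=O(\varepsilon/n)$. A Hoeffding-plus-union-bound estimator at that accuracy costs $\Omega(n^2/\varepsilon^2)$ shots per gradient call, not $O(\varepsilon^{-2}\log n)$.

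\textbf{What is missing is an estimator with summable errors,} where each coordinate's error scales with its \emph{own} marginal size. The paper obtains $\E|\widehat\mu_i-\partial_iF(q)|\le\tau_0+C\gamma\alpha_i$, with $n\tau_0=\beta B/16$ and $\alpha_i=L_i^++(-M_i)^+$. It does this in three steps:
\begin{enumerate}
\item It splits each contextual marginal into triangular layers at dyadic scales from $\tau_0\approx\beta B/n$ up to $B$. This count of scales, not a union bound, is the source of the $\log(2n/\varepsilon)$ factor.
\item It samples each layer through Fourier coefficients of bounded $\ell_1$ norm.
\item It multiplies each layer estimate by a screen with exactly zero response when $G_i\equiv0$. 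This stops rare false positives at large scales from accumulating over the many inactive coordinates.
\end{enumerate}
The paper then needs $\sum_i\alpha_i\le\Phi(X,Y)+f(X)+f(Y)$ (Lemma~\ref{usm:lem:endpoint-cap}). Since $\Phi$ can be of order $n\OPT_{\mathrm u}$, it draws a random bracket. It also keeps the terms $\tfrac14(F(x^0)+F(y^0))$ in the telescoped inequality to pay for $\E\Phi$ (Lemmas~\ref{usm:lem:telescope} and~\ref{usm:lem:bracket}).

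Two smaller points also need repair. First, at the Chen--Feldman--Karbasi starting points $x=0$, $y=\one$, the product state is a basis state. The coherence $\sqrt{q_i(1-q_i)}\,\varphi_i(\lambda)$ then vanishes, so your measurements carry no signal. The paper starts at $\rho\one$ and $(1-\rho)\one$ to keep $\kappa\ge\Omega(\varepsilon)$, and pays for this in Lemma~\ref{usm:lem:inset}. Second, randomizing the phase angle does not make any single Fourier mode linear in the marginal. Your de-biasing step still needs an explicit reconstruction with bounded coefficient norm.
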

The algorithm combines the simultaneous double-greedy framework of
\citet*{ChenFeldmanKarbasi2019} with a noisy finite-step analysis
and a quantum gradient estimator. The estimator's query count
depends logarithmically on \(n\), while its expected sum of
coordinate errors is controlled by endpoint marginals.
For fixed \(\varepsilon\), the resulting \(O_\varepsilon(\log n)\)
bound gives an exponential separation from the classical
\(\Omega(n/\log n)\) lower bound above ratio \(1/4\) \citep*{LiEtAl2022}.

\begin{theorem}\label{card:thm:upper}
Let \(f:2^N\to\R_{\ge0}\) be monotone and submodular, and let
\(1\le k\le n\). For \(0<\varepsilon<1/4\) and \(0<\delta<1/2\),
a quantum algorithm returns a feasible set satisfying
\[
 f(S)\ge(1-1/e-\varepsilon)\OPT_k(f)
\]
with probability at least \(1-\delta\), using at most
\[
 \widetilde O\!\left(
       \min\{\varepsilon^{-3}\sqrt n,\varepsilon^{-4}n/k\}
       \right)
\]
queries on every execution path. The soft notation hides only
logarithmic factors in \(n,k,1/\varepsilon,1/\delta\).
When \(k=n\), returning \(N\) uses no queries.
\end{theorem}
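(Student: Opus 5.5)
The plan is to prove the two regimes of the minimum separately and then run whichever has the smaller bound. Both regimes share a \emph{threshold-greedy skeleton}. We keep a current set \(S\) and a threshold \(\tau\), initialized near \(\max_e f(\{e\})\), which we estimate with \(O(\sqrt n)\) queries by quantum maximum finding \citep*{Grover1996,BHMT}. We repeatedly add elements, or small batches, whose marginal gain is at least roughly \(\tau\), and we lower \(\tau\) by a factor \(1-\varepsilon\) once no such element remains. The standard analysis of decreasing-threshold greedy \citep*{BadanidiyuruVondrak2014} then gives \(f(S)\ge(1-1/e-O(\varepsilon))\OPT_k\) after \(O(\varepsilon^{-1}\log(k/\varepsilon))\) threshold levels. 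This holds provided that each accepted element has true gain at least \((1-O(\varepsilon))\tau\), and that at each level change no remaining element has gain above \(\tau/(1-\varepsilon)\) except with small probability. So the approximation argument is classical; all the work is in the query accounting.

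\textbf{Regime \(\sqrt n\).} Inside a level, one Grover/amplitude-amplification search over the remaining elements finds a single element with \(f(S+e)-f(S)\ge\tau\), or certifies that none exists. The marked predicate costs two oracle calls. Paying \(O(\sqrt n)\) per search would be too costly when \(k\) is large. Instead, we use the fact that if \(m\) marked elements exist, a search finds one in expected \(O(\sqrt{n/m})\) queries. We then \emph{charge each search to the gain it produces}. Since every accepted element contributes at least about \(\tau\), a level with \(m\) good elements is drained cheaply. To keep the per-level cost at \(\widetilde O(\sqrt n)\) overall, we add batches: we sample a random subset of the good elements and accept it if its joint gain is still at least \((1-\varepsilon)\tau\) times its size, using submodularity to bound the loss. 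With \(O(\log)\) batch sizes per level, each level then costs \(\widetilde O(\varepsilon^{-1}\sqrt n)\) for the search plus \(\varepsilon^{-1}\) for the repeated certifications. Together with the \(\varepsilon^{-1}\log\) levels, this gives \(\widetilde O(\varepsilon^{-3}\sqrt n)\). The failure probability \(\delta\) enters only through logarithmic repetitions and a union bound over searches.

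\textbf{Regime \(n/k\).} We follow Stochastic Greedy \citep*{MirzasoleimanEtAl2015}. In each of the \(k\) rounds, a random pool of size \(\frac nk\log\frac1\varepsilon\) contains an element of the optimum with constant probability. Running \(k\) separate pool maximizations would cost \(k\sqrt{n/k}\), which is too much. So we use the simultaneous marginal estimator described in the abstract: one constant-query phase experiment, repeated \(\widetilde O(\varepsilon^{-2})\) times, gives unbiased random-context estimates of all marginals simultaneously. This lets us filter the whole pool of size \(\widetilde O(n/k)\) down to candidates with gain \(\gtrsim\tau\). Amplitude amplification over batches of size about \(k\varepsilon\) then adds many elements per search, so that only \(\widetilde O(\varepsilon^{-1})\) batch additions occur per level. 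The total cost should come to \(\widetilde O(\varepsilon^{-4}n/k)\).

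The \textbf{main obstacle} I expect is the batch-acceptance step. Marginal estimates are taken in a random context, not at the current \(S\), so the estimated gains of a batch can overstate its true joint gain. My first attempt at handling this is a repair rule. After adding a batch, we verify its realized total gain \(f(S\cup B)-f(S)\) exactly with two queries. If this gain is below \((1-\varepsilon)\tau|B|\), we discard the batch and halve its size. A potential argument should then bound the number of discards by \(O(\log k)\) per level. Finally, the case \(k=n\) is immediate: returning \(N\) is optimal by monotonicity and uses no queries.
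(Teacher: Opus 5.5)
\textbf{There are genuine gaps.} You have the right slogan (charge each search to the gain it returns), but the mechanism that makes batches cheap is missing, and neither regime argument goes through as written.

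\textbf{The $\sqrt n$ regime.} A search for one $e$ with $f(S+e)-f(S)\ge\tau$ uses the oracle only as a membership test. On a modular instance with $k$ unit-weight elements, each success has normalized gain $\Theta(1/k)$, but a uniform element is marked with probability $O(k/n)$. So even with gain charging you pay about $\sqrt{n/k}$ per element and $\sqrt{nk}$ in total. This is exactly the paper's singleton source, which it uses only for $k<64/\nu$. Your batch step cannot fix this. ``Sampling a random subset of the good elements'' requires finding them individually at the same per-element cost. Conversely, a uniformly random $r$-subset of $N$ passes a density test only with probability exponentially small in $r$ when good elements are rare.

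\textbf{The missing idea is the certified batch source.} Draw a uniform pool $R$ of size $b=\Theta(\nu k)$ from all of $N$. Estimate every score $s_e(R)=\E_\pi\min\{f_S(e\mid R^\pi_{<e})/\tau,1\}$ jointly with $O(\nu^{-2}\log k)$ queries, keep the high scorers, verify the batch exactly, and amplify the whole source. The scores are clipped marginals along a uniformly random order \emph{inside the pool}, so Lemma~\ref{card:lemma:density} gives $f_S(A)\ge\tau\sum_{e\in A}s_e(R)$ for every $A\subseteq R$. Hence the scores never overstate a batch, and the obstacle you anticipate is removed by design. Mixing the product-state angle, via $\sin(2\theta)\,d\theta=dp$, is what turns Bernoulli contexts into this predecessor law. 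Your halving rule, with its unproved potential argument, is therefore neither needed nor justified.

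The comparison-credit argument, with the whole-pool fallback absorbing overlap losses, then gives $\E V\ge\mu=\Theta(\nu^2k/n)$. Lemma~\ref{card:lemma:amortized} charges $d_0/\sqrt{\mu v}$, with $v=\max\{u,\mu/2\}$, per unit of normalized gain, and the total normalized gain over a trial is below $e$. Both terms of the minimum come out of this one quantity ($\mu u=\Theta(\nu^2/n)$ versus $\mu$), not from two separate algorithms.

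\textbf{The $n/k$ regime fails for a related reason.} A Stochastic-Greedy pool of size $(n/k)\log(1/\varepsilon)$ contains only about $\log(1/\varepsilon)$ optimal elements, so filtering it cannot produce batches of size $\varepsilon k$. On the modular instance you would need order $k$ pools at $\widetilde O(\varepsilon^{-2})$ queries each. That exceeds $n/k$ exactly when $k>\sqrt n/\varepsilon$, which is the range where $n/k$ is the minimum.

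The threshold skeleton is also too expensive there. Maximum finding for the initial threshold costs order $\sqrt n$. Certifying at each level change that no element exceeds $\tau/(1-\varepsilon)$ requires your Grover search over all of $N$. The paper has no threshold levels:
\begin{itemize}
 \item One query to $g(N)$ and the partition bound of Lemma~\ref{card:lemma:target-interval} give a geometric grid of targets $M$.
 \item For a fixed target it uses the truncated residual $f_S$, the gap $\Delta$, and $\tau=(1-\nu)\Delta/k$.
 \item For a correct target the comparison set guarantees $\E V\ge\mu$, so the absence of good elements never has to be certified.
 \item Each accepted batch multiplies $\Delta$ by at most $1-(1-\nu)^2|B|/k$.
\end{itemize}

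\textbf{Worst-case query cap.} A query cap on every execution path needs a public cutoff on each trial (Markov on its expected cost), exact acceptance certificates $g(S)\ge\Gamma M$, and bisection over targets with repeated trials. A union bound over expected-cost searches does not provide this.
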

The algorithm jointly estimates random-order marginal scores in a
pool, filters that pool into an exactly certified batch, and uses
amplitude amplification to find a valuable batch. Its analysis charges
search cost to the value obtained, rather than multiplying a
worst-case search cost by the number of updates. For fixed
accuracy and confidence, this gives a nearly quadratic improvement
over the near-linear classical query lower bound for \(k=o(n)\)
\citep*{PengRubinstein2025}, and matches the
\(\sqrt n\) dependence required for constant \(k\), up to
logarithmic factors.

The density-dependent term gives a stronger separation in another
regime. If \(k=cn\) for fixed rational
\(0<c<1-1/e-\varepsilon\), the explicit bound in
\eqref{card:eq:explicitcost} is \(O_{\varepsilon,c}(\log n)\)
at constant confidence. The classical bound is
\(\Omega_{\varepsilon,c}(n/\log n)\) \citep*{LiEtAl2022}. To compare the quantum
success-probability guarantee with the classical expected-ratio
bound, repeat a constant number of runs, query their output values,
and select the best feasible output. Its expected ratio can be
made strictly larger than \(c\), without changing either
asymptotic count.

Approaching the classical approximation thresholds from below is
therefore much cheaper with quantum queries. Crossing them remains
exponentially expensive.
\begin{theorem}\label{thm:thresholds}
In the same digital value-oracle model:
\begin{enumerate}[label=(\roman*)]
 \item For \(0<\varepsilon\le1/2\) and even \(n\) with
 \(\varepsilon^2n\) sufficiently large, any quantum algorithm
 guaranteeing ratio \(1/2+\varepsilon\) for every nonnegative
 submodular function with success probability at least \(2/3\)
 requires
 \(\Omega(\exp(\varepsilon^2n/8))\) queries.
 \item For \(0<\varepsilon<1/e\) and \(k/n\le\varepsilon\), any
 quantum algorithm guaranteeing a feasible
 \((1-1/e+\varepsilon)\)-approximation for every normalized
 nonnegative monotone submodular function with success probability
 at least \(2/3\) requires
 \(Q= \Omega(\exp(\varepsilon^2k/16384))\) queries.
\end{enumerate}
\end{theorem}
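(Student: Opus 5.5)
Both parts will be proved by one template: a classical symmetry-gap hard pair of instances, turned into a quantum lower bound through the hybrid argument of Bennett--Bernstein--Brassard--Vazirani. Concretely, we fix a family \(\{f_\pi\}\) of submodular functions indexed by a hidden random partition \(\pi\) (for (i), a balanced split \(N=A\cup B\) with \(|A|=|B|=n/2\); for (ii), a hidden set of size \(k\)). We also fix a symmetric function \(g\) such that \(f_\pi(S)=g(S)\) for every \(S\) outside a "revealing" region \(R_\pi\subseteq 2^N\). The defining property of \(R_\pi\) is that the restriction of \(S\) to the parts is unbalanced beyond a threshold.

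\textbf{Construction.} For (i) we take the Feige--Mirrokni--Vondr\'ak / Vondr\'ak symmetry-gap instance built from the cut function of a complete bipartite (directed) graph between \(A\) and \(B\). We smooth its multilinear form so that \(f_\pi(S)\) depends only on \((|S\cap A|,|S\cap B|)\). The function agrees with the symmetric version \(g\) whenever \(\bigl||S\cap A|-|S\cap B|\bigr|\le\varepsilon n/2\). The optimum of \(f_\pi\) is about \(\OPT\), while every set whose value under \(g\) exceeds \((1/2+\varepsilon)\OPT\) lies in \(R_\pi\).

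For (ii) we use the analogous monotone coverage-type instance of Vondr\'ak / Mirrokni--Schapira--Vondr\'ak. It is a hidden \(k\)-set \(O\) among \(n\) elements, with \(f_O(S)\) depending on \((|S\cap O|,|S\setminus O|)\) and equal to the symmetric \(g\) unless \(|S\cap O|\) exceeds \(|S|k/n\) by a margin \(\Theta(\varepsilon k)\). Since \(k/n\le\varepsilon\), the instance achieves optimum \(1\) at \(O\), while \(g\) is at most \(1-1/e+\varepsilon/2\) on feasible sets. We will verify normalization, nonnegativity, monotonicity and submodularity of these smoothed functions directly from concavity of the profile in the counts. This is routine once the standard construction is fixed.

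\textbf{Quantum step.} Let \(|\psi_t^\pi\rangle\) and \(|\psi_t\rangle\) be the states after \(t\) queries to \(f_\pi\) and to \(g\). By the hybrid argument,
\[
\E_\pi\bigl\||\psi_Q^\pi\rangle-|\psi_Q\rangle\bigr\|\le 2\sum_{t<Q}\sqrt{\E_\pi\textstyle\sum_{S\in R_\pi}|\alpha_{t,S}|^2}\le 2Q\sqrt{p},
\]
where \(\alpha_{t,S}\) is the query amplitude of \(|\psi_t\rangle\) on \(S\) (the oracle register is digital, so the oracles differ only on basis states in \(R_\pi\)). Here \(p=\max_S\Prb_\pi[S\in R_\pi]\). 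The point is that \(|\psi_t\rangle\) does not depend on \(\pi\), so the average over \(\pi\) can be moved inside. For a fixed \(S\) and a uniformly random balanced partition (resp.\ random \(k\)-set), \(|S\cap A|\) is hypergeometric. Hoeffding's bound for sampling without replacement then gives \(p\le 2\exp(-\varepsilon^2 n/8)\) in (i), with the deviation threshold chosen as above. In (ii) the deviation \(|S\cap O|-|S|k/n\ge c\varepsilon k\) for \(|S|\) at least of order \(k\) gives \(p\le\exp(-\Omega(\varepsilon^2 k))\); the constant \(16384\) absorbs the chosen margin \(c\). Sets with \(|S|\ll k\) cannot reach the margin at all.

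\textbf{Conclusion and main obstacle.} If the algorithm succeeds with probability \(2/3\) on every \(f_\pi\), then on a random \(\pi\) its output lies, with probability \(\ge2/3\), in \(R_\pi\) (high-value sets are revealing). Measuring \(|\psi_Q\rangle\) instead yields a \(\pi\)-independent distribution, which hits \(R_\pi\) with probability at most \(p\) on average. Trace distance then forces \(2/3-p\le O(Q\sqrt p)\), so \(Q=\Omega(p^{-1/2})\). In (ii) we also check that the output is feasible. The main obstacle is the construction step: making the functions submodular (and monotone in (ii)) while keeping the "agree with \(g\)" region as large as an \(\varepsilon\)-balanced window and the value gap at a full \(\varepsilon\). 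We will start from the standard smoothing with a quadratic/exponential correction near the diagonal and track constants. We expect the exponent (the square-root loss versus \(\exp(-\varepsilon^2n/4)\)) to give exactly \(\varepsilon^2 n/8\).
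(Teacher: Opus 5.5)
Your template is the paper's: a reference run independent of the hidden instance, the BBBV bound $2Q\sqrt p$, and address-wise Hoeffding for sampling without replacement. Part (i) is essentially the paper's proof once you use the FMV function $|S|(n-|S|)+(|d(S)|-\tau)_+^2$ with $\tau=\lfloor\varepsilon n/2\rfloor$. At agreement addresses its ratio is at most $1/(2-2\varepsilon+\varepsilon^2)\le1/2+2\varepsilon/3$, so your window suffices. Do correct the constant. Since $d(S)=2(|S\cap A|-|S|/2)$ and $A$ consists of $n/2$ draws, Hoeffding gives $p\le2e^{-\varepsilon^2n/4}$. Your stated $p\le2e^{-\varepsilon^2n/8}$ would only yield $e^{\varepsilon^2n/16}$, not the claimed $e^{\varepsilon^2 n/8}$.

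Part (ii) has a genuine gap, and it is more than the construction you flag as the main obstacle. The family you describe is exactly the Feldman--Kuhnle family the paper imports (Lemma~\ref{lower:lb:fk}). It has a single hidden $k$-set, reference $g_{\mathrm c,0}(S)=1-(1-|S|/n)^{n/k}$, and agreement unless $|S\cap O|>(k/n)|S|+dn(1-k/n)$. It is not a Vondr\'ak or Mirrokni--Schapira--Vondr\'ak symmetry-gap instance. The single hidden set is what makes $p$ independent of $n$. A symmetry-gap instance with $n/k$ hidden blocks needs the whole vector of block densities balanced, and the natural union bound then costs a factor $n/k$ that $k/n\le\varepsilon$ does not control. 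Its optimum is also only $\ge1-2dn/k$, not $1$, so the margin trades off against the optimum. The constant $16384$ comes from $d=(k/n)\varepsilon/64$, i.e.\ margin $\varepsilon(1-k/n)k/64$. Treating $O$ as the $k$ draws then gives $p\le e^{-\varepsilon^2k/8192}$ uniformly over all, possibly huge, query sets, with no case split on $|S|$.

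More seriously, the theorem is stated in the digital model, where an admissible instance must have finite exact rational values. These functions generally do not; for example, $(1-|S|/n)^{n/k}$ is irrational for non-integer $n/k$. Your step ``the oracle register is digital, so the oracles differ only on $R_\pi$'' presupposes admissibility. Naive rounding does not repair this, because it can destroy submodularity: the modular values $0,0.6,0.6,1.2$ on $\varnothing,\{a\},\{b\},\{a,b\}$ floor to the supermodular $0,0,0,1$. The missing step is Lemma~\ref{lower:lb:digital}. First add the public concave term $h(S)=\frac{\lambda}{n^2}(2n|S|-|S|^2)$. This gives every marginal slack $\lambda/n^2$ and every elementary submodularity inequality slack $2\lambda/n^2$. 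Then floor all values to a dyadic grid of spacing at most $\lambda/(10n^2)$. The same public rule is applied to $f_O$ and to $g_{\mathrm c,0}$, so normalization, monotonicity, submodularity and exact agreement all survive, and values move by at most $\lambda$. Let $c=k/n$ and $A(c)=1-(1-c)^{1/c}\le1-1/e+c/e$. With $\lambda=\varepsilon/200$ the separation $(1-1/e+\varepsilon)\OPT_k(\widetilde f_O)>A(c)+\lambda\ge\max_{|S|\le k}\widetilde g_{\mathrm c,0}(S)$ persists, and the hybrid argument then goes through.
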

We use the hidden-partition construction of
\citet*{FeigeMirrokniVondrak2011} for the first part and the
hard family of \citet*{FK} for the second. Their exact agreement
with public reference functions permits a quantum hybrid argument
\citep*{BBBV}. In the cardinality case, a public slack perturbation
and dyadic rounding preserve submodularity, monotonicity, and
exact agreement, giving finite digital instances.
For unconstrained approximation in expectation, we additionally
obtain \(\Omega(\varepsilon\exp(\varepsilon^2n/8))\) queries
when \(\varepsilon^2n\ge C\log(2/\varepsilon)\);
see Corollary~\ref{lower:cor:usm-exponential}.
These lower bounds mean that exponential query barriers persist;
they do not exclude quadratic speedups within the exponential
scale or determine the optimal constants in the exponents.

\begin{theorem}\label{thm:card-constant}
Fix \(a\in(0,1]\) and \(1\le k\le n/2\).
Achieving ratio \(a\) for every normalized nonnegative monotone
submodular function with success probability at least \(2/3\)
requires \(\Omega_a(\sqrt{n/k})\) quantum value queries.
This holds even for capped modular functions with exact rational
values and permits queries to arbitrary, possibly infeasible sets.
\end{theorem}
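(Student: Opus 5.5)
We reduce from unstructured search via a capped modular family and then apply the standard quantum search lower bound (BBBV, or the adversary method). Set \(m=\lfloor n/k\rfloor\ge2\) and partition \(N\) into \(m\) disjoint blocks \(B_1,\dots,B_m\), each of size \(k\); leftover elements are dummies. For a hidden index \(j\in[m]\), define
\[
 f_j(S)=\min\{\,|S\cap B_j|,\;k\,\}\cdot\tfrac1k + \eta\min\{|S|,k\}\cdot\tfrac1k ,
\]
or, more simply, the capped modular function \(f_j(S)=\min\{w_j(S),1\}\), where \(w_j\) assigns weight \(1/k\) to each element of \(B_j\) and weight \(\eta/k\) to every other element, with \(\eta\) a small rational depending on \(a\). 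Each \(f_j\) is a capped modular function with exact rational values, hence normalized, nonnegative, monotone and submodular. We have \(\OPT_k(f_j)=1\), attained by \(B_j\). A feasible set \(S\) with \(|S|\le k\) satisfies \(f_j(S)\le |S\cap B_j|/k+\eta\), so ratio \(a\) forces \(|S\cap B_j|\ge (a-\eta)k\). Choosing \(\eta=a/2\) gives \(|S\cap B_j|\ge ak/2\).

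The first step is decoding. From any \(a\)-approximate output \(S\), the block \(B_j\) is one of at most \(2/a\) blocks meeting \(S\) in at least \(ak/2\) elements, since \(|S|\le k\). Outputting a uniformly random such block therefore identifies \(j\) with probability at least \((2/3)(a/2)\), which is a constant \(\Omega_a(1)\).

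The second step is the query simulation. A query to \(f_j\) on an arbitrary (possibly infeasible) set \(S\) depends only on the vector \((|S\cap B_i|)_i\) and on \(j\). Its value is \(g(S)\) in the reference case, and it deviates from \(g\) exactly when the \(j\)-th block contributes. More usefully, \(f_j(S)=\min\{\eta|S|/k+(1-\eta)|S\cap B_j|/k,\,1\}\), so one query to \(f_j\) can be computed reversibly from \(S\) plus a single query to the oracle \(x\mapsto\) ``block index \(i\) is marked'' for each \(i\). That is too many. Instead, I would view the oracle as a function of the string \(x\in\{0,1\}^m\) with a single one at \(j\). Then \(f_x(S)=\min\{\eta|S|/k+(1-\eta)\sum_i x_i|S\cap B_i|/k,1\}\), which is linear in \(x\) before capping.

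The third step is the hybrid or adversary argument. I would apply the BBBV hybrid argument directly. Take the reference oracle \(f_0\) (the case \(x=0\)) and, for each query state, let \(q_i\) be the squared amplitude on sets \(S\) with \(S\cap B_i\neq\emptyset\). Swapping \(f_0\) for \(f_j\) changes the state by at most \(2\sqrt{q_j}\) per query. However, \(\sum_i q_i\) can be as large as \(m\), because a single set can meet every block. This is the main obstacle: queries to infeasible sets touching all blocks might reveal \(j\) cheaply.

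To handle it, I would use the general adversary or a weighted analysis that exploits the linearity. Since \(f_x\) before capping is linear, a query returns essentially \(\sum_i x_i c_i\), a weighted OR. I would replace the construction so that the \(\min\) with \(1\) destroys the information: scale the weights so that any set with \(|S\cap B_j|\) large saturates regardless. A cleaner route is to reduce from search and show that each query to \(f_x\) is computable with \(O(1)\) queries to an oracle for \(\mathrm{OR}\)-type functions of \(x\). If that fails, I would instead use the adversary method with the standard relation between \(0^m\) and the unit vectors \(e_j\). There the relevant quantity is \(\max_S \sum_j \Pr[f_{e_j}(S)\ne f_0(S)]\), and I would aim to bound the corresponding spectral norm by \(O(\sqrt m)\) by choosing block weights so that \(f_{e_j}(S)\neq f_0(S)\) only for sets with bounded overlap structure. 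Given that bound, \(\Omega(\sqrt m)=\Omega_a(\sqrt{n/k})\) queries follow.
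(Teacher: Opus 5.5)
There is a genuine gap, and it lies in the construction, not in the lower-bound technique: hiding a block of size \(k\) among \(m=\lfloor n/k\rfloor\) blocks makes the problem strictly easier than \(\sqrt{n/k}\). Take \(\eta=a/2\). Let a query \(S\) take one element from each of \(r\) blocks, with \(r\) of order \(k/a\) and \(1+(r-1)\eta<k\). Then \(S\) is unsaturated under every \(f_j\). It returns \((1+(r-1)\eta)/k\) if \(B_j\) is among those blocks and \(r\eta/k\) otherwise, so one query evaluates an OR of \(\Theta(k/a)\) block indicators. Grover search over the \(\Theta(an/k^2)\) resulting groups, followed by binary search inside the marked group, finds \(j\) with \(O(1+\sqrt{an}/k+\log(k/a))\) queries. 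After that, \(B_j\) itself is an optimal feasible output. For \(k=n^{1/3}\) this is \(O(n^{1/6})\), far below \(\sqrt{n/k}=n^{1/3}\).

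This is exactly the obstacle you flagged. In your family \(f_j(S)\ne f_0(S)\) iff \(S\cap B_j\ne\varnothing\) and \(|S|<k/\eta\), so one informative address covers a \(\Theta(k^2/(an))\) fraction of the hidden indices. None of your remedies can repair this:
\begin{itemize}
\item saturating sets with large \(|S\cap B_j|\) never touches the dangerous one-element-per-block queries;
\item reducing to OR-type oracles only yields \(\Omega(\sqrt{m/r})\);
\item an \(\Omega(\sqrt m)\) adversary bound would contradict the algorithm above.
\end{itemize}
Your family gives at most \(\Omega(\sqrt{an}/k)\), which matches the claim only for constant \(k\).

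The missing idea is to hide a single element \(j\), uniform over all of \(N\). Then the hidden space has size \(n\), while an informative query can cover only \(O(k/a)\) of it. With \(m_a=\lceil2/a\rceil\) and \(\beta=1/(m_ak)\), set \(f_j(S)=\min\{1,\mathbf 1_{\{j\in S\}}+\beta|S\setminus\{j\}|\}\), with public reference \(f_0(S)=\min\{1,\beta|S|\}\). This family has three properties:
\begin{itemize}
\item \(\OPT_k(f_j)=1\), attained by \(\{j\}\);
\item every feasible set avoiding \(j\) has value at most \(\beta k=1/m_a<a\);
\item \(f_j(S)\ne f_0(S)\) exactly when \(j\in S\) and \(|S|<m_ak\), so \(\Prb_j[f_j(S)\ne f_0(S)]\le m_ak/n\) for every address, feasible or not.
\end{itemize}
A successful output no longer identifies \(j\), since it may have \(k\) elements, so drop the decoding step. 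Instead, bound by \(k/n\le1/2\) the probability that the reference run outputs a feasible set containing \(j\), and use this as \(p_{\mathrm{out}}\) in the BBBV hybrid. This gives \(2/3\le1/2+2Q\sqrt{m_ak/n}\), i.e.\ \(Q\ge\sqrt{n/k}/(12\sqrt{m_a})\).
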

Saturation makes large query sets uninformative, while every
successful feasible output must contain a hidden element.
The same hybrid argument then gives the lower bound.
It proves optimality up to logarithmic factors for constant \(k\);
it does not match the full upper-bound dependence for growing \(k\).

\subsection{Related work}
\paragraph{Monotone matroid constraints.}
Matroids describe feasibility conditions beyond a uniform size
budget, such as limits within categories and acyclicity in a graph.
Write \(r\) for the matroid rank.
Classical greedy gives a \(1/2\)-approximation
\citep*{FisherNemhauserWolsey1978}; continuous greedy followed
by rounding gives the optimal \(1-1/e\) ratio
\citep*{CalinescuChekuriPalVondrak2011}.
The oracle cost of implementations approaching this ratio has
progressively decreased: from the original high-degree polynomial
implementation to \(\widetilde O_\varepsilon(nr)\)
\citep*{BadanidiyuruVondrak2014}, then
\(\widetilde O_\varepsilon(n\sqrt r+r^2)\)
\citep*{BuchbinderFeldmanSchwartz2017}, and
\(\widetilde O_\varepsilon(n\sqrt r)\) through faster rounding
\citep*{KobayashiTerao2024}.
Non-oblivious local search gives
\(\widetilde O_\varepsilon(nr)\) deterministically and
\(\widetilde O_\varepsilon(n+r\sqrt n)\) with randomization
\citep*{BuchbinderFeldman2024Matroid}.
The subsequent algorithm of
\citet*{BuchbinderFeldman2025Extended} attains
\(1-1/e-\varepsilon\) with
\(O_\varepsilon(n\log(2r))\) value queries and
\[
 O_\varepsilon\!\left(n\log(2r)
                  +r^{3/2}\log^{3/2}(2r)\right)
\]
independence queries, for total
\(\widetilde O_\varepsilon(n+r^{3/2})\).
These are general independence-oracle bounds; a rank oracle is
a different access model.

Explicitly represented matroids admit further improvements in
running time. \citet*{EneNguyen2019Matroid} give nearly linear
algorithms for partition and graphic matroids.
\citet*{HenzingerLiuVondrakZheng2023} treat graphic, laminar,
and transversal matroids, with
\(O(n\varepsilon^{-5}\log^2(n/\varepsilon))\) value queries
and nearly linear time in the representation size.
Representation-specific data structures are central to these
results; their running-time bounds should not be assigned to
a general independence oracle.

\paragraph{Non-monotone constrained maximization.}
Measured continuous greedy gives \(1/e\) over a matroid
\citep*{FeldmanNaorSchwartz2011Unified}, and Random Greedy gives
the same ratio under a cardinality constraint using \(O(nk)\)
value queries \citep*{BuchbinderFeldmanNaorSchwartz2014}.
The approximation guarantee was improved to \(0.372\)
\citep*{EneNguyen2016}, then \(0.385\)
\citep*{BuchbinderFeldman2019Nonsymmetric}, and to \(0.401\)
\citep*{BuchbinderFeldman2024DR}.
The published \(0.401\) result applies to both matroid and
cardinality constraints through continuous optimization and
rounding, with polynomial query complexity. The known
value-oracle hardness for these constraints is \(0.478\)
\citep*{OveisGharanVondrak2011,Qi2024}.

Approximation quality and query complexity lead to different points
on the algorithmic frontier. For non-monotone cardinality maximization,
fast Random Greedy uses
\(O(n\varepsilon^{-2}\log(1/\varepsilon))\) queries for
ratio \(1/e-\varepsilon\)
\citep*{BuchbinderFeldmanSchwartz2017}.
Combinatorial algorithms give \(0.385-\varepsilon\) with
\(O(nk/\varepsilon)\) queries and \(0.377-\varepsilon\) with
\(O_\varepsilon(n\log(2k))\) queries
\citep*{ChenNathPengKuhnle2024}; another
\(0.385-\varepsilon\) algorithm uses
\(O_\varepsilon(n+k^2)\) queries
\citep*{TukanMualemFeldman2024}.
For general matroids, \citet*{BuchbinderFeldman2025Extended}
give \(1/e-\varepsilon\) with the same
\(\widetilde O_\varepsilon(n+r^{3/2})\) total-query bound
quoted above. The polynomial-query \(0.401\) guarantee is a
different point on this approximation--complexity frontier.
Our cardinality theorem concerns monotone functions.

\paragraph{Quantum optimization and gradient estimation.}
Besides quantum submodular minimization \citep*{HRRS} and
continuous DR-submodular maximization \citep*{QuantumDR2026},
quantum graph algorithms with cut queries provide related examples
of strong separations for aggregate set-value queries \citep*{LSZ}.
Quantum gradient estimation begins with \citet*{Jordan2005}
and is developed by \citet*{GAW2019}; spectral methods give
dimension-independent query bounds for analytic functions under
complex-domain phase access \citep*{ZhangShao2026}.
Our estimator instead starts from the reversible oracle for a
discrete set function. Access to its multilinear extension is
not supplied as an additional oracle. We obtain the required
random-context marginals directly from Boolean phase states,
and use submodularity to control the aggregate estimation error.

\subsection{Technique overview}
Both algorithms share a basic measurement principle: a
constant-query phase experiment exposes the characteristic
functions of many contextual marginal distributions in local
qubit coherences. Measuring all qubits provides one bounded
sample for each coordinate. Outcomes within a copy can be
correlated; independence is needed across fresh copies.
The two algorithms use different reconstructions and different
ways of translating estimation error into objective gain.

\subsubsection{Unconstrained maximization}
The optimization framework maintains fractional endpoints
\(x\le y\). At an element, the adding direction has gain
\(\partial_iF(x)\), and the deleting direction has gain
\(-\partial_iF(y)\). Simultaneous double greedy splits movement
in proportion to their positive gains, keeping all coordinate
gaps equal \citep*{ChenFeldmanKarbasi2019}.
We use this existing framework with noisy estimates. A simple
repair restores the constraint that the two gains sum to a
nonnegative number. Our finite-step inequality charges raw
estimation error in the \(\ell_1\) norm and charges the within-step
change of gradients to a decreasing gradient-difference potential.

The quantum estimator is the source of the query improvement.
A product-distribution superposition is given phase
\(e^{\ii\lambda f(S)}\) using a compute--phase--uncompute
sequence. On qubit \(i\), the two branches differ by the phase
of \(f(T+i)-f(T)\). Tracing out the other registers therefore
reveals the characteristic function of this marginal, averaged
over the random context \(T\). Equatorial measurements can
sample its real part at any selected phase angle, for all
coordinates on the same copy.

These trigonometric expectations are not yet marginal means.
We decompose each signed marginal into compact layers at
geometrically increasing magnitudes. A layer's Fourier
coefficients have bounded total absolute value, so sampling
a frequency in proportion to that value produces a bounded
estimator with constant variance, up to the known product-state
signal strength. Only \(O(\log(n/\varepsilon))\) scales are
needed, because the omitted smallest layers contribute a
controlled total additive error.

A second obstacle is that rare false layer estimates could
accumulate over many irrelevant coordinates. We screen each
layer with independent copies of a detector whose response is
exactly zero for an identically zero contextual marginal.
At scales much larger than a coordinate's marginals, its hit
probability is quadratically small. This makes each coordinate's
\emph{expected} error proportional to its endpoint marginal cap.
We can sum those expected errors without requiring a simultaneous
good event over the ground set. This avoids the confidence
allocation that would otherwise introduce extra dependence on \(n\).

Finally, a random bracket forces a small fraction of elements in
or out. It preserves most of an optimal set in expectation,
while its average endpoint potential is paid for by the initial
objective terms retained in the double-greedy inequality.
Interior initialization keeps all qubit signal strengths bounded
below by a quantity depending only on \(\varepsilon\). Thus the
number of iterations and the samples per scale depend only on
accuracy, and the remaining ground-set dependence is logarithmic.

\subsubsection{Cardinality-constrained maximization}
Classical threshold and stochastic greedy already use marginal
thresholds and random candidate sampling
\citep*{BadanidiyuruVondrak2014,MirzasoleimanEtAl2015}.
Our algorithm uses a random pool of size
\(\Theta(\varepsilon k)\), but evaluates its elements through
shared random-context scores rather than individual value queries.
A score is the expected marginal in a uniformly random
predecessor context, clipped at a density threshold.
Diminishing returns makes the sum of scores a certificate
for the gain of \emph{every} subcollection of the pool.

The clipped kernel has a triangular Fourier representation with
nonnegative coefficients summing to one. A sampled frequency
and a common product-state angle give bounded Pauli samples
for all pool coordinates. Mixing the angle converts independent
Bernoulli contexts into the predecessor distribution of a
uniform random ordering. Finite frequency and angle grids
control implementation bias. The joint-score circuit consequently
uses \(O_\varepsilon(\log k)\) value queries, including the
confidence needed for all pool elements.
Exact value checks certify the proposed filtered batch;
a whole-pool fallback handles the case where the pool's gain
is already large.

Joint scoring alone does not bound the total optimization cost:
many pools may be unproductive. We apply search with unknown
success probability \citep*{BBHT,BHMT} to this reversible source.
For a correct target, its normalized gain \(V\) has mean at least
\(\mu=\Theta(\varepsilon^2k/n)\), and every positive gain is
at least \(u=\Theta(1/k)\). Marking gains above
\(\max\{u,\mu/2\}\) relates expected search cost to the
expected gain of the returned batch. The normalized gains over
an entire trial sum to a constant, so cost is charged to this
sum, without a separate factor for the number of greedy updates.
This produces the two terms
\(\varepsilon^{-3}\sqrt n\) and
\(\varepsilon^{-4}n/k\). Public cutoffs turn the expected-cost
analysis into a query cap on every path. Certified target
search and repetition give high confidence when the optimum
is unknown.
The full algorithm is polylogarithmic for fixed density \(k/n\);
for smaller budgets its complexity is the sublinear bound in
Theorem~\ref{card:thm:upper}.

\subsubsection{Lower bounds}
All our hard families agree exactly with a public reference
function at every fixed address except for a small fraction of
hidden instances. Under the reference oracle, the algorithm's
query state is independent of the hidden instance. An oracle
replacement changes that state by at most twice the square root
of its weight on disagreeing addresses. Averaging and telescoping
over \(Q\) queries gives \(O(Q\sqrt p)\) distinguishability when
the disagreement probability is at most \(p\)
\citep*{BBBV}.

For unconstrained maximization, a balanced hidden partition
makes informative addresses exponentially rare in
\(\varepsilon^2n\) \citep*{FeigeMirrokniVondrak2011}.
Above \(1/2+\varepsilon\), a successful set must be informative.
For a sparse cardinality budget, the construction of \citet*{FK}
has the corresponding property with exponent
\(\varepsilon^2k\). Its original values need not have finite
digital encodings. We first add a public concave cardinality
term that supplies slack in monotonicity and submodularity,
then round all values by the same dyadic rule. Equality with
the reference is preserved exactly. A capped modular family
also gives the \(\Omega(\sqrt{n/k})\) bound by hiding an
element that every successful feasible output must contain.
One common hybrid lemma covers all three constructions.

\subsection{Paper structure}
Section~\ref{sec:prelim} gives the problem definitions, common
oracle model, quantum notation, and auxiliary tools.
Sections~\ref{usm:sec:framework}--\ref{usm:sec:complete} develop
the unconstrained algorithm: the noisy finite-step framework,
the quantum gradient estimator, and their combination.
Sections~\ref{sec:card-score}--\ref{sec:card-complete} develop
the cardinality algorithm: joint scores, certified batches,
search-cost charging, and the treatment of unknown targets
and small budgets. Section~\ref{sec:lower} contains all lower
bounds, beginning with a shared hybrid argument.

\section{Preliminaries and oracle model}\label{sec:prelim}
\subsection{Submodularity and product distributions}
Throughout, \(N\) is a finite ground set of size \(n\ge1\), and
\(f:2^N\to\R_{\ge0}\) is submodular. Write \(S+i=S\cup\{i\}\),
\(S-i=S\setminus\{i\}\), and
\[
 f(i\mid S)=f(S+i)-f(S).
\]
Submodularity means that
\(f(i\mid S)\ge f(i\mid T)\) whenever \(S\subseteq T\) and \(i\notin T\).
Equivalently, submodularity is the inequality
\(f(A)+f(B)\ge f(A\cup B)+f(A\cap B)\) for all \(A,B\subseteq N\).
The function is \emph{monotone} if \(f(A)\le f(B)\) whenever
\(A\subseteq B\), and \emph{normalized} if \(f(\varnothing)=0\).
Unconstrained maximization allows every subset; the cardinality problem
requires \(|S|\le k\), where \(0\le k\le n\) is an integer, and assumes
monotonicity. The unconstrained algorithm does not assume either
monotonicity or normalization. We use the optima \(\OPT_{\mathrm u}(f)\)
and \(\OPT_k(f)\) defined in the introduction, abbreviating the former
as \(\OPT_{\mathrm u}\) when the objective is clear. An expected
\(a\)-approximation requires \(\E f(S)\ge a\OPT_{\mathrm u}(f)\) or
\(\E f(S)\ge a\OPT_k(f)\), respectively. A guarantee with failure
probability \(\delta\) requires the corresponding value inequality to
hold with probability at least \(1-\delta\).
Let \(a^+=(a)_+=\max\{a,0\}\), and let \(\one_S\) be the incidence vector of \(S\).
The symbol \(\one\) denotes the all-ones vector on the current
coordinate set, and \(\mathbf1_A\) denotes the indicator of an event
\(A\). For real \(a\le b\), write
\(\clip_{[a,b]}(z)=\min\{b,\max\{a,z\}\}\).
Vector inequalities, maxima \(\vee\), and minima \(\wedge\) are
coordinatewise.

For a finite set \(W\) and \(q\in[0,1]^W\), let \(\mu_q\) be the
product distribution on its subsets:
\[
 \mu_q(R)=\prod_{i\in R}q_i\prod_{i\in W\setminus R}(1-q_i).
\]
Thus \(q\) is the vector of inclusion probabilities, also called the
vector of marginal probabilities. This use of \emph{marginal} is
distinct from an element's marginal \emph{gain}.
We write \(R(q)\sim\mu_q\) and \(\mu_{q,-i}\) for the product law
with coordinate \(i\) omitted. The global multilinear extension is
\[
 \cF(p)=\E_{R\sim\mu_p}f(R),\qquad p\in[0,1]^N.
\]
For a bracket \(X\subseteq Y\), put \(U=Y\setminus X\). Its restricted
extension is
\[
 F(q)=\E_{R\sim\mu_q}f(X\cup R),\qquad q\in[0,1]^U.
\]
The bracket will always be clear when \(F\) is used. For \(i\in U\), define
\[
 G_i(T)=f(i\mid X\cup T),\quad T\subseteq U-i,\qquad
 L_i=f(i\mid X),\quad M_i=f(i\mid Y-i).
\]
The paired symbols \(L_i,M_i\) denote the marginal in the smaller
and larger contexts, respectively. Finally, define
\begin{equation}\label{pre:eq:bracket-potential}
 \Phi(X,Y)=\sum_{i\in U}(L_i-M_i).
\end{equation}
These endpoint marginals and this potential are analysis quantities;
our algorithms do not evaluate them individually.

\begin{lemma}[See \citet*{Bach2013}]\label{pre:lem:dr}
The extension \(\cF\), and every restriction \(F\), is multilinear.
For a fixed bracket,
\begin{equation}\label{pre:eq:gradient-mean}
 \partial_iF(q)=\E_{T\sim\mu_{q,-i}}G_i(T).
\end{equation}
The extensions are DR-submodular: if \(p\le q\), then
\begin{equation}\label{pre:eq:dr}
 \partial_iF(p)\ge\partial_iF(q).
\end{equation}
Moreover, \(M_i\le G_i(T)\le L_i\), and the function
\(t\mapsto F(t\one)\) is concave on \([0,1]\).
\end{lemma}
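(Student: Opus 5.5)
The proof is a direct expansion of the restricted extension \(F\), and it yields every claim of Lemma~\ref{pre:lem:dr} in turn. First, multilinearity. Write
\[
 F(q)=\sum_{R\subseteq U}f(X\cup R)\prod_{j\in R}q_j\prod_{j\in U\setminus R}(1-q_j).
\]
Each summand has degree at most one in every coordinate \(q_j\), so \(F\) is multilinear. The global extension \(\cF\) is the special case \(X=\varnothing\), \(Y=N\), and the same argument applies to it.

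Second, the gradient formula~\eqref{pre:eq:gradient-mean}. Since \(F\) is affine in \(q_i\), we can condition on whether \(i\in R\). This gives
\[
 F(q)=q_i\E_{T\sim\mu_{q,-i}}f(X\cup T+i)+(1-q_i)\E_{T\sim\mu_{q,-i}}f(X\cup T),
\]
and neither expectation depends on \(q_i\). Differentiating in \(q_i\) therefore gives \(\E_{T\sim\mu_{q,-i}}G_i(T)\).

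Third, the sandwich bound. Every context \(T\subseteq U-i\) satisfies \(X\subseteq X\cup T\subseteq Y-i\). Submodularity then gives \(f(i\mid X)\ge f(i\mid X\cup T)\ge f(i\mid Y-i)\), which is \(L_i\ge G_i(T)\ge M_i\).

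Fourth, DR-submodularity~\eqref{pre:eq:dr}. Let \(p\le q\). I would use the standard monotone coupling: draw independent uniform variables \(u_j\) and set \(T_p=\{j\ne i:u_j<p_j\}\) and \(T_q=\{j\ne i:u_j<q_j\}\). Then \(T_p\sim\mu_{p,-i}\), \(T_q\sim\mu_{q,-i}\), and \(T_p\subseteq T_q\) pointwise. Submodularity gives \(G_i(T_p)\ge G_i(T_q)\) pointwise, and taking expectations yields \(\partial_iF(p)\ge\partial_iF(q)\).

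Finally, concavity of \(g(t)=F(t\one)\). The function \(g\) is a polynomial, hence smooth, so it suffices to show \(g''\le0\). The chain rule gives \(g'(t)=\sum_i\partial_iF(t\one)\). Each term is nonincreasing in \(t\) by the DR property just proved, because \(t\one\le t'\one\) whenever \(t\le t'\). Hence \(g'\) is nonincreasing, and \(g\) is concave. Alternatively, one can compute directly that \(\partial_i\partial_jF\le0\) for \(i\ne j\), which is a mixed second difference of \(f\) and is nonpositive by submodularity, and that \(\partial_i^2F=0\) by multilinearity. This gives \(g''=\sum_{i\ne j}\partial_i\partial_jF\le0\).

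No step is a real obstacle. The only point needing care is the DR inequality: the coupling must keep the contexts nested, \(T_p\subseteq T_q\), and both must avoid \(i\), so that the submodularity comparison \(G_i(T_p)\ge G_i(T_q)\) is legitimate.
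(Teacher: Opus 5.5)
Your proposal is correct and follows essentially the same route as the paper: conditioning on \(i\in R\) for the affine form and the gradient identity, the threshold coupling \(T_p\subseteq T_q\) for DR-submodularity, the chain \(X\subseteq X\cup T\subseteq Y-i\) for the sandwich bound, and nonpositive mixed and vanishing pure second derivatives for concavity. Your primary concavity argument is a slightly more direct variant of the paper's second-derivative computation, which you also give as your alternative; it observes that \(g'\) is nonincreasing because each \(\partial_iF(t\one)\) is nonincreasing by the DR property.
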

\begin{proof}
Conditioning on whether \(i\) is present writes \(F(q)\) as an affine
function of \(q_i\), with slope equal to the right side of
\eqref{pre:eq:gradient-mean}. To compare \(p\) and \(q\), use independent
uniform thresholds for the other coordinates. This couples their
random subsets as \(T_p\subseteq T_q\); submodularity then gives
\(G_i(T_p)\ge G_i(T_q)\). Taking expectations proves \eqref{pre:eq:dr}.
The endpoint bounds follow by applying the same diminishing-returns
property to \(X\subseteq X\cup T\subseteq Y-i\).
Since \(F\) is a polynomial, its mixed second derivatives are
nonpositive by \eqref{pre:eq:dr}, and its pure second derivatives vanish
by multilinearity. Hence
\[
 \frac{d^2}{dt^2}F(t\one)
 =\sum_{i,j\in U}\partial_{ij}F(t\one)\le0.
\]
This proves the concavity assertion. The global case uses
\(X=\varnothing,Y=N\).
\end{proof}

The following standard inequality lower-bounds a product-distribution
expectation by a mixture of values on nested sets. We use it to
analyze random restrictions and fractional initialization.
\begin{lemma}\label{pre:lem:chain}
Let \(S_0\subseteq\cdots\subseteq S_r\), and let
\(\lambda_0,\ldots,\lambda_r\ge0\) sum to one. If
\(p=\sum_j\lambda_j\one_{S_j}\), then
\[
 \cF(p)\ge\sum_j\lambda_j f(S_j).
\]
\end{lemma}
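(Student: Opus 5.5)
We prove Lemma~\ref{pre:lem:chain} by coupling the product distribution \(\mu_p\) with the nested chain through a single threshold, and then using submodularity to compare an independent-threshold sample with a common-threshold sample.

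\emph{Realization.} Draw independent uniform variables \(\theta_i\in[0,1]\) for \(i\in N\), and set \(R=\{i:\theta_i<p_i\}\), so that \(R\sim\mu_p\). For an element \(i\) in \(S_j\setminus S_{j-1}\) (with \(S_{-1}=\varnothing\)), we have \(p_i=\sum_{l\ge j}\lambda_l\). Elements outside \(S_r\) have \(p_i=0\) and never appear.

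\emph{Reference comparison.} If a single common threshold \(\theta\) were used for all coordinates, the set \(\{i:\theta<p_i\}\) would equal \(S_j\) with probability exactly \(\lambda_j\). This is because the coordinates of \(p\) are nondecreasing down the chain. Hence the claimed right-hand side is the expectation of \(f\) under a fully correlated coupling with the same marginals. The task is therefore to show that independence can only increase the expectation of a submodular function relative to this comonotone coupling.

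\emph{Induction on elements.} Order the elements of \(S_r\) as \(e_1,\dots,e_m\) by decreasing \(p\), so that elements of earlier chain sets come first. Telescoping gives
\[
 f(R)=f(\varnothing)+\sum_t \one[e_t\in R]\, f\bigl(e_t\mid R\cap\{e_1,\dots,e_{t-1}\}\bigr).
\]
Under independence, the inclusion of \(e_t\) is independent of the prefix set. Taking expectations gives
\[
 \cF(p)=f(\varnothing)+\sum_t p_{e_t}\,\E f\bigl(e_t\mid R_{<t}\bigr),
\]
where \(R_{<t}=R\cap\{e_1,\dots,e_{t-1}\}\) follows the product law of the prefix.

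For the common-threshold chain, the analogous expression is
\[
 \sum_j\lambda_j f(S_j)=f(\varnothing)+\sum_t p_{e_t}\, f\bigl(e_t\mid \{e_1,\dots,e_{t-1}\}\bigr).
\]
This holds because, given that \(e_t\) is included, the whole prefix is included.

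\emph{Submodularity step.} Since \(R_{<t}\subseteq\{e_1,\dots,e_{t-1}\}\) always holds, diminishing returns gives
\[
 f(e_t\mid R_{<t})\ge f\bigl(e_t\mid\{e_1,\dots,e_{t-1}\}\bigr)
\]
pointwise. Summing over \(t\) then yields the inequality.

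\emph{Main obstacle.} The only delicate point is verifying the telescoped identity for the chain side. Here I would check that the chain's marginal-gain decomposition weights each \(e_t\) by exactly \(p_{e_t}=\Pr[\theta<p_{e_t}]\). That event forces the full prefix into the set, since prefix elements have \(p\)-values at least \(p_{e_t}\). Ties in the ordering are harmless for the same reason.
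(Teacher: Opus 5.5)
Your proof is correct, and it takes a different route from the paper's.

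The paper proves the lemma in one step by citing the convex-closure characterization of the Lov\'asz extension. For submodular \(f\), the Lov\'asz extension at \(p\) equals \(\sum_j\lambda_jf(S_j)\), and it is the minimum of \(\E f(S)\) over \emph{all} distributions with inclusion probabilities \(p\). The product law \(\mu_p\) is then just one feasible competitor. The paper adds a shift by \(f(\varnothing)\) so that the cited normalized form applies.

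You instead prove only the comparison that is actually needed, in three steps:
\begin{enumerate}
 \item Telescope along a decreasing-\(p\) order. Because the \(e_t\)-coordinate is independent of the prefix, this gives \(\cF(p)=f(\varnothing)+\sum_tp_{e_t}\E f(e_t\mid R_{<t})\).
 \item Use the common-threshold coupling. It turns the chain side into the same sum, but with the full prefix as context.
 \item Compare the two sums termwise by diminishing returns.
\end{enumerate}

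This is the standard elementary proof that the multilinear extension dominates the Lov\'asz extension. It is self-contained and needs no normalization shift, since \(f(\varnothing)\) appears on both sides. Your treatment of ties is also sound. Given \(\theta<p_{e_t}\), every earlier element, tied or not, has \(p\ge p_{e_t}>\theta\) and is therefore included. Equivalently, ties between different chain layers arise only across indices with \(\lambda_j=0\), and every \(S_j\) with \(\lambda_j>0\) is a prefix of any decreasing-\(p\) order.

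The paper's citation buys brevity and the stronger statement for arbitrary distributions with the given marginals. Your argument buys a proof from first principles that uses independence exactly where it matters.
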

\begin{proof}
The convex-closure characterization of the Lov\'asz extension
\citep*{Lovasz1983,Bach2013} states that, for a submodular function, its
Lov\'asz extension at \(p\) is the minimum of \(\E f(S)\) over all
distributions with inclusion probabilities \(p\).
For the displayed nested sets, its value is
\(\sum_j\lambda_jf(S_j)\). The product distribution \(\mu_p\) is one
feasible distribution in that minimization, proving the inequality.
If \(f(\varnothing)\ne0\), apply the characterization to
\(f-f(\varnothing)\) and add back the constant. This is only a proof
device; the algorithm always uses the original nonnegative function.
\end{proof}

\subsection{Quantum states and measurements}
We recall the notation needed for phase queries, simultaneous local
measurements, and the quantum hybrid argument; see
\citet*{NielsenChuang2010} for background.
A qubit has computational basis
\(|0\rangle=(1,0)^{\mathsf T}\), \(|1\rangle=(0,1)^{\mathsf T}\).
A register indexed by \(W\) has one qubit per element; \(|R\rangle\)
denotes the basis vector whose ones are exactly \(R\subseteq W\).
A pure state \(|\psi\rangle\) is a unit vector. Measuring a register
in its computational basis returns a basis label with probability
equal to the squared modulus of its amplitude. Tensor products
describe composite registers; \(A^{(i)}\) means that \(A\) acts on
qubit \(i\) and the identity acts on all other registers.
Here \(\ii^2=-1\), \(I\) is the identity operator, and
\(A^\dagger\) denotes the conjugate transpose of \(A\).
A unitary operator satisfies \(U^\dagger U=UU^\dagger=I\) and
describes a reversible state evolution.
A ket \(|\psi\rangle\) is a column vector; its bra
\(\langle\psi|=|\psi\rangle^\dagger\) is its conjugate transpose.
Thus \(\langle\phi|\psi\rangle\) is an inner product.

A density matrix \(\varrho\) is a positive semidefinite matrix of
trace one; the pure state \(|\psi\rangle\) corresponds to
\(|\psi\rangle\langle\psi|\). A reduced density matrix
\(\varrho_i=\Tr_{-i}\varrho\) describes qubit \(i\) after the other
registers are ignored. Its entry
\((\varrho_i)_{10}=\langle1|\varrho_i|0\rangle\) is the lower-left
entry in the ordered basis \((|0\rangle,|1\rangle)\).

An observable is a Hermitian matrix. Its measurement returns an
eigenvalue, with probabilities obtained by projection onto its
eigenspaces. Its expectation is
\[
 \langle A\rangle_\psi=\langle\psi|A|\psi\rangle,\qquad
 \langle A\rangle_{\varrho}=\Tr(\varrho A).
\]
The Pauli observables used below are
\[
 \sigma_x=\begin{pmatrix}0&1\\1&0\end{pmatrix},\qquad
 \sigma_y=\begin{pmatrix}0&-\ii\\\ii&0\end{pmatrix}.
\]
Each has eigenvalues \(+1,-1\), and direct matrix multiplication gives
\begin{equation}\label{pre:eq:pauli-coherence}
 \langle\sigma_x\rangle_{\varrho_i}
 =2\operatorname{Re}(\varrho_i)_{10},\qquad
 \langle\sigma_y\rangle_{\varrho_i}
 =2\operatorname{Im}(\varrho_i)_{10}.
\end{equation}
For an angle \(\vartheta\), define the equatorial observable
\begin{equation}\label{pre:eq:equatorial-definition}
 P_\vartheta=\cos\vartheta\,\sigma_x-\sin\vartheta\,\sigma_y.
\end{equation}
The Bloch coordinates of a qubit are the expectations of
\(\sigma_x,\sigma_y,\sigma_z\), where
\(\sigma_z=\operatorname{diag}(1,-1)\).
An equatorial observable has no \(\sigma_z\) component.
The observable \(P_\vartheta\) measures along the direction
\((\cos\vartheta,-\sin\vartheta,0)\) in the equatorial plane of the
Bloch sphere. Equivalently, it measures in the orthonormal basis
\[
 |\mathord{\pm}_\vartheta\rangle
 =\frac{|0\rangle\pm e^{-\ii\vartheta}|1\rangle}{\sqrt2},
\]
returning the corresponding sign. This basis convention follows the
minus sign in \eqref{pre:eq:equatorial-definition}. Operators on distinct
qubits commute, since
\((A\otimes I)(I\otimes B)=A\otimes B=(I\otimes B)(A\otimes I)\).
Thus these local measurements can be performed on every coordinate
of one state. Their outcomes may be correlated.

The operator norm is \(\|A\|=\sup_{\|v\|=1}\|Av\|\).
The group \(SU(2)\) consists of two-dimensional unitary matrices of
determinant one. The trace norm is \(\|A\|_1=\Tr\sqrt{A^\dagger A}\).
For density matrices, we use trace distance
\(D(\varrho,\varsigma)=\tfrac12\|\varrho-\varsigma\|_1\).
It bounds the difference in probabilities of any measurement event.
For pure states it is at most
\(\||\psi\rangle-|\phi\rangle\|\). These standard facts are also
valid after ancillary registers are ignored \citep*{NielsenChuang2010}.

\subsection{Oracle model and implementation conventions}
The objective values have a public finite exact digital encoding as
rational numbers, including fixed-point binary values. The maximum
encoding width is public. A \emph{value query}
is one application of the reversible XOR oracle
\[
 O_f:|S\rangle|z\rangle\longmapsto
 |S\rangle|z\oplus\enc(f(S))\rangle.
\]
The address register may be in an arbitrary superposition of subsets,
including subsets larger than the output budget \(k\). The algorithm
must output a feasible set, but its queries need not be feasible.
The inverse oracle equals the oracle itself. A query bound is a
worst-case cap on the number of such calls, unless explicitly stated
as an expected cost for an intermediate search routine.

For any real \(\lambda\), a query computes the encoded value, a
value-controlled phase applies \(e^{\ii\lambda f(S)}\), and a second
query clears the value register. Thus the phase transformation
\[
 |S\rangle\longmapsto e^{\ii\lambda f(S)}|S\rangle
\]
costs two value queries, independently of \(\lambda\). Known shifts,
residuals, and caps of \(f\) are computed reversibly from the same
encoded value. We do not assume an oracle for its multilinear extension.

Our common query model permits unrestricted inter-query unitaries:
state preparation, coefficient sampling, arithmetic, memory, and
classical processing are free in the query count. In particular,
processing \(n\) scores or writing the answer can require \(\Omega(n)\)
elementary operations. A polylogarithmic query bound does not assert a
polylogarithmic running time. The unconstrained upper bound uses
exact inter-query operations in this model. The cardinality upper
bound additionally has a finite-circuit implementation, specified next.

\paragraph{Finite-gate implementation of the cardinality algorithm.}
We use a fixed finite gate set consisting of inverse-closed generators of a dense subgroup of $SU(2)$, their exact controlled versions, and exact NOT, CNOT, Toffoli, Hadamard and phase-flip gates. Classical control, arithmetic, certificate predicates and query counters are exact; only public quantum rotations and state preparations require approximation. Seed decoders reversibly map every word to a legal value, retaining the original word, so all query addresses and pools remain valid.

No controlled-oracle primitive is assumed. To condition $O_f$ on a bit $c$, query into a zero workspace $w$, reversibly XOR $w$ into the intended answer register only when $c=1$, and uncompute $w$ with $O_f^{-1}$. This gives the exact controlled XOR oracle with two raw value queries. The constant overhead is included in all query bounds. In a deferred-measurement description, unused query slots can instead be padded with calls whose answer register is uniform over bit strings, which XOR leaves invariant.

\paragraph{Normalization for the cardinality algorithm.}
Query \(f(\varnothing)\) once and define
\(g(T)=f(T)-f(\varnothing)\).
For monotone \(f\), the function \(g\) is normalized, nonnegative,
monotone, and submodular. A set with \(g(S)\ge a\OPT_k(g)\), where
\(0\le a\le1\), satisfies
\[
 f(S)\ge a\bigl(\OPT_k(f)-f(\varnothing)\bigr)+f(\varnothing)
 \ge a\OPT_k(f).
\]
This reduction is used only in the cardinality part; the unconstrained
algorithm uses the original nonnegative objective.

\subsection{Concentration, search, and Fourier tools}

\begin{lemma}[Hoeffding Inequalities~\citep*{Hoeffding}, Theorems 1 and 4]\label{card:tool:hoeffding}
If $Y_1,\ldots,Y_D$ are independent variables in an interval of length $L$, then, for $t>0$,
\[\Prb\!\left[\left|D^{-1}\sum_{i=1}^D(Y_i-\E Y_i)\right|\ge t\right]
\le2\exp(-2Dt^2/L^2).\]
If $X$ is the sum of a uniform sample of $s$ entries drawn without replacement from a finite population in $[0,1]$, then
\[\Prb[X-\E X\ge t]\le\exp(-2t^2/s).\]
\end{lemma}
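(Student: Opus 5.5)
We will follow the classical exponential-moment (Chernoff) method. The argument has two parts: a moment generating function bound for independent bounded variables, and then Hoeffding's convex-ordering comparison for sampling without replacement.

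\textbf{Independent variables.} For the first inequality, set \(Z_i=Y_i-\E Y_i\). Each \(Z_i\) is mean zero and lies in an interval of length \(L\).

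\begin{enumerate}
 \item The key step is Hoeffding's lemma: \(\E e^{sZ_i}\le e^{s^2L^2/8}\) for all real \(s\). To prove it, use convexity of \(z\mapsto e^{sz}\) on the interval \([a,a+L]\) containing \(Z_i\). This bounds \(e^{sZ_i}\) by the chord through the endpoints, and taking expectations with \(\E Z_i=0\) gives \(\E e^{sZ_i}\le e^{\psi(u)}\) with \(u=sL\). Here \(\psi(u)=-\theta u+\log(1-\theta+\theta e^{u})\) and \(\theta=-a/L\in[0,1]\). One checks \(\psi(0)=\psi'(0)=0\) and \(\psi''\le1/4\), since \(\psi''\) is the variance of a Bernoulli variable. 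Taylor's theorem then gives \(\psi(u)\le u^2/8\).
 \item By independence, \(\E\exp\bigl(s\sum_i Z_i\bigr)\le e^{Ds^2L^2/8}\).
 \item Markov's inequality gives \(\Prb\bigl[\sum_i Z_i\ge Dt\bigr]\le\exp(-sDt+Ds^2L^2/8)\). Choosing \(s=4t/L^2\) yields \(\exp(-2Dt^2/L^2)\).
 \item Applying the same bound to \(-Z_i\) handles the lower tail, and a union bound over the two tails gives the factor \(2\).
\end{enumerate}

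\textbf{Sampling without replacement.} Let \(X\) be the sum of \(s\) draws without replacement from a population \(c_1,\dots,c_m\in[0,1]\), and let \(X'\) be the sum of \(s\) independent draws with replacement from the same population. Note that \(\E X=\E X'\). We will invoke Hoeffding's comparison (his Theorem 4): \(\E\phi(X)\le\E\phi(X')\) for every continuous convex \(\phi\).

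\begin{enumerate}
 \item To establish the comparison, we would follow Hoeffding's argument. Write \(X'\) as a function of the multiset of indices drawn. Conditioning on the set of distinct indices that appear, together with their multiplicities, expresses \(\E[X'\mid\cdot]\) as an average of without-replacement sums. Jensen's inequality then gives \(\phi(\cdot)\) of an average of without-replacement sums at most the average of \(\phi\) of those sums. Equivalently, \(X\) is a martingale-type average of \(X'\).
 \item With the comparison in hand, take \(\phi(x)=e^{\lambda x}\). Then \(\E e^{\lambda(X-\E X)}\le\E e^{\lambda(X'-\E X')}\le e^{s\lambda^2/8}\) by the first part with \(L=1\).
 \item Markov's inequality and the choice \(\lambda=4t/s\) give \(\Prb[X-\E X\ge t]\le\exp(-2t^2/s)\).
\end{enumerate}

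\textbf{Main obstacle.} We expect the only nontrivial step to be the convex-ordering comparison for sampling without replacement. If reproducing Hoeffding's combinatorial averaging proves cumbersome, we would instead use an exchangeability and martingale argument. Define the Doob martingale of the partial sums. Its increments are bounded in ranges of length \(1\) for each successive draw, so the Azuma--Hoeffding form of the first inequality applies directly. This route also gives \(\exp(-2t^2/s)\), because each conditional increment lies in an interval of length at most \(1\).
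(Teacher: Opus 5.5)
The paper does not prove this lemma; it simply cites Hoeffding's Theorems~1 and~4. Your two parts reconstruct those original arguments: Hoeffding's lemma with the Chernoff bound, then the convex-order comparison applied to $\phi(x)=e^{\lambda x}$. The computations check out, including $\psi''\le1/4$ and the optimizing parameters $4t/L^2$ and $4t/s$.

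One step, however, is stated backwards: your sketch of the comparison $\E\phi(X)\le\E\phi(X')$. If you condition on the distinct indices together with their multiplicities, $X'$ becomes measurable, so the conditioning is vacuous. Also, applying Jensen to an average of without-replacement sums bounds $\phi$ of that average by the average of $\phi$ over without-replacement sums. That controls the with-replacement side by the without-replacement side, which is the reverse of what you need.

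It is $X$ that must be exhibited as a conditional average of copies of $X'$, which is what your closing sentence says. Here is one way to do it.
\begin{itemize}
\item Let $(\sigma_1,\ldots,\sigma_s)$ be an ordered sample drawn without replacement.
\item Let $r=(r_1,\ldots,r_s)$ be the multiplicities of the distinct indices of an independent with-replacement sample, listed in order of first appearance and padded with zeros.
\item Let $\pi$ be an independent uniform permutation of $\{1,\ldots,s\}$.
\end{itemize}
Given the multiplicity pattern, the distinct indices form a uniform ordered sample without replacement. Hence $\sum_j r_jc_{\sigma_j}$ has the law of $X'$, and by exchangeability so does $X'':=\sum_j r_{\pi(j)}c_{\sigma_j}$. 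Averaging over $\pi$ gives $\E[r_{\pi(j)}]=s^{-1}\sum_l r_l=1$, so $\E[X''\mid\sigma,r]=X$. Jensen then gives $\phi(X)\le\E[\phi(X'')\mid\sigma,r]$, and taking expectations yields $\E\phi(X)\le\E\phi(X')$.

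Your Doob-martingale fallback is also a complete proof. Its crux is the range claim, so state it explicitly: a bare bound $|M_i-M_{i-1}|\le1$ would give only $\exp(-t^2/(2s))$.

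Assume $s<m$; if $s=m$, then $X$ is constant. Let $S_i$ be the partial sum and $R_i$ the population total remaining after $i$ draws. Then $M_i=S_i+(s-i)R_i/(m-i)$. So $M_i-M_{i-1}$ is an affine function of the $i$th draw $Y_i\in[0,1]$. Its slope is $(m-s)/(m-i)\in[0,1]$, and its intercept is determined by the first $i-1$ draws.

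The conditional range therefore has length at most $1$. The conditional form of Hoeffding's lemma plus Markov's inequality then gives $\exp(-2t^2/s)$, and in fact a slightly sharper bound.
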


\begin{lemma}[Search with Unknown Success Probability~\citep*{BBHT,BHMT}]\label{card:tool:aa}
Let a reversible circuit $A$ prepare a state whose measurement is marked with probability $p>0$, and suppose the marking predicate is computable. Without knowing $p$, quantum search finds a marked output with probability one and uses $O(p^{-1/2})$ expected applications of $A,A^{-1}$ and the marking reflection. Its output distribution is the original distribution conditioned on being marked.
\end{lemma}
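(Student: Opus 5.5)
The plan is to reproduce the exponential-schedule Grover search of \citet*{BBHT}, in the amplitude-amplification form of \citet*{BHMT}. Write \(A|0\rangle=\sin\theta\,|\psi_1\rangle+\cos\theta\,|\psi_0\rangle\). Here \(|\psi_1\rangle\) and \(|\psi_0\rangle\) are the normalized projections onto the marked and unmarked subspaces, and \(\sin^2\theta=p\) with \(0<\theta\le\pi/2\).

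\textbf{The iterate.} The marking reflection \(S_\chi\) is implemented by computing the predicate into an ancilla, applying a phase flip, and uncomputing. This is possible because the predicate is computable; it uses no applications of \(A\) beyond those counted. Let \(S_0\) be the reflection about \(|0\rangle\), and set \(Q=-AS_0A^{-1}S_\chi\). The standard two-dimensional argument shows that \(Q\) preserves \(\operatorname{span}\{|\psi_1\rangle,|\psi_0\rangle\}\) and acts there as a rotation by \(2\theta\). Hence
\[
 Q^jA|0\rangle=\sin((2j+1)\theta)|\psi_1\rangle+\cos((2j+1)\theta)|\psi_0\rangle .
\]
The marked component is always proportional to \(|\psi_1\rangle\). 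Therefore, conditioned on observing a marked outcome, the measured output has exactly the law of \(A|0\rangle\) conditioned on being marked. This settles the claim about the output distribution. Each \(Q\) costs one \(A\), one \(A^{-1}\) and one marking reflection.

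\textbf{Unknown \(p\): exponential schedule.} First apply \(A\) alone, measure, and stop if the outcome is marked; this handles \(p\ge 1/2\) at constant expected cost. Otherwise start with \(m=1\) and \(\lambda=6/5\), and repeat the following round. Choose \(j\) uniformly in \(\{0,\dots,\lceil m\rceil-1\}\), prepare \(Q^jA|0\rangle\), measure, and stop if the outcome is marked. If not, set \(m\leftarrow\lambda m\) and continue.

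\textbf{Key lemma.} The key step is the averaging identity
\[
 \frac1m\sum_{j=0}^{m-1}\sin^2((2j+1)\theta)=\frac12-\frac{\sin(4m\theta)}{4m\sin(2\theta)} .
\]
It follows by writing \(\sin^2x=(1-\cos 2x)/2\) and summing the resulting geometric series of cosines. Consequently, once \(m\ge m_0:=1/\sin(2\theta)\), each round succeeds with probability at least \(1/4\).

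\textbf{Cost accounting.} When \(p<1/2\), we have \(\sin 2\theta\ge\sin\theta=\sqrt p\), so \(m_0\le p^{-1/2}\). The rounds with \(m<m_0\) cost a geometric sum of \(O(m_0)\) applications. After \(m\) first exceeds \(m_0\), the \(t\)-th further round is reached with probability at most \((3/4)^t\) and costs \(O(\lambda^t m_0)\). Since \(\tfrac34\cdot\tfrac65<1\), the expected total is \(O(m_0)=O(p^{-1/2})\). The same per-round bound of \(1/4\) shows that the procedure halts almost surely. Because every output is checked against the predicate, a marked output is found with probability one.

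The main obstacle I expect is the averaging lemma together with the regime bookkeeping. The lemma must be stated for integer \(\lceil m\rceil\) rather than real \(m\). The bookkeeping must cover both \(p\) close to \(1\), where \(\sin 2\theta\) is small and the initial plain sample is needed, and very small \(p\), where the bound \(m_0\le p^{-1/2}\) is used. The rest is standard.
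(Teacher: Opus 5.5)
The paper does not prove this lemma; it imports it from \citet*{BBHT} and \citet*{BHMT}. Your proposal reconstructs that standard argument: the amplitude-amplification iterate of the second work, together with the \(\lambda=6/5\) exponential schedule and averaging identity of the first. Your rotation formula, the conditioned-output claim, the averaging identity applied with integer \(\lceil m\rceil\), and the cost accounting for \(p<1/2\) are all correct.

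There is one hole in the bookkeeping, at exactly the point you flagged. The sentence claiming that the initial plain sample ``handles \(p\ge1/2\) at constant expected cost'' is not justified. That sample fails with probability \(1-p\), which can be as large as \(1/2\), and on that branch the exponential schedule runs. Your cost accounting covers the schedule only for \(p<1/2\). As \(p\to1\), the threshold \(m_0=1/\sin2\theta\) is unbounded, so the failure branch is not obviously \(O(1)\).

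The fix is one line, because your schedule analysis uses \(p<1/2\) only in the final comparison \(m_0\le p^{-1/2}\). For every \(\theta\in(0,\pi/2)\) you have \(m_0\ge1\). The rounds below \(m_0\) and the geometric tail beyond it then give expected schedule cost \(O(1/\sin2\theta)\), and the per-round bound of \(1/4\) still gives almost-sure termination. The plain sample is independent of the schedule, and the schedule is entered with probability \(\cos^2\theta\). Hence the total expected cost is at most
\[
 1+O\!\left(\frac{\cos^2\theta}{\sin2\theta}\right)
 =1+O(\cot\theta)
 \le1+O\!\left(\frac1{\sin\theta}\right)
 =O(p^{-1/2}),
\]
uniformly for \(p\in(0,1]\); the case \(p=1\) never enters the schedule. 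In particular, this is \(O(1)\) when \(\theta\ge\pi/4\). With this line added, your proof is complete.
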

We impose a finite public query budget when using this procedure below.

\begin{lemma}[Finite Gate Approximation~\citep*{DN}]\label{card:tool:gates}
For the inverse-closed finite generators of a dense subgroup of $SU(2)$ chosen above, every specified $V\in SU(2)$ and every $\xi>0$ admit a finite word $\widetilde V$ satisfying $\|V-\widetilde V\|\le\xi$. This is an operator-norm approximation, without discarding a scalar phase. The public quantum rotations needed here therefore admit arbitrarily accurate finite synthesis while exact classical reversible subcircuits are left unchanged.
\end{lemma}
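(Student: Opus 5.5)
This statement is cited from \citet*{DN} rather than proved from scratch, so the plan is to recall the Solovay--Kitaev/density argument in the form needed here. I will check two points. First, the approximation is in operator norm on \(SU(2)\) itself, with no global phase discarded. Second, the generators in our gate set suffice.

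\textbf{Step 1: reduce to a density statement.} The chosen generators are inverse-closed and generate a dense subgroup \(H\le SU(2)\). By definition of density, for every \(V\in SU(2)\) and \(\xi>0\) there is \(h\in H\) with \(\|V-h\|\le\xi\). Every element of \(H\) is a finite product of generators and their inverses, and inverse-closure means inverses are themselves generators. Hence \(h\) is a finite word \(\widetilde V\). This already gives existence.

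The phase issue disappears because we work in \(SU(2)\) rather than \(U(2)\) or \(PU(2)\). Both \(V\) and \(\widetilde V\) are determinant-one matrices, and the topology is the operator-norm topology on \(SU(2)\). Density in this topology is exactly the required statement, with no quotient by scalars.

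\textbf{Step 2: word length.} The lemma as stated only needs finite words, so the efficient Solovay--Kitaev bound of length \(O(\log^{c}(1/\xi))\) from \citet*{DN} is optional. It matters for gate counts, not for query counts, and can be quoted directly. The one hypothesis to verify is that Dawson--Nielsen's theorem applies to inverse-closed generators of a dense subgroup of \(SU(2)\). That is precisely their setting, so no extra work is needed.

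\textbf{Step 3: extend to the circuits used.} The "therefore" clause needs a short remark. Each public rotation or state-preparation unitary is decomposed into single-qubit \(SU(2)\) factors and exact CNOTs; controlled versions use the exact controlled generators. Each factor is replaced by its word \(\widetilde V\). The oracle calls and the classical reversible subcircuits are left untouched. Errors then add by the standard telescoping bound
\[
\|U_m\cdots U_1-\widetilde U_m\cdots\widetilde U_1\|\le\sum_j\|U_j-\widetilde U_j\|,
\]
which holds because all factors are unitary. A total error \(\xi\) is therefore obtained by synthesizing each of the \(m\) factors to accuracy \(\xi/m\).

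\textbf{Main obstacle.} The delicate point is global phase in the controlled versions. Decomposing a general \(U(2)\) target as a phase times an \(SU(2)\) element introduces a relative phase once the gate is controlled. The plan avoids this by insisting that every specified target already lies in \(SU(2)\), which is exactly how the lemma is phrased. Controlled gates are then built from exact controlled generators, so no phase correction is ever needed.
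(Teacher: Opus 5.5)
Your Step 1 settles the literal content of the lemma, and it matches what the paper uses. Density of the generated subgroup in $SU(2)$ itself, rather than modulo scalars, gives a finite word with $\|V-\widetilde V\|\le\xi$. As you note in Step 2, and as the paper says right after the lemma, only existence matters; the Solovay--Kitaev length does not. Building the controlled version from exact controlled generators also preserves the error, since $C(V)-C(\widetilde V)=\operatorname{diag}(0,V-\widetilde V)$.

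The gap is in the ``therefore'' part: your Step 3 and your final paragraph, where you say no phase correction is ever needed.

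\begin{itemize}
\item The public operations the paper actually needs are not all determinant-one. The value-controlled phase $|S\rangle\mapsto e^{\ii\lambda f(S)}|S\rangle$ and the score phase $e^{\ii\pi m f_S(T)/\Delta}$ are built from gates such as $\operatorname{diag}(1,e^{\ii\phi})\notin SU(2)$.
\item Every product of generator words and the exact gates (NOT, CNOT, Toffoli, Hadamard, phase flip, controlled generators) has determinant $\pm1$ on the register it acts on. So, without an ancilla, your decomposition ``into single-qubit $SU(2)$ factors and exact CNOTs'' approximates only a scalar multiple $e^{-\ii\Phi}U$ of such a unitary, never $U$ itself (for generic $\phi$).
\item Dropping that scalar is harmless only for a circuit used at the outermost, uncontrolled level. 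The paper uses these subcircuits coherently. The frequency $m_j$ sits in a seed register that controls the phase, and the full algorithm is written in deferred-measurement form with conditionally applied subcircuits and controlled queries.
\item Under control, the dropped scalar becomes a branch-relative phase, and nothing in your argument bounds it. If the controlling register carried information about the query address, it would alter the very coherences the estimator measures.
\end{itemize}

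So ``insisting that every specified target already lies in $SU(2)$'' assumes the problem away rather than solving it. The paper's fix is short. Write $U=e^{\ii\phi}V$ with $V\in SU(2)$. Implement the scalar by $C(R_z(-2\phi))$ acting on an auxiliary qubit prepared in $|0\rangle$; this multiplies the control-one branch by $e^{\ii\phi}$. That target rotation is again in $SU(2)$, so it is synthesized by the same lemma with exact control bits, and the branch-relative phase survives any further level of control. Adding this step completes your Step 3. Equivalently, you could explicitly transfer every dropped phase onto the controlling qubit as a rotation, handled the same way.
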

Only the existence of these approximations is needed. Their lengths are not part of the value-query complexity.

To preserve this guarantee under coherent control, replace a word by the product of its exact controlled generators. The resulting controlled operation $C(\widetilde V)=\operatorname{diag}(I,\widetilde V)$ fixes the control bit and satisfies
\[
 \|C(V)-C(\widetilde V)\|=\|V-\widetilde V\|.
\]
Any scalar phase required on a controlled branch is retained explicitly. Define
\[
 R_z(\theta)=\operatorname{diag}(e^{-i\theta/2},e^{i\theta/2}).
\]
Applying $C(R_z(-2\phi))$ to an auxiliary qubit initially in $|0\rangle$ multiplies the control-one branch by $e^{i\phi}$. Synthesize this determinant-one target rotation by the same rule and retain its auxiliary register. Thus a factorization $U=e^{i\phi}V$ never loses its branch-relative phase when controlled; the exact control bits are preserved even by the approximating circuit.

The following elementary Fourier convergence fact will justify reconstruction of the periodic, piecewise linear kernels used by the simultaneous score estimator.
\begin{lemma}[Fourier Convergence {\citep*[p.~77]{Tolstov}}]\label{card:tool:fourier-convergence}
Let $w$ be a continuous $2\Delta$-periodic function that is continuously differentiable on each of finitely many pieces of a period, with finite one-sided derivatives at their endpoints. Put
\[
 \widehat w_m=\frac1{2\Delta}\int_{-\Delta}^{\Delta}w(x)e^{-i\pi m x/\Delta}\,dx.
\]
Then, for every real $x$,
$w(x)=\lim_{J\to\infty}\sum_{m=-J}^{J}\widehat w_m e^{i\pi m x/\Delta}$.
\end{lemma}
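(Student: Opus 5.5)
This is the Dirichlet--Jordan pointwise convergence theorem for a continuous, piecewise \(C^1\) periodic function. Since it is cited from Tolstov, I would only sketch the classical argument. Its structure is: Dirichlet-kernel representation of the partial sums, then the Riemann--Lebesgue lemma applied to a difference quotient that is bounded because of the one-sided derivatives.

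First rescale \(t=\pi x/\Delta\) to reduce to \(2\pi\)-periodic \(g(t)=w(\Delta t/\pi)\). The hypotheses are preserved: continuity, finitely many \(C^1\) pieces per period, and finite one-sided derivatives at the breakpoints. The coefficients \(\widehat w_m\) become the standard coefficients of \(g\). Fix a real \(t\). Insert the coefficient integrals into the symmetric partial sum and sum the geometric series. This gives
\[
 S_J(t)-g(t)=\frac1{2\pi}\int_{-\pi}^{\pi}\bigl(g(t+s)-g(t)\bigr)D_J(s)\,ds,
 \qquad D_J(s)=\frac{\sin((J+\tfrac12)s)}{\sin(s/2)},
\]
where we use \(\frac1{2\pi}\int D_J=1\) and periodicity to shift the integration window.

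Next write the integrand as \(h(s)\sin((J+\tfrac12)s)\), with
\[
 h(s)=\frac{g(t+s)-g(t)}{\sin(s/2)}.
\]
The key step is to show \(h\in L^1[-\pi,\pi]\), in fact bounded. Away from \(s=0\), \(h\) is piecewise continuous with finitely many jumps at most, since \(g\) is continuous and \(\sin(s/2)\neq0\) on \([-\pi,\pi]\setminus\{0\}\). Near \(s=0\), the quotient \((g(t+s)-g(t))/s\) has finite limits as \(s\to0^\pm\). These limits are the one-sided derivatives of \(g\) at \(t\), which exist and are finite because \(t\) lies in the interior of a \(C^1\) piece or at a breakpoint with finite one-sided derivatives. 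Meanwhile \(s/\sin(s/2)\to2\), so \(h\) stays bounded near \(0\).

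Finally apply the Riemann--Lebesgue lemma to the integrable function \(h\), using \(\sin((J+\tfrac12)s)=\sin(Js)\cos(s/2)+\cos(Js)\sin(s/2)\). Both resulting integrals tend to \(0\) as \(J\to\infty\), so \(S_J(t)\to g(t)\), which is the claim after undoing the rescaling.

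The main obstacle, and the only place the hypotheses matter, is the boundedness of \(h\) near \(s=0\) at a breakpoint of \(w\). There, continuity ensures that no jump term appears (the symmetric sums would otherwise converge to the average of the one-sided limits). The finiteness of the one-sided derivatives is exactly what controls the difference quotient.
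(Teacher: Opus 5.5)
Your proposal is correct: the rescaling, the Dirichlet-kernel representation, the boundedness of $h$ near $s=0$ via the finite one-sided derivatives, and the Riemann--Lebesgue step together give the classical pointwise convergence proof. The paper gives no proof of its own and only cites Tolstov, whose proof of this theorem proceeds in essentially this way.
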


\section{A noisy finite-step double-greedy framework}\label{usm:sec:framework}
We first describe the optimization process independently of how its
gradients are estimated. This separates the optimization argument from
the quantum measurement argument in Section~\ref{usm:sec:gradient}.
Fix a bracket and its restricted extension \(F\).

\subsection{The process and its intended effect}
The simultaneous double-greedy process of \citep*{ChenFeldmanKarbasi2019} maintains
two vectors \(x\le y\). Raising \(x_i\) has instantaneous gain
\(a_i=\partial_iF(x)\); lowering \(y_i\) has instantaneous gain
\(b_i=-\partial_iF(y)\). By Lemma~\ref{pre:lem:dr},
\(a_i+b_i\ge0\). If both gains are positive, we split the movement in
proportion to them. If only one direction is beneficial, the whole
movement uses that direction.

Estimates need not satisfy the same constraint. We therefore repair
a noisy pair before applying the response rule:
\begin{equation}\label{usm:eq:repair}
 \Repair(\widehat a,\widehat b)=
 \begin{cases}
  (\widehat a,\widehat b),&\widehat a+\widehat b\ge0,\\
  (-\widehat b,\widehat b),&\widehat a+\widehat b<0.
 \end{cases}
\end{equation}
This repair changes only the first coordinate when necessary.
For \(u+v\ge0\), define
\begin{equation}\label{usm:eq:response}
 \Resp(u,v)=
 \begin{cases}
  u/(u+v),&u>0,\ v>0,\\
  1,&u>0,\ v\le0,\\
  0,&u\le0.
 \end{cases}
\end{equation}
The convention at \((0,0)\) is \(0\).

Let \(\mathcal G(q)\) be any procedure returning a gradient estimate
at \(q\). Algorithm~\ref{usm:alg:dg} uses a fixed maximum step \(\zeta\).
Given the estimates, its rates are deterministic fractional velocities;
it does not toss a coin with probability \(r_i\) in each iteration.
In contrast to the gradient-based line search in \citep*{ChenFeldmanKarbasi2019},
we allow the gradients to change during a finite step and account for
that change explicitly in the proof.

\begin{pseudocode}{Noisy fractional double greedy
\(\QDG(F,\mathcal G,\rho,\zeta)\)}\label{usm:alg:dg}
\textbf{Input:} \(0<\rho<1/2\), \(\zeta>0\), and a gradient procedure \(\mathcal G\).
\begin{enumerate}
 \item \(x^0\gets\rho\one,\quad y^0\gets(1-\rho)\one,\quad
        d_0\gets1-2\rho,\quad t\gets0\).
 \item \textbf{while} \( d_t>0\):
 \begin{enumerate}[label=(\alph*),leftmargin=2em]
  \item \(\widehat a\gets\mathcal G(x^t),\quad
        \widehat b\gets-\mathcal G(y^t)\).
        \Comment{Adding and deleting estimates.}
  \item \((u_i,v_i)\gets\Repair(\widehat a_i,\widehat b_i),\quad
        r_i^t\gets\Resp(u_i,v_i)\quad(i\in U)\).
  \item \(s_t\gets\min\{\zeta, d_t\}\).
  \item \(x^{t+1}\gets x^t+s_tr^t,\quad
        y^{t+1}\gets y^t-s_t(\one-r^t)\).
  \item \( d_{t+1}\gets d_t-s_t,\quad t\gets t+1\).
        \Comment{All coordinate gaps shrink equally.}
 \end{enumerate}
 \item \textbf{return} \(w\gets x^t=y^t\).
\end{enumerate}
\end{pseudocode}

The bracket remains fixed. Only its fractional probabilities move.
The common-gap property makes termination predictable, and the
interior initialization keeps quantum measurements away from zero
signal strength. The returned vector is rounded only by the complete
algorithm in Section~\ref{usm:sec:complete}.

\subsection{Analysis of the process}
Define its gradient-difference potential by
\begin{equation}\label{usm:eq:pair-potential}
 \Psi(x,y)=\sum_{i\in U}
       \bigl(\partial_iF(x)-\partial_iF(y)\bigr).
\end{equation}
Each update decreases this potential. It therefore provides a budget
for the finite-step change of gradients without a line search.
The invariants in the next lemma follow the simultaneous process
of \citep*{ChenFeldmanKarbasi2019}; we include the proof and its explicit termination
bound.

\begin{lemma}\label{usm:lem:dg-invariants}
For every sequence of gradient estimates, Algorithm~\ref{usm:alg:dg}
preserves \(x^t\le y^t\),
\(y^t-x^t= d_t\one\), and
\(x^t,y^t\in[\rho,1-\rho]^U\).
If \(U\ne\varnothing\), it terminates after
\[
 K=\left\lceil(1-2\rho)/\zeta\right\rceil
\]
iterations. Along its updates, \(\Psi\) is nonnegative and nonincreasing.
\end{lemma}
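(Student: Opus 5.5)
We argue by induction on \(t\), checking the invariants after a single update of Algorithm~\ref{usm:alg:dg}. The first step is that every rate satisfies \(r_i^t\in[0,1]\). \Repair{} always returns a pair \((u,v)\) with \(u+v\ge0\): in the second case the sum is exactly zero. On such pairs \Resp{} takes only the values \(0\), \(1\), or \(u/(u+v)\) with \(u,v>0\), and each lies in \([0,1]\).

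With the rates in hand, the update gives
\[
y^{t+1}-x^{t+1}=y^t-x^t-s_t\bigl(r^t+\one-r^t\bigr)=(d_t-s_t)\one=d_{t+1}\one .
\]
Since \(s_t\le d_t\), we get \(d_{t+1}\ge0\), and hence \(x^{t+1}\le y^{t+1}\). Next, \(x\) is coordinatewise nondecreasing and \(y\) is nonincreasing, so \(x^{t+1}\ge x^0=\rho\one\) and \(y^{t+1}\le y^0=(1-\rho)\one\). Combining these with \(x^{t+1}\le y^{t+1}\) places both vectors in \([\rho,1-\rho]^U\).

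For termination, note that \(d_t\) decreases by \(\min\{\zeta,d_t\}\) in each step. So \(d_t=\max\{1-2\rho-t\zeta,0\}\), which first reaches zero at \(t=K=\lceil(1-2\rho)/\zeta\rceil\). Here \(K\ge1\) because \(\rho<1/2\). At that point \(x^K=y^K\), so the returned vector \(w\) is well defined. The hypothesis \(U\ne\varnothing\) is needed only so that the vectors and the count of iterations are meaningful.

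For the potential, nonnegativity follows directly from Lemma~\ref{pre:lem:dr}. Since \(x^t\le y^t\), the DR property \eqref{pre:eq:dr} gives \(\partial_iF(x^t)\ge\partial_iF(y^t)\) for each \(i\), and summing yields \(\Psi\ge0\).

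Monotonicity of \(\Psi\) is the only step that needs care. Again by \eqref{pre:eq:dr}: because \(x^{t+1}\ge x^t\), each term satisfies \(\partial_iF(x^{t+1})\le\partial_iF(x^t)\). Because \(y^{t+1}\le y^t\), each term satisfies \(-\partial_iF(y^{t+1})\le-\partial_iF(y^t)\). Adding the two inequalities coordinatewise and summing over \(U\) gives \(\Psi(x^{t+1},y^{t+1})\le\Psi(x^t,y^t)\).

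This argument uses only the coordinatewise monotone movement of the endpoints. It needs neither the accuracy of the estimates nor any line search, which is why the lemma holds for every sequence of gradient estimates. The one structural fact the proof depends on is that \Repair{} forces \(u+v\ge0\), so that \Resp{} is well defined with values in \([0,1]\). I expect confirming this to be the only potential obstacle, and it is immediate from \eqref{usm:eq:repair}.
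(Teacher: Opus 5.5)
Your proof is correct and follows the paper's argument essentially step by step. The steps are: rates in \([0,1]\) because the repair forces \(u_i+v_i\ge0\); the common-gap identity giving \(x^t\le x^{t+1}\le y^{t+1}\le y^t\); counting step lengths for termination; and the DR property of Lemma~\ref{pre:lem:dr} for both the nonnegativity and the monotonicity of \(\Psi\). Your explicit formula \(d_t=\max\{1-2\rho-t\zeta,0\}\) simply spells out the paper's remark that every nonfinal step has length \(\zeta\).
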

\begin{proof}
The repair satisfies \(u_i+v_i\ge0\), and \eqref{usm:eq:response} gives
\(r_i\in[0,1]\). For a step \(s\le d\),
\[
 y'-x'=(y-x)-s\one=( d-s)\one.
\]
Thus \(x\le x'\le y'\le y\). This proves the order and interior
invariants inductively. Every nonfinal step has length \(\zeta\);
the last uses the remaining gap, proving the iteration count.
Lemma~\ref{pre:lem:dr} gives \(\Psi\ge0\).
As \(x\) increases, \(\partial_iF(x)\) decreases, and as \(y\)
decreases, \(\partial_iF(y)\) increases. Therefore \(\Psi\) decreases.
\end{proof}

\begin{lemma}\label{usm:lem:repair-local}
Let \(a+b\ge0\), let
\(e=|\widehat a-a|+|\widehat b-b|\), and set
\((u,v)=\Repair(\widehat a,\widehat b)\), \(r=\Resp(u,v)\).
Then \(|u-a|+|v-b|\le2e\). For either \(o\in\{0,1\}\), put
\[
 \mathcal L_o(a,b;r)=o(1-r)a+(1-o)rb.
\]
Then
\begin{equation}\label{usm:eq:local-noisy}
 ra+(1-r)b\ge2\mathcal L_o(a,b;r)-4e.
\end{equation}
\end{lemma}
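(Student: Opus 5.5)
The plan is to prove the inequality first for the repaired pair \((u,v)\), where it holds exactly with no error term. Then transfer it to the true pair \((a,b)\), paying for the two coordinate discrepancies separately. For the transfer we need a slightly sharper per-coordinate statement than the one claimed: \(|u-a|\le e\) and \(|v-b|\le e\) individually. Adding these gives \(|u-a|+|v-b|\le2e\).

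\emph{Repair bound.} If \(\widehat a+\widehat b\ge0\), then \((u,v)=(\widehat a,\widehat b)\), and both coordinate errors are at most \(e\). Otherwise \(v=\widehat b\), so \(|v-b|\le e\), and \(u=-\widehat b\). Since \(\widehat a+\widehat b<0\), we have \(u-a=-\widehat b-a>\widehat a-a\). Since \(a+b\ge0\), we have \(u-a=-\widehat b-a\le b-\widehat b\). Hence \(\widehat a-a<u-a\le b-\widehat b\), which gives
\[
 |u-a|\le\max\{|\widehat a-a|,|\widehat b-b|\}\le e .
\]

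\emph{Exact local inequality.} With \(u+v\ge0\) and \(r=\Resp(u,v)\), we claim that
\[
 ru+(1-r)v\ge2\mathcal L_o(u,v;r)\qquad\text{for both } o\in\{0,1\}.
\]
We check the three cases of \eqref{usm:eq:response}.
\begin{itemize}
 \item If \(u,v>0\), then \(r=u/(u+v)\). The left side equals \((u^2+v^2)/(u+v)\), which is at least \(2uv/(u+v)\). This last quantity equals both \(2(1-r)u\) and \(2rv\), so the claim holds for either \(o\).
 \item If \(u>0\ge v\), then \(r=1\). The left side is \(u>0\), while \(\mathcal L_o=(1-o)v\le0\).
 \item If \(u\le0\), then \(r=0\). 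The left side is \(v\ge-u\ge0\), while \(\mathcal L_o=ou\le0\).
\end{itemize}
This case analysis is the classical double-greedy step, and it is routine.

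\emph{Transfer.} The map \((a,b)\mapsto ra+(1-r)b\) has weights \(r\) and \(1-r\) in \([0,1]\). Replacing \((u,v)\) by \((a,b)\) therefore changes the left side by at most \(r|a-u|+(1-r)|b-v|\le\max\{|a-u|,|b-v|\}\le e\); using the summed bound gives at most \(2e\) anyway.

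The term \(2\mathcal L_o\) involves a single coordinate with coefficient at most \(2\):
\begin{itemize}
 \item for \(o=1\) it is \(2(1-r)a\), which differs from \(2(1-r)u\) by at most \(2|a-u|\le2e\);
 \item for \(o=0\) it is \(2rb\), which differs from \(2rv\) by at most \(2|b-v|\le2e\).
\end{itemize}
Combining these with the exact inequality gives \(ra+(1-r)b\ge2\mathcal L_o(a,b;r)-4e\). This is \eqref{usm:eq:local-noisy}.

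The only delicate point is the per-coordinate control of the repaired value \(u\) in the second repair case. That bound uses the hypothesis \(a+b\ge0\) and keeps the overall constant at \(4\) rather than \(6\).
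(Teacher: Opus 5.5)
Your proof is correct and has the same three steps as the paper: bound the repair error, check the exact inequality for the repaired pair via the three cases of \(\Resp\), and transfer back to \((a,b)\) by a linear perturbation. The only difference is bookkeeping in the transfer step. The paper bounds the change of the single combined expression \(rp+(1-r)q-2\mathcal L_o(p,q;r)\), whose coefficients have magnitude at most \(2\), so the summed bound \(|u-a|+|v-b|\le2e\) suffices; you perturb the two sides separately and therefore need your sharper per-coordinate bound \(|u-a|\le e\), which you prove correctly from \(\widehat a-a<u-a\le b-\widehat b\) and which in fact yields \(3e\).
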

\begin{proof}
If the estimate satisfies the constraint, its distance to \((a,b)\)
is \(e\). Otherwise its deficit
\(d=-(\widehat a+\widehat b)>0\) is at most \(e\), because \(a+b\ge0\).
The repair changes the first coordinate by \(d\), so the triangle
inequality bounds the new distance by \(e+d\le2e\).

For the repaired pair, first suppose \(u,v>0\). Both choices of \(o\)
give \(\mathcal L_o(u,v;r)=uv/(u+v)\), while
\[
 ru+(1-r)v-2\mathcal L_o(u,v;r)
 =\frac{(u-v)^2}{u+v}\ge0.
\]
If \(u>0,v\le0\), then \(r=1\), the left gain is \(u>0\), and
\(\mathcal L_1=0,\mathcal L_0=v\le0\). If \(u\le0\), the constraint
forces \(v\ge0\), \(r=0\), and the left gain is \(v\ge0\), whereas
\(\mathcal L_1=u\le0,\mathcal L_0=0\). Thus the same inequality holds
in every case. For fixed \(r,o\), the expression
\(rp+(1-r)q-2\mathcal L_o(p,q;r)\) is linear in \(p,q\);
each coefficient has magnitude at most \(2\).
Replacing \((u,v)\) by \((a,b)\) changes it by at most
\(2(|u-a|+|v-b|)\le4e\), proving \eqref{usm:eq:local-noisy}.
\end{proof}

\begin{lemma}\label{usm:lem:finite-step}
Suppose \(x,y\in[0,1]^U\), \(x\le y\),
\(y-x= d\one\), and \(0\le s\le d\).
Form rates from noisy estimates of
\(a_i=\partial_iF(x)\), \(b_i=-\partial_iF(y)\) using
\eqref{usm:eq:repair}--\eqref{usm:eq:response}, and set
\(x'=x+sr\), \(y'=y-s(\one-r)\).
For \(o\in\{0,1\}^U\), define
\(z=(o\vee x)\wedge y\), \(z'=(o\vee x')\wedge y'\).
If \(E=\sum_i(|\widehat a_i-a_i|+|\widehat b_i-b_i|)\), then
\begin{equation}\label{usm:eq:finite-step}
 \begin{split}
 [F(x')+F(y')]-[F(x)+F(y)]
 \ge{}&2[F(z)-F(z')]\\
 &-s[\Psi(x,y)-\Psi(x',y')]-4sE.
 \end{split}
\end{equation}
\end{lemma}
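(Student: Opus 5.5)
Since \(F\) is multilinear, I will expand each of the three differences \(F(x')-F(x)\), \(F(y')-F(y)\) and \(F(z)-F(z')\) by the fundamental theorem of calculus along the straight segment joining its endpoints. The gradient along each segment will be compared with the gradient at the segment's starting point. Any deviation will be charged to the decrease of \(\Psi\).

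\textbf{Step 1: the coordinates of \(z\) and \(z'\).}
I first compute these coordinatewise. If \(o_i=1\), then \(z_i=y_i\) and \(z'_i=y'_i=y_i-s(1-r_i)\). If \(o_i=0\), then \(z_i=x_i\) and \(z'_i=x'_i=x_i+sr_i\). So along the segment \(z(\tau)\) from \(z\) to \(z'\) (\(\tau\in[0,s]\)), coordinate \(i\) moves with velocity \(-(1-r_i)\) if \(o_i=1\) and \(+r_i\) if \(o_i=0\). Moreover \(x(\tau)\le z(\tau)\le y(\tau)\) throughout, where \(x(\tau)=x+\tau r\) and \(y(\tau)=y-\tau(\one-r)\).

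\textbf{Step 2: the left side of \eqref{usm:eq:finite-step}.}
The left side equals
\[
\int_0^s\sum_i\bigl[r_i\partial_iF(x(\tau))+(1-r_i)\bigl(-\partial_iF(y(\tau))\bigr)\bigr]\,d\tau .
\]
By Lemma~\ref{pre:lem:dr}, \(\partial_iF(x(\tau))\) is nonincreasing in \(\tau\) and \(-\partial_iF(y(\tau))\) is nonincreasing too. Write \(\partial_iF(x(\tau))=a_i-\alpha_i(\tau)\) and \(-\partial_iF(y(\tau))=b_i-\beta_i(\tau)\), where \(\alpha_i,\beta_i\ge0\) are nondecreasing in \(\tau\). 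Then
\[
\text{LHS}\ \ge\ s\sum_i\bigl[r_ia_i+(1-r_i)b_i\bigr]-\int_0^s\sum_i(\alpha_i+\beta_i)\,d\tau .
\]
The last sum is bounded by its value at \(\tau=s\), which is \(\sum_i(\alpha_i(s)+\beta_i(s))=\Psi(x,y)-\Psi(x',y')\), using \(r_i,1-r_i\le1\). This yields the \(-s[\Psi(x,y)-\Psi(x',y')]\) term.

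\textbf{Step 3: the local bound.}
Applying Lemma~\ref{usm:lem:repair-local} coordinatewise, with \(e_i\) summing to \(E\), gives
\[
s\sum_i\bigl[r_ia_i+(1-r_i)b_i\bigr]\ \ge\ 2s\sum_i\mathcal L_{o_i}(a_i,b_i;r_i)-4sE .
\]

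\textbf{Step 4: bounding \(F(z)-F(z')\).}
It remains to show \(F(z)-F(z')\le s\sum_i\mathcal L_{o_i}(a_i,b_i;r_i)\). Write
\[
F(z)-F(z')=-\int_0^s\sum_i \dot z_i\,\partial_iF(z(\tau))\,d\tau .
\]
For \(o_i=1\), the integrand term is \((1-r_i)\partial_iF(z(\tau))\le(1-r_i)\partial_iF(x)=(1-r_i)a_i\), by DR-submodularity since \(z(\tau)\ge x\). For \(o_i=0\), it is \(-r_i\partial_iF(z(\tau))\le -r_i\partial_iF(y)=r_ib_i\), since \(z(\tau)\le y\). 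These two terms are exactly \(\mathcal L_{o_i}\). Integrating gives the claim. Multiplying by \(2\) and combining with Steps 2 and 3 yields \eqref{usm:eq:finite-step}.

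\textbf{Main obstacle.}
The delicate point is Step 2: the drift must be charged to the drop of \(\Psi\) without double counting. This works because \(\alpha_i,\beta_i\) are monotone, so each integrand is at most its endpoint value and the integral is at most \(s\) times the endpoint. Some care is also needed with DR-submodularity in Step 4 when coordinate \(i\) itself differs, since \(\partial_iF\) does not depend on \(q_i\) by multilinearity. Only the other coordinates matter there, and they satisfy the needed order.
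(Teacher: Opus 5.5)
Your proof is correct and follows essentially the same route as the paper. You compare the along-path gradients of \(F(x(t))+F(y(t))\) with those at the initial endpoints and charge the drift to the decrease of \(\Psi\); you bound the comparator loss \(-\frac{d}{dt}F(z(t))\) by DR-submodularity using \(x\le z(t)\le y\); and you apply Lemma~\ref{usm:lem:repair-local} at the initial gradients, whose hypothesis \(a_i+b_i\ge0\) holds by Lemma~\ref{pre:lem:dr} since \(x\le y\). The only difference is cosmetic: the paper combines these bounds into the pointwise inequality \(g(0)\ge2\ell(t)-4E\) before integrating, whereas you integrate each piece separately.
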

\begin{proof}
For \(0\le t\le s\), let
\[
 x(t)=x+tr,\quad y(t)=y-t(\one-r),\quad
 z(t)=(o\vee x(t))\wedge y(t).
\]
Define \(a_i(t)=\partial_iF(x(t))\),
\(b_i(t)=-\partial_iF(y(t))\), and
\[
 g(t)=\sum_i[r_ia_i(t)+(1-r_i)b_i(t)],\qquad
 \psi(t)=\sum_i[a_i(t)+b_i(t)].
\]
By the chain rule, \(g(t)\) is the derivative of
\(F(x(t))+F(y(t))\). Both \(a_i(t)\) and \(b_i(t)\) are nonincreasing
by Lemma~\ref{pre:lem:dr}. Their nonnegative decreases imply
\[
 \begin{split}
 g(0)-g(t)
 &=\sum_i\left[r_i(a_i(0)-a_i(t))
       +(1-r_i)(b_i(0)-b_i(t))\right]\\
 &\le\psi(0)-\psi(t)\le\psi(0)-\psi(s).
 \end{split}
\]
Let \(\ell(t)=-\frac{d}{dt}F(z(t))\) be the signed comparator loss.
Since \(o_i\) is binary, \(z_i(t)=y_i(t)\) if \(o_i=1\), and
\(z_i(t)=x_i(t)\) otherwise. Also the \emph{initial} endpoints obey
\(x\le z(t)\le y\). Lemma~\ref{pre:lem:dr} therefore gives
\[
 \begin{split}
 \ell(t)
 &=\sum_i\left[o_i(1-r_i)\partial_iF(z(t))
            -(1-o_i)r_i\partial_iF(z(t))\right]\\
 &\le\sum_i\mathcal L_{o_i}(a_i(0),b_i(0);r_i).
 \end{split}
\]
Summing \eqref{usm:eq:local-noisy} at the initial endpoints yields
\(g(0)\ge2\ell(t)-4E\), for every \(t\).
Combining the two bounds and integrating on \([0,s]\) gives
\[
 \int_0^sg(t)\,dt
 \ge2\int_0^s\ell(t)\,dt
       -s[\psi(0)-\psi(s)]-4sE.
\]
The three integrals and endpoint potentials are exactly the terms in
\eqref{usm:eq:finite-step}. No monotonicity of \(F(z(t))\) is assumed.
\end{proof}

\begin{lemma}\label{usm:lem:telescope}
For any realized estimates in Algorithm~\ref{usm:alg:dg}, fix
\(o\in\{0,1\}^U\) and let \(z^0=(o\vee x^0)\wedge y^0\).
Let \(E_t\) be the total raw error of its two gradient estimates
at iteration \(t\). Then its output satisfies
\begin{equation}\label{usm:eq:telescope}
 \begin{split}
 F(w)\ge{}&
 \tfrac14F(x^0)+\tfrac14F(y^0)+\tfrac12F(z^0)\\
 &-\tfrac\zeta4\Psi(x^0,y^0)-\sum_{t=0}^{K-1}s_tE_t.
 \end{split}
\end{equation}
\end{lemma}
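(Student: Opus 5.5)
The plan is to sum the finite-step inequality \eqref{usm:eq:finite-step} of Lemma~\ref{usm:lem:finite-step} over all \(K\) iterations and telescope. Lemma~\ref{usm:lem:dg-invariants} ensures that its hypotheses hold at each iteration \(t\): \(x^t\le y^t\), \(y^t-x^t=d_t\one\), and \(0\le s_t\le d_t\). We apply it with \(x=x^t,y=y^t,s=s_t\) and the fixed comparator \(o\), writing \(z^t=(o\vee x^t)\wedge y^t\). The resulting inequality is
\[
 [F(x^{t+1})+F(y^{t+1})]-[F(x^t)+F(y^t)]
 \ge2[F(z^t)-F(z^{t+1})]-s_t[\Psi(x^t,y^t)-\Psi(x^{t+1},y^{t+1})]-4s_tE_t.
\]
Summing over \(t=0,\dots,K-1\), the two left brackets telescope to \(2F(w)-F(x^0)-F(y^0)\), since \(x^K=y^K=w\). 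The comparator terms telescope to \(2F(z^0)-2F(z^K)\). At termination the gap is zero, so \(z^K=(o\vee w)\wedge w=w\), and this term becomes \(2F(z^0)-2F(w)\). Moving \(-2F(w)\) to the left gives \(4F(w)\) on the left, with \(F(x^0)+F(y^0)+2F(z^0)\) among the right-hand terms. Dividing by \(4\) then yields the first three terms of \eqref{usm:eq:telescope} and the error term \(-\sum_t s_tE_t\), because \(4s_tE_t/4=s_tE_t\).

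The main point needing care is the potential term \(\sum_t s_t[\Psi^t-\Psi^{t+1}]\), where \(\Psi^t=\Psi(x^t,y^t)\). The step lengths are not constant: the last one may be shorter. However, every \(s_t\le\zeta\), and by Lemma~\ref{usm:lem:dg-invariants} each difference \(\Psi^t-\Psi^{t+1}\) is nonnegative. So I would bound each summand by \(\zeta(\Psi^t-\Psi^{t+1})\) and telescope:
\[
 \sum_t s_t(\Psi^t-\Psi^{t+1})\le\zeta(\Psi^0-\Psi^K)\le\zeta\Psi(x^0,y^0).
\]
The last inequality uses \(\Psi^K\ge0\), also from Lemma~\ref{usm:lem:dg-invariants}. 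After division by \(4\) this gives \(-\tfrac\zeta4\Psi(x^0,y^0)\).

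Two degenerate cases should be noted. If \(U=\varnothing\), all quantities are trivial: \(F\) is constant, \(\Psi=0\), and \(w=x^0=y^0=z^0\), so the inequality reads \(F(w)\ge F(w)\). Likewise, if \(d_0=0\) there are no iterations and \(w=x^0=y^0=z^0\) directly. The inequality holds for every realization of the estimates, because the finite-step inequality is pathwise.
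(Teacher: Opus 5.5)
Your proposal is correct and is essentially the paper's proof: apply Lemma~\ref{usm:lem:finite-step} at each iteration with the same \(o\), telescope using \(x^K=y^K=z^K=w\), bound \(\sum_t s_t(\Psi_t-\Psi_{t+1})\le\zeta\Psi_0\) via \(s_t\le\zeta\) and the nonnegative potential decrements, and then rearrange and divide by four. The only cosmetic difference is that you use \(\Psi_K\ge0\) where the paper observes \(\Psi_K=0\) (since \(x^K=y^K\)); either suffices.
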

\begin{proof}
Apply Lemma~\ref{usm:lem:finite-step} in every iteration, with the
same \(o\), and sum. At termination \(x^K=y^K=z^K=w\), so
\[
 2F(w)-F(x^0)-F(y^0)
 \ge2F(z^0)-2F(w)
   -\sum_ts_t(\Psi_t-\Psi_{t+1})-4\sum_ts_tE_t.
\]
Here \(\Psi_t=\Psi(x^t,y^t)\). By Lemma~\ref{usm:lem:dg-invariants},
these potential differences are nonnegative and
\(\Psi_K=0\). Since \(s_t\le\zeta\),
\(\sum_ts_t(\Psi_t-\Psi_{t+1})\le\zeta\Psi_0\).
Rearranging and dividing by four proves \eqref{usm:eq:telescope}.
\end{proof}

\section{A simultaneous quantum gradient estimator}\label{usm:sec:gradient}

Fix a bracket \(X\subseteq Y\) and a point \(q\in(0,1)^U\).
Our goal is to estimate all means in \eqref{pre:eq:gradient-mean}.
The algorithm below takes a scale \(B>0\) and an accuracy
\(0<\beta<1/4\). Its correctness analysis assumes \(B\ge\OPT_{\mathrm u}\);
the procedure itself has a bounded query schedule for every positive
\(B\). Empty \(U\) or zero \(B\) is handled before defining any
coordinate minimum or scale.

\subsection{Algorithm description}

\paragraph{Accessing trigonometric marginal statistics.}
For nonempty \(U\), define
\[
 \kappa_i=4q_i(1-q_i),\qquad \kappa=\min_{i\in U}\kappa_i.
\]
Choose the real product-state rotations
\[
 V_i=\begin{pmatrix}\sqrt{1-q_i}&-\sqrt{q_i}\\
                    \sqrt{q_i}&\sqrt{1-q_i}\end{pmatrix}.
\]
Their product prepares
\(\sum_{R\subseteq U}\sqrt{\mu_q(R)}|R\rangle\).
The compute--phase--uncompute operation of
Section~\ref{sec:prelim} then prepares
\begin{equation}\label{usm:eq:phase-state}
 |\psi_{\lambda,q}\rangle
 =\sum_{R\subseteq U}\sqrt{\mu_q(R)}
       e^{\ii\lambda f(X\cup R)}|R\rangle.
\end{equation}
For clarity, the value-register evolution is
\[
 |R\rangle|0\rangle
 \longmapsto |R\rangle|\enc(f(X\cup R))\rangle
 \longmapsto e^{\ii\lambda f(X\cup R)}
             |R\rangle|\enc(f(X\cup R))\rangle
 \longmapsto e^{\ii\lambda f(X\cup R)}|R\rangle|0\rangle.
\]
This costs two queries per independently prepared copy.

Let \(\varphi_i(\lambda)=\E_{T\sim\mu_{q,-i}}e^{\ii\lambda G_i(T)}\).
Lemma~\ref{usm:lem:phase-identity} below shows that an equatorial
measurement of qubit \(i\) yields a sign sample with mean
\(\sqrt{\kappa_i}\operatorname{Re}
(e^{\ii\vartheta}\varphi_i(\lambda))\).
Measurements on all coordinates can be made on the same copy.

\paragraph{Converting trigonometric statistics into means.}
These measurements provide sine and cosine expectations, whereas the
gradient requires \(\E G_i\). Directly approximating the mean at one
small frequency would require controlling both approximation bias and
sampling noise, including on coordinates with very small marginals.
We instead reconstruct a marginal from layers at different magnitudes.
This lets each coordinate's error scale with the size of its actual
contextual marginals rather than with the common upper bound \(B\).

Define the compact triangular function
\[
 h(u)=
 \begin{cases}
  2u-1,&1/2\le u\le1,\\
  2-u,&1\le u\le2,\\
  0,&\text{otherwise}.
 \end{cases}
\]
Set
\[
 \gamma=\beta/100,\qquad
 \tau_0=\frac{\beta B}{16n},\qquad
 \tau_s=2^s\tau_0,\qquad
 J=\min\{j:\tau_j\ge B\}.
\]
For \(0\le s\le J\), let
\[
 \ell_{s,+}(z)=\tau_sh(z/\tau_s),\qquad
 \ell_{s,-}(z)=\tau_sh(-z/\tau_s),\qquad
 u_{i,s,\nu}=\E\ell_{s,\nu}(G_i)\quad(\nu\in\{+,-\}).
\]
The positive and negative layers only respond when a marginal has the
appropriate sign and magnitude comparable to \(\tau_s\).
Lemma~\ref{usm:lem:layers} proves their reconstruction property.

To use the available phase measurements, we express the normalized
layers as periodic functions. Define them explicitly by
\[
 \mathsf h_{s,+}(z)=\sum_{r\in\mathbb Z}
                  h\!\left(\frac{z+8Br}{\tau_s}\right),
 \qquad
 \mathsf h_{s,-}(z)=\mathsf h_{s,+}(-z).
\]
They have period \(8B\). On \([-B,B]\) they equal
\(h(z/\tau_s)\) and \(h(-z/\tau_s)\), respectively.
Let \(\omega=\pi/(4B)\). The Fourier coefficients of these
\emph{periodized} functions are
\begin{equation}\label{usm:eq:fourier-definition}
 c_{s,\nu,k}=\frac1{8B}\int_{-4B}^{4B}
       \mathsf h_{s,\nu}(z)e^{-\ii k\omega z}\,dz.
\end{equation}
Choose \(D_s=\lceil16B/(\gamma\tau_s)\rceil\), and define
\[
 \mathsf P_{s,\nu}(z)
 =c_{s,\nu,0}
  +2\sum_{k=1}^{D_s}\operatorname{Re}
        (c_{s,\nu,k}e^{\ii k\omega z}),\qquad
 A_{s,\nu}=2\sum_{k=1}^{D_s}|c_{s,\nu,k}|.
\]
Lemma~\ref{usm:lem:fourier} proves that this choice has uniform error at
most \(\gamma/4\) and \(A_{s,\nu}=O(1)\).

Algorithm~\ref{usm:alg:layer} samples Fourier frequencies in proportion
to the coefficient magnitudes. If \(c_{s,\nu,k}\ne0\), its angle is
\(\arg c_{s,\nu,k}\), so that the equatorial measurement incorporates
the complex phase of that coefficient. It returns one estimate for
every coordinate. The number of copies is chosen using the known
\(\kappa\), not any unknown marginal.

\begin{pseudocode}{Fourier estimate of one normalized layer}\label{usm:alg:layer}
\textbf{Input:} \(X,Y,q,B,s,\nu\), and the parameters above.
\begin{enumerate}
 \item \(A\gets A_{s,\nu}\), \(c_k\gets c_{s,\nu,k}\),
       \(\delta_*\gets\gamma^2/100\).
 \item \textbf{if} \(A=0\), \textbf{return}
       \(v_i\gets\clip_{[0,1]}(c_0)\) for all \(i\in U\).
 \item \(m\gets\left\lceil
       \dfrac{32A^2}{\kappa\gamma^2}\log\dfrac2{\delta_*}
       \right\rceil\).
 \item \textbf{for} \(r=1,\ldots,m\), independently:
 \begin{enumerate}[label=(\alph*),leftmargin=2em]
  \item Draw \(K_r\) with
        \(\Prb(K_r=k)=2|c_k|/A\), \(1\le k\le D_s\).
  \item Prepare \(|\psi_{K_r\omega,q}\rangle\).
  \item Measure \(P_{\arg c_{K_r}}^{(i)}\) for every \(i\in U\);
        denote the outcomes by \(Z_i^{(r)}\in\{-1,1\}\).
 \end{enumerate}
 \item \textbf{return}
       \(v_i\gets\clip_{[0,1]}
       \left(c_0+\dfrac{A}{m\sqrt{\kappa_i}}
                          \sum_{r=1}^m Z_i^{(r)}\right)\).
\end{enumerate}
\end{pseudocode}

\paragraph{Suppressing contributions from inactive coordinates.}
An estimate of a zero layer can occasionally exceed a threshold.
If we controlled this only by making every coordinate correct
simultaneously, the confidence allocation would depend on the
ground-set size. We instead screen the estimate with an independent
test that has exactly zero response when \(G_i\equiv0\).

At scale \(\tau_s\), prepare the phase state at frequency \(1/\tau_s\),
apply \(\bigotimes_{i\in U}V_i^\dagger\), and measure in the computational
basis. Algorithm~\ref{usm:alg:mask} records whether each coordinate ever
returns one. The constant \(C_m=8\) is fixed throughout; it is unrelated
to the constants in the Fourier bounds. Different coordinates use
different initial segments of the same sequence of independent copies.

\begin{pseudocode}{Zero-response screens at scale \(s\)}\label{usm:alg:mask}
\textbf{Input:} \(X,Y,q,\tau_s,\gamma\).
\begin{enumerate}
 \item \(m_i\gets\lceil C_m/(\kappa_i\gamma)\rceil\) for \(i\in U\);
       \(m_{\max}\gets\max_i m_i\).
 \item \textbf{for} \(r=1,\ldots,m_{\max}\), independently:
       prepare \(|\psi_{1/\tau_s,q}\rangle\), apply
       \(\bigotimes_iV_i^\dagger\), and measure all coordinates,
       obtaining \(D_i^{(r)}\in\{0,1\}\).
 \item \textbf{return}
       \(H_{i,s}\gets\bigvee_{r=1}^{m_i}D_i^{(r)}\) for every \(i\in U\).
\end{enumerate}
\end{pseudocode}

The detector is not a test of whether the \emph{mean} marginal
vanishes: cancellation between positive and negative marginals need
not make it zero. Its exact null response applies to an identically
zero contextual marginal. At larger scales its hit probability is also
small for uniformly small contextual marginals, as quantified in
Lemma~\ref{usm:lem:mask}. This is what limits rare false estimates.

Algorithm~\ref{usm:alg:gradient} combines the estimates and screens.
The screen is shared between the two signs at a scale, but its copies
and randomness are independent of both Fourier estimates.
The estimates are first thresholded, then screened, then summed.
All loops have prescribed lengths.

\begin{pseudocode}{Simultaneous gradient estimate
\(\GEst(X,Y,q,B,\beta)\)}\label{usm:alg:gradient}
\textbf{Input:} \(q\in(0,1)^U\), \(B\ge0\), \(0<\beta<1/4\).
\begin{enumerate}
 \item \textbf{if} \(U=\varnothing\) or \(B=0\), \textbf{return} the zero vector.
 \item Form \(\kappa_i,\kappa,\gamma,\tau_0,\ldots,\tau_J\) as above;
       \(\widehat\mu_i\gets0\) for \(i\in U\).
 \item \textbf{for} \(s=0,\ldots,J\):
 \begin{enumerate}[label=(\alph*),leftmargin=2em]
  \item Obtain \(H_{i,s}\) from Algorithm~\ref{usm:alg:mask}.
  \item For each \(\nu\in\{+,-\}\), obtain \(v_{i,s,\nu}\)
        from Algorithm~\ref{usm:alg:layer}, using fresh independent copies.
  \item \(\widehat u_{i,s,\nu}\gets
        H_{i,s}\tau_sv_{i,s,\nu}\mathbf1_{\{v_{i,s,\nu}>\gamma\}}\).
        \Comment{Threshold, then screen.}
  \item \(\widehat\mu_i\gets\widehat\mu_i+
        \widehat u_{i,s,+}-\widehat u_{i,s,-}\quad(i\in U)\).
 \end{enumerate}
 \item \textbf{return} \(\widehat\mu\).
\end{enumerate}
\end{pseudocode}

\subsection{Analysis of the estimator}

We now prove the measurement identities, the layer approximation,
and the error guarantee. Every assertion in this subsection is made
under \(B\ge\OPT_{\mathrm u}\). Nonnegativity gives \(|G_i|\le B\).
The more local bounds from Lemma~\ref{pre:lem:dr} will determine the
aggregate error.

\begin{lemma}\label{usm:lem:phase-identity}
In the phase state \eqref{usm:eq:phase-state}, the reduced matrix of qubit
\(i\) satisfies
\[
 (\varrho_i)_{10}
 =\sqrt{q_i(1-q_i)}\,\varphi_i(\lambda).
\]
Consequently,
\begin{equation}\label{usm:eq:equatorial-mean}
 \E[P_\vartheta^{(i)}]
 =\sqrt{\kappa_i}
       \operatorname{Re}(e^{\ii\vartheta}\varphi_i(\lambda)).
\end{equation}
\end{lemma}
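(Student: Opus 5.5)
The plan is a direct computation of the reduced density matrix followed by the Pauli identity \eqref{pre:eq:pauli-coherence}. First, split each subset \(R\subseteq U\) as \(R=T\) or \(R=T+i\) with \(T\subseteq U-i\). The product structure gives \(\mu_q(T)=(1-q_i)\mu_{q,-i}(T)\) and \(\mu_q(T+i)=q_i\mu_{q,-i}(T)\). Hence the phase state \eqref{usm:eq:phase-state} factors as
\[
 |\psi_{\lambda,q}\rangle=\sum_{T\subseteq U-i}\sqrt{\mu_{q,-i}(T)}\,
 \Bigl(\sqrt{1-q_i}\,e^{\ii\lambda f(X\cup T)}|0\rangle_i
 +\sqrt{q_i}\,e^{\ii\lambda f(X\cup T+i)}|1\rangle_i\Bigr)\otimes|T\rangle_{-i}.
\]

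Next, trace out the other qubits. The basis states \(|T\rangle_{-i}\) are orthonormal, so the partial trace has no cross terms between distinct \(T\). This gives
\[
 (\varrho_i)_{10}=\sum_T\mu_{q,-i}(T)\sqrt{q_i(1-q_i)}\,
 e^{\ii\lambda f(X\cup T+i)}e^{-\ii\lambda f(X\cup T)}.
\]
The exponent is \(\ii\lambda G_i(T)\), so the sum equals \(\sqrt{q_i(1-q_i)}\,\varphi_i(\lambda)\). This is the first claim. The only care needed is the orientation: the lower-left entry \(\langle1|\varrho_i|0\rangle\) pairs the amplitude of \(|1\rangle\) with the conjugate of the amplitude of \(|0\rangle\), which yields \(+G_i\) and not \(-G_i\).

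For the measurement identity, use \eqref{pre:eq:equatorial-definition} and \eqref{pre:eq:pauli-coherence}:
\[
 \E[P_\vartheta^{(i)}]=\Tr(\varrho_iP_\vartheta)
 =2\cos\vartheta\operatorname{Re}(\varrho_i)_{10}-2\sin\vartheta\operatorname{Im}(\varrho_i)_{10}
 =2\operatorname{Re}\bigl(e^{\ii\vartheta}(\varrho_i)_{10}\bigr).
\]
Substituting the coherence and noting \(2\sqrt{q_i(1-q_i)}=\sqrt{\kappa_i}\) gives \eqref{usm:eq:equatorial-mean}.

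Measuring a local observable on the full state has the same expectation as measuring it on the reduced state, so measuring other qubits on the same copy does not change this mean.

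There is no real obstacle. The step most prone to error is the sign and conjugation convention: the orientation of \((\varrho_i)_{10}\) and the minus sign in \(P_\vartheta\) must combine to \(e^{+\ii\vartheta}\). I would verify this on the single-qubit state \((|0\rangle+e^{\ii\alpha}|1\rangle)/\sqrt2\).
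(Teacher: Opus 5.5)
Your proposal is correct and follows essentially the same route as the paper. Both factor out qubit \(i\) using the product structure of \(\mu_q\), use orthonormality of the \(|T\rangle_{-i}\) to remove cross terms in the partial trace, and substitute the resulting coherence into \eqref{pre:eq:pauli-coherence} and \eqref{pre:eq:equatorial-definition}; your explicit check that \(\Tr(\varrho_iP_\vartheta)=2\operatorname{Re}(e^{\ii\vartheta}(\varrho_i)_{10})\) fills in the one step the paper leaves implicit.
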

\begin{proof}
Separate qubit \(i\) from the other coordinates in
\eqref{usm:eq:phase-state}:
\[
 \begin{split}
 |\psi_{\lambda,q}\rangle
 =\sum_{T\subseteq U-i}&\sqrt{\mu_{q,-i}(T)}
 e^{\ii\lambda f(X\cup T)}
 \left(\sqrt{1-q_i}|0\rangle_i+
       \sqrt{q_i}e^{\ii\lambda G_i(T)}|1\rangle_i\right)
 |T\rangle_{-i}.
 \end{split}
\]
The \(|T\rangle_{-i}\) are orthonormal. Taking the partial trace
therefore removes cross terms with different \(T\), and the
lower-left entry is
\[
 \sqrt{q_i(1-q_i)}
 \sum_T\mu_{q,-i}(T)e^{\ii\lambda G_i(T)}.
\]
This is the asserted coherence. Substituting it into
\eqref{pre:eq:pauli-coherence} and \eqref{pre:eq:equatorial-definition}
proves \eqref{usm:eq:equatorial-mean}.
\end{proof}

\begin{lemma}\label{usm:lem:endpoint-cap}
Let \(\alpha_i=L_i^++(-M_i)^+\). Then \(|G_i|\le\alpha_i\), and
\begin{equation}\label{usm:eq:endpoint-cap}
 \sum_{i\in U}\alpha_i\le\Phi(X,Y)+f(X)+f(Y).
\end{equation}
\end{lemma}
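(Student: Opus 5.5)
The first claim is a pointwise bound that follows from Lemma~\ref{pre:lem:dr}. That lemma gives \(M_i\le G_i(T)\le L_i\) for every \(T\subseteq U-i\). If \(G_i(T)\ge0\), then \(|G_i(T)|=G_i(T)\le L_i\le L_i^+\). If \(G_i(T)<0\), then \(|G_i(T)|=-G_i(T)\le -M_i\le(-M_i)^+\). In both cases \(|G_i(T)|\le L_i^++(-M_i)^+=\alpha_i\).

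For \eqref{usm:eq:endpoint-cap}, the plan is to write \(\alpha_i\) in terms of \(L_i-M_i\) plus a correction and then bound the corrections by \(f(X)+f(Y)\). There are three cases.
\begin{itemize}
\item If \(L_i\ge0\ge M_i\), then \(\alpha_i=L_i-M_i\).
\item If \(M_i>0\), then \(L_i\ge M_i>0\), so \(\alpha_i=L_i=(L_i-M_i)+M_i^+\).
\item If \(L_i<0\), then \(M_i\le L_i<0\), so \(\alpha_i=-M_i=(L_i-M_i)+(-L_i)^+\).
\end{itemize}
Since \(M_i\le L_i\), the second and third cases are exclusive, and each correction vanishes outside its own case. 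Hence
\[
 \alpha_i=(L_i-M_i)+M_i^++(-L_i)^+\qquad(i\in U),
\]
and summing over \(U\) gives \(\Phi(X,Y)\) plus the two correction sums. It then suffices to prove
\[
 \sum_{i\in U}(-L_i)^+\le f(X)\qquad\text{and}\qquad \sum_{i\in U}M_i^+\le f(Y).
\]

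For the first bound, let \(A=\{i\in U:L_i<0\}\) and enumerate it as \(a_1,\dots,a_m\). Telescope \(f(X\cup A)-f(X)\) along the chain that adds \(a_1,\dots,a_m\) in order. By submodularity, each increment \(f(a_j\mid X\cup\{a_1,\dots,a_{j-1}\})\) is at most \(L_{a_j}\). Therefore
\[
 f(X\cup A)-f(X)\le\sum_{i\in A}L_i=-\sum_{i\in U}(-L_i)^+ .
\]
Because \(f(X\cup A)\ge0\), this gives \(\sum_{i\in U}(-L_i)^+\le f(X)\).

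The second bound is symmetric. Let \(C=\{i\in U:M_i>0\}\) and remove its elements from \(Y\) one at a time. Removing \(c\) from \(Y\setminus C'\), with \(C'\subseteq C-c\) the elements already removed, decreases \(f\) by \(f(c\mid Y\setminus(C'+c))\). Since \(Y\setminus(C'+c)\subseteq Y-c\), submodularity gives \(f(c\mid Y\setminus(C'+c))\ge M_c\). Summing, \(f(Y)-f(Y\setminus C)\ge\sum_{c\in C}M_c\), and nonnegativity of \(f(Y\setminus C)\) yields \(\sum_{i\in U}M_i^+\le f(Y)\).

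The only delicate step is choosing the decomposition of \(\alpha_i\) so that the corrections are exactly \(M_i^+\) and \((-L_i)^+\). Once that is fixed, the two telescoping chains need only submodularity applied to nested contexts and nonnegativity of \(f\); no monotonicity is used.
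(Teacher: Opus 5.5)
Your proposal is correct and follows the paper's proof essentially step for step. It uses the same pointwise bound from $M_i\le G_i(T)\le L_i$, the same scalar identity $\alpha_i=(L_i-M_i)+(-L_i)^++M_i^+$ (the paper writes the two corrections with indicators), and the same two telescoping chains closed off by nonnegativity of $f$: adding $\{i\in U:L_i<0\}$ to $X$, and, for $Y$, your removal chain, which is the paper's chain building $Y$ up from $Y\setminus C$ read in reverse.
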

\begin{proof}
Lemma~\ref{pre:lem:dr} bounds \(G_i\) between \(M_i\) and \(L_i\).
Its positive part is at most \(L_i^+\), and its negative part in
absolute value is at most \((-M_i)^+\), proving the pointwise bound.
Since \(L_i\ge M_i\), a scalar case split gives
\[
 L_i^++(-M_i)^+
 =(L_i-M_i)+(-L_i)\mathbf1_{\{L_i<0\}}
                       +M_i\mathbf1_{\{M_i>0\}}.
\]
Let \(D=\{i\in U:L_i<0\}\). Adding \(D\) to \(X\) in any order,
each marginal is at most its marginal at \(X\), so
\[
 f(X\cup D)-f(X)\le\sum_{i\in D}L_i.
\]
Because \(f(X\cup D)\ge0\), this implies
\(\sum_{i\in D}(-L_i)\le f(X)\).
Similarly, let \(C=\{i\in U:M_i>0\}\), and add its elements
to \(Y\setminus C\). Each context is contained in \(Y-i\), so each
marginal is at least \(M_i\). Hence
\[
 \sum_{i\in C}M_i\le f(Y)-f(Y\setminus C)\le f(Y).
\]
Summing the scalar identity and using \eqref{pre:eq:bracket-potential}
proves \eqref{usm:eq:endpoint-cap}.
\end{proof}

\begin{lemma}\label{usm:lem:layers}
For every \(|z|\le B\),
\begin{equation}\label{usm:eq:layer-residual}
 \left|z-\sum_{s=0}^J
       \bigl(\ell_{s,+}(z)-\ell_{s,-}(z)\bigr)\right|\le\tau_0.
\end{equation}
Also \(J+1=O(\log(2n/\beta))\) and \(B\le\tau_J<2B\).
\end{lemma}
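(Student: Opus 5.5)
By the reflection \(\ell_{s,-}(z)=\ell_{s,+}(-z)\) it suffices to treat \(z\ge0\). In that case every negative layer vanishes, because \(h\) is supported on \([1/2,2]\subset(0,\infty)\). The case \(z<0\) then follows by applying the \(z\ge0\) bound to \(-z\) and changing signs. The plan has three steps: compute the sum \(S(z)=\sum_{s=0}^J\tau_sh(z/\tau_s)\) exactly on dyadic intervals, read off the residual, and then verify the bookkeeping for \(J\).

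The first step is a telescoping computation of \(S(z)\) for \(z\in[\tau_j,\tau_{j+1}]\) with \(0\le j<J\). Only scales \(s=j\) and \(s=j+1\) contribute. For \(s=j\) we have \(z/\tau_j\in[1,2]\), which gives \(\tau_j(2-z/\tau_j)=2\tau_j-z\). For \(s=j+1\) we have \(z/\tau_{j+1}\in[1/2,1]\), which gives \(\tau_{j+1}(2z/\tau_{j+1}-1)=2z-2\tau_j\). The remaining scales contribute zero: for \(s\le j-1\), \(z/\tau_s\ge2\); for \(s\ge j+2\), \(z/\tau_s\le1/2\). Both boundary values are zeros of \(h\), so they contribute nothing. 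Summing the two contributions gives \(S(z)=z\) exactly on \([\tau_0,\tau_J]\).

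Two edge regions remain. On \([0,\tau_0]\), only \(s=0\) can be nonzero, and there \(0\le S(z)=\tau_0h(z/\tau_0)\le z\). Since \(0\le z\le\tau_0\), the residual is at most \(\tau_0\). For \(z\in[\tau_J,B]\), we need \(\tau_J\ge B\), which holds by the definition of \(J\). Then \(z\le\tau_J\), so this region is either empty or the single point \(\tau_J=B\), where \(S(z)=z\) as shown above. This gives \eqref{usm:eq:layer-residual}. The only care needed is at endpoints, where adjacent layers overlap. Because \(h(1/2)=h(2)=0\), the contributions from neighbouring scales agree there, so no double counting occurs.

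The last step is the bookkeeping for \(J\). Minimality of \(J\) gives \(\tau_{J-1}<B\) when \(J\ge1\), so \(\tau_J<2B\). If \(J=0\), then \(\tau_0=\beta B/(16n)<2B\) directly; in fact \(J\ge1\) here, since \(\tau_0<B\). The definition of \(J\) gives \(\tau_J\ge B\). Finally, \(J=\lceil\log_2(16n/\beta)\rceil\), so \(J+1=O(\log(2n/\beta))\). I expect no step to be a real obstacle; the endpoint consistency in the first step is the only point that needs checking.
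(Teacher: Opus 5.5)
Your proposal is correct and follows essentially the paper's proof: both rest on the two-scale identity \(\tau_jh(a)+2\tau_jh(a/2)=z\) on each dyadic interval, the vanishing of all layers beyond \(J\) because \(\tau_J\ge B\), and a small-\(|z|\) residual bounded by \(\tau_0\). The only difference is bookkeeping: you compute the lone scale-\(0\) layer on \([0,\tau_0]\) directly, whereas the paper extends the scales to all integer indices and bounds the omitted small layers by \(|z|\).
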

\begin{proof}
Extend the dyadic scale sequence to all integer indices.
For \(z>0\), choose \(j\in\mathbb Z\) with
\(z=a\tau_j\), \(1\le a<2\). Only the layers at \(j,j+1\)
can have nonzero values, and
\[
 \tau_jh(a)+2\tau_jh(a/2)
 =\tau_j(2-a)+2\tau_j(a-1)=z.
\]
The negative layers give the corresponding identity for \(z<0\);
at zero all layers vanish.
By minimality of \(J\), and since \(\tau_0<B\),
\(B\le\tau_J<2B\). For \(s>J\), \(|z|/\tau_s\le1/2\),
so all omitted large layers are zero. Omitted small layers can
contribute only if \(|z|<\tau_0\). Their contributions to the
appropriate sign are nonnegative and sum to at most \(|z|\),
which proves \eqref{usm:eq:layer-residual}.
Finally, \(B/\tau_0=16n/\beta\), giving
\(J=\lceil\log_2(16n/\beta)\rceil\).
\end{proof}

\begin{lemma}\label{usm:lem:fourier}
The coefficients defined in \eqref{usm:eq:fourier-definition} satisfy,
for \(k\ne0\),
\[
 |c_{s,\nu,k}|
 \le\min\left\{\frac{3\tau_s}{32B},
               \frac{12B}{\pi^2\tau_s k^2}\right\},
 \qquad
 \sum_{k\in\mathbb Z}|c_{s,\nu,k}|=O(1).
\]
The truncation used by Algorithm~\ref{usm:alg:layer} satisfies
\[
 \sup_{z\in\R}
 |\mathsf h_{s,\nu}(z)-\mathsf P_{s,\nu}(z)|\le\gamma/4.
\]
\end{lemma}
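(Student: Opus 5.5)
Work on one period with the substitution \(z=\tau_s t\). Because \(\tau_J<2B\), the support of \(h(z/\tau_s)\) is \([\tau_s/2,2\tau_s]\subseteq(0,4B)\), which lies inside one period interval \((-4B,4B)\). The periodized function on \([-4B,4B]\) is therefore exactly \(h(z/\tau_s)\) (respectively \(h(-z/\tau_s)\) for \(\nu=-\)). The coefficient then reduces to
\[
 c_{s,\nu,k}=\frac{\tau_s}{8B}\int_{\R}h(t)e^{\mp\ii k\omega\tau_s t}\,dt .
\]

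\emph{Uniform bound.} We have \(|c_{s,\nu,k}|\le\frac{\tau_s}{8B}\int h=\frac{\tau_s}{8B}\cdot\frac34=\frac{3\tau_s}{32B}\), since the two triangles have areas \(1/4\) and \(1/2\).

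\emph{Decay bound.} Integrate by parts twice. The function \(h\) is continuous and piecewise linear, so the boundary terms vanish. Its second derivative is a sum of point masses: \(2\delta_{1/2}-3\delta_1+\delta_2\), of total mass \(6\). Hence \(|\widehat h(\xi)|\le 6/\xi^2\). With \(\xi=k\omega\tau_s=k\pi\tau_s/(4B)\), this gives
\[
 |c_k|\le\frac{\tau_s}{8B}\cdot\frac{6\cdot16B^2}{\pi^2k^2\tau_s^2}=\frac{12B}{\pi^2\tau_sk^2}.
\]

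\emph{Summability.} Set \(\rho=\tau_s/B\). Split the sum at \(k\approx1/\rho\). The terms with \(|k|\le1/\rho\) contribute \(O(\rho\cdot1/\rho)=O(1)\) by the first bound. The tail contributes \(O(\rho^{-1}\sum_{k>1/\rho}k^{-2})=O(\rho^{-1}\cdot\rho)=O(1)\) by the second bound. The \(k=0\) term is at most \(3\rho/32=O(1)\). For \(s=J\) we have \(\rho\in[1,2)\), and the same split still works.

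\emph{Truncation.} First, \(\mathsf h\) is real, so \(c_{-k}=\overline{c_k}\). The periodized function is continuous and piecewise \(C^1\) with finitely many pieces, so Lemma~\ref{card:tool:fourier-convergence} gives pointwise convergence of the symmetric partial sums. Those partial sums are \(c_0+2\sum_{k=1}^{J'}\operatorname{Re}(c_ke^{\ii k\omega z})\). Since \(\sum|c_k|<\infty\), we get
\[
 \sup_z|\mathsf h-\mathsf P_{s,\nu}|\le2\sum_{k>D_s}|c_k|\le\frac{24B}{\pi^2\tau_s}\sum_{k>D_s}k^{-2}\le\frac{24B}{\pi^2\tau_sD_s}.
\]
With \(D_s\ge16B/(\gamma\tau_s)\), this is at most \(\frac{24\gamma}{16\pi^2}<\gamma/4\).

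The only delicate step is the first: verifying that the support fits in one period for every \(s\le J\), which uses \(\tau_J<2B\) from Lemma~\ref{usm:lem:layers}. Everything else is routine constant tracking.
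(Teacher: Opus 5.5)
Your proof is correct and follows the paper's argument essentially step for step: the area bound $3\tau_s/(32B)$, the double integration by parts against $h''=2\delta_{1/2}-3\delta_1+\delta_2$ (the paper phrases this as total variation $6/\tau_s$ of the derivative), the split of the coefficient sum near $k\approx B/\tau_s$, and the tail estimate $2\sum_{k>D_s}|c_k|\le 24B/(\pi^2\tau_s D_s)$. The only difference is cosmetic: you rescale to $t=z/\tau_s$ before estimating, and you invoke the Fourier convergence lemma explicitly where the paper simply notes that the absolutely summable series represents the function.
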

\begin{proof}
Consider the positive layer and write \(\tau=\tau_s\).
By Lemma~\ref{usm:lem:layers}, its support in the fundamental interval is
contained in \([\tau/2,2\tau]\subset(-4B,4B)\).
The triangular function has area \(3\tau/4\), so
the integral definition gives \(|c_k|\le3\tau/(32B)\).

Its derivative is zero off the support, \(2/\tau\) on
\((\tau/2,\tau)\), and \(-1/\tau\) on \((\tau,2\tau)\).
The jumps have magnitudes \(2/\tau,3/\tau,1/\tau\), so the
total variation of the derivative is \(6/\tau\).
Both the function and derivative vanish at the endpoints of the
fundamental interval. Integrating by parts twice, with the second
derivative understood as a finite signed measure, gives
\[
 |c_k|\le\frac{6/\tau}{8B(k\omega)^2}
       =\frac{12B}{\pi^2\tau k^2}.
\]
These integrations can equivalently be done separately on the
three smooth subintervals, retaining the derivative jumps.

If \(\tau\le B\), split the nonzero frequencies at
\(\lceil B/\tau\rceil\). Below this point the area bound sums to
\(O(1)\); above it the \(k^{-2}\) bound also sums to \(O(1)\).
If \(B<\tau<2B\), the latter bound directly sums to \(O(1)\).
The constant coefficient is \(3\tau/(32B)=O(1)\).
This proves the coefficient-norm assertion.
The real periodic function has \(c_{-k}=\overline{c_k}\).
Its absolutely summable Fourier series represents the function,
and the uniform truncation error is at most
\[
 2\sum_{k>D_s}|c_k|
 \le\frac{24B}{\pi^2\tau D_s}
 \le\frac{24}{16\pi^2}\gamma<\gamma/4.
\]
Reflection preserves all the bounds for the negative layer.
\end{proof}

\begin{lemma}\label{usm:lem:layer-estimate}
For every fixed \(i,s,\nu\), Algorithm~\ref{usm:alg:layer} returns
\(v_{i,s,\nu}\in[0,1]\) such that
\begin{equation}\label{usm:eq:layer-confidence}
 \Prb\!\left[
 \left|v_{i,s,\nu}-u_{i,s,\nu}/\tau_s\right|>\gamma/2
 \right]\le\delta_*=\gamma^2/100.
\end{equation}
It uses \(O(\kappa^{-1}\gamma^{-2}\log(2/\gamma))\) copies.
For the thresholded estimate
\(\bar u_{i,s,\nu}=\tau_sv_{i,s,\nu}
\mathbf1_{\{v_{i,s,\nu}>\gamma\}}\), its good event in
\eqref{usm:eq:layer-confidence} implies
\(|\bar u_{i,s,\nu}-u_{i,s,\nu}|\le2\gamma\tau_s\).
If \(u_{i,s,\nu}=0\), then \(\bar u_{i,s,\nu}=0\) on that event.
\end{lemma}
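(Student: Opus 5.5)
The plan is to show that each sample in Algorithm~\ref{usm:alg:layer} is an unbiased, bounded estimator of \(\E\mathsf P_{s,\nu}(G_i)\). Hoeffding's inequality then controls the sampling error, and Lemma~\ref{usm:lem:fourier} controls the truncation bias. The degenerate case \(A=0\) is handled separately: then \(\mathsf P_{s,\nu}\equiv c_0\), the returned value is \(\clip_{[0,1]}(c_0)\), and only the bias bound below is needed. When \(A>0\), each of the \(m\) rounds prepares one copy, so the copy count is \(m=O(A^2\kappa^{-1}\gamma^{-2}\log(2/\delta_*))\). With \(A=O(1)\) from Lemma~\ref{usm:lem:fourier} and \(\log(2/\delta_*)=O(\log(2/\gamma))\), this is the stated bound. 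Clipping immediately gives \(v_{i,s,\nu}\in[0,1]\).

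\textbf{Unbiasedness.} Condition on \(K_r=k\) and apply Lemma~\ref{usm:lem:phase-identity} with \(\lambda=k\omega\) and \(\vartheta=\arg c_k\). This gives \(\E[Z_i^{(r)}\mid K_r=k]=\sqrt{\kappa_i}\operatorname{Re}(e^{\ii\arg c_k}\varphi_i(k\omega))\). Averaging over \(K_r\), which has law \(2|c_k|/A\), yields
\[
\E\Bigl[\tfrac{A}{\sqrt{\kappa_i}}Z_i^{(r)}\Bigr]=\sum_{k=1}^{D_s}2\operatorname{Re}\bigl(c_k\,\E e^{\ii k\omega G_i}\bigr)=\E\mathsf P_{s,\nu}(G_i)-c_0 .
\]
The rounds are independent because each uses a fresh copy and a fresh frequency draw. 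The variables \(A Z_i^{(r)}/\sqrt{\kappa_i}\) lie in an interval of length \(2A/\sqrt{\kappa_i}\le2A/\sqrt\kappa\).

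\textbf{Concentration.} Apply Lemma~\ref{card:tool:hoeffding} with \(t=\gamma/4\) and \(L=2A/\sqrt\kappa\). The deviation probability is at most \(2\exp(-2m(\gamma/4)^2\kappa/(4A^2))=2\exp(-m\kappa\gamma^2/(32A^2))\le\delta_*\) by the choice of \(m\). So the unclipped estimate is within \(\gamma/4\) of \(\E\mathsf P_{s,\nu}(G_i)\) except with probability \(\delta_*\).

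\textbf{Bias.} Since \(|G_i|\le B\), we have \(\mathsf h_{s,\nu}(G_i)=\ell_{s,\nu}(G_i)/\tau_s\) on \([-B,B]\). By Lemma~\ref{usm:lem:fourier}, \(|\E\mathsf P_{s,\nu}(G_i)-u_{i,s,\nu}/\tau_s|\le\gamma/4\). The triangle inequality gives total error at most \(\gamma/2\). Since \(u_{i,s,\nu}/\tau_s\in[0,1]\) (because \(h\in[0,1]\)), clipping to \([0,1]\) cannot increase the distance. This proves \eqref{usm:eq:layer-confidence}.

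\textbf{Thresholding.} On the good event, suppose \(v>\gamma\). Then \(|\bar u-u|=\tau_s|v-u/\tau_s|\le\gamma\tau_s/2\). Now suppose \(v\le\gamma\). Then \(\bar u=0\) and \(u/\tau_s\le v+\gamma/2\le3\gamma/2\), so \(|\bar u-u|\le 3\gamma\tau_s/2\). Both cases are within \(2\gamma\tau_s\). If \(u=0\), then \(v\le\gamma/2<\gamma\), so \(\bar u=0\).

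\textbf{Main obstacle.} The only delicate point is the unbiasedness identity. It needs the sign convention of \(P_\vartheta\) to match \(\operatorname{Re}(c_ke^{\ii k\omega z})\), which the choice \(\vartheta=\arg c_k\) ensures. It also needs the \(k\) and \(-k\) terms to combine into twice the real part, using \(c_{-k}=\overline{c_k}\).
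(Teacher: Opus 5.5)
Your proposal is correct and follows the paper's proof essentially step for step: the same unbiasedness computation conditional on the sampled frequency via Lemma~\ref{usm:lem:phase-identity}, Hoeffding at deviation \(\gamma/4\) with range \(2A/\sqrt{\kappa_i}\), the \(\gamma/4\) truncation bias from Lemma~\ref{usm:lem:fourier}, clipping as a non-expansive projection onto an interval containing \(u_{i,s,\nu}/\tau_s\), and the identical two-case threshold analysis. The paper additionally records the trivial bound \(|\bar u_{i,s,\nu}-u_{i,s,\nu}|\le\tau_s\) on all outcomes. That bound is not part of this statement, but it is used later in Theorem~\ref{usm:thm:gradient}. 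Your remark about combining the \(\pm k\) terms via \(c_{-k}=\overline{c_k}\) is not needed here, because \(\mathsf P_{s,\nu}\) is defined directly with \(2\operatorname{Re}\); that symmetry only enters the proof of Lemma~\ref{usm:lem:fourier}.
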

\begin{proof}
Suppress \(s,\nu\) in the notation. If \(A=0\), the truncated
polynomial is the known constant \(c_0\), and
Lemma~\ref{usm:lem:fourier} gives the required deterministic accuracy.
Otherwise, for one repetition let
\(W_i=c_0+AZ_i/\sqrt{\kappa_i}\).
Condition on the sampled frequency \(k\) and use
\eqref{usm:eq:equatorial-mean}:
\[
 \begin{split}
 \E W_i
 &=c_0+\sum_{k=1}^{D_s}
       \frac{2|c_k|}{A}\frac{A}{\sqrt{\kappa_i}}
       \sqrt{\kappa_i}
       \operatorname{Re}(e^{\ii\arg c_k}\varphi_i(k\omega))\\
 &=c_0+2\sum_{k=1}^{D_s}
             \operatorname{Re}(c_k\varphi_i(k\omega))
 =\E\mathsf P_{s,\nu}(G_i).
 \end{split}
\]
The sample range has length \(2A/\sqrt{\kappa_i}\).
The independently prepared copies and independently sampled
frequencies make these samples independent across repetitions.
Hoeffding's inequality \citep*{Hoeffding} therefore gives
\[
 \Prb\left[|\overline W_i-\E W_i|>\gamma/4\right]
 \le2\exp\!\left(-\frac{m\kappa_i\gamma^2}{32A^2}\right)
 \le\delta_*.
\]
Because \(|G_i|\le B\), the true normalized layer equals
\(\mathsf h_{s,\nu}(G_i)\); Lemma~\ref{usm:lem:fourier} adds at most
\(\gamma/4\) bias. The two errors total at most \(\gamma/2\).
Clipping to \([0,1]\) cannot increase distance to the true normalized
mean in this interval. The copy bound follows from \(A=O(1)\).

On the good event, a retained estimate has error at most
\(\gamma\tau_s/2\). If it is discarded, \(v\le\gamma\) and
the true normalized mean is at most \(3\gamma/2\), giving error
at most \(3\gamma\tau_s/2\). If the true layer is zero,
the good event gives \(v\le\gamma/2\), so it is discarded.
Finally, on every event \(u,\bar u\in[0,\tau_s]\);
hence their error is at most \(\tau_s\).
\end{proof}

\begin{lemma}\label{usm:lem:mask}
For a single inverse-rotation measurement at frequency \(\lambda\),
\begin{equation}\label{usm:eq:null-response}
 \Prb(D_i=1\mid\lambda)
       =\kappa_i\E\sin^2(\lambda G_i/2).
\end{equation}
Let \(p_{i,s}=\Prb(\tau_s/2\le|G_i|\le2\tau_s)\) and
\(c_*=\sin^2(1/4)>0\).
Algorithm~\ref{usm:alg:mask} satisfies
\begin{equation}\label{usm:eq:mask-bounds}
 \begin{split}
 \Prb(H_{i,s}=0)&\le
             \exp(-c_*\kappa_i m_i p_{i,s}),\\
 \Prb(H_{i,s}=1)&\le
             \min\left\{1,\frac{C\alpha_i^2}{\gamma\tau_s^2}\right\}.
 \end{split}
\end{equation}
It uses \(O(\kappa^{-1}\gamma^{-1})\) copies.
If \(G_i\equiv0\), then \(H_{i,s}=0\) with certainty.
\end{lemma}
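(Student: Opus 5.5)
The plan is to compute the single-copy null-response probability exactly, and then obtain the two screen bounds from independence across copies. This splits into three steps.

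\emph{Exact response probability.} Apply \(\bigotimes_jV_j^\dagger\) to the phase state \eqref{usm:eq:phase-state}. Outcome \(D_i=1\) is the projection of qubit \(i\) onto \(V_i|1\rangle\), so only the reduced matrix \(\varrho_i\) matters:
\[
\Prb(D_i=1)=\langle1|V_i^\dagger\varrho_iV_i|1\rangle .
\]
Write \(v=V_i|1\rangle=(-\sqrt{q_i},\sqrt{1-q_i})^{\mathsf T}\). By the decomposition in Lemma~\ref{usm:lem:phase-identity},
\[
(\varrho_i)_{00}=1-q_i,\qquad (\varrho_i)_{11}=q_i,\qquad (\varrho_i)_{10}=\sqrt{q_i(1-q_i)}\,\varphi_i(\lambda).
\]
Expanding the quadratic form gives
\[
\Prb(D_i=1)=2q_i(1-q_i)-2q_i(1-q_i)\operatorname{Re}\varphi_i(\lambda)=\tfrac{\kappa_i}{2}\,\E\bigl(1-\cos\lambda G_i\bigr)=\kappa_i\,\E\sin^2(\lambda G_i/2).
\]
This is \eqref{usm:eq:null-response}. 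If \(G_i\equiv0\), every copy gives \(D_i=0\) deterministically, so \(H_{i,s}=0\) with certainty.

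\emph{Hit probability is small.} With \(\lambda=1/\tau_s\), use \(\sin^2x\le x^2\) and the pointwise bound \(|G_i|\le\alpha_i\) from Lemma~\ref{usm:lem:endpoint-cap}. This gives a per-copy hit probability of at most \(\kappa_i\alpha_i^2/(4\tau_s^2)\). A union bound over the \(m_i\) copies gives
\[
\Prb(H_{i,s}=1)\le m_i\kappa_i\alpha_i^2/(4\tau_s^2).
\]
Since \(m_i\le C_m/(\kappa_i\gamma)+1\le 2C_m/(\kappa_i\gamma)\), because \(\kappa_i\le1\) and \(\gamma<1\), this is at most \(C\alpha_i^2/(\gamma\tau_s^2)\) with \(C=C_m/2\). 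Capping at \(1\) is trivial.

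\emph{Miss probability is small.} This is the only step needing care, because the event \(\tau_s/2\le|G_i|\le2\tau_s\) must bound \(\sin^2\) from below. On that event \(|\lambda G_i/2|\in[1/4,1]\). Since \(\sin^2\) is increasing on \([0,\pi/2]\) and \(1<\pi/2\), we get \(\sin^2(\lambda G_i/2)\ge\sin^2(1/4)=c_*\). Dropping the nonnegative contributions outside the event, each copy hits with probability at least \(c_*\kappa_ip_{i,s}\). The copies are independent, so
\[
\Prb(H_{i,s}=0)\le(1-c_*\kappa_ip_{i,s})^{m_i}\le\exp(-c_*\kappa_im_ip_{i,s}).
\]
The outcomes of different coordinates on the same copy may be correlated. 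This does not matter, since each bound concerns a single coordinate across independent copies.

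\emph{Copy count.} The number of copies is \(m_{\max}\le\lceil C_m/(\kappa\gamma)\rceil=O(\kappa^{-1}\gamma^{-1})\), at two queries each.
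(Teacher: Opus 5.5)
Your proof is correct and follows the paper's argument essentially step for step. It has the same three parts: the exact null-response formula; the lower bound \(\sin^2(\lambda G_i/2)\ge c_*\) on the layer support combined with independence across copies; and \(\sin^2u\le u^2\) with \(|G_i|\le\alpha_i\) plus a union bound over the \(m_i\) copies. The only cosmetic differences are that you get the response probability from the quadratic form \(v^\dagger\varrho_iv\) rather than from the paper's branchwise \(|1\rangle\) amplitude \(\sqrt{q_i(1-q_i)}\,(e^{\ii\lambda G_i(T)}-1)\), and that your constant is \(C=C_m/2\) instead of \((C_m+1)/4\).
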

\begin{proof}
In the expansion used in Lemma~\ref{usm:lem:phase-identity}, apply
\(V_i^\dagger\) to the local conditional state.
Its \(|1\rangle\) amplitude becomes
\[
 e^{\ii\lambda f(X\cup T)}
 \sqrt{q_i(1-q_i)}
       (e^{\ii\lambda G_i(T)}-1).
\]
The squared modulus is
\(4q_i(1-q_i)\sin^2(\lambda G_i(T)/2)\).
Averaging over \(T\) proves \eqref{usm:eq:null-response}.
The inverse rotations on the other qubits do not affect this local
law, because a unitary on the other registers preserves the reduced
matrix of qubit \(i\). Exact zero response follows immediately.

At frequency \(1/\tau_s\), the absolute argument of the sine is
between \(1/4\) and \(1\) on the layer support. Its squared value
is therefore at least \(c_*\). Thus a single hit probability
\(\xi_{i,s}\) is at least \(c_*\kappa_i p_{i,s}\).
Since the first \(m_i\) copies are independent,
\[
 \Prb(H_{i,s}=0)=(1-\xi_{i,s})^{m_i}
 \le e^{-m_i\xi_{i,s}},
\]
proving the first bound in \eqref{usm:eq:mask-bounds}.
For the other direction, use \(\sin^2u\le u^2\) and
Lemma~\ref{usm:lem:endpoint-cap} to obtain
\[
 \xi_{i,s}\le\frac{\kappa_i\E G_i^2}{4\tau_s^2}
           \le\frac{\kappa_i\alpha_i^2}{4\tau_s^2}.
\]
A union bound over the \(m_i\) copies gives a hit probability
at most \(\min\{1,m_i\xi_{i,s}\}\).
The ceiling in \(m_i\) implies
\(\kappa_i m_i\le C_m/\gamma+\kappa_i
\le(C_m+1)/\gamma\), proving the second bound.
The maximum copy count is \(O(\kappa^{-1}\gamma^{-1})\).
\end{proof}

\begin{theorem}\label{usm:thm:gradient}
For every fixed bracket, \(B\ge\OPT_{\mathrm u}\), \(q\in(0,1)^U\), and
\(0<\beta<1/4\), Algorithm~\ref{usm:alg:gradient} returns
\(\widehat\mu\) satisfying
\begin{equation}\label{usm:eq:gradient-error}
 \E\sum_{i\in U}|\widehat\mu_i-\partial_iF(q)|
 \le C_{\mathrm{dec}}\beta
           \bigl(\Phi(X,Y)+f(X)+f(Y)+B\bigr),
\end{equation}
for an absolute constant \(C_{\mathrm{dec}}\).
Its deterministic query bound is
\begin{equation}\label{usm:eq:gradient-cost}
 O\!\left(\frac{\log(2n/\beta)}{\kappa\beta^2}
                         \log\frac2\beta\right).
\end{equation}
If \(U=\varnothing\) or \(B=0\), it needs no queries.
\end{theorem}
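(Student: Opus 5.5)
The query bound is straightforward bookkeeping; the error bound is where the work lies.

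\textbf{Query bound.} There are \(J+1=O(\log(2n/\beta))\) scales by Lemma~\ref{usm:lem:layers}. At each scale the algorithm makes:
\begin{enumerate}
 \item one screen, which uses \(O(\kappa^{-1}\gamma^{-1})\) copies by Lemma~\ref{usm:lem:mask};
 \item two Fourier layer estimates, each using \(O(\kappa^{-1}\gamma^{-2}\log(2/\gamma))\) copies by Lemma~\ref{usm:lem:layer-estimate}.
\end{enumerate}
Each copy costs two queries. Since \(\gamma=\beta/100\), this gives \eqref{usm:eq:gradient-cost}, and all loop lengths are public.

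\textbf{Error decomposition.} By \eqref{pre:eq:gradient-mean} and Lemma~\ref{usm:lem:layers}, \(\partial_iF(q)=\E G_i\) differs from \(\sum_s(u_{i,s,+}-u_{i,s,-})\) by at most \(\tau_0\) per coordinate. This contributes at most \(n\tau_0=\beta B/16\) in total. It remains to bound \(\E|\widehat u_{i,s,\nu}-u_{i,s,\nu}|\) for each triple \((i,s,\nu)\) and to show that the sum over \(s,\nu\) is \(O(\beta\alpha_i)\). Summing over \(i\) and applying Lemma~\ref{usm:lem:endpoint-cap} then gives \(\sum_i\alpha_i\le\Phi+f(X)+f(Y)\), which yields \eqref{usm:eq:gradient-error}.

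\textbf{Three regimes of scales.} Fix \(i\) and write \(\alpha=\alpha_i\). Because \(|G_i|\le\alpha\), the layer \(u_{i,s,\nu}\) vanishes when \(\tau_s>2\alpha\), and in every case \(u_{i,s,\nu}\le\tau_s\,\Prb(\tau_s/2\le|G_i|\le2\tau_s)=\tau_sp_{i,s}\).
\begin{enumerate}
 \item \emph{Large scales, \(\tau_s>2\alpha\).} Here \(u=0\). On the good event of Lemma~\ref{usm:lem:layer-estimate} the thresholded estimate is zero. Otherwise the error is at most \(\tau_s\) and occurs only if \(H_{i,s}=1\) and the layer estimate is bad. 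These two events are independent. Their joint probability is at most \(\delta_*\min\{1,C\alpha^2/(\gamma\tau_s^2)\}\), so the expected error is at most \(C\delta_*\alpha^2/(\gamma\tau_s)\). Summed over the geometric scales \(\tau_s>2\alpha\), this is \(O(\delta_*\alpha/\gamma)=O(\gamma\alpha)\), using \(\delta_*=\gamma^2/100\).
 \item \emph{Comparable scales, \(\tau_s\le2\alpha\) and \(\tau_s\ge\gamma\alpha\).} There are \(O(\log(1/\gamma))\) such scales, and naively each could contribute \(O(\gamma\tau_s)\). That naive count is too much, so I would instead bound the error by \(2\gamma\tau_s\) on the good event plus \(\delta_*\tau_s\) otherwise. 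The screen can only zero out a correct estimate, which costs at most \(u_{i,s,\nu}\). Summed geometrically over \(\tau_s\le2\alpha\), the good-event error gives \(O(\gamma\alpha)\) and the bad-event error gives \(O(\delta_*\alpha)\). The same geometric sum also covers the small scales \(\tau_s<\gamma\alpha\): their errors \(2\gamma\tau_s+\delta_*\tau_s\) sum to \(O(\gamma\alpha)\).
 \item \emph{Screening loss.} The remaining term is the missing mass when \(H_{i,s}=0\), which is at most \(\E[u\mathbf1_{H=0}]\le\tau_sp_{i,s}\exp(-c_*\kappa_im_ip_{i,s})\). Since \(\kappa_im_i\ge C_m/\gamma\), this is at most \(\tau_s\cdot\gamma/(c_*C_me)\), using \(\sup_p p\,e^{-ap}\le1/(ea)\) with \(a=c_*C_m/\gamma\). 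Summed geometrically over \(\tau_s\le2\alpha\), this gives \(O(\gamma\alpha)\).
\end{enumerate}

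\textbf{Main obstacle.} The hard part is regime 1. Infinitely many scales lie above \(\alpha\), up to \(B\), and without the screen each would contribute \(\delta_*\tau_s\). That would total \(\delta_*B\) per coordinate, or \(n\delta_*B\) overall, which is fatal. The quadratic hit bound \(C\alpha^2/(\gamma\tau_s^2)\) in \eqref{usm:eq:mask-bounds}, combined with independence between the screen and the Fourier estimate, converts this into a convergent geometric series anchored at \(\alpha\). Collecting the constants gives \(C_{\mathrm{dec}}\). When \(U=\varnothing\) or \(B=0\), the algorithm returns at step 1 and makes no queries.
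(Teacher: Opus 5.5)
Your proposal is correct and follows the paper's proof essentially step for step. It uses the same split \(|H\bar u-u|\le H|\bar u-u|+(1-H)u\), the same maximization of \(p\,e^{-ap}\) for the screening loss at scales \(\tau_s\le2\alpha_i\), and the same combination of screen/Fourier independence with the quadratic hit bound to make the scales \(\tau_s>2\alpha_i\) sum to \(O((\delta_*/\gamma)\alpha_i)\). Two small remarks: your worry about ``comparable scales'' is unnecessary, since a per-scale error of \(O(\gamma\tau_s)\) already sums geometrically to \(O(\gamma\alpha_i)\); and for \(B=0\) you should state that \(B\ge\OPT_{\mathrm u}\) and nonnegativity force \(f\equiv0\), so the zero output is exact.
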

\begin{proof}
The empty case is immediate. If \(B=0\ge\OPT_{\mathrm u}\), nonnegativity
implies \(f\equiv0\), so the zero vector is correct.
Otherwise fix a coordinate \(i\), and write \(\alpha=\alpha_i\).
We bound its expected error without imposing a simultaneous good
event on all coordinates or layers.

Fix a scale \(s\) and a sign \(\nu\). Write \(u=u_{i,s,\nu}\),
\(\bar u=\bar u_{i,s,\nu}\), and \(H=H_{i,s}\).
Lemma~\ref{usm:lem:layer-estimate}, including its bound on all outcomes,
gives
\[
 \E|\bar u-u|\le2\gamma\tau_s+\delta_*\tau_s.
\]
Since \(H\in\{0,1\}\),
\[
 |H\bar u-u|
 \le H|\bar u-u|+(1-H)u.
\]

First consider the scales \(\tau_s\le2\alpha\).
The layer is bounded by \(\tau_s\) and supported on the event
defining \(p_{i,s}\), so \(u\le\tau_s p_{i,s}\).
By Lemma~\ref{usm:lem:mask} and \(\kappa_i m_i\ge C_m/\gamma\),
\[
 u\Prb(H=0)
 \le\tau_s p_{i,s}
             e^{-c_*C_mp_{i,s}/\gamma}
 \le C\gamma\tau_s,
\]
where the last step maximizes \(x e^{-c_*C_mx/\gamma}\) over
\(x\ge0\). Dropping \(H\) from the first nonnegative error term
therefore proves
\(\E|H\bar u-u|\le C\gamma\tau_s\).
The dyadic scales in this range sum to at most \(4\alpha\),
so both signs together contribute \(O(\gamma\alpha)\).

Next consider \(\tau_s>2\alpha\). By
Lemma~\ref{usm:lem:endpoint-cap}, \(|G_i|\le\alpha<\tau_s/2\),
so both true layers are zero. Lemma~\ref{usm:lem:layer-estimate} implies
that \(\bar u>0\) requires a failed Fourier estimate.
The screen and Fourier estimate use independent copies and
randomness. Since \(\bar u\le\tau_s\), Lemma~\ref{usm:lem:mask} gives
\[
 \E|H\bar u|
 \le\tau_s\delta_*\Prb(H=1)
 \le\frac{C\delta_*\alpha^2}{\gamma\tau_s}.
\]
If this range contains no scale, its contribution is zero.
Otherwise, for \(\alpha>0\), let \(\tau_*\) be its first scale. Then
\(\sum_{\tau_s>2\alpha}\tau_s^{-1}
\le2/\tau_*<1/\alpha\).
This also covers the case \(\tau_*=\tau_0\).
The total contribution is
\(O((\delta_*/\gamma)\alpha)=O(\gamma\alpha)\).
If \(\alpha=0\), then \(G_i\equiv0\); every screen is zero
with certainty, and this coordinate is estimated exactly.

By \eqref{pre:eq:gradient-mean} and Lemma~\ref{usm:lem:layers}, the error
from replacing \(G_i\) by its retained layers is at most \(\tau_0\).
The triangle inequality and the preceding bounds yield
\[
 \E|\widehat\mu_i-\partial_iF(q)|
 \le\tau_0+C\gamma\alpha_i.
\]
Summing over \(i\), using \(|U|\le n\),
\(n\tau_0=\beta B/16\), \(\gamma=\beta/100\), and
Lemma~\ref{usm:lem:endpoint-cap}, proves \eqref{usm:eq:gradient-error}.
Notice that the analysis used \(|G_i|\le B\) for reconstruction
and \(|G_i|\le\alpha_i\) for the local error; it does not require
\(\alpha_i\le B\).

Each scale uses two Fourier estimates and one screen. By
Lemmas~\ref{usm:lem:layer-estimate} and \ref{usm:lem:mask}, their combined
copy count is
\(O(\kappa^{-1}\gamma^{-2}\log(2/\gamma))\).
Lemma~\ref{usm:lem:layers} gives \(O(\log(2n/\beta))\) scales,
and each copy costs two digital queries. This proves
\eqref{usm:eq:gradient-cost}. The schedule remains bounded even when
an input \(B>0\) underestimates \(\OPT_{\mathrm u}\): correctness may then fail,
but \(J=\lceil\log_2(16n/\beta)\rceil\), the coefficient-norm bound,
and the number of copies are unchanged.
\end{proof}

\begin{corollary}\label{usm:cor:adaptive-gradient}
Suppose a bracket and a valid scale \(B\) are fixed, and \(q\) is
chosen as a function of an arbitrary previous history.
If Algorithm~\ref{usm:alg:gradient} uses fresh copies and randomness,
\eqref{usm:eq:gradient-error} holds conditionally on that history.
\end{corollary}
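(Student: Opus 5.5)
The plan is to rerun the proof of Theorem~\ref{usm:thm:gradient} inside a conditional probability space, and to check that every probabilistic step used there depends only on the fresh randomness of the current call, with the history entering only as fixed parameters. Let \(\mathcal H\) denote the history, that is, all previous measurement outcomes and classical coins, and let \(q=q(\mathcal H)\in(0,1)^U\) be the resulting point. The bracket \(X\subseteq Y\) and the scale \(B\ge\OPT_{\mathrm u}\) are fixed in advance. Hence the quantities \(G_i\), \(L_i\), \(M_i\), \(\alpha_i\), \(\Phi(X,Y)\), \(f(X)\), \(f(Y)\) and \(B\) on the right side of \eqref{usm:eq:gradient-error} do not depend on \(\mathcal H\). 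Only \(q\), and through it \(\kappa_i\), \(\kappa\), \(V_i\), the copy counts \(m\), \(m_i\), and the target \(\partial_iF(q)\), are \(\mathcal H\)-measurable.

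First, conditional on \(\mathcal H=h\), I will show that Algorithm~\ref{usm:alg:gradient} is exactly the procedure \(\GEst(X,Y,q(h),B,\beta)\) run on a deterministic input. The reason is that it prepares fresh product states \(\bigotimes_iV_i\) from fresh registers, samples fresh frequencies, and applies unitaries that depend only on \(q(h)\). Earlier states have already been measured, or lie in registers that the new circuit never touches, so tracing them out leaves the new copies in the pure states \(|\psi_{\lambda,q(h)}\rangle\). The joint law of the new outcomes given \(h\) is therefore the law analysed in Lemmas~\ref{usm:lem:phase-identity}--\ref{usm:lem:mask}, including the independence across copies and between screens and Fourier estimates.

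Second, I apply Theorem~\ref{usm:thm:gradient} at the fixed point \(q(h)\). This gives
\[
 \E\Bigl[\textstyle\sum_i|\widehat\mu_i-\partial_iF(q)|\Bigm|\mathcal H=h\Bigr]
 \le C_{\mathrm{dec}}\beta\bigl(\Phi(X,Y)+f(X)+f(Y)+B\bigr).
\]
The right side does not depend on \(h\), so the bound holds conditionally on the history. If histories have a continuum of values, I replace the pointwise statement by a regular conditional distribution. Since the history takes finitely many classical values per step, apart from continuous coins that can be discretized or handled through a regular version, this is routine.

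The only delicate point is justifying the conditional independence of the fresh copies from \(\mathcal H\). In particular, I must rule out any leftover entanglement with earlier registers, which would invalidate Hoeffding's inequality and the screen bounds. I will handle this with the deferred-measurement view: an earlier coherent register that is never acted on again does not change the reduced state of the fresh registers, so the conditional law is unaffected.
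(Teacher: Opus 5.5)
Your proposal is correct and follows the same approach as the paper. The paper's proof is a single line: conditioning fixes $q$, and the remaining randomness is a fresh run covered by Theorem~\ref{usm:thm:gradient}, whose right-hand side does not depend on the history. Your discussion of untouched earlier registers and regular conditional distributions simply spells out what the paper leaves implicit.
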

\begin{proof}
Conditioning fixes the point \(q\). The remaining randomness is exactly
that of a fresh run covered by Theorem~\ref{usm:thm:gradient}.
\end{proof}

The largest Fourier frequency used is
\(O(n/(\beta^2B))\). Its size does not change the query count in
the model of Section~\ref{sec:prelim}. This dependence does, however,
explain why exact value access and phase arithmetic were specified.
\section{The complete unconstrained algorithm}\label{usm:sec:complete}
The estimator's error depends on the endpoint potential of a bracket,
not only on the optimum. We therefore choose the bracket randomly.
The same randomization that preserves most of an optimal set also
relates the average potential to the values of the bracket endpoints.
Those values already appear with positive coefficients in
Lemma~\ref{usm:lem:telescope}; retaining them is what makes the two
components fit together.

\subsection{Algorithm description}
Fix a universal constant \(C_1\ge\max\{1,2C_{\mathrm{dec}}\}\), where
\(C_{\mathrm{dec}}\) is the constant in Theorem~\ref{usm:thm:gradient}.
Given the target error \(\varepsilon\), use
\[
 \eta=\varepsilon/3,\qquad
 \theta=\rho=\eta,\qquad
 \zeta=\theta/4,\qquad
 \beta=\theta/(64C_1).
\]
The algorithm first queries several independent uniformly random
sets. Their largest observed value gives a scale \(B\). It next
assigns each element independently to one of three categories:
forced in, forced out, or unresolved. This produces a bracket
\(X\subseteq Y\), within which it runs Algorithm~\ref{usm:alg:dg}.
Finally it samples a set from the returned fractional vector.

\begin{pseudocode}{Quantum unconstrained maximization
\(\QUSM(f,\varepsilon)\)}\label{usm:alg:usm}
\textbf{Input:} \(0<\varepsilon<1/10\); parameters as above.
\begin{enumerate}
 \item \(L\gets\left\lceil\log(1/\eta)/\log(7/6)\right\rceil\).
       Draw \(R_1,\ldots,R_L\sim\mu_{\one/2}\) independently and
       query \(v_j\gets f(R_j)\).
 \item \(V\gets\max_jv_j,\quad B\gets8V\).
       If \(V=0\), return \(R_1\).
       \Comment{A scale estimate without finding the optimum.}
 \item For each \(i\in N\), independently draw
       \(\ell_i\in\{\mathrm{in},\mathrm{out},\mathrm{free}\}\) with
       probabilities \((\theta,\theta,1-2\theta)\).
       Set
       \[
        X\gets\{i:\ell_i=\mathrm{in}\},\qquad
        Y\gets N\setminus\{i:\ell_i=\mathrm{out}\},\qquad
        U\gets Y\setminus X.
       \]
       If \(U=\varnothing\), return \(X\).
 \item \(\mathcal G(q)\gets\GEst(X,Y,q,B,\beta)\).
       \Comment{Use fresh estimator randomness on every call.}
 \item \(w\gets\QDG(F,\mathcal G,\rho,\zeta)\).
 \item Draw \(R\sim\mu_w\); return \(S\gets X\cup R\).
       \Comment{Independent product rounding.}
\end{enumerate}
\end{pseudocode}

Here \(F\) is the restricted extension for the selected bracket,
as defined in Section~\ref{sec:prelim}. It specifies the objective
used in the analysis; the algorithm
does not evaluate its expectation exactly. Its only access to \(F\)
is through the quantum estimator, implemented with queries to \(f\).
In Algorithm~\ref{usm:alg:dg}, the second call is negated because lowering
\(y_i\) has gain \(-\partial_iF(y)\), whereas the estimator always
returns an estimate of \(+\partial_iF\).

\subsection{Analysis of the initialization}
\begin{lemma}\label{usm:lem:scale}
If \(\OPT_{\mathrm u}>0\), the scale in Algorithm~\ref{usm:alg:usm} satisfies
\(\OPT_{\mathrm u}\le B\le8\OPT_{\mathrm u}\) with probability at least \(1-\eta\).
The scale step makes \(O(\log(1/\eta))\) queries.
\end{lemma}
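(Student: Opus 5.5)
We must show that, with probability at least \(1-\eta\), \(\OPT_{\mathrm u}\le 8V\le 8\OPT_{\mathrm u}\), where \(V=\max_j f(R_j)\) over \(L\) independent uniform sets. The upper bound is deterministic: every queried value satisfies \(f(R_j)\le\OPT_{\mathrm u}\), so \(V\le\OPT_{\mathrm u}\) and \(B=8V\le 8\OPT_{\mathrm u}\). The query count is exactly \(L=\lceil\log(1/\eta)/\log(7/6)\rceil=O(\log(1/\eta))\). The work is therefore in the lower bound \(V\ge\OPT_{\mathrm u}/8\).

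For one uniform random set \(R\sim\mu_{\one/2}\), I will show \(\Prb\bigl(f(R)\ge\OPT_{\mathrm u}/8\bigr)\ge 1/7\). Independence then gives
\[
 \Prb(V<\OPT_{\mathrm u}/8)\le(6/7)^L\le\eta
\]
by the choice of \(L\).

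\textbf{Expectation bound.} Let \(O\) be an optimal set, \(\OPT_{\mathrm u}=f(O)\). The classical Feige--Mirrokni--Vondr\'ak argument gives \(\E f(R)\ge\OPT_{\mathrm u}/4\). I will derive it from Lemma~\ref{pre:lem:chain} in two stages.

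First fix the part \(R\cap O\) and consider \(g(S)=f((R\cap O)\cup S)\) for \(S\subseteq N\setminus O\). This function is submodular and nonnegative. Lemma~\ref{pre:lem:chain} with the chain \(\varnothing\subseteq N\setminus O\) and weights \(1/2,1/2\) gives
\[
 \E f(R)\ge\tfrac12\E f(R\cap O)+\tfrac12\E f\bigl((R\cap O)\cup(N\setminus O)\bigr).
\]
Second, apply the same lemma in the \(O\)-coordinates. For the first term use the chain \(\varnothing\subseteq O\), which gives \(\E f(R\cap O)\ge\tfrac12 f(\varnothing)+\tfrac12 f(O)\ge\tfrac12 f(O)\). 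For the second term use the function \(S\mapsto f(S\cup(N\setminus O))\) on \(O\); it is submodular and nonnegative, and the same chain gives at least \(\tfrac12 f(N)+\tfrac12 f(O)\ge\tfrac12 f(O)\). Combining, \(\E f(R)\ge\OPT_{\mathrm u}/4\).

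\textbf{Probability bound.} Write \(Z=f(R)/\OPT_{\mathrm u}\in[0,1]\), so \(\E Z\ge 1/4\). Then
\[
 \tfrac14\le\E Z\le\tfrac18+\Prb(Z\ge 1/8),
\]
so \(\Prb(Z\ge 1/8)\ge 1/8\). This is slightly short of the \(1/7\) needed for \((6/7)^L\le\eta\).

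This constant is the expected main obstacle. I would first try to sharpen the argument by using that, with probability \(p\), \(Z\) lies in \([1/8,1]\), so \(\E Z\le(1-p)/8+p\) and \(p\ge(1/4-1/8)/(7/8)=1/7\). That gives exactly \(1/7\) and closes the gap.

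Finally, note that \(\OPT_{\mathrm u}>0\) is used only to normalize. The case \(V=0\) is handled by the algorithm separately, and it falls inside the failure event, whose probability is at most \(\eta\).
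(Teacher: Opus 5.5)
Your argument matches the paper's route: two applications of Lemma~\ref{pre:lem:chain} give $\E f(R)\ge\OPT_{\mathrm u}/4$, the averaging bound gives $p\ge(1/4-1/8)/(7/8)=1/7$, and independence of the $L$ samples finishes. The deterministic bound $V\le\OPT_{\mathrm u}$ gives $B\le 8\OPT_{\mathrm u}$. The only structural difference is the order of conditioning. You fix $R\cap O$ first, whereas the paper fixes $R\setminus O$, drops $f(T)$, and then uses the chain $O\subseteq N$. Your preliminary $1/8$ estimate is unnecessary and can be deleted.

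One step in your version is misstated. For $S\mapsto f(S\cup(N\setminus O))$ on $O$, the chain $\varnothing\subseteq O$ gives $\tfrac12 f(N\setminus O)+\tfrac12 f(N)$, not $\tfrac12 f(N)+\tfrac12 f(O)$. Your stated bound is false in general. Take a single directed-cut arc $u\to v$ with $N=\{u,v\}$ and $O=\{u\}$: the set $(R\cap O)\cup\{v\}$ always has value $0$, while you claim at least $1/2$.

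Your ``combining'' step is also inconsistent with your own claims. If both terms were at least $\tfrac12 f(O)$, you would get $\OPT_{\mathrm u}/2$, not $\OPT_{\mathrm u}/4$, and that is false for the same example.

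The conclusion still holds once the step is corrected. The second term is nonnegative, and the first term alone contributes $\tfrac12\cdot\tfrac12 f(O)$. Carried out correctly, your ordering gives the classical bound $\tfrac14\bigl(f(\varnothing)+f(O)+f(N\setminus O)+f(N)\bigr)\ge\OPT_{\mathrm u}/4$.
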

\begin{proof}
Let \(O\) be an optimal set and let \(R\sim\mu_{\one/2}\).
Conditionally on \(R\setminus O=T\), the inclusion-probability vector
of \(T\cup(R\cap O)\) is the midpoint of the nested sets \(T,T\cup O\).
Lemma~\ref{pre:lem:chain} gives
\[
 \E[f(R)\mid R\setminus O=T]
 \ge\tfrac12 f(T)+\tfrac12 f(T\cup O)
 \ge\tfrac12 f(T\cup O).
\]
The distribution of \(O\cup(R\setminus O)\) is in turn the product
distribution at the midpoint of \(O,N\). Another application of
Lemma~\ref{pre:lem:chain} gives
\(\E f(R)\ge\frac14(f(O)+f(N))\ge\OPT_{\mathrm u}/4\).
This is the standard random-set guarantee of \citep*{FeigeMirrokniVondrak2011}.
Since \(0\le f(R)\le\OPT_{\mathrm u}\), putting
\(p_*=\Prb[f(R)\ge\OPT_{\mathrm u}/8]\) yields
\[
 \OPT_{\mathrm u}/4\le\E f(R)
 \le p_*\OPT_{\mathrm u}+(1-p_*)\OPT_{\mathrm u}/8,
\]
and hence \(p_*\ge1/7\).
Independence bounds the probability that all \(L\) samples have value
below \(\OPT_{\mathrm u}/8\) by \((6/7)^L\le\eta\).
Always \(V\le\OPT_{\mathrm u}\); whenever one sample reaches \(\OPT_{\mathrm u}/8\),
the scale \(B=8V\) lies in the asserted interval.
\end{proof}

For the remainder of this subsection, the bracket is drawn with
parameter \(\theta\), independently of the scale step. Fix an optimal
set \(O\) and define its clipped version
\[
 O_{\mathrm{br}}=(O\cup X)\cap Y,\qquad
 D=\E[f(X)+f(Y)].
\]
Expectations here are over the bracket.
\begin{lemma}\label{usm:lem:bracket}
The random bracket satisfies
\begin{equation}\label{usm:eq:bracket-credit}
 \E f(O_{\mathrm{br}})\ge(1-2\theta)\OPT_{\mathrm u},\qquad
 \E\Phi(X,Y)\le\frac{1-2\theta}{\theta}\,D.
\end{equation}
\end{lemma}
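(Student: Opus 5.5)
The plan is to reduce both inequalities to facts about the global multilinear extension \(\cF\), using the product structure of the labels \(\ell_i\).

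For the first inequality, I would compute the law of \(O_{\mathrm{br}}=(O\cup X)\cap Y\) directly. Each element's membership depends only on its own label, so \(O_{\mathrm{br}}\) is a product-distributed random set.
\begin{itemize}
\item An element \(i\in O\) lies in \(O_{\mathrm{br}}\) unless it is labelled out, which gives probability \(1-\theta\).
\item An element \(i\notin O\) lies in \(O_{\mathrm{br}}\) exactly when it is labelled in, which gives probability \(\theta\).
\end{itemize}
Hence the inclusion vector is
\[
 p=\theta\one_{\varnothing}+(1-2\theta)\one_O+\theta\one_N,
\]
a convex combination of the nested sets \(\varnothing\subseteq O\subseteq N\). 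Lemma~\ref{pre:lem:chain} then gives
\[
 \E f(O_{\mathrm{br}})=\cF(p)\ge\theta f(\varnothing)+(1-2\theta)f(O)+\theta f(N).
\]
Dropping the two nonnegative end terms yields \((1-2\theta)\OPT_{\mathrm u}\).

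For the potential bound, I would write
\[
 \Phi(X,Y)=\sum_{i\in N}\mathbf1_{\{\ell_i=\mathrm{free}\}}\bigl(f(i\mid X)-f(i\mid Y-i)\bigr).
\]
On the event \(\ell_i=\mathrm{free}\), the sets \(X\) and \(Y-i\) coincide with \(X_{-i}=X\setminus\{i\}\) and \(Y_{-i}=Y\setminus\{i\}\). These sets are determined by the other labels, which are independent of \(\ell_i\). Marginally, \(X_{-i}\sim\mu_{\theta\one,-i}\) and \(Y_{-i}\sim\mu_{(1-\theta)\one,-i}\); their joint law is irrelevant because each term is linear in expectations. By \eqref{pre:eq:gradient-mean}, applied with the global bracket \(X=\varnothing\), \(Y=N\),
\[
 \E\Phi(X,Y)=(1-2\theta)\sum_{i\in N}\bigl(\partial_i\cF(\theta\one)-\partial_i\cF((1-\theta)\one)\bigr)
 =(1-2\theta)\bigl(\phi'(\theta)-\phi'(1-\theta)\bigr),
\]
where \(\phi(t)=\cF(t\one)\).

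By Lemma~\ref{pre:lem:dr}, \(\phi\) is concave on \([0,1]\). Concavity bounds each derivative by a secant slope:
\[
 \phi'(\theta)\le\frac{\phi(\theta)-\phi(0)}{\theta}\le\frac{\phi(\theta)}{\theta},\qquad
 -\phi'(1-\theta)\le\frac{\phi(1-\theta)-\phi(1)}{\theta}\le\frac{\phi(1-\theta)}{\theta}.
\]
The second inequality in each chain drops \(\phi(0)=f(\varnothing)\ge0\) and \(\phi(1)=f(N)\ge0\). Since \(X\sim\mu_{\theta\one}\) and \(Y\sim\mu_{(1-\theta)\one}\), we have \(D=\phi(\theta)+\phi(1-\theta)\). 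Combining the displays gives \(\E\Phi\le\frac{1-2\theta}{\theta}D\).

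The main obstacle is spotting the second step: the expected potential collapses to a difference of directional derivatives of the diagonal restriction \(\phi\). This needs the independence between the free indicator of \(i\) and the other labels, and it needs the observation that only marginal laws enter. After that, concavity of \(\phi\) together with nonnegativity of \(f\) at \(\varnothing\) and \(N\) finishes the argument immediately.
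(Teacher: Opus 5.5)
Your proof is correct and follows the paper's argument. It uses the same product-law computation for \(O_{\mathrm{br}}\) together with Lemma~\ref{pre:lem:chain}, and the same identity \(\E\Phi(X,Y)=(1-2\theta)\sum_i[\partial_i\cF(\theta\one)-\partial_i\cF((1-\theta)\one)]\) obtained from the independence of the other labels. Your two secant-slope bounds for the concave diagonal \(\phi\) are the paper's estimate \(D=\mathcal H(0)+\int_0^\theta\mathcal P(t)\,dt\ge\theta\mathcal P(\theta)\), with \(\mathcal H(t)=\phi(t)+\phi(1-t)\), split into its two halves.
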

\begin{proof}
An element of \(O\) belongs to \(O_{\mathrm{br}}\) with probability
\(1-\theta\); an element outside \(O\) belongs with probability
\(\theta\). The choices are independent across elements. Its global
inclusion-probability vector is therefore
\[
 (1-2\theta)\one_O+\theta\one_N+\theta\one_\varnothing.
\]
Applying Lemma~\ref{pre:lem:chain} to
\(\varnothing\subseteq O\subseteq N\), and using nonnegativity,
proves the first inequality of \eqref{usm:eq:bracket-credit}.

To analyze the potential, introduce the scalar functions
\[
 \mathcal H(t)=\cF(t\one)+\cF((1-t)\one),\qquad
 \mathcal P(t)=\mathcal H'(t)
 =\sum_{i\in N}
   [\partial_i\cF(t\one)-\partial_i\cF((1-t)\one)].
\]
For \(0\le t\le1/2\), Lemma~\ref{pre:lem:dr} makes each summand
nonnegative. As \(t\) increases, its first gradient decreases and
its second gradient increases. Thus \(\mathcal P\) is nonnegative
and nonincreasing on this interval.
Since \(X\sim\mu_{\theta\one}\) and
\(Y\sim\mu_{(1-\theta)\one}\), we have \(D=\mathcal H(\theta)\).
The two sets need not be independent for this identity.
Consequently,
\[
 D=\mathcal H(0)+\int_0^\theta\mathcal P(t)\,dt
   \ge\theta\mathcal P(\theta),
\]
because \(f\) is nonnegative.

Conditionally on \(i\in U\), the labels of all other elements
retain their independent laws. Thus \(X\) contains every other
element with probability \(\theta\), and \(Y-i\) contains it with
probability \(1-\theta\). By \eqref{pre:eq:gradient-mean},
\[
 \E[L_i\mid i\in U]=\partial_i\cF(\theta\one),\qquad
 \E[M_i\mid i\in U]=\partial_i\cF((1-\theta)\one).
\]
The \(i\)-th coordinate in these gradient arguments is immaterial
by multilinearity. Since \(\Prb[i\in U]=1-2\theta\), summing
\eqref{pre:eq:bracket-potential} gives
\(\E\Phi(X,Y)=(1-2\theta)\mathcal P(\theta)\).
Combining this identity with the preceding bound on \(D\) proves
the second inequality.
\end{proof}

\begin{lemma}\label{usm:lem:inset}
For a fixed bracket, set \(x^0=\rho\one\),
\(y^0=(1-\rho)\one\), \(o=\one_{O_{\mathrm{br}}\setminus X}\),
and \(z^0=(o\vee x^0)\wedge y^0\). Then
\[
 F(z^0)\ge(1-2\rho)f(O_{\mathrm{br}}),\qquad
 F(x^0)+F(y^0)\ge f(X)+f(Y),\qquad
 \Psi(x^0,y^0)\le\Phi(X,Y).
\]
\end{lemma}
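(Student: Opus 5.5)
Each of the three inequalities comes from one earlier lemma: the first two from Lemma~\ref{pre:lem:chain} applied to the restricted function \(g(R)=f(X\cup R)\) on \(U\), and the third from Lemma~\ref{pre:lem:dr}. Note that \(g\) is submodular and nonnegative, and that its multilinear extension is exactly \(F\).

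\emph{First inequality.} Put \(O'=O_{\mathrm{br}}\setminus X\subseteq U\), so that \(o=\one_{O'}\). Coordinatewise, \(z^0_i=1-\rho\) for \(i\in O'\) and \(z^0_i=\rho\) otherwise. This gives the nested decomposition
\[
 z^0=\rho\one_U+(1-2\rho)\one_{O'}+\rho\one_\varnothing,
\]
over the chain \(\varnothing\subseteq O'\subseteq U\). Lemma~\ref{pre:lem:chain} applied to \(g\) then yields
\[
 F(z^0)\ge\rho f(Y)+(1-2\rho)f(X\cup O')+\rho f(X).
\]
Since \(X\subseteq O_{\mathrm{br}}\subseteq Y\), we have \(X\cup O'=O_{\mathrm{br}}\). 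Nonnegativity lets us drop the two \(\rho\) terms, which gives \(F(z^0)\ge(1-2\rho)f(O_{\mathrm{br}})\).

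\emph{Second inequality.} The same argument gives \(F(\rho\one)\ge\rho f(Y)+(1-\rho)f(X)\) and \(F((1-\rho)\one)\ge(1-\rho)f(Y)+\rho f(X)\). Adding the two bounds gives exactly \(f(X)+f(Y)\).

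\emph{Third inequality.} This is the comparison of gradients at the interior points with the endpoint marginals. By \eqref{pre:eq:gradient-mean} and Lemma~\ref{pre:lem:dr}, every contextual marginal satisfies \(M_i\le G_i(T)\le L_i\). Averaging over \(T\) gives \(\partial_iF(x^0)\le L_i\) and \(\partial_iF(y^0)\ge M_i\). Subtracting these and summing over \(i\in U\) gives \(\Psi(x^0,y^0)\le\sum_i(L_i-M_i)=\Phi(X,Y)\).

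The only step needing care is the first. One has to verify that \((o\vee x^0)\wedge y^0\) really has the stated two-valued form, and identify the resulting chain with \(\varnothing\subseteq O'\subseteq U\) inside the bracket. If \(f(\varnothing)\neq0\), we rely on Lemma~\ref{pre:lem:chain} holding for non-normalized functions, exactly as the paper remarks in its proof.
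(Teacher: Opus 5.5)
Your proof is correct and follows the paper's argument. The paper applies the chain lemma to the global extension over $X\subseteq O_{\mathrm{br}}\subseteq Y$, which is equivalent to your restricted chain $\varnothing\subseteq O'\subseteq U$, and it bounds $\Psi$ exactly as you do, via $M_i\le G_i(T)\le L_i$; the only cosmetic difference is that for the second inequality the paper uses concavity of $t\mapsto F(t\one)$ from Lemma~\ref{pre:lem:dr}, whereas you reapply the chain lemma to $\varnothing\subseteq U$, which gives the same two bounds.
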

\begin{proof}
The global inclusion-probability vector corresponding to \(z^0\) is
\[
 \rho\one_X+(1-2\rho)\one_{O_{\mathrm{br}}}+\rho\one_Y.
\]
Indeed its coordinates are \(1\) on \(X\), \(1-\rho\) on
\(O_{\mathrm{br}}\setminus X\), \(\rho\) on
\(Y\setminus O_{\mathrm{br}}\), and \(0\) outside \(Y\).
Lemma~\ref{pre:lem:chain} applied to the three nested sets gives the
first assertion after dropping nonnegative endpoint terms.
By Lemma~\ref{pre:lem:dr}, the restricted diagonal is concave, so
\[
 F(\rho\one)\ge(1-\rho)f(X)+\rho f(Y),\qquad
 F((1-\rho)\one)\ge\rho f(X)+(1-\rho)f(Y).
\]
Adding proves the second assertion. Finally the gradient in the
smaller context is at most \(L_i\), and that in the larger context
is at least \(M_i\), by Lemma~\ref{pre:lem:dr}. Summing their differences,
as defined in \eqref{usm:eq:pair-potential}, proves the last assertion.
\end{proof}

\subsection{Approximation guarantee and query count}
\begin{proof}[Proof of Theorem~\ref{usm:thm:main-upper}]
If \(\OPT_{\mathrm u}=0\), every output satisfies the guarantee. Suppose
\(\OPT_{\mathrm u}>0\), and first fix a successful scale realization
\(\OPT_{\mathrm u}\le B\le8\OPT_{\mathrm u}\), and a bracket with \(U\ne\varnothing\).
Conditionally on any iteration history, the two gradient calls
have total expected raw error at most
\[
 \E[E_t\mid\text{history}]
 \le C_1\beta\bigl(\Phi(X,Y)+f(X)+f(Y)+B\bigr),
\]
by Corollary~\ref{usm:cor:adaptive-gradient}.
Negating both an estimate and its true value does not change its
absolute error. Because
\(\sum_ts_t=1-2\rho\le1\), the tower property gives the same
upper bound for \(\E\sum_ts_tE_t\).
There is no event on which every coordinate or every iteration
must be estimated correctly.

Apply Lemma~\ref{usm:lem:telescope} and then Lemma~\ref{usm:lem:inset}.
The coefficient \(1/4-C_1\beta\) is positive for our parameters.
After taking the expectation over the estimator, we obtain
\begin{equation}\label{usm:eq:fixed-bracket-bound}
 \begin{split}
 \E[F(w)\mid B,X,Y]\ge{}&
 \tfrac12(1-2\rho)f(O_{\mathrm{br}})
 +(1/4-C_1\beta)[f(X)+f(Y)]\\
 &-(\zeta/4+C_1\beta)\Phi(X,Y)-C_1\beta B.
 \end{split}
\end{equation}
When \(U=\varnothing\), the algorithm returns \(X=Y=O_{\mathrm{br}}\).
In this case \(\Phi=0\), and the right side of
\eqref{usm:eq:fixed-bracket-bound} is at most
\((1-\rho-2C_1\beta)f(X)-C_1\beta B\le f(X)\).
Thus the same inequality is valid for the actual output in that case.

Average \eqref{usm:eq:fixed-bracket-bound} over the independent random
bracket. Lemma~\ref{usm:lem:bracket} yields
\[
 \begin{split}
 \E[F(w)\mid B]\ge{}&
 \tfrac12(1-2\rho)(1-2\theta)\OPT_{\mathrm u}-C_1\beta B\\
 &+\left[\frac14-C_1\beta
 -\frac{1-2\theta}{\theta}(\zeta/4+C_1\beta)\right]D.
 \end{split}
\]
Here and below \(F(w)\) means \(f(X)\) for an empty bracket.
With \(\theta=\rho=\eta\), \(\zeta=\eta/4\), and
\(C_1\beta=\eta/64\), the coefficient of \(D\) is at least
\[
 \frac14-\frac{\eta}{64}-\frac5{64}>0.
 \]
The positive initial objective terms have therefore paid for all
potential-dependent errors. Since \(B\le8\OPT_{\mathrm u}\), we conclude that
\[
 \E[F(w)\mid B]
 \ge\left[\tfrac12(1-2\eta)^2-\frac{\eta}{8}\right]\OPT_{\mathrm u}
 \ge(1/2-17\eta/8)\OPT_{\mathrm u}.
\]
Product rounding has conditional expected value \(F(w)\).
On a failed scale realization, every output still has nonnegative
value. Lemma~\ref{usm:lem:scale} consequently implies
\[
 \E f(S)\ge(1-\eta)(1/2-17\eta/8)\OPT_{\mathrm u}
 \ge(1/2-3\eta)\OPT_{\mathrm u}=(1/2-\varepsilon)\OPT_{\mathrm u}.
\]

It remains to bound the queries for every realization.
Lemma~\ref{usm:lem:dg-invariants} gives \(K=O(1/\eta)\) iterations.
Every gradient argument has coordinates in \([\rho,1-\rho]\), so
its signal parameter satisfies
\(\kappa=\min_i4q_i(1-q_i)\ge4\rho(1-\rho)=\Omega(\eta)\).
By Theorem~\ref{usm:thm:gradient}, each of the two calls in an iteration
costs
\[
 O\!\left(\eta^{-3}\log\frac{2n}{\eta}
                        \log\frac2\eta\right)
\]
queries. The deterministic schedule bound in that theorem remains
valid if a positive scale underestimates \(\OPT_{\mathrm u}\).
If \(V=0\), the algorithm stops after the scale step.
The \(O(\log(1/\eta))\) scale queries and the optional empty-bracket
exit are dominated by the claimed bound.
Substitution of \(\eta=\varepsilon/3\) completes the proof.
\end{proof}

For a fixed approximation loss, Theorem~\ref{usm:thm:main-upper} uses
\(O(\log n)\) queries, whereas any fixed classical improvement over
\(1/4\) needs \(\Omega(n/\log n)\) queries
\citep*[Theorem~4.3]{LiEtAl2022}. This is a comparison with an existing
classical lower bound; no new classical lower bound is claimed here.

\section{A simultaneous quantum marginal-score estimator}
\label{sec:card-score}\label{card:sec:score}
\subsection{Algorithm and fixed-target notation}\label{card:sec:targetnotation}
In this part, $g$ is the normalized cardinality objective from
Section~\ref{sec:prelim}. The public target scale is separate from the scale used by the
unconstrained gradient estimator.

Choose a positive rational target $M$ and define
\begin{equation}\label{card:eq:residual}
G_M(T)=\min\{g(T),M\},\quad \Delta=M-G_M(S),\quad f_S(T)=G_M(S\cup T)-G_M(S).
\end{equation}
We use this notation only before accepting the target, so $\Delta>0$. Lemma~\ref{card:lemma:residual} establishes the residual properties
needed by the score circuit. The comparison set used in its analysis is not
known to the algorithm.
Choose rational $\nu\in[\eps/32,\eps/16]$, and set
\begin{equation}\label{card:eq:params}
\tau=(1-\nu)\Delta/k,\qquad b=\lfloor\nu k/32\rfloor.
\end{equation}
Sections~\ref{sec:card-score}--\ref{sec:card-search} assume $k\ge64/\nu$. Then $x=\nu k/32\ge2$, so $\lfloor x\rfloor\ge x-1\ge x/2$, giving $\nu k/64\le b\le\nu k/32$. Section~\ref{card:sec:complete} handles smaller $k$. A pool $R$ is a uniformly random $b$-subset of $N$, including elements already in $S$; those elements have zero residual marginal and are removed before adding a batch.

Throughout this section, $R\subseteq N$ is a fixed pool of size $b\ge1$. The residual function $f_S$ is normalized, monotone, and submodular, and satisfies $0\le f_S(A)\le\Delta$, where $\Delta>0$. Marginals use the convention from the preliminaries. Recall that $\tau=(1-\nu)\Delta/k$, so $0<\tau<\Delta$. If $\pi$ is a uniformly random ordering of $R$, let $R_{<e}^{\pi}$ denote the elements preceding $e$. Define
\[
 s_e(R)=\E_{\pi}\min\left\{\frac{f_S(e\mid R_{<e}^{\pi})}{\tau},1\right\},
 \qquad e\in R.
\]
These scores provide a simultaneous certificate for every subcollection of the pool. They also admit simultaneous estimation: one quantum sample produces a bounded estimator for every coordinate, using two value queries in total.

The score is an average over random predecessor contexts. It is designed so that
selecting several elements with large scores certifies the value of their
entire batch, rather than only their singleton values. To estimate these scores,
we reconstruct the bounded kernel
\[
 \chi(x)=(1-|x|/\tau)_+,\qquad -\Delta\le x\le\Delta,
\]
extended with period $2\Delta$. Its nonnegative Fourier coefficients form a
probability distribution. We use
$\operatorname{sinc}(0)=1$ and $\operatorname{sinc}(x)=\sin(x)/x$ for $x\ne0$. Choosing one frequency from this distribution and
one common angle produces all pool-coordinate samples together. The angle
mixture converts independent Bernoulli contexts into the predecessor law of a
uniformly random ordering. This is why the sample cost is independent of the
pool size; the only pool-size factor comes from the confidence union bound.

\begin{pseudocode}{Simultaneous random-order scores}\label{card:alg:score}
\textbf{Input:} Pool $R$, residual objective $f_S$, gap $\Delta$, threshold $\tau$,
and accuracy parameter $\nu$.
\begin{enumerate}
\item Define
\[
 c_m=\frac{\tau}{2\Delta}\operatorname{sinc}^2
       \left(\frac{\pi m\tau}{2\Delta}\right),\quad
 L=\left\lceil\frac{256\Delta}{\tau\nu}\right\rceil,\quad
 \eta_L=1-\sum_{|m|\le L}c_m .
\]
Use the finite distribution $\widetilde c_0=c_0+\eta_L$,
$\widetilde c_m=c_m$ for $0<|m|\le L$.
\item Choose integers
$K\ge16\pi^2\sqrt{|R|}/\nu$ and
$D=\lceil2048\nu^{-2}\log(16|R|)\rceil$.
Choose rational $a_\nu\in(0,2)$ with
$|a_\nu-\pi/2|\le\nu/64$.
\item For $j=1,\ldots,D$, independently choose
$m_j\sim\widetilde c$ and
$\theta_j\in\{(\ell+1/2)\pi/(2K):0\le\ell<K\}$ uniformly.
Prepare
\[
 \bigotimes_{e\in R}(\cos\theta_j|0\rangle+\sin\theta_j|1\rangle),
 \qquad
 |T\rangle\longmapsto
 e^{i\pi m_jf_S(T)/\Delta}|T\rangle .
\]
Record all Pauli-$X$ outcomes $Z_e^{(j)}\in\{-1,1\}$.
\item Return
$\widehat s_e=D^{-1}\sum_{j=1}^{D}(1-a_\nu Z_e^{(j)})$ for every $e\in R$.
\end{enumerate}
The public rotations and phases are synthesized so that one joint sample
has total variation error at most $\nu/256$ from the specified finite experiment.
All seeds, outcome bits and work registers are retained when the circuit will
be used coherently inside amplitude amplification.
\end{pseudocode}

The phase uses a compute--phase--uncompute sequence: query $f(S\cup T)$,
compute $f_S(T)$ using the cached background value, apply the phase,
and reverse the query. Thus one joint sample uses two value queries.
The finite frequency cutoff, angle grid and rational coefficient make the
source an invertible finite circuit. Their errors are accounted for in the
analysis below.

\subsection{Analysis of estimator}

\begin{lemma}\label{card:lemma:residual}
The functions $G_M$ and $f_S$ are monotone and submodular.
Moreover, $f_S(\varnothing)=0$ and $0\le f_S(T)\le\Delta$ for every $T$.
If $M\le\OPT_k(g)$, there exists $|O|\le k$ such that
\begin{equation}\label{card:eq:comparator}f_S(O)=\Delta,\qquad f_S(O\cup T)=\Delta\quad\text{for every }T.\end{equation}
\end{lemma}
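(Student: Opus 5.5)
We check the three assertions in order, reducing everything to the facts that $g$ is normalized, nonnegative, monotone and submodular (Section~\ref{sec:prelim}). Throughout, $S$ is the current set and $\Delta=M-G_M(S)>0$.

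\emph{Monotonicity and submodularity of $G_M$.} Monotonicity is immediate, since $z\mapsto\min\{z,M\}$ is nondecreasing. For submodularity, we verify diminishing returns directly. Take $A\subseteq B$ and $e\notin B$, and write $a=g(A)\le b=g(B)$. Also write $a'=g(A+e)$ and $b'=g(B+e)$, so that $a'-a\ge b'-b\ge0$. We must show
\[
\min\{a',M\}-\min\{a,M\}\ge\min\{b',M\}-\min\{b,M\}.
\]
The map $\phi(z)=\min\{z,M\}$ is concave and nondecreasing. The left side is the increment of $\phi$ over $[a,a']$, and the right side is the increment over $[b,b']$, which lies to the right of $a$ (since $b\ge a$) and is no longer than $[a,a']$. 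Concavity says that $\phi$'s increment over an interval of a given length decreases as the interval moves right. Monotonicity of $\phi$ says the increment grows with the length. Combining the two gives the inequality. This short case analysis is the only mildly delicate step.

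\emph{The residual $f_S$.} It is $G_M$ restricted to supersets of $S$ and shifted by a constant. Its marginals are therefore marginals of $G_M$ in the enlarged contexts $S\cup T$, and monotonicity and submodularity carry over. We have $f_S(\varnothing)=0$ by definition. Monotonicity gives $f_S\ge0$, and since $G_M\le M$ we get $f_S(T)\le M-G_M(S)=\Delta$.

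\emph{The comparator.} If $M\le\OPT_k(g)$, pick $O$ with $|O|\le k$ and $g(O)=\OPT_k(g)\ge M$. By monotonicity of $g$, for every $T$,
\[
g(S\cup O\cup T)\ge g(O)\ge M.
\]
Hence $G_M(S\cup O\cup T)=M$, and so $f_S(O\cup T)=M-G_M(S)=\Delta$. Taking $T=\varnothing$ gives $f_S(O)=\Delta$, which proves \eqref{card:eq:comparator}.
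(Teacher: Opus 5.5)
Your proof is correct. Its outer structure matches the paper's: first the truncation $G_M$, then the shifted residual $f_S$ through the enlarged contexts $S\cup T$, then the comparator via monotonicity of $g$.

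The one genuinely different step is submodularity of $G_M$.
\begin{itemize}
\item You use the general principle that a concave nondecreasing $\phi$ composed with a monotone submodular $g$ is submodular. You compare the increment of $\phi=\min\{\cdot,M\}$ over $[b,b']$ with its increment over $[a,a+(b'-b)]$. Concavity bounds the first by the second, and monotonicity of $\phi$ bounds the second by the increment over $[a,a']$.
\item The paper instead computes the marginal explicitly, $G_M(e\mid A)=\min\{g(e\mid A),(M-g(A))_+\}$. It then notes that both arguments are nonnegative and nonincreasing in $A$, which gives monotonicity and submodularity in one stroke.
\end{itemize}

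Your route is more general: it works for any concave nondecreasing transform of $g$, not only truncation, and it avoids the case split $g(A)\ge M$ versus $g(A)<M$. The paper's route gives a concrete marginal formula. The same $\min\{\text{weight},(\text{cap}-\text{current})_+\}$ form reappears in its capped modular lower-bound family.

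A minor wording point: what you call a ``short case analysis'' is really the two-step concavity/monotonicity comparison, since your argument contains no case split.
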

\begin{proof}
To check truncation explicitly, for $e\notin A$ consider two cases. If $g(A)\ge M$, then $G_M(A)=G_M(A\cup\{e\})=M$, giving marginal zero. If $g(A)<M$, then
\[G_M(e\mid A)=\min\{g(A)+g(e\mid A),M\}-g(A)=\min\{g(e\mid A),M-g(A)\}.\]
Combining the cases gives
\[G_M(e\mid A)=\min\{g(e\mid A),(M-g(A))_+\}.\]
Both arguments are nonnegative and nonincreasing when $A$ grows: the first by submodularity, the second by monotonicity of $g$. Their minimum therefore has both properties, proving that $G_M$ is monotone and submodular. Also $f_S(\varnothing)=0$ and $f_S(e\mid A)=G_M(e\mid S\cup A)$. Enlarging $A$ enlarges this base, so $f_S$ is also monotone and submodular. Finally,
\[0\le f_S(A)=G_M(S\cup A)-G_M(S)\le M-G_M(S)=\Delta.\]

For a correct target, select a feasible $O$ with $g(O)\ge M$.
Monotonicity then gives $G_M(S\cup O\cup T)=M$ for every $T$,
which proves~\eqref{card:eq:comparator}.
\end{proof}

\begin{lemma}[Joint Density Certificate]\label{card:lemma:density}
For every $A\subseteq R$,
\[
 f_S(A)\ge \tau\sum_{e\in A}s_e(R).
\]
\end{lemma}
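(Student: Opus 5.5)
The plan is to fix \(A\subseteq R\) and one ordering \(\pi\) of \(R\), and compare each clipped score term with a marginal inside \(A\) alone. Let \(\pi|_A\) be the induced ordering of \(A\), and let \(A^{\pi}_{<e}=A\cap R^{\pi}_{<e}\) denote the elements of \(A\) preceding \(e\). Telescoping along \(\pi|_A\) and using \(f_S(\varnothing)=0\) gives
\[
 f_S(A)=\sum_{e\in A} f_S\bigl(e\mid A^{\pi}_{<e}\bigr).
\]
Since \(A^{\pi}_{<e}\subseteq R^{\pi}_{<e}\) and \(f_S\) is submodular by Lemma~\ref{card:lemma:residual}, each term satisfies
\[
 f_S\bigl(e\mid A^{\pi}_{<e}\bigr)\ge f_S\bigl(e\mid R^{\pi}_{<e}\bigr)\ge\tau\min\Bigl\{\frac{f_S(e\mid R^{\pi}_{<e})}{\tau},1\Bigr\}.
\]
The last inequality uses \(\tau>0\) and \(\min\{x,1\}\le x\). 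If \(e\) already lies in a predecessor set, the marginal is zero; this causes no trouble because \(R\) is a set of distinct elements.

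Summing over \(e\in A\) gives, for every ordering \(\pi\),
\[
 f_S(A)\ge\tau\sum_{e\in A}\min\Bigl\{\frac{f_S(e\mid R^{\pi}_{<e})}{\tau},1\Bigr\}.
\]
The left side does not depend on \(\pi\). Taking expectation over a uniformly random \(\pi\) and using linearity then yields \(f_S(A)\ge\tau\sum_{e\in A}s_e(R)\), which is the claim.

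No step here is hard. The only point requiring care is the observation that a single ordering \(\pi\) simultaneously fixes the telescoping order for \(A\) and the predecessor contexts in \(R\), so the nested-context comparison \(A^{\pi}_{<e}\subseteq R^{\pi}_{<e}\) holds termwise. Clipping at \(1\) only weakens the bound and does not interfere. Monotonicity of \(f_S\) is not needed beyond guaranteeing that the marginals are nonnegative, and even that is not essential for the argument.
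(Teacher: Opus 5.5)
Your proposal is correct and follows essentially the same argument as the paper. You fix one ordering of $R$ and restrict it to $A$ for the telescoping sum. You then compare the nested contexts $A^{\pi}_{<e}\subseteq R^{\pi}_{<e}$ via diminishing returns, bound by the clipped marginal, and average over $\pi$.
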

\begin{proof}
Fix an ordering $\pi$ of $R$, and restrict it to $A$. The predecessors $A_{<e}^{\pi}$ of an element $e\in A$ satisfy $A_{<e}^{\pi}\subseteq R_{<e}^{\pi}$. Diminishing returns gives
\[
 f_S(e\mid A_{<e}^{\pi})\ge f_S(e\mid R_{<e}^{\pi})
 \ge \min\{f_S(e\mid R_{<e}^{\pi}),\tau\}.
\]
To see the telescoping explicitly, write the restricted order as $a_1,\ldots,a_{|A|}$. Then
\[
 \begin{aligned}
 \sum_{j=1}^{|A|}f_S(a_j\mid\{a_1,\ldots,a_{j-1}\})
 &=\sum_{j=1}^{|A|}\bigl[f_S(\{a_1,\ldots,a_j\})-f_S(\{a_1,\ldots,a_{j-1}\})\bigr]\\
 &=f_S(A),
 \end{aligned}
\]
where the empty prefix has value zero. Thus
\[
 f_S(A)\ge\sum_{e\in A}\min\{f_S(e\mid R_{<e}^{\pi}),\tau\}.
\]
Taking expectation and using linearity yields
\[
 f_S(A)\ge\sum_{e\in A}\E_\pi
 \min\{f_S(e\mid R_{<e}^{\pi}),\tau\}
 =\tau\sum_{e\in A}s_e(R).
\]
\end{proof}

\begin{lemma}[A simultaneous quantum marginal-score estimator]\label{card:lemma:score}
There is a quantum circuit using $O(\nu^{-2}\log(2b))$ value queries that outputs rational estimates $\widehat s_e$ satisfying
\begin{equation}
 \Prb\left[\max_{e\in R}|\widehat s_e-s_e(R)|\le\nu/8\right]\ge 7/8.
 \label{card:sc:guarantee}
\end{equation}
The circuit can retain all random seeds, outcome registers, and work registers. Its inverse has the same query cost. The statement is uniform over the fixed pool $R$ and the current residual function $f_S$.
\end{lemma}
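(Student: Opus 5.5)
We prove the lemma by analyzing Algorithm~\ref{card:alg:score} directly. The work splits into an exact mean identity for one idealized joint sample, a bias budget for the finite approximations, and a Hoeffding union bound over the pool.

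\emph{Mean identity.} Fix a frequency $m$ and an angle $\theta$, and write $p=\sin^2\theta$. Expanding the product state gives amplitudes $\sqrt{\mu_{p\one}(T)}\,e^{\ii\pi m f_S(T)/\Delta}$, which is the same structure as in Lemma~\ref{usm:lem:phase-identity}. Applying that lemma with $X=\varnothing$ and $\vartheta=0$, the expected Pauli-$X$ outcome on coordinate $e$ is
\[
 \sin2\theta\;\E_{T\sim\mu_{p,-e}}\cos\bigl(\pi m f_S(e\mid T)/\Delta\bigr).
\]
Averaging over $m\sim c$ uses the Fourier representation $\chi(x)=\sum_m c_m\cos(\pi m x/\Delta)$. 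This holds pointwise on $[-\Delta,\Delta]$ by Lemma~\ref{card:tool:fourier-convergence}: the periodized triangle has coefficients $c_m=\frac{\tau}{2\Delta}\operatorname{sinc}^2(\pi m\tau/(2\Delta))\ge0$ summing to $\chi(0)=1$. Since the marginal lies in $[0,\Delta]$, the average becomes
\[
 \sin2\theta\;\E\,\chi\bigl(f_S(e\mid T)\bigr)
 =\sin2\theta\;\E\bigl[1-\min\{f_S(e\mid T)/\tau,1\}\bigr].
\]
Now average over $\theta$ uniform on $[0,\pi/2]$, substituting $p=\sin^2\theta$, so that $dp=\sin2\theta\,d\theta$. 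Then $\int_0^{\pi/2}\sin2\theta\,\mu_{p,-e}(T)\,d\theta=\int_0^1 p^{|T|}(1-p)^{b-1-|T|}\,dp$. This is the Beta integral, and after dividing by $\pi/2$ it equals $\frac{2}{\pi}$ times the probability that the predecessor set of $e$ in a uniformly random ordering equals $T$. Hence $\E[Z_e]=\frac2\pi(1-s_e(R))$, and the estimator $1-\frac\pi2 Z_e$ is unbiased for $s_e(R)$ in the idealized experiment.

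\emph{Bias budget.} Three approximations enter, and I would bound each by roughly $\nu/64$ so that the total bias stays below $\nu/16$.
\begin{itemize}
 \item \emph{Frequency truncation.} Moving the tail mass $\eta_L$ to $m=0$ changes each cosine average by at most $2\eta_L$. Using $c_m\le\frac{2\Delta}{\pi^2\tau m^2}$, the tail satisfies $\eta_L\le\frac{4\Delta}{\pi^2\tau L}\le\nu/(64\pi^2)$.
 \item \emph{Angle grid.} The midpoint rule for the smooth integrand $\theta\mapsto\sin2\theta\,\E_{\mu_{\sin^2\theta}}[\cdot]$ is the main technical point. That integrand is a polynomial in $\sin\theta,\cos\theta$ of degree about $b$, so its second derivative is $O(b)$ rather than $O(1)$. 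The midpoint error is then $O(b/K^2)$, which the choice $K\gtrsim\sqrt b/\nu$ makes $O(\nu)$ with the stated constant $16\pi^2$. I expect this derivative bound to be the main obstacle. If the crude $O(b^2)$ bound on the second derivative appears, I would instead bound $\E_{p}$ of a bounded function of a Binomial-type count through its variance, which gives $O(b)$.
 \item \emph{Remaining approximations.} Replacing $\pi/2$ by $a_\nu$ costs at most $\nu/64$, since $|Z_e|\le1$. The gate-synthesis error of $\nu/256$ in total variation shifts each mean by at most $2a_\nu\cdot\nu/256\le\nu/64$.
\end{itemize}

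\emph{Concentration and cost.} The $D$ rounds are independent, and each per-coordinate sample lies in an interval of length $2a_\nu<4$. Hoeffding (Lemma~\ref{card:tool:hoeffding}) with deviation $\nu/16$ gives failure probability at most $2\exp(-2D(\nu/16)^2/16)$ for each $e$. The choice $D=\lceil2048\nu^{-2}\log(16b)\rceil$ makes this at most $1/(8b)$ per coordinate. A union bound over the $b$ pool elements then gives overall success probability at least $7/8$ and total error at most $\nu/16+\nu/16=\nu/8$.

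Each round uses one compute--phase--uncompute, which costs two queries. The total is therefore $2D=O(\nu^{-2}\log(2b))$, independent of $f_S$ and $R$. Since the seeds and outcomes are kept in registers, the circuit is unitary, and its inverse makes the same queries.
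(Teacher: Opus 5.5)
Outside the angle grid, your argument is the paper's: the $\sin 2\theta$ Pauli-$X$ identity, the nonnegative Fourier weights of $\chi$, the substitution $p=\sin^2\theta$ with the Beta integral giving the predecessor law, the $\nu/64$ budgets for frequency truncation, $a_\nu$ and synthesis, and the Hoeffding union bound with $D=\lceil 2048\nu^{-2}\log(16b)\rceil$.

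The angle discretization, which you rightly call the crux, is not established. The inference ``a trigonometric polynomial of degree about $b$, so its second derivative is $O(b)$'' is invalid. A degree count alone gives only $O(b^2)$ (think of $\cos(2b\theta)$). With $K\asymp\sqrt b/\nu$ that leaves a midpoint error of order $b\nu^2\asymp\nu^3 k$, which is not small. Your variance fallback is only sketched. As written, the bias bound for the angle grid is therefore unproved.

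The missing idea is to differentiate the product state itself rather than count degrees. Write $|v_\theta\rangle=\cos\theta|0\rangle+\sin\theta|1\rangle$. Then $\partial_\theta|\psi_\theta\rangle$ is a sum of $b$ terms, each with a single factor $|v'_\theta\rangle$. Since $\langle v_\theta|v'_\theta\rangle=0$, these terms are orthonormal, so $\|\partial_\theta\psi_\theta\|=\sqrt b$. Because $U_m^\dagger\sigma_x^{(e)}U_m$ has norm one and does not depend on $\theta$, the function $\Lambda(\theta)=\E[Z_e\mid\theta]$ is $2\sqrt b$-Lipschitz.

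This first-order bound already suffices, and it is how the paper argues. Couple a continuous uniform angle with the midpoint of its own cell, a displacement of at most $\pi/(4K)$. The score bias is then at most $\frac{\pi}{2}\cdot 2\sqrt b\cdot\frac{\pi}{4K}=\pi^2\sqrt b/(4K)\le\nu/64$. This is exactly why $K\ge 16\pi^2\sqrt b/\nu$, and no midpoint cancellation is needed.

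Your second-order route can be repaired by the same orthogonality. Let $\psi^{(ij)}_\theta$ carry $v'_\theta$ in positions $i$ and $j$. Then $\partial_\theta^2\psi_\theta=-b\psi_\theta+2\sum_{i<j}\psi^{(ij)}_\theta$, with the $\psi^{(ij)}_\theta$ orthonormal and orthogonal to $\psi_\theta$. Hence $|\Lambda''|\le 2\sqrt3\,b+2b$, and your midpoint bound becomes $O(b/K^2)=O(\nu^2)$. This works, but it is more effort for no gain here.

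Separately, the lemma asserts that a circuit exists, so the total-variation allowance $\nu/256$ should be justified rather than assumed. The paper does this with Lemma~\ref{card:tool:gates} at per-gate error $\xi/J$, the telescoping product bound with exact oracle calls, and the amplitude-to-total-variation estimate.
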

\begin{proof}
We prove the ideal expectation identity behind Algorithm~\ref{card:alg:score}, then bound each finite-implementation bias, and finally prove simultaneous concentration.

\paragraph{A Positive Fourier Distribution.}
On $[-\Delta,\Delta]$, recall the kernel
\[
 \chi(x)=\left(1-\frac{|x|}{\tau}\right)_+,
\]
and extend $\chi$ periodically with period $2\Delta$. Its Fourier coefficients are
\begin{align}
 c_m&=\frac{1}{2\Delta}\int_{-\tau}^{\tau}
       \left(1-\frac{|x|}{\tau}\right)e^{-i\pi m x/\Delta}\,dx\notag\\
    &=\frac{\tau}{2\Delta}\operatorname{sinc}^2
       \left(\frac{\pi m\tau}{2\Delta}\right),\qquad m\in\mathbb Z,
 \label{card:sc:fourier}
\end{align}
We now verify the coefficient calculation. The sine integrand is odd and integrates to zero. If $m\ne0$, put $\omega=\pi m/\Delta$ and integrate the remaining even part by parts:
\begin{align*}
 c_m
 &=\frac1\Delta\int_0^\tau(1-x/\tau)\cos(\omega x)\,dx\\
 &=\frac1\Delta\left[\frac{(1-x/\tau)\sin(\omega x)}{\omega}\right]_0^\tau
   +\frac1{\Delta\tau\omega}\int_0^\tau\sin(\omega x)\,dx\\
 &=\frac{1-\cos(\omega\tau)}{\Delta\tau\omega^2}
 =\frac{2\Delta}{\pi^2m^2\tau}\sin^2\left(\frac{\pi m\tau}{2\Delta}\right).
\end{align*}
The boundary term vanishes at both endpoints. For $m=0$, direct integration gives
$c_0=\Delta^{-1}[x-x^2/(2\tau)]_0^\tau=\tau/(2\Delta)$, agreeing with~\eqref{card:sc:fourier}.
Thus $c_m\ge0$, $c_{-m}=c_m$, and
\[
 \sum_{m\ne0}|c_m|
 \le\frac{4\Delta}{\pi^2\tau}\sum_{m=1}^{\infty}\frac1{m^2}<\infty.
\]
The convergence of the last series follows, for example, from
$\sum_{m=1}^{\infty}m^{-2}\le1+\int_1^\infty x^{-2}\,dx=2$.
The periodic function $\chi$ is continuous and piecewise linear, so Lemma~\ref{card:tool:fourier-convergence} applies. Absolute convergence allows us to combine the terms at $m$ and $-m$, giving
\[
 \chi(x)=\sum_{m\in\mathbb Z}c_me^{i\pi m x/\Delta}
     =\sum_{m\in\mathbb Z}c_m\cos(\pi m x/\Delta).
\]
At $x=0$, every exponential is one, so $\sum_m c_m=\chi(0)=1$. Consequently $(c_m)_{m\in\mathbb Z}$ is a probability distribution, and
\begin{equation}
 \E_{m\sim c}\cos(\pi m x/\Delta)=\chi(x).
 \label{card:sc:kernel-expectation}
\end{equation}

\paragraph{Two Queries Produce One Joint Sample.}
Independently choose $m$ according to $(c_m)$ and $\theta$ uniformly from $[0,\pi/2]$. Prepare
\[
 |\psi_\theta\rangle=\bigotimes_{e\in R}
        (\cos\theta\,|0\rangle+\sin\theta\,|1\rangle).
\]
The bit string encodes a subset $A\subseteq R$. Apply the diagonal phase
\[
 U_m|A\rangle=e^{i\pi m f_S(A)/\Delta}|A\rangle.
\]
To implement it, query $f(S\cup A)$ and reversibly compute
\[
 f_S(A)=\min\{f(S\cup A)-f(\varnothing),M\}-G_M(S)
\]
using the cached values $f(\varnothing),M,G_M(S)$. Apply the controlled phase and reverse the arithmetic and value query, returning all value work registers to zero. This uses one forward and one inverse value query. The ideal phase is used only to derive the identity; its finite approximation is specified below. Measure every pool qubit in the Pauli-$X$ basis and write $Z_e\in\{-1,1\}$ for its outcome. Since the measured operators act on different qubits, they commute and all outcomes can be recorded in the same experiment.

Fix $e\in R$, and display its qubit separately. For $T\subseteq R\setminus\{e\}$, write
\[
 a_T=(\sin\theta)^{|T|}(\cos\theta)^{b-1-|T|},
 \qquad \phi_A=\frac{\pi m f_S(A)}\Delta.
\]
Then the state after the phase operation is exactly
\[
 U_m|\psi_\theta\rangle
 =\sum_{T\subseteq R\setminus\{e\}}a_T
 \left(\cos\theta\,e^{i\phi_T}|0\rangle_e
       +\sin\theta\,e^{i\phi_{T\cup\{e\}}}|1\rangle_e\right)|T\rangle.
\]
For these two components, the phase difference is
\[
 \phi_{T\cup\{e\}}-\phi_T
 =\frac{\pi m}{\Delta}\bigl(f_S(T\cup\{e\})-f_S(T)\bigr)
 =\frac{\pi m f_S(e\mid T)}{\Delta}.
\]
Put $p=\sin^2\theta$. The squared amplitude on the other qubits is
\[
 a_T^2=p^{|T|}(1-p)^{b-1-|T|}
 =\Prb_{T'\sim\mu_{p\one}}[T'=T],
\]
where each element of $R\setminus\{e\}$ belongs to $T'$ independently with probability $p$; the endpoints $p=0,1$ follow by continuity. The identities $\sigma_x|0\rangle=|1\rangle$ and $\sigma_x|1\rangle=|0\rangle$ show that $\sigma_x^{(e)}$ exchanges the two components with the same $T$. Orthogonality, $\langle T'|T\rangle=0$ for $T'\ne T$, eliminates all other cross terms. The two surviving terms are complex conjugates, so the measurement expectation is
\begin{align}
 \E[Z_e\mid\theta,m]
 &=\langle\psi_\theta|U_m^\dagger \sigma_x^{(e)}U_m|\psi_\theta\rangle\notag\\
 &=\sum_T a_T^2\cos\theta\sin\theta
   \left(e^{i(\phi_{T\cup\{e\}}-\phi_T)}
         +e^{-i(\phi_{T\cup\{e\}}-\phi_T)}\right)\notag\\
 &=\sin(2\theta)\sum_Ta_T^2
       \cos\left(\frac{\pi m f_S(e\mid T)}{\Delta}\right)\notag\\
 &=\sin(2\theta)\E_{T\sim\mu_{p\one}}
       \cos\left(\frac{\pi m f_S(e\mid T)}{\Delta}\right).
 \label{card:sc:paired}
\end{align}
Here the first equality uses the eigenvalues $+1,-1$ of the Pauli-$X$ measurement, the third uses the phase difference computed above, and the last uses the displayed Bernoulli probability.

\paragraph{Averaging the Frequency and Angle.}
For a fixed $\theta$, first average~\eqref{card:sc:paired} over $m$. There are only $2^{b-1}$ possible sets $T$, and the cosine is bounded, so the finite $T$-sum can be interchanged with the frequency average. Using~\eqref{card:sc:kernel-expectation} at $x=f_S(e\mid T)$ gives
\begin{align*}
 \E[Z_e\mid\theta]
 &=\sin(2\theta)\sum_Tp^{|T|}(1-p)^{b-1-|T|}
       \sum_{m\in\mathbb Z}c_m\cos\left(\frac{\pi m f_S(e\mid T)}\Delta\right)\\
 &=\sin(2\theta)\sum_Tp^{|T|}(1-p)^{b-1-|T|}\chi(f_S(e\mid T))\\
 &=\sin(2\theta)\E_{T\sim\mu_{p\one}}\chi(f_S(e\mid T)).
\end{align*}
A uniform angle on an interval of length $\pi/2$ has density $2/\pi$. The law of total expectation therefore says
\[
 \E Z_e=\frac2\pi\int_0^{\pi/2}\sin(2\theta)
       \E_{T\sim\mu_{(\sin^2\theta)\one}}\chi(f_S(e\mid T))\,d\theta.
\]
Multiplying by $\pi/2$ cancels that density. Under the increasing substitution $p=\sin^2\theta$, the endpoints become $0$ and $1$, while
\[
 \frac{dp}{d\theta}=2\sin\theta\cos\theta=\sin(2\theta).
\]
Thus the Jacobian absorbs exactly the prefactor in the integrand:
\begin{align}
 \frac{\pi}{2}\E Z_e
 &=\int_0^{\pi/2}\sin(2\theta)
       \E_{T\sim\mu_{(\sin^2\theta)\one}}\chi(f_S(e\mid T))\,d\theta\notag\\
 &=\int_0^1\E_{T\sim\mu_{p\one}}\chi(f_S(e\mid T))\,dp.
 \label{card:sc:mixture}
\end{align}
In particular, $\theta$ itself is uniform, whereas $p=\sin^2\theta$ is not uniform under that choice alone. The uniform $dp$ integral arises because the measured expectation contributes the extra factor $\sin(2\theta)$.

\paragraph{Identifying the Predecessor Distribution.}
Expand the finite expectation in the last integral of~\eqref{card:sc:mixture} and interchange the finite sum and integral:
\[
 \frac\pi2\E Z_e
 =\sum_{T\subseteq R\setminus\{e\}}\chi(f_S(e\mid T))
       \int_0^1p^{|T|}(1-p)^{b-1-|T|}\,dp.
\]
For nonnegative integers $t,s$, let $I_{t,s}=\int_0^1p^t(1-p)^s\,dp$. If $t\ge1$, integration by parts with $u=p^t$ and $dv=(1-p)^sdp$ yields
\[
 I_{t,s}
 =\left[-\frac{p^t(1-p)^{s+1}}{s+1}\right]_0^1
   +\frac{t}{s+1}I_{t-1,s+1}
 =\frac{t}{s+1}I_{t-1,s+1}.
\]
The boundary term vanishes because $t\ge1$ and $s+1\ge1$.
Also $I_{0,s}=1/(s+1)$. Applying the recurrence $t$ times gives
\[
 I_{t,s}
 =\frac{t!}{(s+1)(s+2)\cdots(s+t)}\,I_{0,s+t}
 =\frac{t!s!}{(s+t+1)!},
\]
with the same final formula for $t=0$. Taking $s=b-1-t$ therefore gives
\[
 \int_0^1p^t(1-p)^{b-1-t}\,dp
 =\frac{t!(b-1-t)!}{b!}
 =\frac1{b\binom{b-1}{t}}.
\]
Now fix a particular $t$-element set $T$. A permutation has predecessor set $R_{<e}^{\pi}=T$ exactly when its first $t$ positions contain the elements of $T$, its next position contains $e$, and its remaining positions contain the other elements. There are $t!(b-1-t)!$ such permutations out of $b!$. Consequently
\[
 \frac\pi2\E Z_e
 =\sum_T\Prb_\pi[R_{<e}^{\pi}=T]\chi(f_S(e\mid T))
 =\E_\pi \chi(f_S(e\mid R_{<e}^{\pi})).
\]
Monotonicity and $f_S\le\Delta$ imply
$0\le f_S(e\mid T)=f_S(T\cup\{e\})-f_S(T)\le\Delta$.
For a number $x$ in this interval, the original branch of the periodic kernel applies, and
\[
 1-\chi(x)=
 \begin{cases}
 x/\tau,&0\le x\le\tau,\\
 1,&\tau\le x\le\Delta,
 \end{cases}
 =\min\{x/\tau,1\}.
\]
It follows, one equality at a time, that
\begin{align*}
 \E[1-(\pi/2)Z_e]
 &=1-\frac\pi2\E Z_e\\
 &=1-\E_\pi \chi(f_S(e\mid R_{<e}^{\pi}))\\
 &=\E_\pi[1-\chi(f_S(e\mid R_{<e}^{\pi}))]\\
 &=\E_\pi\min\{f_S(e\mid R_{<e}^{\pi})/\tau,1\}
 =s_e(R).
\end{align*}
We have proved the identity
\[
 \E\left[1-\frac{\pi}{2}Z_e\right]=s_e(R)
 \quad\text{for every }e\in R.
\]
All coordinates arise from the same two-query phase experiment. The estimator takes the two values $1-\pi/2$ and $1+\pi/2$; its constant range length, rather than membership in $[0,1]$, is what concentration requires.

\paragraph{Finite Frequencies.}
For an integer $L\ge1$, the coefficient bound above and the decreasing function $x^{-2}$ imply
\[
 \eta_L:=\sum_{|m|>L}c_m
 \le\frac{4\Delta}{\pi^2\tau}\sum_{m=L+1}^{\infty}\frac{1}{m^2}
 \le\frac{4\Delta}{\pi^2\tau}\int_L^\infty\frac{dx}{x^2}
 =\frac{4\Delta}{\pi^2\tau L}.
\]
Retain the frequencies $|m|\le L$ and move all tail mass to $m=0$: the new probability at zero is $c_0+\eta_L$, and the other retained probabilities remain $c_m$. These probabilities are nonnegative and sum to one. The retained distribution is computable by the finite formula $\eta_L=1-\sum_{|m|\le L}c_m$, so no infinite sum has to be evaluated by the circuit. The new kernel $\chi_L$ satisfies
\[
 \chi_L(x)-\chi(x)
 =\sum_{|m|>L}c_m\bigl[1-\cos(\pi m x/\Delta)\bigr],
 \qquad |\chi_L(x)-\chi(x)|\le2\eta_L.
\]
For fixed $\theta$, the formula preceding~\eqref{card:sc:mixture} shows that the change in $\E[Z_e\mid\theta]$ is at most
$2\eta_L\sin(2\theta)\le2\eta_L$. Averaging over $\theta$ and multiplying by $\pi/2$ changes the score expectation by at most $\pi\eta_L$. Choose
\[
 L=\left\lceil\frac{256\Delta}{\tau\nu}\right\rceil
 =\left\lceil\frac{256k}{(1-\nu)\nu}\right\rceil.
\]
Then $\pi\eta_L\le\nu/(64\pi)\le\nu/64$. Since $\nu\le1/64$, this choice is $O(k/\nu)$ and depends only on the public parameters $k,\nu$.

\paragraph{Finite Angles.}
Write $|v_\theta\rangle=\cos\theta|0\rangle+\sin\theta|1\rangle$.
Its derivative is $|v'_\theta\rangle=-\sin\theta|0\rangle+\cos\theta|1\rangle$, so
\[
 \|v_\theta\|=\|v'_\theta\|=1,
 \qquad \langle v_\theta|v'_\theta\rangle=0.
\]
Differentiating the tensor product gives $b$ terms, each with one factor $v'_\theta$ and all other factors $v_\theta$. Each term has norm one. Two different terms have zero inner product because at a coordinate differentiated in only one of them the inner product is $\langle v_\theta|v'_\theta\rangle$ or its conjugate. Thus
\[
 \|\partial_\theta\psi_\theta\|^2=b.
\]
For fixed $m$, put $A_m=U_m^\dagger \sigma_x^{(e)}U_m$. Unitarity gives $\|A_m\|=1$, and $A_m$ does not depend on $\theta$. Differentiating and applying Cauchy--Schwarz to both terms yields
\begin{align*}
 \left|\frac{d}{d\theta}\langle\psi_\theta|A_m|\psi_\theta\rangle\right|
 &=\left|\langle\partial_\theta\psi_\theta|A_m|\psi_\theta\rangle
       +\langle\psi_\theta|A_m|\partial_\theta\psi_\theta\rangle\right|\\
 &\le2\|\partial_\theta\psi_\theta\|\,\|A_m\|\,\|\psi_\theta\|
 =2\sqrt b.
\end{align*}
The finitely supported frequency average has the same derivative bound by the triangle inequality, since its weights sum to one.
Replace the uniform angle by a uniform choice among the midpoints
$\theta_j=(j+1/2)\pi/(2K)$, $0\le j<K$.
Couple a continuous uniform angle to the midpoint of its own subinterval. This midpoint is uniform on the $K$ choices, and the angle changes by at most $\pi/(4K)$. The mean value bound above gives an error at most
$2\sqrt b\cdot\pi/(4K)=\pi\sqrt b/(2K)$ in $\E Z_e$.
Multiplying by $\pi/2$, the score error is at most $\pi^2\sqrt b/(4K)$.
Choose an integer $K\ge16\pi^2\sqrt b/\nu$ of order $\sqrt b/\nu$; the error is then at most $\nu/64$.

\paragraph{Rational Scores and Finite Gates.}
Choose a rational $a_\nu$ satisfying $|a_\nu-\pi/2|\le\nu/64$ and $0<a_\nu<2$, and use the sample value $Y_e=1-a_\nu Z_e$.
Because $|\E Z_e|\le1$, this coefficient replacement changes the score expectation by at most
$|a_\nu-\pi/2|\,|\E Z_e|\le\nu/64$.
All computed sample averages and later score comparisons are now rational.

The remaining ideal experiment has finitely many seed values and finite registers. We next justify a uniformly accurate finite implementation and its effect on all coordinates. A state with prescribed computable amplitudes on a finite basis can be prepared by a finite binary tree of controlled single-qubit rotations: at each node rotate with squared amplitudes equal to the probabilities of its two descendant branches. Zero-probability branches can be omitted. This constructs both the finite frequency seed and the angle seed. The required probabilities and angles are computable from the public parameters. The ratio $f_S(A)/\Delta$ is a rational number computed from the oracle encoding, and lies in $[0,1]$; with $|m|\le L$, the desired phase magnitude is at most $\pi L$. Finite arithmetic therefore computes all rotations and diagonal phases to any prescribed uniform accuracy, for every allowed oracle value.

Here is an explicit error budget for these operations. Fix a finite ideal decomposition with $J$ non-oracle gates. Keep all classical reversible gates and the value oracle and its inverse exact. For each quantum rotation requiring synthesis, use Lemma~\ref{card:tool:gates} to obtain operator-norm error at most $\xi/J$; exact gate positions have zero error. The identity
\[
 V_J\cdots V_1-\widetilde V_J\cdots\widetilde V_1
 =\sum_{j=1}^J V_J\cdots V_{j+1}
 (V_j-\widetilde V_j)\widetilde V_{j-1}\cdots\widetilde V_1
\]
shows that the total operator error is at most $\xi$, since all surrounding factors are unitary. Oracle calls interspersed between these gates have norm one and do not change the bound. The phase error is uniformly bounded because $|e^{ix}-e^{iy}|\le|x-y|$.
For normalized final states with amplitudes $a_z,\widetilde a_z$ in the measured basis, their outcome distributions have total variation distance at most
\begin{align*}
 \frac12\sum_z\bigl||a_z|^2-|\widetilde a_z|^2\bigr|
 &\le\frac12\sum_z|a_z-\widetilde a_z|(|a_z|+|\widetilde a_z|)\\
 &\le\frac12\|a-\widetilde a\|_2
           \bigl(\|a\|_2+\|\widetilde a\|_2\bigr)
 \le\xi.
\end{align*}
Summing out seed or work registers cannot increase this distance, by the triangle inequality. Thus $\xi=\nu/256$ suffices for the actual and ideal joint outcome distributions to have total variation distance at most $\nu/256$.
No query is added by this precision increase. We only require a finite description and a finite inverse circuit, not a bound on their number of non-oracle gates.

For distributions $P,Q$ at total variation distance at most $\xi$, a $\{-1,1\}$-valued variable satisfies
$|\E_PZ_e-\E_QZ_e|\le\sum_z|P(z)-Q(z)|=2\xi$.
Since $a_\nu<2$, the resulting change in $\E Y_e$ is at most $4\xi=\nu/64$.
The frequency truncation, angle discretization, rational coefficient, and circuit approximation each contribute at most $\nu/64$. Applying the triangle inequality across these four successive experiments gives
\[
 |\E Y_e-s_e(R)|\le4(\nu/64)=\nu/16.
\]

\paragraph{Concentration of Independent Joint Samples.}
Prepare $D$ independent copies of the actual finite experiment, with independent seed registers, and put
\[
 \widehat s_e=\frac1D\sum_{j=1}^D Y_e^{(j)}.
\]
For each fixed $e$, these $D$ variables are independent and lie in
$[1-a_\nu,1+a_\nu]$, an interval of length $2a_\nu<4$.
Hoeffding's inequality (Lemma~\ref{card:tool:hoeffding}) at deviation $\nu/16$ gives
\begin{align*}
 \Prb\left[|\widehat s_e-\E Y_e|>\nu/16\right]
 &\le2\exp\left(-\frac{2D(\nu/16)^2}{(2a_\nu)^2}\right)\\
 &\le2\exp(-D\nu^2/2048).
\end{align*}
Take $D=\lceil2048\nu^{-2}\log(16b)\rceil$. A union bound makes the probability of any coordinate violating the deviation bound at most
\[
 \sum_{e\in R}2e^{-D\nu^2/2048}
 \le2b\,e^{-\log(16b)}=1/8.
\]
On the complementary event, for every $e$,
\[
 |\widehat s_e-s_e(R)|
 \le|\widehat s_e-\E Y_e|+|\E Y_e-s_e(R)|
 \le\nu/16+\nu/16=\nu/8.
\]
This proves~\eqref{card:sc:guarantee}. Coordinates within one joint sample may be correlated; independence is needed only between sample copies. There are exactly $2D=O(\nu^{-2}\log(2b))$ value queries.

Finally, keep each random choice in a seed register. An $X$-basis measurement can be replaced by a Hadamard followed by copying the resulting computational-basis bit to a fresh outcome register; keeping that register gives the same eventual measurement probabilities. Compute the rational sample averages reversibly from these recorded bits and retain the arithmetic work registers. This produces one fixed finite unitary source circuit whose final measurement has exactly the distribution analyzed above. Include the inverse of every chosen elementary gate among the allowed gates, and reverse the complete circuit when its inverse is needed. Reversal replaces each value query by its inverse and vice versa, so the inverse also uses $2D$ queries. Subsequent amplification therefore uses the actual finite circuit, with all estimation and inverse-query costs included.
\end{proof}

\section{Random batch construction and gain analysis}\label{sec:card-batch}\label{card:sec:batch}
\subsection{Random batch construction algorithm}
The score vector is used to filter a random pool. A score failure is allowed to
change which elements survive, but an exact value test certifies every returned
batch. A second test also permits returning the entire pool when its total
gain is already large. This alternative is needed in the expected-gain
analysis: it compensates for comparison credit lost because pool elements
overlap in value.

Algorithm~\ref{card:alg:batch} constructs the reversible source $A_S$ for the current residual objective.
\begin{pseudocode}{Certified Batch Source $A_S$}\label{card:alg:batch}
\textbf{Input:} Current set $S$; residual function $f_S$ and gap $\Delta>0$ from~\eqref{card:eq:residual}; parameters $k,\nu,b$ from~\eqref{card:eq:params}.

\textbf{Output:} A batch $B\subseteq N\setminus S$ and its normalized gain $V=f_S(B)/\Delta$.
\begin{enumerate}
\item Sample a uniform $b$-set $R$. Use Lemma~\ref{card:lemma:score} to obtain estimates $\widehat s_e$, and put $B_0=\{e\in R:\widehat s_e\ge1-\nu/2\}$.
\item Evaluate the true residual gain. Set $B_f=B_0$ if $f_S(B_0)\ge(1-\nu)^2\Delta|B_0|/k$, and set $B_f=\varnothing$ otherwise.
\item If $f_S(R)\ge(1-\nu)^2\Delta b/k$, return $B=R\setminus S$; otherwise return $B=B_f\setminus S$. Retain the exact number $V=f_S(B)/\Delta$ and all records needed for reversal.
\end{enumerate}
\end{pseudocode}

The background value $G_M(S)$ is cached. A new residual value is obtained by querying $f(S\cup A)$, subtracting the cached $f(\varnothing)$, applying the cap $M$, and subtracting $G_M(S)$. Reversing this computation uses the inverse query. Thus each residual evaluation and its reversal use a constant number of raw value queries. The score circuit and the two density tests give a public cost bound
\[d_0=O(\nu^{-2}\log(2b))\]
for the complete source, including the inverse and exact verification up to a fixed constant factor.

\subsection{Expected gain analysis}
\begin{lemma}[Pointwise Certificate]\label{card:lemma:certificate}
Every output of the actual source, including inaccurate-score outcomes, satisfies
\[
|B|\le b,\qquad f_S(B)\ge(1-\nu)^2\Delta|B|/k,\qquad V\in\{0\}\cup[u,1],\quad u=(1-\nu)^2/k.
\]
On the event $\mathcal E_R=\{\max_{e\in R}|\widehat s_e-s_e|\le\nu/8\}$, the batch $B_0$ passes its density test. Moreover, every unselected $e\in R\setminus B_0$ has $s_e<1$.
\end{lemma}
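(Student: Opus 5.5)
The plan is a case analysis on which branch of Algorithm~\ref{card:alg:batch} produced the output. I start by noting that every pool element already in $S$ has zero residual marginal, so $f_S(B)=f_S(B')$ whenever $B=B'\setminus S$. This holds because $f_S$ is normalized and $f_S(A)=G_M(S\cup A)-G_M(S)$ depends only on $A\setminus S$. With this in hand, the density inequality proved for $R$ or $B_f$ transfers to $B$. Removing elements can only shrink $|B|$, so the right-hand side $(1-\nu)^2\Delta|B|/k$ only decreases.

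\textbf{Pointwise certificate.} The argument has three branches.
\begin{enumerate}
\item In the whole-pool branch, the test gives $f_S(R)\ge(1-\nu)^2\Delta b/k\ge(1-\nu)^2\Delta|R\setminus S|/k$, and $|B|\le b$.
\item In the filtered branch with $B_f=B_0$, the passed test gives the inequality for $B_0$, hence for $B_0\setminus S$.
\item In the filtered branch with $B_f=\varnothing$, the inequality is trivial.
\end{enumerate}
For the range of $V$, Lemma~\ref{card:lemma:residual} gives $V\le1$. If $B\ne\varnothing$, then $|B|\ge1$, and the certificate yields $V\ge(1-\nu)^2/k=u$. If $B=\varnothing$, then $V=0$. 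None of this uses score accuracy, so it covers every outcome of the actual finite circuit. Exact evaluation of $f_S$ on the tested sets is all that is needed.

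\textbf{Behaviour on $\mathcal E_R$.} Each $e\in B_0$ satisfies $s_e\ge\widehat s_e-\nu/8\ge1-\nu/2-\nu/8\ge1-\nu$. Lemma~\ref{card:lemma:density} applied to $A=B_0$ then gives
\[
 f_S(B_0)\ge\tau(1-\nu)|B_0|=(1-\nu)^2\Delta|B_0|/k,
\]
using $\tau=(1-\nu)\Delta/k$. So $B_0$ passes its test. For an unselected $e$, we have $\widehat s_e<1-\nu/2$, hence $s_e<1-\nu/2+\nu/8<1$.

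I do not expect a real obstacle. The only delicate point is making sure the density claim survives the removal of $S$ and holds on inaccurate-score outcomes. Both are handled by zero marginals and exact certification, not by the score guarantee.
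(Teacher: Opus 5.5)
Your proposal is correct and follows essentially the same route as the paper. Both arguments reduce the returned set to the certified set $A\in\{R,B_0,\varnothing\}$ via $f_S(A\setminus S)=f_S(A)$ and $|A\setminus S|\le|A|\le b$, obtain $V\in\{0\}\cup[u,1]$ from the certificate and $f_S\le\Delta$, and handle $\mathcal E_R$ through $s_e\ge1-5\nu/8$ and Lemma~\ref{card:lemma:density}; the only cosmetic difference is that you split on whether $B$ is empty, whereas the paper splits on whether $V>0$.
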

\begin{proof}
The source chooses a set $A$ equal to either a certified pool $R$, a certified $B_0$, or the empty set. Its returned set is $B=A\setminus S$. Since $S\cup B=S\cup A$, the residual definition gives $f_S(B)=f_S(A)$. Also $|B|\le|A|\le b$, so
\[f_S(B)=f_S(A)\ge(1-\nu)^2\Delta|A|/k\ge(1-\nu)^2\Delta|B|/k.\]
If $V>0$, then $B\ne\varnothing$ because $f_S(\varnothing)=0$, and hence $V\ge(1-\nu)^2|B|/k\ge(1-\nu)^2/k$. The bound $f_S(B)\le\Delta$ gives $V\le1$.

On $\mathcal E_R$, a selected element satisfies
\[s_e\ge\widehat s_e-\nu/8\ge1-\nu/2-\nu/8=1-5\nu/8\ge1-\nu.\]
Lemma~\ref{card:lemma:density} therefore gives
\[f_S(B_0)\ge\tau\sum_{e\in B_0}s_e\ge(1-\nu)\tau|B_0|=(1-\nu)^2\Delta|B_0|/k.\]
Thus $B_f=B_0$ on this event. An unselected element satisfies $\widehat s_e<1-\nu/2$, and consequently
\[s_e\le\widehat s_e+\nu/8<1-3\nu/8<1.\]
\end{proof}

\begin{lemma}[Mean Gain of the Actual Source]\label{card:lemma:mean}
For every current $S$ and every correct target $0<M\le\OPT_k(g)$,
\begin{equation}\label{card:eq:mean}\E V\ge\mu,\qquad \mu=\frac{\nu b}{2n}=\Theta(\nu^2k/n).\end{equation}
The expectation includes all pool choices, score samples, and score failures.
\end{lemma}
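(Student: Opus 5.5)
The plan is to compare the source with a comparator set and lower-bound $\E V$ through the density certificate, conditioning on the score event $\mathcal E_R$. Fix a current $S$ and a correct target $M$. Lemma~\ref{card:lemma:residual} supplies $|O|\le k$ with $f_S(O\cup T)=\Delta$ for every $T$. Every output has $V\ge0$ by Lemma~\ref{card:lemma:certificate}, and Lemma~\ref{card:lemma:score} gives $\Prb(\mathcal E_R\mid R)\ge7/8$ for every fixed pool. It therefore suffices to show
\[
 \E\bigl[V\,\mathbf 1_{\mathcal E_R}\mid R\bigr]\ \ge\ \tfrac78\,W(R)
\]
for some nonnegative pool functional $W$ with $\E_R W(R)\ge 4\nu b/(7n)$. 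The proof then splits by the pool test in step~3.

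If $f_S(R)\ge(1-\nu)^2\Delta b/k$, the source returns $R\setminus S$ and $V=f_S(R)/\Delta\ge(1-\nu)^2b/k$. Because $k\le n$ and $\nu\le1/16$, this already exceeds the target by a constant factor, so these pools are harmless.

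Otherwise the pool is ``light'', and on $\mathcal E_R$ the filtered batch $B_0$ passes its test. Lemma~\ref{card:lemma:density} gives
\[
 V\ \ge\ \frac{\tau}{\Delta}\sum_{e\in B_0}s_e\ =\ \frac{1-\nu}{k}\sum_{e\in B_0}s_e,
\]
and on $\mathcal E_R$ every $e$ with $s_e\ge1-\nu/4$ lies in $B_0$, since then $\widehat s_e\ge1-3\nu/8$. The task is thus to show that a light random pool contains, in expectation, score mass $\Omega(\nu bk/n)$ on nearly saturated elements. I would approach this through the random-order structure of the score. Pick $e\in R$ uniformly and then a random order; the predecessor set $T=R^\pi_{<e}$ is a uniform subset of uniform size in $\{0,\dots,b-1\}$, and $e$ is uniform in $N\setminus T$. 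Hence
\[
 \E_R\sum_{e\in R}s_e\ =\ b\,\E_T\,\E_{e\in N\setminus T}\min\Bigl\{\frac{f_S(e\mid T)}{\tau},1\Bigr\}.
\]
For every $T\subseteq R$, submodularity and $f_S(O\cup T)=\Delta$ give $\sum_{o\in O}f_S(o\mid T)\ge\Delta-f_S(R)$. For a light pool this is at least $(1-\nu/16)\Delta$, since $b\le\nu k/32$. Because $|O|\le k$ and $\tau=(1-\nu)\Delta/k$, the capped marginals of $O$ then contribute at least about $1$ per context. This gives expected pool score mass $\gtrsim b/n$ coming from comparator-type elements.

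The main obstacle is converting this total score mass into mass on \emph{selected} elements. Unselected elements may carry scores up to $1-3\nu/8$. I would handle this with a Markov-type split per comparator element $o\in O$, dividing by whether $f_S(o\mid T)\ge\tau$ holds for most predecessor contexts. If $f_S(o\mid T)\ge\tau$ for a $(1-\nu/4)$ fraction of contexts, then $s_o\ge1-\nu/4$ when $o$ lands in the pool, and $o$ is selected. Otherwise $o$ contributes deficiency at least $\nu/4$ in $\Omega(1)$ of its contexts. I would then use the lightness condition $f_S(R)\ge\tau\sum_{e\in R}s_e$ together with $\sum_o f_S(o\mid T)\approx\Delta$ to argue that the deficient comparator mass cannot absorb more than a $(1-\nu/2)$ fraction of the available $\Delta$. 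This leaves selected comparator mass of order $\nu$ per context. Summing, each of the $b$ pool slots yields expected selected score $\gtrsim\nu k/n$. Combined with the factor $(1-\nu)/k$ and the factor $7/8$ from $\mathcal E_R$, this gives $\E V\ge\nu b/(2n)$. I expect the careful bookkeeping of this deficiency split, making the constant $1/2$ come out with $\nu\le1/16$ and $b\le\nu k/32$, to be the delicate step. Taking expectation over $R$ then finishes the argument, and the first case is absorbed by the same bound.
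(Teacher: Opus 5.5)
Your proposal has a genuine gap. The capped density certificate $V\ge\frac{\tau}{\Delta}\sum_{e\in B_0}s_e$ credits each selected element with at most $\tau=(1-\nu)\Delta/k$. When the comparator's value sits on few elements, no bookkeeping can recover $\mu$ from it.

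Take $g(T)=\mathbf 1_{\{o\in T\}}$, $M=1$ and $S=\varnothing$, so that $f_S=g$, $\Delta=1$ and $O=\{o\}$. Every $e\ne o$ has $s_e=0$, so $\E_R\sum_{e\in R}s_e=b/n$. The best lower bound your certificate route can then produce is $\E V\ge(1-\nu)b/(kn)$. This is smaller than $\mu=\nu b/(2n)$ by the factor $\nu k/(2(1-\nu))>32$.

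In particular, two of your claims are false here: that a light pool carries selected score mass $\Omega(\nu bk/n)$, and that each slot carries expected selected score $\gtrsim\nu k/n$. The light pools are exactly those with $o\notin R$, and they have $V=0$. All the gain comes from heavy pools ($o\in R$, probability $b/n$), where in fact $V=1$. Your case bound $V\ge(1-\nu)^2b/k$ on those pools gives only $(1-\nu)^2b^2/(kn)\le\mu/16$, using $b\le\nu k/32$. So heavy pools are not separately ``harmless''. They may be the only productive pools, so they must be credited with their true value and traded off against the light ones.

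The missing ingredients are the ones the paper's proof supplies.
\begin{itemize}
\item Selected comparator elements are credited with their uncapped comparator marginals: $f_S(B_0)\ge f_S(B_0\cap O)\ge\sum_{e\in B_0\cap O}c_e$, where $c_e=f_S(e\mid O_{<e})$. This captures the excess $W=\sum_{e\in O}(c_e-\tau)_+\ge\nu\Delta$, which the capped scores throw away.
\item An unselected comparator element satisfies $f_S(e\mid R\setminus\{e\})<\tau$ on $\mathcal E_R$. Its excess is therefore charged to an overlap loss $L_R$. The telescoping identity $\sum_{e\in O}f_S(e\mid O_{<e}\cup T)=\Delta-f_S(T)$ gives $\E L_R\le(2b/n)\E f_S(R)$.
\item With $r=b/n$ and $\overline v=\E f_S(R)/\Delta$, this yields $\E V\ge r(7\nu/8-2\overline v)$. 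The whole-pool fallback yields $\E V\ge\overline v-\lambda$, and a convex combination of the two eliminates $\overline v$.
\end{itemize}

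The trade-off between pool value and filtered gain is made in expectation over pools, not pool by pool. That trade-off is precisely what your light/heavy split discards. Separately, conditioning on lightness would also destroy the uniform predecessor structure you invoke when computing $\E_R\sum_{e\in R}s_e$.
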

\begin{proof}
Fix an order $e_1,\ldots,e_s$ on the comparison set $O$ in~\eqref{card:eq:comparator}, where $s=|O|\le k$. Write $O_{<e_i}=\{e_1,\ldots,e_{i-1}\}$ and assign credits
\[c_e=f_S(e\mid O_{<e}),\qquad w_e=(c_e-\tau)_+,\qquad W=\sum_{e\in O}w_e.\]
The credit sum is
\[\sum_{i=1}^s c_{e_i}=\sum_{i=1}^s\bigl(f_S(\{e_1,\ldots,e_i\})-f_S(\{e_1,\ldots,e_{i-1}\})\bigr)=f_S(O)=\Delta.\]
Since $(x)_+\ge x$, $|O|\le k$, and $k\tau=(1-\nu)\Delta$,
\[W\ge\sum_{e\in O}(c_e-\tau)=\Delta-|O|\tau\ge\Delta-k\tau=\nu\Delta.\]
For a pool $R$, define its excess credit and overlap loss by
\begin{align*}
X_R&=\sum_{e\in O\cap R}w_e,\\
L_R&=\sum_{e\in O\cap R}\left[c_e-f_S(e\mid O_{<e}\cup(R\setminus\{e\}))\right].
\end{align*}
Each summand of $L_R$ is nonnegative: enlarging the base $O_{<e}$ can only decrease the marginal of $e$.

\paragraph{Step 1: Account for Selected and Unselected Credit.}
For every $e\in R$, a random-order predecessor set is contained in $R\setminus\{e\}$. Diminishing returns and monotonicity of $x\mapsto\min\{x/\tau,1\}$ imply, for every order and then after averaging,
\[s_e\ge\min\{f_S(e\mid R\setminus\{e\})/\tau,1\}.\]
On $\mathcal E_R$, an unselected $e$ has $s_e<1$, so $f_S(e\mid R\setminus\{e\})<\tau$. If also $w_e>0$, then
\begin{align*}
w_e=c_e-\tau
&\le c_e-f_S(e\mid R\setminus\{e\})\\
&\le c_e-f_S(e\mid O_{<e}\cup(R\setminus\{e\})).
\end{align*}
If $w_e=0$, the same final upper bound holds because the loss is nonnegative. Thus every unselected comparison element's excess credit is covered by its loss.

For selected comparison elements, list $B_0\cap O$ in the order inherited from $O$. Its predecessor set before $e$ is $(B_0\cap O)\cap O_{<e}\subseteq O_{<e}$. Hence
\begin{align*}
f_S(B_0)&\ge f_S(B_0\cap O)\\
&=\sum_{e\in B_0\cap O}f_S(e\mid(B_0\cap O)\cap O_{<e})\\
&\ge\sum_{e\in B_0\cap O}c_e\ge\sum_{e\in B_0\cap O}w_e.
\end{align*}
Splitting $O\cap R$ into selected and unselected elements now gives, on $\mathcal E_R$,
\begin{align*}
X_R
&=\sum_{e\in O\cap B_0}w_e+\sum_{e\in O\cap(R\setminus B_0)}w_e\\
&\le f_S(B_0)+\sum_{e\in O\cap(R\setminus B_0)}
\left[c_e-f_S(e\mid O_{<e}\cup(R\setminus\{e\}))\right]\\
&\le f_S(B_f)+L_R.
\end{align*}
Here $B_f=B_0$ on $\mathcal E_R$; the last inequality also uses nonnegativity of the omitted selected losses. Consequently, on all outcomes,
\begin{equation}\label{card:eq:creditloss}
f_S(B_f)\ge X_R-L_R-X_R\one_{\mathcal E_R^c}.
\end{equation}
Indeed, on $\mathcal E_R$ this is the inequality just proved, and on its complement the right side is $-L_R\le0\le f_S(B_f)$.

\paragraph{Step 2: Compute the Average Overlap Loss.}
Put $t=b-1$ and $\overline v_j=\E_{|T|=j}f_S(T)$, where the expectation is uniform over $j$-subsets of $N$. Every element belongs to a uniform $b$-set with probability $b/n$, so linearity of expectation gives
\[\E X_R=\sum_{e\in O}w_e\Prb[e\in R]=(b/n)W.\]
Furthermore $\overline v_t\le \overline v_b$: sample a uniform $b$-set and delete one uniformly selected element; the remaining $t$-set is uniform and has no larger value.

Conditional on $e\in R$, the set $R\setminus\{e\}$ is uniform among $t$-subsets avoiding $e$. To relate this law to unconditional sampling, let $T$ be a uniform $t$-set. Since $\Prb[e\in T]=t/n$ and $f_S(e\mid O_{<e}\cup T)=0$ whenever $e\in T$, the law of total expectation gives
\begin{align*}
\E f_S(e\mid O_{<e}\cup T)
&=(1-t/n)\E[f_S(e\mid O_{<e}\cup T)\mid e\notin T],\\
\E[f_S(e\mid O_{<e}\cup(R\setminus\{e\}))\mid e\in R]
&=\frac{\E f_S(e\mid O_{<e}\cup T)}{1-t/n}.
\end{align*}
For every fixed $T$, summing along the order of $O$ telescopes:
\begin{align*}
\sum_{i=1}^s f_S(e_i\mid O_{<e_i}\cup T)
&=\sum_{i=1}^s\bigl(f_S(T\cup\{e_1,\ldots,e_i\})-f_S(T\cup\{e_1,\ldots,e_{i-1}\})\bigr)\\
&=f_S(T\cup O)-f_S(T)=\Delta-f_S(T).
\end{align*}
The last equality is~\eqref{card:eq:comparator}. Therefore
\begin{align*}
\E L_R
&=\sum_{e\in O}\Prb[e\in R]\left(c_e-\E[f_S(e\mid O_{<e}\cup(R\setminus\{e\}))\mid e\in R]\right)\\
&=\frac bn\left(\sum_{e\in O}c_e-
\frac{1}{1-t/n}\E\sum_{e\in O}f_S(e\mid O_{<e}\cup T)\right).
\end{align*}
Substituting the two telescoping sums yields
\[
\E L_R=\frac bn\left(\Delta-\frac{\Delta-\overline v_t}{1-t/n}\right)
=\frac bn\frac{\overline v_t-(t/n)\Delta}{1-t/n}
\le\frac{2b}{n}\overline v_b.
\]
For the final bound, the numerator is at most $\overline v_t$, the denominator is at least $1/2$ because $t/n\le b/n\le\nu/32<1/2$, and $\overline v_t\le \overline v_b$.

Lemma~\ref{card:lemma:score} gives $\Prb[\mathcal E_R^c\mid R]\le1/8$ for every fixed pool. Since $X_R\ge0$ is determined by $R$,
\begin{align*}
\E[X_R\one_{\mathcal E_R^c}]
&=\E_R\bigl[\E[X_R\one_{\mathcal E_R^c}\mid R]\bigr]\\
&=\E_R\bigl[X_R\Prb[\mathcal E_R^c\mid R]\bigr]
\le\tfrac18\E X_R.
\end{align*}
Taking expectations in~\eqref{card:eq:creditloss} and substituting the bounds above gives
\[\E f_S(B_f)\ge\tfrac78\E X_R-\E L_R\ge\tfrac78\frac bn\nu\Delta-\frac{2b}{n}\overline v_b.\]
Writing $r=b/n$ and $\overline v=\overline v_b/\Delta$, and dividing by $\Delta>0$, we obtain
\[\E f_S(B_f)/\Delta\ge r(7\nu/8-2\overline v).\]

\paragraph{Step 3: Combine the Filtered Batch with the Whole Pool.}
Put $Y=f_S(R)/\Delta$ and $\lambda=(1-\nu)^2 b/k$, so $\E Y=\overline v$. If $Y\ge\lambda$, the source returns $R\setminus S$, with gain $f_S(R)\ge f_S(B_f)$ because $B_f\subseteq R$. Otherwise it returns $B_f\setminus S$, with gain exactly $f_S(B_f)$. Thus $V\ge f_S(B_f)/\Delta$ pointwise. Also $V\ge Y\one_{Y\ge\lambda}$, since the gain on the complementary event is nonnegative. Consequently
\begin{align*}
\E V&\ge r(7\nu/8-2\overline v),\\
\E V&\ge\E[Y\one_{Y\ge\lambda}]
=\overline v-\E[Y\one_{Y<\lambda}]
\ge \overline v-\lambda\Prb[Y<\lambda]\ge \overline v-\lambda.
\end{align*}
Both bounds concern the same random output, so
\[\E V\ge\max\{\overline v-\lambda,\ r(7\nu/8-2\overline v)\}.\]
For a direct lower bound independent of $\overline v$, a maximum is at least any convex combination of its two arguments. Choose weights $2r/(1+2r)$ and $1/(1+2r)$. The terms involving $\overline v$ then cancel:
\begin{align*}
\max\{\overline v-\lambda,r(7\nu/8-2\overline v)\}
&\ge\frac{2r(\overline v-\lambda)+r(7\nu/8-2\overline v)}{1+2r}\\
&=\frac{r(7\nu/8-2\lambda)}{1+2r}.
\end{align*}
Finally $\lambda\le b/k\le\nu/32$ and $r\le\nu/32\le1/32$, so
\[\frac{r(7\nu/8-2\lambda)}{1+2r}
\ge\frac{r(7\nu/8-\nu/16)}{1+1/16}
=\frac{13}{17}r\nu\ge\frac12r\nu=\mu.\]
The estimate $\mu=\Theta(\nu^2k/n)$ follows by substituting $\nu k/64\le b\le\nu k/32$ into its definition. This proves~\eqref{card:eq:mean}.
\end{proof}

\section{Fixed-target quantum search and amortized analysis}\label{sec:card-search}\label{card:sec:trial}
\subsection{A fixed-target quantum algorithm}
A straightforward search bound would charge each greedy update the worst-case
cost of finding a useful element. We instead search for a certified batch above
a public gain threshold and charge the search cost to the gain that it returns.
The total normalized gain before acceptance is bounded, so this accounting
controls the entire trial.

The source being amplified is a fixed finite circuit: pool seeds, score outcomes, exact values and work registers are retained. An intermediate measurement in a computational basis can be replaced by copying its outcome into a fresh register and controlling subsequent operations on that register; measuring all retained records at the end gives the same classical outcome probabilities. Thus the complete source has an inverse. After choosing any synthesized implementation of a source, all forward calls use that same circuit and all inverse calls reverse its chosen elementary gates exactly; the inverse is never approximated independently. The marking reflection compares the stored true ratio $f_S(B)/\Delta$ with rational $v$, using exact classical reversible gates and a phase flip. Inaccurate scores can reduce the chance of success, but cannot invalidate an exactly verified density certificate.

Choose rational $\Gamma$ with
\[1-1/e-4\nu\le\Gamma\le1-1/e-3\nu.\]

The acceptance test is an exact rational comparison. The public query cutoff
is enforced before every oracle call, so an incorrect target cannot cause an
unbounded execution.

For a large cardinality budget use $b,d_0,\mu,u$ above; the small-budget replacement is specified in Section~\ref{card:sec:complete}. Set
\[v=\max\{u,\mu/2\},\qquad T=\left\lceil C_1d_0/\sqrt{\mu v}\right\rceil.\]
The absolute constant $C_1$ is chosen below to make truncation failure at most $1/4$.

\begin{pseudocode}{Fixed-Target Trial $\operatorname{Trial}(M)$}\label{card:alg:trial}
\textbf{Input:} Target $M>0$; parameters $b,\Gamma,v,T$ specified above; the source for the current $S$ (Algorithm~\ref{card:alg:batch}, or its singleton replacement in Section~\ref{card:sec:complete}).

\textbf{Output:} Accept with a feasible set $S$ satisfying $g(S)\ge\Gamma M$, or reject.
\begin{enumerate}
\item Initialize $S=\varnothing$ and a total query counter with limit $T$.
\item If $g(S)\ge\Gamma M$, accept and return $S$.
\item If $k-|S|<b$, stop and accept only if the same exact certificate holds.
\item Amplify the source for the current $S$ to search for $V\ge v$. When found, add the entire batch $B$ and return to step 2.
\item \textbf{Global cutoff (applies throughout):} Before any query that would exceed $T$, terminate. Accept only if a stored exact value certifies $g(S)\ge\Gamma M$; otherwise reject.
\end{enumerate}
\end{pseudocode}

Exact values needed for acceptance are cached when verifying a batch. The counter includes every value query, inverse query and verification query. Before a batch is added, step 3 of Algorithm~\ref{card:alg:trial} ensures $|S|+b\le k$; since $|B|\le b$ and $B\cap S=\varnothing$, the new set is feasible on every path.

\subsection{Approximation guarantee and amortized query complexity}
\begin{lemma}[Gain-Amortized Search]\label{card:lemma:amortized}
Suppose a reversible source of cost $d_0$ outputs $V\in\{0\}\cup[u,1]$ and $\E V\ge\mu>0$. Put $v=\max\{u,\mu/2\}$. Search for the exactly verified event $V\ge v$ using Lemma~\ref{card:tool:aa}. If $C$ is the query cost and $V_{\rm out}$ its returned gain, then
\[
\E C\le\frac{C_0d_0}{\sqrt{\mu v}}\E V_{\rm out}
\]
for an absolute constant $C_0$, and the search succeeds almost surely.
\end{lemma}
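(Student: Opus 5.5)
Let $p=\Prb[V\ge v]$ be the marked probability of the source. By Lemma~\ref{card:tool:aa}, the search succeeds almost surely and uses $O(p^{-1/2})$ expected applications of the source, its inverse, and the marking reflection. Each application costs $O(d_0)$ queries and the reflection costs none, so $\E C\le C'd_0p^{-1/2}$. Since the output law is the source law conditioned on $V\ge v$, we have $\E V_{\rm out}=\E[V\mid V\ge v]=\E[V\mathbf 1_{\{V\ge v\}}]/p$. It therefore suffices to prove
\[
 \frac{1}{\sqrt p}\le C''\frac{\E[V\mathbf 1_{\{V\ge v\}}]}{p\sqrt{\mu v}},
 \qquad\text{i.e.}\qquad
 \sqrt{p\,\mu v}\le C''\,\E[V\mathbf 1_{\{V\ge v\}}].
\]

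First I will bound the marked mass $m:=\E[V\mathbf 1_{\{V\ge v\}}]$ from below by $\mu/2$, splitting on which term attains $v$. If $v=\mu/2\ge u$, then $\E[V\mathbf 1_{\{V<v\}}]<v=\mu/2$, so $m\ge\mu-\mu/2=\mu/2$. If instead $v=u>\mu/2$, every nonzero value is at least $u=v$ because $V\in\{0\}\cup[u,1]$. Hence $\{V\ge v\}=\{V>0\}$ and $m=\E V\ge\mu\ge\mu/2$.

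Next, from $V\le1$ and $V\ge v$ on the marked event I get $pv\le m\le p$.

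Combining these, $\sqrt{p\mu v}\le\sqrt{2m\cdot pv}\le\sqrt{2m\cdot m}=\sqrt2\,m$. This gives the inequality above with $C''=\sqrt2$, and so $C_0=\sqrt2\,C'$.

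The main obstacle is the lower bound $m\ge\mu/2$ in the regime $u>\mu/2$: the truncation loss $\E[V\mathbf 1_{\{V<v\}}]$ is not controlled by $v$ alone there, and the gap structure $V\in\{0\}\cup[u,1]$ from Lemma~\ref{card:lemma:certificate} is exactly what removes it. I would also check that Lemma~\ref{card:tool:aa} applies with exact marking. This holds because the reflection compares the stored exact rational $V$ with $v$, and $p>0$ since $m\ge\mu/2>0$.
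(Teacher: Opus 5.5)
Your proof is correct and follows the paper's argument essentially step for step. It uses the same case split to get $m=\E[V\mathbf 1_{\{V\ge v\}}]\ge\mu/2$, the same sandwich $pv\le m\le p$, and reaches the same constant $C_0=\sqrt2\,C'$; the only cosmetic difference is that you verify $\sqrt{p\mu v}\le\sqrt2\,m$ directly, whereas the paper bounds the ratio $\E C/\E V_{\rm out}\le C'd_0\sqrt p/m\le C'd_0/\sqrt{mv}$.
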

\begin{proof}
Let $p=\Prb[V\ge v]$ and $m=\E[V\one_{V\ge v}]$. If $v=u$, all nonzero outcomes are marked, so $m=\E V\ge\mu$. If $v=\mu/2>u$, then
\[\E[V\one_{V<v}]\le v\Prb[V<v]\le\mu/2,\qquad
m=\E V-\E[V\one_{V<v}]\ge\mu/2.\]
In either case $m\ge\mu/2>0$. Since $V\le1$, $m\le p$, and therefore $p>0$. On the marked event $V\ge v$, whence
\[m\ge\E[v\one_{V\ge v}]=pv,\qquad p\le m/v.\]
Lemma~\ref{card:tool:aa} gives $\E C\le C'd_0/\sqrt p$ and termination with probability one. Its conditional output law gives
\[\E V_{\rm out}=\E[V\mid V\ge v]
=\frac{\E[V\one_{V\ge v}]}{\Prb[V\ge v]}=\frac mp.\]
All quantities in these denominators are positive. Dividing the cost bound by this exact mean and using $p\le m/v$ gives
\begin{align*}
\frac{\E C}{\E V_{\rm out}}
&\le\frac{C'd_0/\sqrt p}{m/p}
=\frac{C'd_0\sqrt p}{m}\\
&\le\frac{C'd_0\sqrt{m/v}}m
=\frac{C'd_0}{\sqrt{mv}}
\le\frac{\sqrt2 C'd_0}{\sqrt{\mu v}}.
\end{align*}
Taking $C_0=\sqrt2C'$ proves the claim. The argument uses a ratio of expectations and requires no independence between cost and returned gain.
\end{proof}

\begin{lemma}\label{card:lemma:rational-threshold}
A rational $\Gamma$ in the stated interval can be computed without value
queries, and the comparison $g(S)\ge\Gamma M$ is exact in the digital model.
\end{lemma}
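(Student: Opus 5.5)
The plan has two parts: build a rational $\Gamma$ explicitly from public data, then argue that the comparison $g(S)\ge\Gamma M$ uses only exactly encoded rationals. The lemma is essentially bookkeeping.

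For existence of $\Gamma$, the target interval $[1-1/e-4\nu,\,1-1/e-3\nu]$ has length $\nu>0$, and $\nu$ is a public rational. I will bracket $1/e$ by a partial sum of the alternating series $\sum_{j\ge0}(-1)^j/j!$. Take the first $J$ with $1/(J+1)!\le\nu/4$, and let $r_J=\sum_{j=0}^{J}(-1)^j/j!$. This is rational, and by the alternating series estimate $|r_J-1/e|\le1/(J+1)!\le\nu/4$. Now set $\Gamma=1-r_J-\tfrac72\nu$. Then
\[
\bigl|\Gamma-(1-1/e-\tfrac72\nu)\bigr|\le\nu/4<\nu/2,
\]
so $\Gamma$ lies inside the required interval. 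Both $J$ and $\Gamma$ are obtained by finite exact rational arithmetic on the public parameters $\nu$ and $k$. Inter-query processing is free in the model of Section~\ref{sec:prelim}, so no value queries are spent.

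For exactness of the comparison, note that $g(S)=f(S)-f(\varnothing)$. Both $f(S)$ and $f(\varnothing)$ are returned by the oracle as exact rational encodings of bounded public width. The value $f(\varnothing)$ is already cached, and $f(S)$ is cached when the batch is verified, as the paper notes after Algorithm~\ref{card:alg:trial}. The target $M$ is a positive rational by choice. Therefore $g(S)$, $\Gamma M$ and their difference are rationals computed by exact reversible arithmetic, and the sign test is an exact classical predicate. It can be implemented with exact Toffoli, CNOT and NOT gates, which lie in the finite gate set, so it introduces no approximation error even inside amplitude amplification.

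There is no real obstacle here. The only point needing care is that the rational approximation of $1/e$ is chosen with a margin strictly inside the interval. That is why I place $\Gamma$ at the midpoint offset $\tfrac72\nu$ and use tolerance $\nu/4$, rather than aiming at an endpoint.
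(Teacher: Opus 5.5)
Your proposal is correct and follows essentially the paper's route: you build a rational approximation of $1-1/e$ from a factorial series and shift it by $\tfrac72\nu$, so any tolerance below $\nu/2$ keeps $\Gamma$ inside the interval, and then you note that the comparison involves only exactly encoded rationals (the paper makes this concrete by cross-multiplying to the integer test $a_1b_2b_3\ge a_2a_3b_1$). The only difference is cosmetic: you approximate $1/e$ directly by the alternating series with remainder $1/(J+1)!$, whereas the paper approximates $e$ by $e_j=\sum_{r\le j}1/r!$ with a geometric tail bound and sets $a=1-1/e_j$.
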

\begin{proof}
Such a choice can be made using only public computation. For example, compute a rational $a$ with $|a-(1-1/e)|\le\nu/8$ and put $\Gamma=a-7\nu/2$. Then
\[1-1/e-29\nu/8\le\Gamma\le1-1/e-27\nu/8,\]
which lies inside the required interval because $29/8<4$ and $27/8>3$. To construct $a$ explicitly, set $e_j=\sum_{r=0}^j1/r!$ for an integer $j\ge1$. For every $r\ge0$, the factors after $(j+1)!$ in $(j+1+r)!$ are each at least $j+2$, so
\begin{align*}
0<e-e_j
&=\sum_{r=0}^\infty\frac1{(j+1+r)!}
\le\frac1{(j+1)!}\sum_{r=0}^\infty\frac1{(j+2)^r}\\
&=\frac{j+2}{j+1}\frac1{(j+1)!}
\le\frac2{(j+1)!}.
\end{align*}
Choose $j$ large enough that $2/(j+1)!\le\nu/8$, and set $a=1-1/e_j$. This is rational, and $e_j\ge1$ gives
\[|a-(1-1/e)|=\frac{e-e_j}{ee_j}\le e-e_j\le\nu/8.\]
The choice is made once and uses no oracle queries. The interval also ensures $0<\Gamma<1$, since $\nu<1/64$.

Rationality makes the acceptance test exact in the digital model. If $g(S)=a_1/b_1$, $M=a_2/b_2$, and $\Gamma=a_3/b_3$, with positive denominators, then
\[g(S)\ge\Gamma M\quad\Longleftrightarrow\quad
 a_1b_2b_3\ge a_2a_3b_1.\]
This is a finite integer comparison. The approximation analysis below only uses the stated interval for $\Gamma$.

\end{proof}
\begin{lemma}[Correct Targets Succeed within the Public Budget]\label{card:lemma:trial}
If $M\le\OPT_k(g)$, Trial$(M)$ accepts with probability at least $3/4$. For any target, every accepted output is feasible and has $g(S)\ge\Gamma M$.
\end{lemma}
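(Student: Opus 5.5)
The second assertion is immediate from the algorithm's structure, so I dispose of it first. Every acceptance in Algorithm~\ref{card:alg:trial} (step 2, step 3, or the global cutoff) is conditioned on an exact stored certificate $g(S)\ge\Gamma M$. Lemma~\ref{card:lemma:rational-threshold} makes this comparison exact. Feasibility holds on every path because a batch is added only when $|S|+b\le k$, and $|B|\le b$, $B\cap S=\varnothing$ by Lemma~\ref{card:lemma:certificate}. None of this uses correctness of $M$.

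For the success probability with $M\le\OPT_k(g)$, I run an idealized trial without the cutoff and show two things. First, it accepts whenever it stops on its own. Second, its expected total query cost is at most $T/4$. Markov's inequality then bounds the probability that the cutoff is reached by $1/4$, and the cutoff-free and cut-off executions coincide until then.

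\emph{Acceptance on natural termination.} The loop ends either at step 2, which means acceptance, or at step 3 with $k-|S|<b$. In the latter case every added batch certified $f_S(B)\ge(1-\nu)^2\Delta|B|/k$ with $\Delta=M-G_M(S)$ at the time. The batches together contain more than $k-b\ge k(1-\nu/32)$ elements. A standard continuous greedy argument applies: each element closes a $(1-\nu)^2/k$ fraction of the current gap, and the gap only shrinks, so the final gap is at most
\[
M\exp\!\bigl(-(1-\nu)^2(k-b)/k\bigr)\le M\bigl(e^{-1}+3\nu\bigr),
\]
which gives $G_M(S)\ge(1-1/e-3\nu)M\ge\Gamma M$. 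Since $G_M\le g$, this yields $g(S)\ge\Gamma M$. One detail needs care: the per-element bound only says $\Delta$ decreases by at least a factor $(1-(1-\nu)^2|B|/k)\le\exp(-(1-\nu)^2|B|/k)$ per batch. Multiplying these factors gives the claim.

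\emph{Expected cost (the main step).} Before acceptance, the current $S$ satisfies $G_M(S)<\Gamma M<M$, so $\Delta>0$. Lemma~\ref{card:lemma:mean} gives $\E V\ge\mu$ for the source at every current $S$, and Lemma~\ref{card:lemma:certificate} gives $V\in\{0\}\cup[u,1]$. Lemma~\ref{card:lemma:amortized} then bounds each search round's expected cost by $(C_0d_0/\sqrt{\mu v})\,\E[V_{\rm out}\mid\text{history}]$. Summing over rounds with the tower property, the total expected search cost is at most $(C_0d_0/\sqrt{\mu v})\,\E\sum_t V_t$. Here $V_t=f_{S_t}(B_t)/\Delta_t$ is normalized by the current gap, not by $M$, so the normalized gains do not simply sum to one. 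The obstacle is bounding $\E\sum_t V_t$ by a constant.

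I would handle it as follows. While the trial runs, $\Delta_t\ge(1-\Gamma)M\ge M/e$, and $\sum_t\Delta_tV_t=\sum_t(\Delta_t-\Delta_{t+1})\le M$. Hence $\sum_tV_t\le e$ deterministically. It may also be cleaner to use $\sum_t\log(\Delta_t/\Delta_{t+1})\ge\sum_tV_t$ together with $\Delta_{\rm final}\ge M/e$ before acceptance, plus one final step with $V\le1$. Either way the bound is $\sum_t V_t\le 1+e$. Adding the $O(d_0)$ overhead per round for the certificate check, and the possibly zero-gain initial query, the total expected cost is $O(d_0/\sqrt{\mu v})$. The overhead is absorbed because each round has $V_t\ge u$, which gives $d_0\le (d_0/u)V_t$, and $u\ge v\cdot$const or $\mu/2\le$ the corresponding search cost. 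If that absorption fails, I would instead count rounds: the number of rounds is at most $e/v$, and $d_0 e/v\le O(d_0/\sqrt{\mu v})$ since $v\ge\mu/2$ implies $1/v\le \sqrt{2/(\mu v)}$. Choosing $C_1$ as four times the resulting constant makes $T\ge 4\,\E[\text{cost}]$, which completes the argument.
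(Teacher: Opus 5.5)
Your proposal is correct and follows the paper's proof essentially step for step. The exact certificates give the second claim, and the multiplicative decay $\Delta_{i+1}\le(1-(1-\nu)^2|B_i|/k)\Delta_i$ makes a capacity stop accept. The bound $\Delta_i>(1-\Gamma)M$ before each search yields $\sum_i V_i\le 1/(1-\Gamma)<e$, and Lemma~\ref{card:lemma:amortized} with the tower property, Markov's inequality and the coupling of truncated and untruncated trials finishes the argument.

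Two small remarks. First, the per-round overhead you worry about does not arise, because acceptance checks reuse exact values cached during batch verification; your fallback bound of at most $e/v\le e\sqrt{2/(\mu v)}$ rounds is nonetheless valid. Second, for the singleton source ($k<64/\nu$) you must cite Lemma~\ref{card:lemma:singleton} for $\E V\ge\mu$. A capacity stop there should use $|S|=k$ instead of $k-b\ge k(1-\nu/32)$, which fails when $b=1$ and $k<32/\nu$.
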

\begin{proof}
First remove the query cutoff. By Lemma~\ref{card:lemma:mean} for the batch source and Lemma~\ref{card:lemma:singleton} for the singleton source, each reached search has positive marked probability, so Lemma~\ref{card:tool:aa} makes it terminate almost surely. Every returned batch has $V\ge v>0$, hence is nonempty. Feasibility implies that at most $k$ updates occur. A finite sequence of almost-surely terminating searches terminates almost surely.

Write $S_0=\varnothing$ and $S_{i+1}=S_i\cup B_i$. The density certificate at step $i$ gives
\[\Delta_{i+1}=M-G_M(S_{i+1})=\Delta_i-f_{S_i}(B_i)
\le(1-(1-\nu)^2|B_i|/k)\Delta_i.\]
The multiplying factor is nonnegative because $(1-\nu)^2\le1$ and $|B_i|\le k$. Starting with $\Delta_0=M$, repeated substitution yields
\begin{align*}
\Delta_j
&\le M\prod_{i=0}^{j-1}(1-(1-\nu)^2|B_i|/k)\\
&\le M\prod_{i=0}^{j-1}\exp(-(1-\nu)^2|B_i|/k)
=M\exp\left(-\frac{(1-\nu)^2}{k}\sum_{i=0}^{j-1}|B_i|\right)\\
&=M\exp(-(1-\nu)^2|S_j|/k).
\end{align*}
The second line uses $1-x\le e^{-x}$, and the last uses disjointness of the added batches. Since $G_M(S_j)=M-\Delta_j$,
\[G_M(S_j)/M\ge1-\exp(-(1-\nu)^2|S_j|/k).\]
In the batch case a capacity stop means $k-|S_j|<b$, so $|S_j|>k-b$. Also
\begin{align*}
(1-\nu)^2(1-b/k)&\ge(1-2\nu)(1-\nu/32)\\
&=1-65\nu/32+\nu^2/16\ge1-65\nu/32.
\end{align*}
For $x\in[0,1]$, integration gives
\[(1-e^{-1})-(1-e^{-x})=\int_x^1e^{-t}\,dt\le1-x.\]
Apply this to $x=1-65\nu/32>0$, using that $1-e^{-x}$ increases with $x$. It follows that
\[G_M(S_j)/M\ge1-1/e-65\nu/32>1-1/e-3\nu\ge\Gamma.\]
Thus a capacity stop also accepts a correct target; the untruncated trial accepts almost surely. For the singleton replacement, the same argument is completed in Section~\ref{card:sec:complete}.

To bound its expected queries, at the start of every search the failed acceptance test gives $g(S_i)<\Gamma M<M$. Hence $G_M(S_i)=g(S_i)$ and $\Delta_i>(1-\Gamma)M$. Its normalized gain is
\[V_i=\frac{f_{S_i}(B_i)}{\Delta_i}=\frac{G_M(S_{i+1})-G_M(S_i)}{\Delta_i}.\]
If the path has $j$ updates, monotonicity, the lower bound on each denominator, and telescoping give
\[
\sum_{i=0}^{j-1}V_i
\le\frac{\sum_{i=0}^{j-1}(G_M(S_{i+1})-G_M(S_i))}{(1-\Gamma)M}
=\frac{G_M(S_j)-G_M(S_0)}{(1-\Gamma)M}
\le\frac1{1-\Gamma}<e.
\]
Here $G_M(S_0)=0$, $G_M(S_j)\le M$, and $\Gamma<1-1/e$. The final update is included in the sum even if it crosses the acceptance threshold.

Let $\mathcal H_i$ denote the history just before the $i$th search. Set its cost $C_i$ and gain $V_i$ to zero if the trial has already ended, and index all potential searches by $i=0,\ldots,k-1$. Conditional on a reached history, the current residual objective satisfies the hypotheses of Lemma~\ref{card:lemma:amortized} with the same public $\mu,v,d_0$. On an unreached history both sides vanish. Thus
\[\E[C_i\mid\mathcal H_i]\le\frac{C_0d_0}{\sqrt{\mu v}}\E[V_i\mid\mathcal H_i].\]
Taking expectations and using the tower property, then summing the finite set of indices, gives
\begin{align*}
\E\sum_{i=0}^{k-1}C_i
&=\sum_{i=0}^{k-1}\E\bigl[\E[C_i\mid\mathcal H_i]\bigr]\\
&\le\frac{C_0d_0}{\sqrt{\mu v}}\sum_{i=0}^{k-1}\E\bigl[\E[V_i\mid\mathcal H_i]\bigr]\\
&=\frac{C_0d_0}{\sqrt{\mu v}}\E\sum_{i=0}^{k-1}V_i
\le\frac{eC_0d_0}{\sqrt{\mu v}}.
\end{align*}
Exact acceptance checks use cached values; any fixed query overhead in initialization or output verification is absorbed into the cost constant. Thus for an absolute $C_2$,
\[\E Q_{\rm trial}\le C_2d_0/\sqrt{\mu v}.\]
Couple the truncated and untruncated trials using the same randomness. They agree whenever the untruncated query count is at most $T$. Since
\[\E Q_{\rm trial}\ge\E[Q_{\rm trial}\one_{Q_{\rm trial}>T}]\ge T\Prb[Q_{\rm trial}>T],\]
choosing $C_1\ge4C_2$ ensures
\[\Prb[Q_{\rm trial}>T]\le\frac{\E Q_{\rm trial}}T\le\frac{C_2}{C_1}\le\frac14.\]
The truncated trial therefore accepts with probability at least $3/4$. For any target, including $M>\OPT_k(g)$, acceptance is possible only upon an exact certificate, and the query cutoff remains the same deterministic bound.
\end{proof}

\section{The complete cardinality algorithm}\label{sec:card-complete}\label{card:sec:complete}
\subsection{Algorithm description}
For $k=n$ return $N$. Otherwise query $V_N=g(N)$; if $V_N=0$, monotonicity implies $g(A)=0$ for every $A$, so return $\varnothing$. Put $r_{\rm part}=\lceil n/k\rceil\ge2$ and partition $N$ into $r_{\rm part}$ sets $P_1,\ldots,P_{r_{\rm part}}$ of size at most $k$. The resulting interval
$V_N/r_{\rm part}\le\OPT_k(g)\le V_N$ is justified in
Lemma~\ref{card:lemma:target-interval}; it has multiplicative width at most
$r_{\rm part}$.

Construct the rational geometric grid
\[M_j=(V_N/r_{\rm part})(1+\nu)^j,\quad 0\le j\le J,\qquad
J=\left\lceil\frac{\log r_{\rm part}}{\log(1+\nu)}\right\rceil,
\quad H=\lceil\log_2(J+2)\rceil.\]
Equivalently, $J$ is the least integer for which $(1+\nu)^J\ge r_{\rm part}$; this version can be computed with exact rational comparisons. In particular $M_0=V_N/r_{\rm part}$ and $M_J\ge V_N$.
Let $L=\lceil\log(2H/\delta)/\log4\rceil$. A test of $M_j$ runs $L$ independent trials, accepts if any trial accepts, and retains a certified solution. For a correct target, each trial rejects with probability at most $1/4$, so
\[\Prb[\text{all }L\text{ trials reject}]\le(1/4)^L\le\delta/(2H).\]

Start with index endpoints $\ell=-1$, $r=J+1$ and no saved solution. While $r-\ell>1$, test $j=\lfloor(\ell+r)/2\rfloor$. On acceptance set $\ell=j$ and retain the best certified solution obtained; on rejection set $r=j$. If the old width is $w=r-\ell$, the new width is at most $\lceil w/2\rceil$. Starting at $w=J+2$, after $a$ tests it is at most $\lceil(J+2)/2^a\rceil$. Thus at most $H$ tests suffice to reach width one. Return the saved solution, or the empty set if none was saved.

\paragraph{Singleton source for small budgets.}\label{card:sec:small}
If $k<64/\nu$, replace the batch source by a singleton source. Sample uniform $e\in N$ and return $\{e\}\setminus S$ if $f_S(\{e\})\ge(1-\nu)^2\Delta/k$; otherwise return $\varnothing$. This uses $d_0=O(1)$ queries and has the same pointwise certificate with $b=1$ and $u=(1-\nu)^2/k$.

\begin{pseudocode}{Cardinality algorithm with certified target search}\label{card:alg:complete}
\textbf{Input:} $f$, $N$, $k$, accuracy $\eps$, failure probability $\delta$.
\begin{enumerate}
\item If $k=0$, return $\varnothing$; if $k=n$, return $N$. Query $f(\varnothing)$ and $f(N)$,
set $V_N=f(N)-f(\varnothing)$, and return $\varnothing$ if $V_N=0$.
\item Form the geometric targets $M_0,\ldots,M_J$ and set
$L=\lceil\log(2H/\delta)/\log4\rceil$.
Initialize $\ell=-1$, $r=J+1$, and $S_{\rm best}=\varnothing$.
\item While $r-\ell>1$, put $j=\lfloor(\ell+r)/2\rfloor$ and run
$L$ independent copies of $\operatorname{Trial}(M_j)$.
If any accepts, let $\mathcal A$ be its certified outputs and set
\[
 \ell\gets j,\qquad S_{\rm best}\gets
 \mathop{\arg\max}_{A\in\{S_{\rm best}\}\cup\mathcal A} f(A).
\]
Otherwise set $r\gets j$.
\Comment{Values are cached by the exact certificates.}
\item Return $S_{\rm best}$.
\end{enumerate}
\end{pseudocode}

\subsection{Approximation ratio and complexity analysis}
\begin{lemma}\label{card:lemma:target-interval}
If $V_N=g(N)>0$ and $r_{\rm part}=\lceil n/k\rceil$, then
$V_N/r_{\rm part}\le\OPT_k(g)\le V_N$.
\end{lemma}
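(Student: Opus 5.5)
The upper bound \(\OPT_k(g)\le V_N\) follows from monotonicity. Every feasible \(A\) satisfies \(A\subseteq N\), so \(g(A)\le g(N)=V_N\). Taking the maximum over all \(A\) with \(|A|\le k\) gives the claim. This case needs nothing beyond the normalization \(g(T)=f(T)-f(\varnothing)\) from Section~\ref{sec:prelim}, which is monotone.

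For the lower bound, I will use the partition \(P_1,\ldots,P_{r_{\rm part}}\) of \(N\) into sets of size at most \(k\) fixed in the algorithm description. Such a partition exists because \(r_{\rm part}k=\lceil n/k\rceil k\ge n\). Each \(P_j\) is feasible, so \(g(P_j)\le\OPT_k(g)\) for every \(j\).

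Next I use subadditivity. Since \(g\) is normalized (\(g(\varnothing)=0\)) and submodular, it satisfies \(g(A\cup B)\le g(A)+g(B)\). This comes from the lattice inequality \(g(A)+g(B)\ge g(A\cup B)+g(A\cap B)\) together with \(g(A\cap B)\ge0\), and nonnegativity follows from monotonicity and normalization. Applying subadditivity inductively to the union of the disjoint parts gives
\[
 V_N=g\Bigl(\bigcup_j P_j\Bigr)\le\sum_{j=1}^{r_{\rm part}}g(P_j)\le r_{\rm part}\OPT_k(g).
\]
Dividing by \(r_{\rm part}\) yields \(V_N/r_{\rm part}\le\OPT_k(g)\).

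There is no real obstacle here. The only point needing care is that subadditivity uses \(g(\varnothing)=0\), which holds after the normalization step, and nonnegativity of the intersection term. The hypothesis \(V_N>0\) is needed only so that the interval is used for a nontrivial target grid; the inequalities themselves hold regardless.
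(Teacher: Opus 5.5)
Your proposal is correct and follows essentially the same route as the paper: the upper bound comes from monotonicity, and the lower bound from subadditivity of the normalized submodular $g$ over the partition into $r_{\rm part}$ parts of size at most $k$. The only cosmetic difference is that the paper invokes disjointness, so that $g(A\cap B)=g(\varnothing)=0$, whereas you use $g(A\cap B)\ge0$; both are valid.
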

\begin{proof}
Submodularity and normalization give, for disjoint $A,B$,
\[g(A)+g(B)\ge g(A\cup B)+g(A\cap B)=g(A\cup B).\]
Applying this successively to the partition gives
\[V_N=g(N)\le\sum_{i=1}^{r_{\rm part}}g(P_i)\le r_{\rm part}\OPT_k(g).\]
Also $\OPT_k(g)\le V_N$ by monotonicity, and hence $V_N/r_{\rm part}\le\OPT_k(g)\le V_N$.

\end{proof}
\begin{lemma}[One-Sided Certificates Suffice for Bisection]\label{card:lemma:grid}
With probability at least $1-\delta/2$, the target search returns $g(S)\ge(1-1/e-\eps)\OPT_k(g)$.
\end{lemma}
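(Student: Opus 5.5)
Let $j^*$ be the largest grid index with $M_{j^*}\le\OPT_k(g)$. It exists because $M_0\le\OPT_k(g)$ by Lemma~\ref{card:lemma:target-interval}. Maximality and $M_J\ge V_N\ge\OPT_k(g)$ give either $j^*=J$ with $M_J=\OPT_k(g)$, or $M_{j^*+1}>\OPT_k(g)$. In both cases $M_{j^*}\ge\OPT_k(g)/(1+\nu)$. I would show that the saved solution has value at least $\Gamma M_{j^*}$ on a good event. Then the choice $\Gamma\ge1-1/e-4\nu$ with $\nu\le\eps/16$ gives
\[
 g(S)\ge\frac{1-1/e-4\nu}{1+\nu}\OPT_k(g)\ge(1-1/e-5\nu)\OPT_k(g)\ge(1-1/e-\eps)\OPT_k(g).
\]

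\textbf{Good event.} By Lemma~\ref{card:lemma:trial}, a correct target ($M_j\le\OPT_k(g)$) is rejected by all $L$ trials with probability at most $\delta/(2H)$. At most $H$ tests are run in total. A union bound over the test sequence, conditioning on the adaptively chosen index at each step, shows that with probability at least $1-\delta/2$ no tested correct target is rejected. The bound also holds for every accepted output, correct target or not: Lemma~\ref{card:lemma:trial} gives $g(A)\ge\Gamma M_j$ and feasibility deterministically. The selection $S_{\rm best}\gets\arg\max f(A)$ compares $f$-values, which differ from $g$-values by the constant $f(\varnothing)$. The saved value is therefore nondecreasing and dominates every accepted certificate.

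\textbf{Bisection invariant.} On the good event, every rejection occurs at an index $j>j^*$, since indices $\le j^*$ are correct targets. I would maintain the invariant that $r>j^*$ and that either $\ell=-1$ or some solution with $g\ge\Gamma M_\ell$ has been saved. Initially $r=J+1>j^*$. An acceptance sets $\ell=j$ and saves a certificate with $g\ge\Gamma M_j$. A rejection sets $r=j>j^*$. At termination $r=\ell+1$, so $\ell\ge j^*\ge0$. Hence a solution with $g\ge\Gamma M_\ell\ge\Gamma M_{j^*}$ was saved, because the grid is increasing and the saved value never decreases. This is the one-sided-certificate point: acceptance at an incorrect larger target is harmless, since it still certifies a value of at least $\Gamma M_\ell$.

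\textbf{Main obstacle.} The delicate part is the union bound under adaptivity. The tested indices depend on earlier outcomes, so I would phrase it as: at step $t$, conditional on the history, if the current $j$ is correct then the rejection probability is at most $\delta/(2H)$. Summing over at most $H$ steps gives $\delta/2$. The trials use fresh randomness, and Lemma~\ref{card:lemma:trial} holds for any fixed target, so the conditional bound is valid. Finally I would check the trivial exits separately. For $k=n$ the output $N$ is optimal. For $V_N=0$ the empty set is optimal.
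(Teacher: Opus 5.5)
Your proposal is correct and follows essentially the same route as the paper's proof: you define the same maximal correct index $j^*$, maintain the same right-endpoint invariant $r>j^*$, and bound adaptive false rejections by $\delta/(2H)$ per test over at most $H$ tests. You then finish with the same computation $\Gamma/(1+\nu)\ge(1-1/e-4\nu)/(1+\nu)\ge 1-1/e-5\nu\ge 1-1/e-\eps$. Your remark that the $\arg\max$ over $f$-values is equivalent to one over $g$-values, because they differ by the constant $f(\varnothing)$, makes explicit a point the paper leaves implicit.
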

\begin{proof}
Conditioned on any previous history, an adaptively selected correct target $M_j\le\OPT_k(g)$ is incorrectly rejected with probability at most $\delta/(2H)$, because the new trials use fresh independent randomness. If the target is incorrect, the event ``a correct target is rejected at this test'' has probability zero. Averaging over histories and summing over at most $H$ tests therefore bounds the probability of any such error by $H\cdot\delta/(2H)=\delta/2$.

On the complementary event let $j^*=\max\{j:M_j\le\OPT_k(g)\}$. It exists since $M_0\le\OPT_k(g)$. The initial right sentinel satisfies $r=J+1>j^*$. If a test rejects, the absence of an error implies $M_j>\OPT_k(g)$, so its new right endpoint $j$ is also greater than $j^*$. An acceptance does not change the right endpoint. Thus the invariant $r>j^*$ holds at every iteration. At termination, the endpoints are integers with $r-\ell=1$, so $\ell=r-1\ge j^*\ge0$. The left endpoint must therefore have been set by an acceptance. Since the saved solution is at least as valuable as the certificate obtained at this endpoint,
\[g(S)\ge\Gamma M_\ell\ge\Gamma M_{j^*}.\]
If $j^*<J$, maximality gives $(1+\nu)M_{j^*}=M_{j^*+1}>\OPT_k(g)$, so $M_{j^*}>\OPT_k(g)/(1+\nu)$. If $j^*=J$, then $M_J\le\OPT_k(g)\le V_N\le M_J$, so $M_{j^*}=\OPT_k(g)$ and the same weak lower bound holds. Therefore
\[g(S)\ge\frac{\Gamma}{1+\nu}\OPT_k(g).\]
To evaluate this ratio explicitly, observe that
\begin{align*}
\frac{1-1/e-4\nu}{1+\nu}-(1-1/e-5\nu)
&=\frac{1-1/e-4\nu-(1-1/e-5\nu)(1+\nu)}{1+\nu}\\
&=\frac{\nu/e+5\nu^2}{1+\nu}\ge0.
\end{align*}
Thus
\[\frac{\Gamma}{1+\nu}\ge\frac{1-1/e-4\nu}{1+\nu}\ge1-1/e-5\nu\ge1-1/e-\eps,\]
where the last step uses $\nu\le\eps/16$. The proof never assumes monotonicity of acceptance for targets above $\OPT_k(g)$.
\end{proof}
\begin{lemma}\label{card:lemma:singleton}
For $k<64/\nu$, the singleton source has the same pointwise density certificate
as Algorithm~\ref{card:alg:batch}, has cost $d_0=O(1)$, and satisfies
$\E V\ge\nu/n$ for every correct target. With $\mu=\nu/n$, the trial and
target-search guarantees remain valid.
\end{lemma}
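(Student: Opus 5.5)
We prove the three claims in order: the pointwise certificate, the cost bound, and the mean-gain bound $\E V\ge\nu/n$. After that we check where the batch-specific facts were used in the trial and target-search analysis.

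\emph{Pointwise certificate and cost.} The source returns either $\{e\}\setminus S$ or $\varnothing$. In the first case the test has certified $f_S(\{e\})\ge(1-\nu)^2\Delta/k$. If $e\in S$, then $f_S(\{e\})=0$, so the test fails because $\Delta>0$. Hence any returned nonempty $B=\{e\}$ has $f_S(B)\ge(1-\nu)^2\Delta|B|/k$, $|B|\le1=b$, and $V\in\{0\}\cup[u,1]$ with $u=(1-\nu)^2/k$. The upper bound $V\le1$ follows from $f_S\le\Delta$ (Lemma~\ref{card:lemma:residual}). The cost consists of one sampled index and one residual evaluation together with its reversal. This is a constant number of raw queries, so $d_0=O(1)$, and reversibility follows exactly as for Algorithm~\ref{card:alg:batch}.

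\emph{Mean gain.} For a correct target, take $O$ from \eqref{card:eq:comparator} with $|O|\le k$ and $f_S(O)=\Delta$. Subadditivity of the normalized submodular $f_S$ gives $\sum_{e\in O}f_S(\{e\})\ge\Delta$. Call $e\in O$ \emph{heavy} if $f_S(\{e\})\ge(1-\nu)^2\Delta/k$. Light elements contribute less than $|O|(1-\nu)^2\Delta/k\le(1-2\nu+\nu^2)\Delta$. Heavy elements therefore contribute at least $(2\nu-\nu^2)\Delta\ge\nu\Delta$. Every heavy element passes the test, so
\[
\E V=\frac1n\sum_{e\text{ heavy}}\frac{f_S(\{e\})}{\Delta}\ge\frac{\nu}{n}.
\]
Heavy elements lie outside $S$, since their marginal is positive, so they are returned intact.

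\emph{Trial and target-search guarantees.} Lemma~\ref{card:lemma:amortized} now applies with $\mu=\nu/n$ and $v=\max\{u,\mu/2\}$. The remaining step, which I expect to be the main point requiring care, is to check the capacity-stop argument in Lemma~\ref{card:lemma:trial} with $b=1$. The stopping rule $k-|S|<1$ means $|S|=k$. The decay bound then gives
\[
G_M(S)/M\ge1-e^{-(1-\nu)^2}\ge1-1/e-2\nu>\Gamma .
\]
Everything else in that proof used only the certificate, $V\le1$, the bound $\sum_i V_i<e$, and the amortized lemma, none of which depend on the source. Lemma~\ref{card:lemma:grid} uses only trial acceptance, so both guarantees carry over unchanged.
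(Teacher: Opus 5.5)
Your proof is correct and follows the paper's argument. The paper also splits the comparison set at the threshold $(1-\nu)^2\Delta/k$; it thresholds the chain credits $c_e=f_S(e\mid O_{<e})$ and then uses $f_S(\{e\})\ge c_e$, where you threshold singleton values and use subadditivity directly. It handles the capacity stop exactly as you do, via $|S|=k$ and $1-e^{-(1-\nu)^2}\ge1-1/e-2\nu$.

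Two small points. Your displayed $\E V=\frac1n\sum_{e\text{ heavy}}\cdots$ should be an inequality $\ge$, since elements outside $O$ may also pass the test. The paper also recomputes the trial budget $d_0/\sqrt{\mu v}=O(\sqrt{nk/\nu})$, which is used later in the proof of Theorem~\ref{card:thm:upper}.
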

\begin{proof}
For a correct target, use the comparison credits $c_e$ defined above, whose sum is $\Delta$. Let $A=\{e\in O:c_e\ge(1-\nu)^2\Delta/k\}$. There are at most $k$ terms outside $A$, each smaller than $(1-\nu)^2\Delta/k$, so
\[\sum_{e\in O\setminus A}c_e\le(1-\nu)^2\Delta,\qquad
\sum_{e\in A}c_e=\Delta-\sum_{e\in O\setminus A}c_e\ge(1-(1-\nu)^2)\Delta.\]
Since $1-(1-\nu)^2=2\nu-\nu^2\ge\nu$, these accepted credits sum to at least $\nu\Delta$. Diminishing returns gives $f_S(\{e\})=f_S(e\mid\varnothing)\ge f_S(e\mid O_{<e})=c_e$, so every $e\in A$ passes the singleton test. Uniform sampling therefore gives
\[\E V\ge\sum_{e\in A}\frac1n\frac{f_S(\{e\})}{\Delta}
\ge\frac1{n\Delta}\sum_{e\in A}c_e\ge\nu/n.\]
Set $\mu=\nu/n$, $v=\max\{u,\mu/2\}$ and recompute the public trial budget. Because $v\ge u=(1-\nu)^2/k$ and $(1-\nu)^2\ge1/2$,
\[\frac{d_0}{\sqrt{\mu v}}\le\frac{d_0}{\sqrt{(\nu/n)((1-\nu)^2/k)}}=O\!\left(\sqrt{nk/\nu}\right).\]
A capacity stop now means $k-|S|<1$, hence $|S|=k$. The product estimate in the proof of Lemma~\ref{card:lemma:trial} gives $G_M(S)/M\ge1-e^{-(1-\nu)^2}$. Since $(1-\nu)^2\ge1-2\nu$, the same integral bound used there yields
\[G_M(S)/M\ge1-1/e-2\nu>1-1/e-3\nu\ge\Gamma.\]
Thus untruncated correct-target trials also accept in this case. All gain-sum, cutoff and target-search arguments apply with the new parameters.

\end{proof}

\begin{proof}[Proof of Theorem~\ref{card:thm:upper}]
For $k\ge64/\nu$, the bounds $\nu k/64\le b\le\nu k/32$ give
\[\mu=\Theta(\nu^2k/n),\qquad u=\Theta(1/k),\qquad d_0=O(\nu^{-2}\log(2k)).\]
Since $v=\max\{u,\mu/2\}$, reciprocating its square root gives
\[\frac{d_0}{\sqrt{\mu v}}=d_0\min\{(\mu u)^{-1/2},\sqrt2/\mu\}.\]
For the first expression,
\[\mu u=\Theta\!\left(\frac{\nu^2k}{n}\frac1k\right)=\Theta(\nu^2/n),\qquad
\frac{d_0}{\sqrt{\mu u}}=O(\nu^{-3}\sqrt n\log(2k)).\]
For the second,
\[\frac{d_0}{\mu}=O\!\left(\nu^{-2}\log(2k)\frac{n}{\nu^2k}\right)
=O(\nu^{-4}(n/k)\log(2k)).\]
Each trial is cut off at $T=O(1+d_0/\sqrt{\mu v})$ queries. There are at most $H$ tested targets and $L=O(\log(2H/\delta))$ trials per target. Including the constant initialization queries gives the deterministic bound
\begin{equation}\label{card:eq:explicitcost}
Q=O\!\left(1+H\log(2H/\delta)\log(2k)
\min\{\nu^{-3}\sqrt n,\nu^{-4}n/k\}\right).
\end{equation}
To bound $H$, use $0<\nu<1$ and
\[\log(1+\nu)=\int_0^\nu\frac{dt}{1+t}\ge\frac\nu{1+\nu}\ge\nu/2.\]
Hence $J\le1+2\log r_{\rm part}/\nu$, and its definition implies
\[H=O\!\left(\log\left(2+\frac{\log(2r_{\rm part})}\nu\right)\right).\]

For $k<64/\nu$, the singleton cost is bounded by each of the two principal terms. Indeed,
\[\sqrt{nk/\nu}=\sqrt n\sqrt{k/\nu}\le8\nu^{-1}\sqrt n\le8\nu^{-3}\sqrt n,\]
and, since $k\le n$,
\[\frac{\sqrt{nk/\nu}}{\nu^{-4}n/k}
=\nu^{7/2}k\sqrt{k/n}
\le\nu^{7/2}\frac{64}\nu=64\nu^{5/2}\le64.\]
A quantity bounded by a constant times each term is bounded by the larger of these constants times their minimum. Thus~\eqref{card:eq:explicitcost} covers the singleton case as well. Finally $\eps/32\le\nu\le\eps/16$ gives $\nu^{-3}=O(\eps^{-3})$ and $\nu^{-4}=O(\eps^{-4})$. All remaining factors in~\eqref{card:eq:explicitcost} are logarithmic in the parameters claimed in Theorem~\ref{card:thm:upper}.

The finite score construction in Section~\ref{card:sec:score} already includes frequency truncation, angle discretization, rational-score replacement and its own finite-gate bias. Keep that chosen score circuit and its inverse unchanged. First analyze it together with any remaining ideal public quantum preparations, such as uniform pool and search-seed preparations. Given the input encoding width and the query cutoff, defer measurements, retain all random seeds and work records, and use the exact controlled-query construction or harmless padding from the model subsection. This gives one finite unitary computation with a public bound $D\ge1$ on the number of public gates. Oracle calls, classical reversible arithmetic, address formation, certificate comparisons, control and counters remain exact. Apply Lemma~\ref{card:tool:gates} only to the remaining quantum rotations requiring synthesis, with error at most $\eta=\delta/(2D)$ per gate. A source and its inverse use the same synthesized gates in opposite order. Thus this final allowance concerns only operations not already covered by the score circuit's bias bound.

For completeness, include the exact oracle calls in the full sequence of $P$ operations, denoted $U_1,\ldots,U_P$, and write $\widetilde U_j$ for the implemented operation. At an oracle position $\widetilde U_j=U_j$. Inserting one replacement at a time gives the exact identity
\[
\widetilde U_P\cdots\widetilde U_1-U_P\cdots U_1
=\sum_{j=1}^P\widetilde U_P\cdots\widetilde U_{j+1}
(\widetilde U_j-U_j)U_{j-1}\cdots U_1.
\]
Every surrounding factor is unitary and has norm one. The oracle terms and all unchanged exact-gate terms vanish, and at most $D$ synthesized-gate terms each have norm at most $\eta$. Hence the operator norm of the difference is at most $D\eta\le\delta/2$. Applied to the same initial unit vector, the two final state vectors differ in norm by at most $\delta/2$.

If their amplitudes in the final measurement basis are $a_x$ and $b_x$, respectively, then Cauchy--Schwarz gives
\begin{align*}
\frac12\sum_x\bigl||a_x|^2-|b_x|^2\bigr|
&\le\frac12\sum_x|a_x-b_x|(|a_x|+|b_x|)\\
&\le\frac12\left(\sum_x|a_x-b_x|^2\right)^{1/2}
\left(\sum_x(|a_x|+|b_x|)^2\right)^{1/2}\\
&\le\left(\sum_x|a_x-b_x|^2\right)^{1/2}\le\delta/2.
\end{align*}
The penultimate step uses the triangle inequality for Euclidean norm and normalization: $(\sum_x(|a_x|+|b_x|)^2)^{1/2}\le\|a\|_2+\|b\|_2=2$. Thus the total variation distance between the two output distributions, and in particular the change in success probability, is at most $\delta/2$.

Public gate synthesis is uncharged in the value-query model; increasing its precision introduces no new calls to $O_f$ or $O_f^{-1}$. Exact control and certificate checks continue to enforce feasibility and the query cutoff on every path; approximation changes the probabilities of candidate outcomes, not these checks. As a final classical output rule, return the empty set if a measured candidate has size greater than $k$; this requires no query and leaves every successful output unchanged. Combining the $\delta/2$ implementation allowance with the $\delta/2$ target-search failure probability proves success at least $1-\delta$. The normalization argument transfers the approximation guarantee from $g$ to $f$.

\end{proof}

\section{Quantum query lower bounds}\label{sec:lower}
An exponential classical lower bound does not by itself imply an exponential quantum lower bound: coherent queries can assign amplitude to many addresses simultaneously. The hard families used below satisfy a stronger condition. For each fixed query address, the hidden oracle agrees exactly with a public reference oracle except on a small fraction of hidden instances. A single hybrid argument therefore yields both unconstrained and cardinality-constrained lower bounds in the exact digital value-oracle model.

Queries may address arbitrary subsets, including sets that violate the cardinality constraint; only the returned solution is constrained. All query lower bounds in this section concern algorithms with a uniform worst-case query cap. We distinguish a probability-of-success requirement from an expected-approximation requirement explicitly.

\subsection{A common hybrid argument}
The reference computation is independent of the hidden instance, which is the reason an address-wise disagreement estimate remains useful for an adaptive quantum algorithm. We give the needed version of the hybrid method of~\citep*{BBBV} in full.

\begin{lemma}[\citep*{BBBV}]\label{lower:lb:hybrid}
Let $z$ be a random hidden parameter, let $v_z$ be a digital value oracle, and let $v_0$ be a fixed reference oracle. Assume that for every fixed query address $S$,
\[
 \Prb_z[v_z(S)\ne v_0(S)]\le p.
\]
Suppose an event $A_z$ concerning the final output has probability at most $p_{\mathrm{out}}$ on average over $z$ when the algorithm is run on $v_0$. For a quantum algorithm using at most $T$ queries on every execution path, let $\varrho_z$ and $\varrho_0$ be its final states under the hidden and reference oracles. Then
\begin{equation}\label{lower:lb:distance}
 \E_z D(\varrho_z,\varrho_0)\le 2T\sqrt p.
\end{equation}
In particular, the algorithm satisfies
\[
 \E_z\Prb[A_z\mid v_z]\le p_{\mathrm{out}}+2T\sqrt p.
\]
The same state-distance bound controls the change in the expectation of any output payoff in $[0,1]$, even if the payoff depends on $z$. A classical randomized algorithm using at most $T$ queries on every execution path satisfies instead
\[
 \E_z\Prb[A_z\mid v_z]\le p_{\mathrm{out}}+Tp.
\]
\end{lemma}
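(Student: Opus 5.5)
I will follow the standard hybrid method of \citet*{BBBV}, phrased in terms of query weights. Write the algorithm as alternating fixed unitaries and oracle calls, \(U_T O U_{T-1}\cdots U_1 O U_0\), acting on a purified state in which every measurement is deferred and every random seed is kept coherent. Because each execution path has at most \(T\) queries, unused slots can be padded harmlessly, as in the model subsection, so exactly \(T\) query slots occur. Let \(|\phi_t\rangle\) be the state immediately before the \(t\)-th query in the \emph{reference} run. This is the point where adaptivity is handled: the whole reference trajectory is independent of \(z\). Expand \(|\phi_t\rangle=\sum_S\alpha_{t,S}|S\rangle|\cdot\rangle\) over the address register, and put \(w_t(S)=\|\alpha_{t,S}\|^2\), so that \(\sum_Sw_t(S)=1\).

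First step: take hybrids \(|\psi^{(j)}\rangle\) in which the first \(j\) queries use \(v_z\) and the remaining ones use \(v_0\). Consecutive hybrids differ only at one query applied to the reference state \(|\phi_j\rangle\). The XOR oracles \(O_{v_z}\) and \(O_{v_0}\) agree on every address with \(v_z(S)=v_0(S)\), and on a disagreeing address each acts unitarily. Hence
\[
\|(O_{v_z}-O_{v_0})|\phi_j\rangle\|\le2\Bigl(\sum_{S:\,v_z(S)\ne v_0(S)}w_j(S)\Bigr)^{1/2}.
\]
The later unitaries preserve norms, so the triangle inequality gives \(\||\psi_z\rangle-|\psi_0\rangle\|\le2\sum_j(\sum_{S\in\mathrm{Dis}_z}w_j(S))^{1/2}\).

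Second step: take the expectation over \(z\). Concavity of the square root (Jensen's inequality) lets us pull \(\E_z\) inside. Since \(w_j\) does not depend on \(z\), linearity gives \(\E_z\sum_{S\in\mathrm{Dis}_z}w_j(S)=\sum_Sw_j(S)\Prb_z[v_z(S)\ne v_0(S)]\le p\). Therefore \(\E_z\||\psi_z\rangle-|\psi_0\rangle\|\le2T\sqrt p\). Trace distance of pure states is at most the Euclidean distance, and it does not increase under partial trace, so \eqref{lower:lb:distance} follows.

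Third step, the consequences. For any \([0,1]\)-valued payoff \(g_z\) of the output, including \(\mathbf 1_{A_z}\), the identity \(|\Tr(\varrho_zP)-\Tr(\varrho_0P)|\le D(\varrho_z,\varrho_0)\) holds for the POVM element \(0\le P\le I\) associated with \(g_z\). Averaging this over \(z\) and using \(\E_z\Prb[A_z\mid v_0]\le p_{\mathrm{out}}\) gives the stated bound. The \(z\)-dependence of the payoff causes no difficulty, because the inequality holds pointwise in \(z\) before averaging.

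Classical case: fix the random seed, so the algorithm becomes deterministic. Under \(v_0\) it then queries fixed addresses \(S_1,\dots,S_T\), and its run under \(v_z\) differs only if some \(v_z(S_t)\ne v_0(S_t)\). A union bound gives probability at most \(Tp\), and averaging over seeds completes the argument.

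The main obstacle I expect is justifying the first step carefully. The per-query bound has to be applied to the reference state, not to the hybrid state, and the deferred-measurement purification has to be kept legitimate when path lengths vary.
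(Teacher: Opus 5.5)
Your overall route matches the paper's. You purify and pad to exactly $T$ slots, measure query weight in the reference run, and bound each one-query replacement by $2\|P_z|\phi_t\rangle\|$. You then average with Jensen (the paper's Cauchy--Schwarz step is the same thing), pass to mixed states by monotonicity of trace distance, and couple transcripts in the classical case. Your POVM-element treatment of $[0,1]$ payoffs is equivalent to the paper's level-set integration. Citing $D\le\||\psi\rangle-|\phi\rangle\|$ for pure states replaces its explicit $2\times2$ eigenvalue computation.

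\textbf{The gap.} The hybrid chain you define does not do what you claim. Suppose $|\psi^{(j)}\rangle$ uses $v_z$ for the first $j$ queries and $v_0$ afterwards. Then in $|\psi^{(j)}\rangle-|\psi^{(j-1)}\rangle$, queries $1,\ldots,j-1$ are answered by $v_z$ in both terms. So $O_{v_z}-O_{v_0}$ acts on the state of the \emph{hidden} run just before query $j$, not on the reference state $|\phi_j\rangle$. That state's address weights depend on $z$. An adaptive algorithm can concentrate them on $\mathrm{Dis}_z$ once earlier queries have revealed information about $z$. The step $\E_z\sum_{S\in\mathrm{Dis}_z}w_j(S)\le p$ is then unavailable, and your second step fails. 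Your closing remark names exactly this requirement, but the chain as written violates it.

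\textbf{The fix.} Reverse the order: let the first $j$ queries use $v_0$ and the remaining ones use $v_z$. This is the paper's identity
\[
 |\psi_z\rangle-|\psi_0\rangle=\sum_{t=1}^T L_{t,z}(U_z-U_0)|\phi_t^0\rangle ,
\]
where $L_{t,z}$ contains the later $v_z$-queries. The prefix is now the reference computation, so the differing query acts on the $z$-independent state $|\phi_t^0\rangle$. The $z$-dependent suffix $L_{t,z}$ is unitary and disappears when you take norms. With this change, the rest of your argument goes through unchanged.
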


\begin{proof}
We give the hybrid argument explicitly, following the query-replacement method of Bennett, Bernstein, Brassard, and Vazirani~\citep*{BBBV}. Replace classical randomness by a register containing a coherent copy of the random seed, and replace each intermediate measurement by an isometry that copies its outcome into an additional register. Subsequent operations can be controlled by that register. Keeping all these registers until the final measurement preserves the output distribution and makes the intervening computation unitary. Pad a computation using fewer than $T$ queries with harmless oracle calls whose answer register is uniform over all bit strings: XOR with any encoded value leaves that uniform state invariant.

Let $U_z$ and $U_0$ denote the two oracle unitaries. Write the complete computations as
\[
 |\psi_z\rangle=W_TU_zW_{T-1}U_z\cdots W_1U_zW_0|\mathrm{init}\rangle,
\]
and the same expression with $U_0$ in place of every $U_z$ for $|\psi_0\rangle$. The inter-query unitaries $W_t$ do not depend on the hidden parameter. Immediately before reference query $t$, let
\[
 |\phi_t^0\rangle
 =W_{t-1}U_0W_{t-2}\cdots U_0W_0|\mathrm{init}\rangle
 =\sum_S|S\rangle|a_{t,S}\rangle.
\]
For $t=1$, the expression is $W_0|\mathrm{init}\rangle$. The vectors $|a_{t,S}\rangle$ include all other registers and need not be normalized; they satisfy $\sum_S\|a_{t,S}\|^2=1$. Since this is the reference computation, none of them depends on $z$.

Let $D_z=\{S:v_z(S)\ne v_0(S)\}$ and define
\[
 P_z=\sum_{S\in D_z}|S\rangle\langle S|\otimes I.
\]
Orthogonality of distinct address states gives
\begin{align*}
 \E_z\|P_z|\phi_t^0\rangle\|^2
 &=\E_z\sum_S\mathbf1_{\{S\in D_z\}}\|a_{t,S}\|^2\\
 &=\sum_S\|a_{t,S}\|^2\Prb_z[S\in D_z]\\
 &\le p\sum_S\|a_{t,S}\|^2=p.
\end{align*}
The oracle unitaries agree on the complement of $P_z$, so
\begin{align*}
 \|(U_z-U_0)|\phi_t^0\rangle\|
 &=\|(U_z-U_0)P_z|\phi_t^0\rangle\|\\
 &\le\|U_zP_z|\phi_t^0\rangle\|+\|U_0P_z|\phi_t^0\rangle\|\\
 &=2\|P_z|\phi_t^0\rangle\|.
\end{align*}
The final equality uses preservation of vector norms under unitary maps.

To display the telescoping identity, set
\[
 L_{t,z}=W_TU_zW_{T-1}\cdots U_zW_t,
 \qquad L_{T,z}=W_T.
\]
Replacing oracle calls one at a time, starting with the first call, yields the exact identity
\[
 |\psi_z\rangle-|\psi_0\rangle
 =\sum_{t=1}^T L_{t,z}(U_z-U_0)|\phi_t^0\rangle.
\]
Indeed, the summand is the difference between two computations whose first $t-1$ queries both use $U_0$, whose queries after $t$ both use $U_z$, and which differ only at query $t$. Their adjacent terms cancel in the sum. All $L_{t,z}$ are unitary, and hence
\[
 \|\,|\psi_z\rangle-|\psi_0\rangle\|
 \le2\sum_{t=1}^T\|P_z|\phi_t^0\rangle\|.
\]
Cauchy--Schwarz, applied to the random variables $1$ and $\|P_z|\phi_t^0\rangle\|$, gives
\[
 \bigl(\E_z\|P_z|\phi_t^0\rangle\|\bigr)^2
 \le\E_z1^2\,\E_z\|P_z|\phi_t^0\rangle\|^2\le p.
\]
Averaging the preceding vector inequality therefore gives
\[
 \E_z\|\,|\psi_z\rangle-|\psi_0\rangle\|\le2T\sqrt p.
\]

For completeness, we also justify the passage from vector distance to an output-event probability. Put $a=\langle\psi_0|\psi_z\rangle$. The Hermitian operator
\[
 R_z=|\psi_z\rangle\langle\psi_z|-|\psi_0\rangle\langle\psi_0|
\]
vanishes outside the span of the two vectors. If $|a|=1$, the vectors differ only by a phase, so $R_z=0$ and the desired probability bound is immediate. Otherwise write
\[
 |\psi_z\rangle=a|\psi_0\rangle+\sqrt{1-|a|^2}|w\rangle,
 \qquad \langle\psi_0|w\rangle=0,\quad\|w\|=1.
\]
In the orthonormal basis $\{|\psi_0\rangle,|w\rangle\}$, its matrix is
\[
 R_z=
 \begin{pmatrix}
 |a|^2-1&a\sqrt{1-|a|^2}\\
 \overline a\sqrt{1-|a|^2}&1-|a|^2
 \end{pmatrix}.
\]
Its trace is zero and its determinant equals
\[
 -(1-|a|^2)^2-|a|^2(1-|a|^2)=-(1-|a|^2).
\]
Its characteristic polynomial is therefore $x^2-(1-|a|^2)$, and its eigenvalues are $\pm\sqrt{1-|a|^2}$. An output event is represented by an operator $E_z$ satisfying $0\le E_z\le I$, including when the event itself depends on $z$. In an eigenbasis of $R_z$, both diagonal entries of $E_z$ lie in $[0,1]$, and therefore
\[
 |\Prb[A_z\mid v_z]-\Prb[A_z\mid v_0]|
 =|\operatorname{Tr}(E_zR_z)|\le\sqrt{1-|a|^2}.
\]
Furthermore,
\begin{align*}
 1-|a|^2
 &=(1-|a|)(1+|a|)\le2(1-|a|)\\
 &\le2(1-\operatorname{Re}a)
 =\|\,|\psi_z\rangle-|\psi_0\rangle\|^2.
\end{align*}
The event-probability difference is consequently at most the vector distance. Averaging over $z$ and using the reference-event assumption now yields
\[
 \E_z\Prb[A_z\mid v_z]
 \le\E_z\Prb[A_z\mid v_0]+2T\sqrt p
 \le p_{\mathrm{out}}+2T\sqrt p.
\]

The pure-state trace distance is $\sqrt{1-|a|^2}$, as the preceding eigenvalue calculation shows. Discarding ancillary registers cannot increase trace distance, so averaging the vector-distance bound proves~\eqref{lower:lb:distance}. For a payoff $Y_z$ taking values in $[0,1]$, write $\E Y_z=\int_0^1\Prb[Y_z>t]\,dt$ and apply the event bound to each level set. This proves the payoff assertion.

For the classical claim, condition on the algorithm's random seed and couple the two runs. They have identical transcripts until the first query where their oracle answers differ. All query addresses on the reference transcript are independent of $z$, so the union bound gives probability at most $Tp$ for such a first disagreement. If no disagreement occurs, the output is the reference output. The remaining event $A_z$ has average probability at most $p_{\mathrm{out}}$, proving the bound $p_{\mathrm{out}}+Tp$ after averaging over the random seed.
\end{proof}

\subsection{Unconstrained maximization above one half}\label{lower:sec:usm}
\subsubsection{The hard family and its agreement region}
Fix \(0<\varepsilon\le1/2\), an even \(n\ge4/\varepsilon\), and
\(\tau=\lfloor\varepsilon n/2\rfloor\).
Choose a uniformly random equipartition \(N=C\mathbin{\dot\cup}D\).
For \(S\subseteq N\), write
\[
 s(S)=|S|,\qquad d_C(S)=|S\cap C|-|S\cap D|.
\]
We use the functions of \citep*[Section~4.2]{FeigeMirrokniVondrak2011}:
\begin{equation}\label{lower:eq:hidden-function}
 f_C(S)=s(S)(n-s(S))+
               \bigl(|d_C(S)|-\tau\bigr)_+^2,\qquad
 g_{\mathrm u,0}(S)=s(S)(n-s(S)).
\end{equation}
The first term depends only on the size of a query. The second
reveals the partition only when the query is sufficiently unbalanced
between its two hidden parts. For all other queries, the two value
oracles return the same integer, including every bit of its encoding.

\begin{lemma}\label{lower:lem:hidden-submodular}
Each \(f_C\) is nonnegative, normalized, symmetric under set
complementation, and submodular. Its optimum is independent of \(C\)
and equals
\begin{equation}\label{lower:eq:hidden-optimum}
 M_{\mathrm u}=n^2/4+(n/2-\tau)^2.
\end{equation}
The function \(g_{\mathrm u,0}\) is also nonnegative and submodular. The values of
both functions have exact \(O(\log n)\)-bit integer encodings.
\end{lemma}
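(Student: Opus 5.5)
The plan is to reduce everything to one-dimensional facts about concave functions of the two integers $s=s(S)$ and $d=d_C(S)$, and to check submodularity through second differences. Nonnegativity, normalization and complement symmetry are immediate from the formula. At $S=\varnothing$ we have $s=0$ and $d=0$, and the positive part vanishes because $\tau\ge0$, so $f_C(\varnothing)=0$. Complementation sends $s\mapsto n-s$, and since $|C|=|D|=n/2$ it sends $d\mapsto -d$; both terms are invariant under these substitutions.

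For submodularity I will use the local criterion: a set function is submodular iff, for every $S$ and distinct $i,j\notin S$,
\[
 f(S+i+j)-f(S+i)-f(S+j)+f(S)\le0 .
\]
This criterion follows from the marginal-decrease characterization in Section~\ref{sec:prelim} by adding elements one at a time. For the term $s(n-s)$ the second difference equals $-2$ for every pair $i,j$.

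For the term $h(d)=(|d|-\tau)_+^2$, adding an element changes $d$ by $\sigma_i=+1$ if $i\in C$ and by $\sigma_i=-1$ if $i\in D$. The second difference is therefore $h(d+\sigma_i+\sigma_j)-h(d+\sigma_i)-h(d+\sigma_j)+h(d)$. There are two cases.
\begin{itemize}
 \item If $\sigma_i=-\sigma_j$, it equals $2h(d)-h(d+1)-h(d-1)$, which is minus a symmetric discrete second difference of the convex function $h$.
 \item If $\sigma_i=\sigma_j=\sigma$, it equals $h(d+2\sigma)-2h(d+\sigma)+h(d)$, a forward second difference.
\end{itemize}

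The main obstacle is that $h$ is convex, so these second differences have the wrong sign. They have to be absorbed by the $-2$ coming from the size term, which means I need the bound $|\Delta^2 h|\le2$ on integers. This holds: $h$ is piecewise quadratic with leading coefficient $1$ on each linear piece of $|d|-\tau$ and is zero in the middle. I will compute the discrete second differences directly as $h(x+1)-2h(x)+h(x-1)\in\{0,1,2\}$ for integer $x$, with the value $1$ occurring only at the kink $|x|=\tau$. Hence the total second difference is at most $-2+2=0$ in the second case and at most $-2+0$ in the first. In the cross case the contribution has the good sign (it is $\le0$ by convexity), so the combined bound is at most $-2$.

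For the optimum I will maximize $\phi(s,d)=s(n-s)+(|d|-\tau)_+^2$ subject to feasibility, namely $|d|\le\min\{s,n-s\}$ with matching parities. For fixed $s\le n/2$ the best choice is $|d|=s$, attained by $S\subseteq C$. This gives $s(n-s)+(s-\tau)_+^2$, which is convex in $s$ on $[\tau,n/2]$, so its maximum is at an endpoint. At $s=n/2$ it equals $n^2/4+(n/2-\tau)^2$. At $s=\tau$ it equals at most $n^2/4$, which is smaller. The case $s\ge n/2$ follows by symmetry. Every optimal set is one side of the partition, and the value $M_{\mathrm u}$ does not depend on $C$.

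The reference function $g_{\mathrm u,0}$ is nonnegative, and it is submodular by the same second-difference computation (each second difference is $-2$). All values are integers bounded by $n^2$, so they have exact $O(\log n)$-bit encodings.
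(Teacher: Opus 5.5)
Your proof is correct, but you check submodularity by a different route from the paper.

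\textbf{Submodularity.} The paper extends $f_C$ to the real function $Q(x,y)=(x+y)(n-x-y)+(|x-y|-\tau)_+^2$ of the two hidden-side counts. It checks that the second partials are all $-2$ in the band $|x-y|\le\tau$, and that outside it the pure second partials vanish and the mixed one is $-4$. The first derivatives match across the boundaries. From this it concludes that $Q_x,Q_y$ are nonincreasing in both variables, and writes each discrete marginal as $\int_0^1 Q_x(x+t,y)\,dt$.

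You instead stay on the integer lattice and verify the pairwise condition directly. The size term contributes exactly $-2$. The hidden term contributes a symmetric second difference of $h$, with sign $+$ for a same-side pair and sign $-$ for a cross pair. Everything therefore reduces to $0\le\Delta^2 h\le 2$ on the integers. Your two cases are the discrete analogues of the paper's pure derivative $Q_{xx}\in\{-2,0\}$ and mixed derivative $Q_{xy}\in\{-2,-4\}$. Your version avoids the $C^1$ gluing and the integral representation of marginals. The paper's version makes the monotonicity of each marginal in both counts visible at once.

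\textbf{The optimum.} Optimizing $|d|$ for fixed $s$ and then using the endpoint argument is valid. You should say explicitly that for $s<\tau$ the value is $s(n-s)\le n^2/4<M_{\mathrm u}$. The paper's argument is shorter: $s(n-s)\le n^2/4$ and $|d_C(S)|\le n/2$ bound the two terms separately, and $S=C$ attains both bounds simultaneously. Your route does give the extra fact that $C$ and $D$ are the only maximizers, which the lemma does not need.
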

\begin{proof}
Nonnegativity is immediate. At the empty set both terms vanish, and
complementation sends \(s\) to \(n-s\) and \(d_C\) to \(-d_C\),
so it preserves the value.
To verify submodularity, extend the count function to real variables
\(x,y\in[0,n/2]\):
\[
 Q(x,y)=(x+y)(n-x-y)+(|x-y|-\tau)_+^2.
\]
On \(|x-y|\le\tau\), all its second partial derivatives are \(-2\).
On \(x-y>\tau\), it is
\[
 Q(x,y)=n(x+y)-4xy-2\tau(x-y)+\tau^2;
\]
the pure second derivatives are zero and the mixed derivative is
\(-4\). On \(y-x>\tau\), the same conclusions hold with the sign
of the linear term reversed. The first derivatives agree at the
region boundaries. It follows that \(Q_x\) and \(Q_y\) are
nonincreasing in either variable throughout the domain.

For example, the discrete marginal of adding an element of \(C\)
to a set with counts \((x,y)\) is
\[
 Q(x+1,y)-Q(x,y)=\int_0^1 Q_x(x+t,y)\,dt.
\]
Increasing either context count can only decrease this integral.
The same reasoning applies to an element of \(D\). These are
exactly the diminishing-returns inequalities for \(f_C\).
For \(g_{\mathrm u,0}\), adding an element to a set of size \(s\) has marginal
\(n-2s-1\), which decreases with the context size.

For every set, \(s(n-s)\le n^2/4\) and
\(|d_C(S)|\le n/2\). Hence \(f_C(S)\le M_{\mathrm u}\).
Both bounds are attained by \(S=C\) and by \(S=D\), proving
\eqref{lower:eq:hidden-optimum}. Finally all values are integers
between zero and \(n^2\), which proves the encoding claim.
\end{proof}

Define the ratio available at an uninformative address by
\[
 b=\frac{n^2/4}{M_{\mathrm u}}
   =\frac1{2-4a+4a^2},\qquad a=\tau/n.
\]
\begin{lemma}\label{lower:lem:hidden-gap}
If \(f_C(S)=g_{\mathrm u,0}(S)\), then \(f_C(S)/M_{\mathrm u}\le b\). Moreover,
\begin{equation}\label{lower:eq:hidden-gap}
 b\le\frac1{2-2\varepsilon+\varepsilon^2}
   \le\frac12+\frac{2\varepsilon}{3}
   <\frac12+\varepsilon.
\end{equation}
\end{lemma}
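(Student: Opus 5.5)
The plan has three parts: the uninformative-address bound, the closed form of \(b\), and the numerical chain in \eqref{lower:eq:hidden-gap}.

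\emph{First claim.} If \(f_C(S)=g_{\mathrm u,0}(S)\), then
\[
 f_C(S)=s(S)(n-s(S))\le n^2/4.
\]
Dividing by \(M_{\mathrm u}\) from \eqref{lower:eq:hidden-optimum} gives \(f_C(S)/M_{\mathrm u}\le b\) immediately. Writing \(\tau=an\), we have
\[
 M_{\mathrm u}=n^2\bigl(1/4+(1/2-a)^2\bigr)=n^2\bigl(1/2-a+a^2\bigr),
\]
which yields the displayed formula \(b=1/(2-4a+4a^2)\).

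\emph{First inequality in \eqref{lower:eq:hidden-gap}.} Bounding \(b\) from above means bounding \(2-4a+4a^2=1+(1-2a)^2\) from below. Since \(\tau=\lfloor\varepsilon n/2\rfloor\le\varepsilon n/2\), we have \(a\le\varepsilon/2\le1/4\), so \(1-2a\ge1-\varepsilon\ge0\). Hence
\[
 1+(1-2a)^2\ge1+(1-\varepsilon)^2=2-2\varepsilon+\varepsilon^2 .
\]
This holds for every \(a\) in \([0,\varepsilon/2]\), so we never need the lower bound \(a\ge\varepsilon/2-1/n\). That is why the floor causes no trouble: rounding only decreases \(a\), which helps.

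\emph{Remaining inequalities.} Next we show
\[
 \frac1{2-2\varepsilon+\varepsilon^2}\le\frac12+\frac{2\varepsilon}3 .
\]
Cross-multiplying, this is equivalent to
\[
 (1/2+2\varepsilon/3)(2-2\varepsilon+\varepsilon^2)\ge1 .
\]
Expanding the left side gives
\[
 1-\varepsilon+\varepsilon^2/2+4\varepsilon/3-4\varepsilon^2/3+2\varepsilon^3/3
 =1+\varepsilon/3-5\varepsilon^2/6+2\varepsilon^3/3 .
\]
So it suffices that \(\varepsilon/3-5\varepsilon^2/6\ge0\), that is, \(\varepsilon\le2/5\). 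The cubic term is nonnegative and only helps.

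This exposes the main obstacle: for \(\varepsilon\in(2/5,1/2]\) the quadratic term alone does not suffice. There I would keep the cubic term and check positivity of \(1/3-5\varepsilon/6+2\varepsilon^2/3\) directly. Its discriminant is \(25/36-32/36<0\), so it is positive for all \(\varepsilon\), and the inequality holds on the whole range \(0<\varepsilon\le1/2\). The final strict inequality \(1/2+2\varepsilon/3<1/2+\varepsilon\) is immediate since \(\varepsilon>0\).
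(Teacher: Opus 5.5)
Your proposal is correct and matches the paper's proof step for step. You bound the numerator by \(n^2/4\) at agreement addresses, and you use that \(2-4a+4a^2=1+(1-2a)^2\) is decreasing for \(0\le a\le\varepsilon/2\le1/4\). For the second inequality, the expansion equals \((\varepsilon/6)(2-5\varepsilon+4\varepsilon^2)\), and the quadratic factor has negative discriminant, so your separate case \(\varepsilon\le2/5\) is unnecessary.
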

\begin{proof}
At such an address the numerator is at most \(n^2/4\).
Since \(0\le a\le\varepsilon/2\le1/4\), the denominator
\(2-4a+4a^2\) decreases as \(a\) increases, which gives the first
inequality in \eqref{lower:eq:hidden-gap}.
For the second, multiply by the positive denominator and observe
that
\[
 \left(\frac12+\frac{2\varepsilon}{3}\right)
 (2-2\varepsilon+\varepsilon^2)-1
 =\frac{\varepsilon}{6}(2-5\varepsilon+4\varepsilon^2)>0.
\]
The polynomial in parentheses is positive, since its discriminant
is negative and its leading coefficient is positive.
The last inequality follows from \(\varepsilon>0\).
\end{proof}

\begin{lemma}\label{lower:lem:rare-address}
For every fixed \(S\subseteq N\), independently of the random
partition,
\begin{equation}\label{lower:eq:rare-address}
 \Prb_C[f_C(S)\ne g_{\mathrm u,0}(S)]\le p,\qquad
 p=\min\{1,2e^{-(\tau+1)^2/n}\}
   \le 2e^{-\varepsilon^2n/4}.
\end{equation}
\end{lemma}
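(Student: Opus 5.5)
The two oracles differ at \(S\) only when the second term of \eqref{lower:eq:hidden-function} is positive. That happens exactly when \(|d_C(S)|>\tau\), i.e.\ \(|d_C(S)|\ge\tau+1\), since \(d_C\) is an integer. So the plan is to bound the upper tail of \(|d_C(S)|\) for a fixed \(S\) under a uniformly random equipartition.

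\textbf{Reduction to a hypergeometric tail.} Write \(s=|S|\) and \(X=|S\cap C|\). Then \(d_C(S)=X-(s-X)=2X-s\). Because \(C\) is a uniform \(n/2\)-subset of \(N\), the variable \(X\) is the number of elements of \(S\) in a uniform sample of size \(n/2\) drawn without replacement. Equivalently, \(X\) is the sum of \(s\) draws without replacement from a population of \(n\) zero-one labels, exactly half of which equal one. Either way \(\E X=s/2\), so \(\E d_C(S)=0\), and the disagreement event is \(|X-s/2|\ge(\tau+1)/2\).

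\textbf{Applying Hoeffding.} Use the without-replacement Hoeffding bound of Lemma~\ref{card:tool:hoeffding} with \(t=(\tau+1)/2\). Sampling \(s\) entries gives the upper tail \(\exp(-2t^2/s)=\exp(-(\tau+1)^2/(2s))\). For the lower tail, apply the same bound to \(s-X\), which has the same law by the symmetry \(C\leftrightarrow D\). It is better to use whichever description has fewer draws, namely \(\min\{s,n-s\}\le n/2\) draws; complementation symmetry (\(d_C(N\setminus S)=-d_C(S)\)) lets us assume \(s\le n/2\). The two tails together give \(2\exp(-(\tau+1)^2/(2\cdot n/2))=2e^{-(\tau+1)^2/n}\). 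The trivial bound gives the \(\min\{1,\cdot\}\). The event is determined by \(S\) alone and the randomness is only in \(C\), so the bound holds for every fixed address, as the statement requires.

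\textbf{Final inequality.} Since \(\tau=\lfloor\varepsilon n/2\rfloor\), we have \(\tau+1>\varepsilon n/2\). Hence \((\tau+1)^2/n\ge\varepsilon^2n/4\), which gives \(p\le2e^{-\varepsilon^2n/4}\).

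The main obstacle is the constant. A naive use of \(s\le n\) draws loses a factor of two in the exponent. The remedy is the reduction to \(\min\{s,n-s\}\le n/2\) draws via complementation symmetry; alternatively one can view \(X\) as a sample of size \(n/2\) drawn from a population containing \(s\) ones. Care is also needed to check that both tails of \(|d_C|\) are covered, using the \(C\leftrightarrow D\) symmetry.
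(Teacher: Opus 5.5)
Your proof is correct and follows the paper's argument. Integrality reduces disagreement to \(|d_C(S)|\ge\tau+1\), the without-replacement Hoeffding bound is applied to both tails of the hypergeometric count \(|S\cap C|\), and \(\tau+1>\varepsilon n/2\) finishes. The only difference is cosmetic: the paper uses the view you list as an alternative, a sample of \(n/2\) draws from a population containing \(s\) ones, which gives the \(n/2\) denominator directly and makes your complementation reduction to \(s\le n/2\) unnecessary.
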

\begin{proof}
Let \(s=|S|\). Choosing \(C\) uniformly among the \(n/2\)-subsets
makes \(Z=|S\cap C|\) the sum of a sample of \(n/2\) elements
without replacement from a population containing \(s\) ones and
\(n-s\) zeros. Thus \(\E Z=s/2\), and
\(d_C(S)=2(Z-\E Z)\).
Hoeffding's sampling-without-replacement inequality
\citep*[Theorem~4]{Hoeffding} gives, for \(u>0\),
\[
 \Prb_C[|d_C(S)|\ge u]
 \le 2\exp(-u^2/n).
\]
By \eqref{lower:eq:hidden-function}, disagreement requires
\(|d_C(S)|>\tau\). This is an integer-valued random variable, so
the event requires \(|d_C(S)|\ge\tau+1\).
The claimed bound follows; its last inequality uses
\(\tau+1>\varepsilon n/2\).
\end{proof}

\subsubsection{Success probability and expected approximation}
We give finite-parameter bounds before taking asymptotics.
The success-probability and expected-value formulations require
slightly different final arguments.

\begin{theorem}\label{lower:thm:finite-lower}
Consider the parameters and functions in
\eqref{lower:eq:hidden-function}, with \(p\) as in
\eqref{lower:eq:rare-address}.
If a quantum algorithm makes at most \(T\) queries on every
run and returns a \((1/2+\varepsilon)\)-approximate set with
probability at least \(2/3\) on every instance, then
\[
 T\ge\frac{(2/3-p)_+}{2\sqrt p}.
\]
If instead it guarantees
\(\E_{\mathrm{alg}}f_C(S)\ge(1/2+\varepsilon)M_{\mathrm u}\) on every
instance, then
\[
 T\ge\frac{(\varepsilon/3-p)_+}{2\sqrt p}.
\]
For classical randomized algorithms, the respective bounds are
\[
 T\ge\frac{2}{3p}-1,\qquad
 T\ge\frac{\varepsilon}{3p}-1.
\]
\end{theorem}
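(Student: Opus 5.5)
We will apply Lemma~\ref{lower:lb:hybrid} with hidden parameter $z=C$ (a uniformly random equipartition), hidden oracle $v_z=f_C$, and reference oracle $v_0=g_{\mathrm u,0}$. Lemma~\ref{lower:lem:rare-address} supplies the address-wise disagreement bound $\Prb_C[f_C(S)\ne g_{\mathrm u,0}(S)]\le p$ for every fixed $S$. Lemma~\ref{lower:lem:hidden-submodular} shows that every $f_C$ is a legitimate instance with the common optimum $M_{\mathrm u}$. So the guarantee must hold on every $f_C$, and in particular on average over $C$. The work is to choose the output event $A_C$ (or payoff) and bound its probability under the reference oracle.

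For the success-probability bound, let $A_C$ be the event that the output $S$ satisfies $f_C(S)\ge(1/2+\varepsilon)M_{\mathrm u}$. By Lemma~\ref{lower:lem:hidden-gap}, any set with $f_C(S)=g_{\mathrm u,0}(S)$ has ratio at most $b<1/2+\varepsilon$. Hence $A_C$ implies $f_C(S)\ne g_{\mathrm u,0}(S)$. Under the reference oracle the output distribution is independent of $C$. Conditioning on the output $S$ and applying Lemma~\ref{lower:lem:rare-address} to that fixed $S$ therefore gives $p_{\mathrm{out}}\le p$. The hybrid lemma yields $2/3\le\E_C\Prb[A_C\mid f_C]\le p+2T\sqrt p$, which rearranges to the first bound. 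The classical variant replaces $2T\sqrt p$ by $Tp$, giving $2/3\le p+Tp$, i.e.\ $T\ge 2/(3p)-1$.

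For the expectation bound, use the payoff $Y_C=f_C(S)/M_{\mathrm u}\in[0,1]$. The payoff clause of Lemma~\ref{lower:lb:hybrid} gives $\E_C\E Y_C\le \E_C\E[Y_C\mid v_0]+2T\sqrt p$. The main step is bounding the reference expectation. For a fixed output $S$, split on agreement. On agreement $Y_C\le b\le 1/2+2\varepsilon/3$, and on disagreement $Y_C\le1$. Thus $\E_C Y_C\le b+(1-b)p\le 1/2+2\varepsilon/3+p$. Averaging over the reference output distribution gives the same bound. Combining with $\E_C\E Y_C\ge 1/2+\varepsilon$ gives $\varepsilon/3\le p+2T\sqrt p$, which is the second bound. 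The classical analogue, $\varepsilon/3\le p+Tp$, follows the same way.

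The only delicate point is that the reference-run output is independent of $C$, so that Lemma~\ref{lower:lem:rare-address} can be applied to the fixed output address. This is exactly the independence emphasized in the hybrid lemma. For the classical expectation case, we would rerun the coupling argument with a $[0,1]$ payoff rather than an event, since the classical clause of the lemma is stated only for events. Under the coupling, the payoff differs from the reference payoff by at most $1$ on the disagreement event, whose probability is at most $Tp$. The positive parts in the stated bounds only handle the vacuous regime where $p$ is large.
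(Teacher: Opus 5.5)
Your proposal is correct and follows the paper's proof essentially step for step. You use the same reference oracle $g_{\mathrm u,0}$, force every successful output into the disagreement event via Lemma~\ref{lower:lem:hidden-gap}, and get $p_{\mathrm{out}}\le p$ from the $C$-independence of the reference output; for the expectation version you use the same $[0,1]$ payoff $f_C(S)/M_{\mathrm u}$ with reference average at most $b+p$, and your explicit rerun of the classical transcript coupling for a payoff is exactly the step the paper invokes in one line.
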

\begin{proof}
Run the algorithm against the reference oracle \(g_{\mathrm u,0}\).
Its output distribution is independent of \(C\).
Lemma~\ref{lower:lem:rare-address}, applied to each fixed output set,
therefore implies
\[
 \Prb_{C,\mathrm{alg}}[f_C(S)\ne g_{\mathrm u,0}(S)
                     \text{ under the reference oracle}]
 \le p.
\]
By Lemma~\ref{lower:lem:hidden-gap}, every
\((1/2+\varepsilon)\)-approximate output must belong to this
disagreement event. Lemma~\ref{lower:lb:hybrid} consequently bounds
its actual average success probability by \(p+2T\sqrt p\).
If success is at least \(2/3\) for every \(C\), it is at least
\(2/3\) on average. Rearranging proves the first quantum bound.
The classical event bound in Lemma~\ref{lower:lb:hybrid} gives
\(2/3\le(T+1)p\), proving its classical counterpart.

For the expected-value claim, consider the normalized payoff
\(f_C(S)/M_{\mathrm u}\), which lies in \([0,1]\) by
Lemma~\ref{lower:lem:hidden-submodular}. Under the reference oracle,
it is at most \(b\) on agreement addresses and at most one
elsewhere. Its average is therefore at most \(b+p\).
For each \(C\), the difference in the expectation of any
measurement payoff in \([0,1]\) is at most the trace distance.
This follows, for example, by integrating the probabilities
of the events that the payoff exceeds \(t\), for \(0\le t\le1\).
Thus \eqref{lower:lb:distance} bounds the actual average
normalized payoff by \(b+p+2T\sqrt p\).
The required guarantee and \eqref{lower:eq:hidden-gap} imply
\[
 \varepsilon/3\le p+2T\sqrt p.
\]
Rearranging proves the second quantum bound.
For classical algorithms, the transcript coupling in
Lemma~\ref{lower:lb:hybrid} changes the expected payoff by at
most \(Tp\), giving
\(\varepsilon/3\le(T+1)p\) and the stated bound.
\end{proof}

\begin{corollary}\label{lower:cor:usm-exponential}
Fix $0<\varepsilon\le1/2$ and an even $n$. If $\varepsilon^2n$ is at least a sufficiently large absolute constant, any quantum algorithm that returns a $(1/2+\varepsilon)$-approximation with probability at least $2/3$ on every nonnegative submodular function requires
\[
 \Omega\!\left(e^{\varepsilon^2n/8}\right)
\]
queries in the worst case. For expected $(1/2+\varepsilon)$-approximation, the lower bound is
\[
 \Omega\!\left(\varepsilon e^{\varepsilon^2n/8}\right),
\]
provided $\varepsilon^2n\ge C\log(2/\varepsilon)$ for a sufficiently large universal $C$. The corresponding classical bounds are $\Omega(e^{\varepsilon^2n/4})$ and $\Omega(\varepsilon e^{\varepsilon^2n/4})$ in the respective regimes. The hard functions have exact $O(\log n)$-bit integer encodings.
\end{corollary}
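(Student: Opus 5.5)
The corollary follows from Theorem~\ref{lower:thm:finite-lower} by inserting the bound on \(p\) from Lemma~\ref{lower:lem:rare-address}. That lemma gives \(p\le 2e^{-\varepsilon^2n/4}\), so \(\sqrt p\le\sqrt2\,e^{-\varepsilon^2n/8}\). For the success-probability version we need \(p\le1/3\), which holds once \(\varepsilon^2n\ge4\log 6\); this is the absolute constant in the hypothesis. Then \((2/3-p)_+\ge1/3\), and the quantum bound becomes
\[
T\ge\frac{1/3}{2\sqrt2\,e^{-\varepsilon^2n/8}}=\Omega\!\left(e^{\varepsilon^2n/8}\right).
\]
The classical bound \(T\ge 2/(3p)-1\ge e^{\varepsilon^2n/4}/3-1\) is \(\Omega(e^{\varepsilon^2n/4})\), again using that \(\varepsilon^2n\) is large so the \(-1\) is absorbed.

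The expected-value version is the step that needs care, because now we must guarantee \(p\le\varepsilon/6\), so that \((\varepsilon/3-p)_+\ge\varepsilon/6\). From \(2e^{-\varepsilon^2n/4}\le\varepsilon/6\) we need \(\varepsilon^2n\ge4\log(12/\varepsilon)\). This is implied by \(\varepsilon^2n\ge C\log(2/\varepsilon)\) for a universal \(C\), since \(\log(12/\varepsilon)\le\log 6+\log(2/\varepsilon)\le c\log(2/\varepsilon)\) with \(\log(2/\varepsilon)\ge\log4\) when \(\varepsilon\le1/2\). The quantum bound is then \(T\ge(\varepsilon/6)/(2\sqrt2\,e^{-\varepsilon^2n/8})=\Omega(\varepsilon e^{\varepsilon^2n/8})\). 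The classical bound is \(T\ge\varepsilon/(3p)-1\ge\varepsilon e^{\varepsilon^2n/4}/6-1\). This is \(\Omega(\varepsilon e^{\varepsilon^2n/4})\) because, under the same condition with \(C\) large, \(\varepsilon e^{\varepsilon^2n/4}/6\ge2\), so the \(-1\) costs at most a factor two.

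Two hypotheses of Theorem~\ref{lower:thm:finite-lower} must also be checked: \(n\) is even by assumption, and \(n\ge4/\varepsilon\) follows because \(\varepsilon^2n\ge C'\) with \(\varepsilon\le1/2\) gives \(\varepsilon n\ge C'/\varepsilon\ge 2C'\ge4\) once \(C'\ge2\). The \(O(\log n)\)-bit integer encodings are exactly the last assertion of Lemma~\ref{lower:lem:hidden-submodular}. The hard functions are nonnegative and submodular by that same lemma, so they belong to the class quantified over in the statement. The only real work is the bookkeeping of constants in the expected-value regime; everything else is substitution.
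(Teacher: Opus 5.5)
Your proposal is correct and follows the same route as the paper: you substitute \(p\le 2e^{-\varepsilon^2n/4}\) from Lemma~\ref{lower:lem:rare-address} into Theorem~\ref{lower:thm:finite-lower}, use \(p\le1/3\) in the success-probability regime and \(p\le\varepsilon/6\) in the expected-value regime, and check \(n\ge4/\varepsilon\) and the encoding claim. Your explicit constants (\(\varepsilon^2n\ge4\log6\), \(\varepsilon^2n\ge4\log(12/\varepsilon)\), and the absorption of the \(-1\) in the classical bounds) make precise the bookkeeping that the paper leaves at ``these estimates imply the stated asymptotics.''
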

\begin{proof}
By~\eqref{lower:eq:rare-address}, $p\le2e^{-\varepsilon^2n/4}$. The first condition ensures $p\le1/3$, so Theorem~\ref{lower:thm:finite-lower} gives a quantum lower bound of $1/(6\sqrt p)$ and a classical lower bound of $2/(3p)-1$. The second condition ensures $p\le\varepsilon/6$, so the same theorem gives quantum and classical lower bounds of $\varepsilon/(12\sqrt p)$ and $\varepsilon/(3p)-1$, respectively. These estimates imply the stated asymptotics. Taking the constants sufficiently large also ensures $n\ge4/\varepsilon$, as required by the construction. The encoding claim follows from Lemma~\ref{lower:lem:hidden-submodular}.
\end{proof}

For fixed $\varepsilon$, both the classical and quantum query complexities above the one-half threshold are exponential in $n$. Exhaustive search uses $2^n$ classical queries, and quantum maximum finding over all subsets uses $O(2^{n/2})$ queries at constant success probability~\citep*{DurrHoyer1996}. These lower bounds rule out reducing the above-threshold task to polynomial quantum query complexity, while leaving the optimal exponential rates and possible quadratic improvements unresolved.

\subsection{Constant-factor approximation under a cardinality constraint}\label{lower:sec:constant-card}

\begin{theorem}[Quantum Lower Bound for Constant Approximation]\label{lower:thm:cardinality-lower}
Fix $a\in(0,1]$, and let $1\le k\le n/2$. Any quantum algorithm that, for every normalized nonnegative monotone submodular function $f$, outputs a set $S$ with
\[
 |S|\le k,\qquad f(S)\ge a\OPT_k(f)
\]
with probability at least $2/3$ must make
\[
 Q\ge\frac{1}{12\sqrt{\lceil2/a\rceil}}\sqrt{\frac nk}
 =\Omega_a\!\left(\sqrt{\frac nk}\right)
\]
value queries in the worst case. The bound already holds for capped modular functions with exact rational values encoded in $O_a(\log(k+1))$ bits.
\end{theorem}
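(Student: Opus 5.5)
The plan is to hide a single heavy element $h$ in a capped modular function, so that every successful feasible output must contain $h$ while large query sets are saturated and therefore uninformative. Lemma~\ref{lower:lb:hybrid} then turns the rarity of informative addresses into the $\sqrt{n/k}$ bound.

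Put $r=\lceil 2/a\rceil$. For a uniformly random hidden $h\in N$, define
\[
 f_h(S)=\min\Bigl\{\mathbf1_{\{h\in S\}}+\frac{|S\setminus\{h\}|}{rk},\,1\Bigr\},\qquad
 g_0(S)=\min\Bigl\{\frac{|S|}{rk},\,1\Bigr\}.
\]
Both are a nonnegative modular function capped by a constant. Such a function is normalized, monotone and submodular: its marginal is $\min\{w_e,(1-\text{current sum})_+\}$, which is nonincreasing, as in the truncation argument of Lemma~\ref{card:lemma:residual}. All values are rationals with denominator $rk$, so they have $O_a(\log(k+1))$-bit encodings.

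We have $\OPT_k(f_h)=1$, attained by $\{h\}$. Any feasible $S$ with $h\notin S$ has $f_h(S)\le k/(rk)=1/r\le a/2<a$. Hence every successful output is a feasible set containing $h$.

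The agreement region is handled as follows. If $|S|\ge rk$, both functions equal $1$. If $|S|<rk$ and $h\notin S$, they also agree. So for every fixed address,
\[
 \Prb_h[f_h(S)\ne g_0(S)]\le\Prb_h[h\in S]\le \frac{rk}{n}=:p .
\]
Under the reference oracle the output is independent of $h$. A feasible output contains $h$ with probability at most $k/n\le p$, so $p_{\mathrm{out}}\le p$. Lemma~\ref{lower:lb:hybrid} then gives
\[
 \frac23\le p+2Q\sqrt p .
\]

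Finally, split into two cases.

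\textbf{Case $p\le1/3$.} Then $1/3\le 2Q\sqrt p$, so
\[
 Q\ge\frac16\sqrt{\frac{n}{rk}},
\]
which exceeds the claimed $\frac1{12\sqrt r}\sqrt{n/k}$.

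\textbf{Case $p>1/3$.} The claimed bound is below $\frac{\sqrt3}{12}<1$, so it suffices to show $Q\ge1$. With zero queries the output is independent of $h$, so its success probability is at most $k/n\le1/2<2/3$.

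The only delicate step is checking that saturation makes all queries of size at least $rk$ exact agreements, including every bit of the encoding. This holds because both functions return the literal value $1$ there.
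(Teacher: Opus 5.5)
Your proof is correct and follows the paper's argument. You use the same capped modular family with denominator $k\lceil 2/a\rceil$, the same observation that saturation turns every address of size at least $k\lceil 2/a\rceil$ into an exact agreement, and the same application of Lemma~\ref{lower:lb:hybrid}. The only difference is bookkeeping: the paper keeps $p_{\mathrm{out}}=k/n\le 1/2$ separate from the disagreement parameter and gets $2/3\le 1/2+2Q\sqrt{\lceil 2/a\rceil k/n}$ directly. Your bound $p_{\mathrm{out}}\le p$ instead requires the case split on $p$, with the zero-query argument covering $p>1/3$; that case split is valid.
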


\begin{proof}
Put $m=\lceil2/a\rceil$ and $\beta=1/(mk)$. For a hidden element $j\in N=[n]$, define
\begin{equation}\label{lower:lb:capped-family}
 f_j(S)=\min\{1,\mathbf1_{\{j\in S\}}+\beta|S\setminus\{j\}|\},
 \qquad
 f_0(S)=\min\{1,\beta|S|\}.
\end{equation}
Here $f_0$ is a public reference function. Since $m\ge2/a$, we have $1/m\le a/2<a$. We verify the properties of this family before applying Lemma~\ref{lower:lb:hybrid}.

For each $j$, assign weight $1$ to $j$ and weight $\beta$ to every other element, and write $W_j(S)$ for the sum of weights in $S$. Then $f_j(S)=\min\{1,W_j(S)\}$. Nonnegative weights imply normalization, nonnegativity, and monotonicity. For $e\notin S$, its marginal is
\[
 f_j(e\mid S)=\min\{w_j(e),(1-W_j(S))_+\}.
\]
Indeed, if $W_j(S)\ge1$, the function is already saturated and the marginal is zero; otherwise the marginal is the smaller of the new weight and the remaining distance to one. If $A\subseteq B$ and $e\notin B$, then $W_j(A)\le W_j(B)$, so $(1-W_j(A))_+\ge(1-W_j(B))_+$. Taking the minimum with the same number $w_j(e)$ preserves this inequality and gives $f_j(e\mid A)\ge f_j(e\mid B)$. This proves submodularity. The same argument with all weights equal to $\beta$ proves every required property of $f_0$.

The singleton $\{j\}$ is feasible and has value one, while no function value exceeds one. Hence $\OPT_k(f_j)=1$. If $|S|\le k$ and $j\notin S$, then
\[
 f_j(S)=\beta|S|\le\beta k=1/m<a.
\]
Therefore every successful $a$-approximation must contain $j$.

Saturation prevents arbitrarily large query sets from revealing the hidden element. Put $K=mk-1$. If $j\notin S$, the two functions in~\eqref{lower:lb:capped-family} are equal. If $j\in S$, then $f_j(S)=1$, whereas
\[
 f_0(S)<1
 \ \Longleftrightarrow\ \beta|S|<1
 \ \Longleftrightarrow\ |S|<mk
 \ \Longleftrightarrow\ |S|\le K.
\]
The last equivalence uses that $|S|$ and $mk$ are integers. We have thus proved the exact characterization
\[
 f_j(S)\ne f_0(S)
 \ \Longleftrightarrow\ j\in S\ \text{and}\ |S|\le K.
\]
In particular, every set of size at least $mk$ is saturated under both oracles.

Choose $j$ uniformly from $N$. For any fixed query set $S$, the preceding characterization gives
\[
 \Prb_j[f_j(S)\ne f_0(S)]
 =\mathbf1_{\{|S|\le K\}}\frac{|S|}{n}
 \le\frac{\min\{K,n\}}{n}=:p.
\]
This estimate holds for every query address, including infeasible sets. Define the output event
\[
 A_j=\{|S|\le k\ \text{and}\ j\in S\}.
\]
Run the algorithm on $f_0$, and denote the probability of outputting $S$ by $q_S$; a failure symbol may be replaced by the empty set. This output distribution is independent of $j$. Therefore
\begin{align*}
 \E_j\Prb[A_j\mid f_0]
 &=\sum_{\substack{S\subseteq N\\|S|\le k}}q_S\Prb_j[j\in S]\\
 &=\sum_{\substack{S\subseteq N\\|S|\le k}}q_S\frac{|S|}{n}\\
 &\le\frac{k}{n}\sum_{\substack{S\subseteq N\\|S|\le k}}q_S
 \le\frac{k}{n}.
\end{align*}
Thus the reference-output parameter in Lemma~\ref{lower:lb:hybrid} can be taken to be $p_{\mathrm{out}}=k/n$. Keeping it separate from the query-disagreement parameter $p$ accounts for the fact that the output is a list of at most $k$ elements rather than an identified hidden element.

The algorithm succeeds with probability at least $2/3$ on every $f_j$, and success implies $A_j$. Applying Lemma~\ref{lower:lb:hybrid} gives
\[
 \frac23\le\E_j\Prb[A_j\mid f_j]
 \le\frac{k}{n}+2Q\sqrt{\frac{\min\{K,n\}}{n}}
 \le\frac12+2Q\sqrt{\frac{mk}{n}}.
\]
The last inequality uses $k\le n/2$ and $\min\{K,n\}\le mk$. Subtracting $1/2$ gives $1/6\le2Q\sqrt{mk/n}$; division by $2\sqrt{mk/n}>0$ yields the stated bound. This argument also excludes a zero-query algorithm.

Finally, the family is already finite and exact, so no rounding lemma is needed. With the public denominator $mk$, encode its values by the integers
\[
 mk f_j(S)=\min\{mk,mk\mathbf1_{\{j\in S\}}+|S\setminus\{j\}|\},
 \qquad mk f_0(S)=\min\{mk,|S|\}.
\]
These integers belong to $\{0,1,\ldots,mk\}$ and require $\lceil\log_2(mk+1)\rceil=O_a(\log(k+1))$ bits. The denominator is independent of the hidden element. Thus both the exact agreement property and the lower bound hold for the reversible digital oracle specified in Section~\ref{sec:prelim}.
\end{proof}

For $k=1$, this gives $\Omega(\sqrt n)$ queries. More generally, for any fixed $k$ and any fixed $\varepsilon\in(0,1/4)$, it matches the $\sqrt n$ dependence of our cardinality-constrained algorithm up to logarithmic factors, by taking $a=1-1/e-\varepsilon$.

\subsection{Cardinality-constrained maximization above \texorpdfstring{$1-1/e$}{one minus one over e}}\label{lower:sec:card-exponential}

\begin{theorem}[Sparse-Budget Exponential Lower Bound]\label{lower:thm:lower}
Let $0<\varepsilon<1/e$, let $1\le k\le n$, and assume $k/n\le\varepsilon$. Any quantum algorithm that, for every normalized nonnegative monotone submodular function, returns a set $S$ satisfying
\[
 |S|\le k,\qquad f(S)\ge(1-1/e+\varepsilon)\OPT_k(f)
\]
with probability at least $2/3$ has worst-case value-query complexity $Q$ satisfying
\[
 Q+1\ \ge\ \frac16\exp\!\left(\frac{\varepsilon^2 k}{16384}\right).
\]
For classical randomized algorithms with the same guarantee, the number $C$ of value queries satisfies
\[
 C+1\ \ge\ \frac23\exp\!\left(\frac{\varepsilon^2 k}{8192}\right).
\]
The hard instances have exact dyadic values encoded in $O(\log n+\log(1/\varepsilon))$ bits. In particular, when $\varepsilon^2k\to\infty$, both lower bounds are $\exp(\Omega(\varepsilon^2k))$.
\end{theorem}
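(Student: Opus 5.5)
The proof uses the hidden-structure family of \citet*{FK} together with the common hybrid bound, Lemma~\ref{lower:lb:hybrid}.

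\textbf{Hard family.} A uniformly random hidden set \(H\subseteq N\) of size \(k\) plays the role of the optimum, and the remaining elements are public. The hidden function \(f_H\) has the following features:
\begin{itemize}
 \item it depends on a query \(S\) only through \(|S\cap H|\) and \(|S\setminus H|\);
 \item it has \(f_H(H)=1\);
 \item it agrees with a public reference function \(f_0\), depending only on \(|S|\), whenever \(S\) is ``balanced'', meaning \(|S\cap H|\le (k/n)|S|+\varepsilon' k\) for some \(\varepsilon'=\Theta(\varepsilon)\);
 \item on balanced sets of size at most \(k\), the value \(f_0(S)\) is at most \(1-(1-1/n')^{k}\approx 1-1/e+\varepsilon/2\), using \(k/n\le\varepsilon\) to absorb the density correction.
\end{itemize}
Concretely, I would take the Feige--Kapralov style function \(f_0(S)=1-(1-|S|/n)^{k}\)-type coverage profile, modified so that the hidden part only matters past the imbalance threshold. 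I would then verify monotonicity and submodularity by the same two-variable-profile argument used in Lemma~\ref{lower:lem:hidden-submodular}: check that the discrete partial differences in the two counts are nonnegative and nonincreasing in both counts.

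\textbf{Digital encoding.} The original values are real, so I would add a public concave cardinality term \(\eta\,\phi(|S|)\) with \(\phi\) strictly concave and strictly increasing. This gives a uniform slack \(\Omega(\eta/n^2)\) in every monotonicity and diminishing-returns inequality. I would then round every value down to a common dyadic grid of mesh much smaller than this slack, applying the same rule to \(f_H\) and \(f_0\). This preserves the two properties and exact agreement wherever the unrounded values agree. The resulting encoding uses \(O(\log n+\log(1/\varepsilon))\) bits, and the approximation ratios move only by \(O(\eta)\), which is absorbed into \(\varepsilon\).

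\textbf{Disagreement probability.} For a fixed address \(S\), the count \(|S\cap H|\) is hypergeometric, so the without-replacement Hoeffding bound of Lemma~\ref{card:tool:hoeffding} gives
\[
\Prb_H[f_H(S)\ne f_0(S)]\le p=\exp(-\Omega(\varepsilon^2k)).
\]
Getting the deviation scale to be \(\varepsilon k\) rather than \(\varepsilon|S|\) requires care for large query sets. I would try capping the deviation threshold via saturation, so that once \(|S|\) is large both functions are already pinned to their common maximum value. This parallels the capped construction of Theorem~\ref{lower:thm:cardinality-lower}.

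\textbf{Success event and conclusion.} A feasible output achieving ratio \(1-1/e+\varepsilon\) must be unbalanced, so \(p_{\mathrm{out}}\le p\) under the reference oracle. Lemma~\ref{lower:lb:hybrid} then gives \(2/3\le p+2Q\sqrt p\), and hence \((Q+1)\ge \frac16 p^{-1/2}\). The classical bound is \(2/3\le (C+1)p\).

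\textbf{Main obstacle.} The hardest step is the joint requirement: exact agreement below the threshold, submodularity across the threshold region, and an optimum-versus-balanced gap of \(\varepsilon\). Tuning the constants there is what produces the \(16384\) in the exponent.
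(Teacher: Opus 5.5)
Your outline follows the paper's structure: a hidden $k$-set family, an address-wise without-replacement Hoeffding bound, concave-slack dyadic rounding, and the hybrid lemma with $p_{\mathrm{out}}=p$. However, the step you call the main obstacle is never secured, and the concrete details you give are wrong.

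\textbf{The hard family.} The reference function must be $g_{\mathrm c,0}(S)=1-(1-|S|/n)^{1/c}$ with $c=k/n$. Your profile $1-(1-|S|/n)^{k}$ equals $1-(1-c)^k\approx1-e^{-k^2/n}$ at $|S|=k$, which has nothing to do with $1-1/e$.

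You also should not build or modify a family and re-verify it by the two-count argument of Lemma~\ref{lower:lem:hidden-submodular}. The explicit Feldman--Kuhnle functions already supply what you need, for every $0<c\le1/2$ and $0<d<c/4$. This covers each individual pair $(n,k)$, since $c\le\varepsilon<1/2$. They give monotone submodular $f_O$ with $\OPT_k(f_O)\ge1-2d/c$ (not $f_O(O)=1$), and $f_O(S)=g_{\mathrm c,0}(S)$ whenever $|O\cap S|\le c|S|+dn(1-c)$.

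What remains is a parameter choice that makes the gap explicit. Take $\eta=\varepsilon/2$, $d=c\eta/32$, $\lambda=\eta/100$ and grid spacing $\tau\le\lambda/(10n^2)$. Using $A(c)=1-(1-c)^{1/c}\le1-1/e+c/e$ gives $1-1/e+\varepsilon>A(c)+\eta$. Then, for every feasible $S$,
\[
 (1-1/e+\varepsilon)\OPT_k(\widetilde f_O)>(A(c)+\eta)(1-\eta/16-\tau)>A(c)+\lambda\ge\widetilde g_{\mathrm c,0}(S),
\]
so success forces disagreement.

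\textbf{The large-query-set concern.} This worry is illusory, and the saturation fix should be dropped. Apply Hoeffding's without-replacement inequality with the hidden $k$-set as the sample and the $n$ membership indicators of $S$ as the population. Then, for every address $S$ and regardless of $|S|$,
\[
 \Prb_O\bigl[|O\cap S|-c|S|>dn(1-c)\bigr]\le\exp\bigl(-2d^2n^2(1-c)^2/k\bigr)=\exp\bigl(-2(d(1-c)/c)^2k\bigr).
\]
A naive saturating cap would also cost you. Suppose it is added through a concave cardinality term $h$ with $h(0)=0$ and $h(K)\ge1$. Concavity forces $h(k)\ge k/K$, so keeping a $\Theta(\varepsilon)$ gap needs $K\gtrsim k/\varepsilon$. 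Your $|S|$-based deviation bound would then give only exponent $O(\varepsilon^3k)$.

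\textbf{The constants.} With $d=c\eta/32$ and $1-c\ge1/2$, the bound above yields $p=\exp(-\varepsilon^2k/8192)$. The inequalities $\tfrac23\le(C+1)p$ and $\tfrac23\le p+2Q\sqrt p$ then give the stated $\tfrac23\exp(\varepsilon^2k/8192)$ and $\tfrac16\exp(\varepsilon^2k/16384)$. The quantum case needs the trivial case $p>1/3$ handled separately. Without fixing these parameters, your argument proves only an unspecified $\exp(-\Omega(\varepsilon^2k))$ disagreement bound, not the theorem with its explicit constants.
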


\subsubsection{The hard family and its exact agreement region}

The classical inapproximability theorem alone is insufficient for a quantum lower bound. We use the following explicit properties of its construction.

\begin{lemma}[Feldman--Kuhnle Hard Family]\label{lower:lb:fk}
Let $1\le k\le n/2$, put $c=k/n$, and choose $d\in(0,c/4)$. For every $k$-element set $O\subseteq[n]$, there is a normalized nonnegative monotone submodular function $f_O:2^{[n]}\to[0,1]$. The function
\[
 g_{\mathrm c,0}(S)=1-(1-|S|/n)^{1/c}
\]
is also normalized, nonnegative, monotone, and submodular, and these functions satisfy
\begin{align}
 \OPT_k(f_O)&\ge 1-2d/c,\label{lower:lb:opt}\\
 f_O(S)\ne g_{\mathrm c,0}(S)&\ \Longrightarrow\
 |O\cap S|>c|S|+dn(1-c),\qquad S\subseteq[n].\label{lower:lb:agreement}
\end{align}
\end{lemma}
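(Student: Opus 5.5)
This lemma restates the Feldman--Kuhnle construction, so the plan is to recall it explicitly and verify its four properties: the structural properties of \(f_O\) and \(g_{\mathrm c,0}\), the optimum bound \eqref{lower:lb:opt}, and the agreement implication \eqref{lower:lb:agreement}. For a query \(S\), write \(x=|S\cap O|/k\) and \(y=|S\setminus O|/(n-k)\); the natural ``balanced'' value is \(x=y\). The reference \(g_{\mathrm c,0}(S)=1-(1-|S|/n)^{1/c}\) depends only on \(|S|\). It is the closed form of the value that coverage-type greedy would see when every element has the same density. The hidden function \(f_O\) is defined through a two-variable profile \(\Phi(x,y)\). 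It equals \(1-(1-cx-(1-c)y)^{1/c}\) on the region where \(x\) does not exceed \(y\) by more than a slack proportional to \(d\). Outside that region it transitions, along a line of slope fixed by \(d\), to a profile that rewards \(O\)-heavy sets, so that \(\Phi(1,0)\ge1-2d/c\).

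First, I check that \(g_{\mathrm c,0}\) is normalized, monotone and submodular. The map \(t\mapsto1-(1-t)^{1/c}\) is increasing and concave on \([0,1]\) because \(1/c\ge1\), and a concave increasing function of cardinality is monotone submodular. Second, for \(f_O\), I follow the same template used in Lemma~\ref{lower:lem:hidden-submodular}: extend the counts to real variables and verify three facts.
\begin{itemize}
\item \(\Phi_x,\Phi_y\ge0\), which gives monotonicity.
\item Both partial derivatives are nonincreasing in each variable on every piece.
\item First derivatives match across the boundary, so \(\Phi\) is \(C^1\).
\end{itemize}
Discrete marginals are then integrals of nonincreasing derivatives, which yields diminishing returns exactly as in that lemma. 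Third, \eqref{lower:lb:opt} follows by evaluating \(f_O(O)=\Phi(1,0)\) and lower-bounding the modified branch at that point; the loss \(2d/c\) comes from the length of the transition region.

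Fourth, for \eqref{lower:lb:agreement} I take the contrapositive. The two functions coincide on the region where the first branch is used. The boundary of that region, written in terms of \(|O\cap S|\) and \(|S|\), is exactly \(|O\cap S|\le c|S|+dn(1-c)\). This is because \(x-y\le d\cdot\mathrm{const}\) translates, after multiplying through by \(k\) and \(n-k\), into that linear inequality in \(|O\cap S|\) and \(|S|\). I will choose the slack constant in the definition so that the translation gives precisely \(dn(1-c)\).

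The main obstacle is the second step: the concavity and cross-derivative checks for the modified branch of \(f_O\) near the boundary. On the nonstandard piece the cross partial must remain nonpositive while both first derivatives remain nonnegative. This is where the restriction \(d<c/4\) is needed. I expect to verify it by differentiating the explicit formula from \citep*{FK} and bounding the resulting expressions using \(d<c/4\). If that direct check becomes messy, I will simply cite their construction for monotonicity and submodularity, and prove only \eqref{lower:lb:opt} and \eqref{lower:lb:agreement} directly from the formula.
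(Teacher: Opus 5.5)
Your plan is essentially the paper's: it also takes $f_O$, its monotonicity and submodularity, the bound $f_O(O)\ge1-2d/c$, and the exact equality region from Feldman and Kuhnle's explicit construction (their Lemma~4.21 and Observations~4.22 and~4.15, which the paper notes hold for every fixed $0<c\le1/2$ and $0<d<c/4$, not only asymptotically), so your optional calculus re-check of submodularity is unnecessary. The one computation the paper adds is the one you describe: in your normalization $x=|S\cap O|/k$, $y=|S\setminus O|/(n-k)$, their condition $|S\cap O|/n\le c\,|S\setminus O|/(n(1-c))+d$ becomes $x-y\le d/c$, so the slack is fixed by their parameter rather than chosen freely, and multiplying by $k(n-k)/n$ gives exactly $|O\cap S|\le c|S|+dn(1-c)$.
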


\noindent
This is the construction in Section~4.2.2 of Feldman and Kuhnle~\citep*{FK}: its submodularity is Lemma~4.21, its optimum bound is Observation~4.22, and its equality region is Observation~4.15. To match their notation, set
\[
 x=|S\cap O|/n,\qquad y=|S\setminus O|/n,
 \qquad \delta'=d.
\]
Their equality condition is $x\le cy/(1-c)+d$. Its translation is
\begin{align*}
 x\le\frac{cy}{1-c}+d
 &\ \Longleftrightarrow\
 (1-c)|S\cap O|\le c|S\setminus O|+dn(1-c)\\
 &\ \Longleftrightarrow\
 (1-c)|S\cap O|\le c(|S|-|S\cap O|)+dn(1-c)\\
 &\ \Longleftrightarrow\
 |S\cap O|\le c|S|+dn(1-c).
\end{align*}
Taking the contrapositive of this agreement implication gives~\eqref{lower:lb:agreement}. Although their asymptotic inapproximability theorem fixes the rational density $c$, these properties are statements about their explicit functions for every $0<c\le1/2$ and $0<d<c/4$. They therefore apply separately to every pair $(n,k)$ here, without requiring $c$ to be bounded away from zero.

Choose $O$ uniformly among all $k$-element subsets. For a fixed, possibly infeasible set $S$, the random variable $X=|O\cap S|$ is the sum of a sample of $k$ elements drawn without replacement from the $n$ membership indicators of $S$. Thus $\E X=k|S|/n=c|S|$. The sampling-without-replacement Hoeffding bound in \citep*[Theorem~4]{Hoeffding} implies
\begin{align}
 \Prb_O[f_O(S)\ne g_{\mathrm c,0}(S)]
 &\le \Prb_O[X-\E X>dn(1-c)]\notag\\
 &\le \exp\!\left(-\frac{2d^2n^2(1-c)^2}{k}\right)\notag\\
 &=\exp\!\left(-2\left(\frac{d(1-c)}{c}\right)^2k\right).
 \label{lower:lb:rare}
\end{align}
The last equality uses $n=k/c$. This is a bound for every individual query address, not an assumption on the distribution of addresses chosen by the algorithm.

\subsubsection{Finite digital encoding that preserves exact agreement}

Directly rounding a submodular function can destroy submodularity. The following construction first inserts a public amount of slack into each relevant inequality.

\begin{lemma}[Digital Encoding]\label{lower:lb:digital}
Suppose $\{f_z\}_z$ and $g_0$ are normalized nonnegative monotone submodular functions on an $n$-element ground set, with values at most $M$. For any $0<\lambda<1$, choose a dyadic grid spacing $\tau$ with
\[
 \frac{\lambda}{20n^2}<\tau\le\frac{\lambda}{10n^2}.
\]
There exist functions $\widetilde f_z,\widetilde g_0$ with the same normalization, nonnegativity, monotonicity, and submodularity properties, taking values in $\tau\mathbb Z_{\ge0}$, such that for every $S$,
\[
 f_z(S)-\tau\le\widetilde f_z(S)\le f_z(S)+\lambda,
 \qquad
 g_0(S)-\tau\le\widetilde g_0(S)\le g_0(S)+\lambda.
\]
Moreover, $f_z(S)=g_0(S)$ implies $\widetilde f_z(S)=\widetilde g_0(S)$, and the grid indices require $O(\log n+\log((M+1)/\lambda))$ bits.
\end{lemma}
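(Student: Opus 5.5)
The approach is to add a public concave cardinality perturbation that creates strict slack in every marginal inequality, and then round every value by one deterministic dyadic rule. Because the rule depends only on the perturbed value, agreement is preserved: $f_z(S)=g_0(S)$ gives equal perturbed values and hence equal rounded values.

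\textbf{Perturbation.} For $|S|=s$, put $\phi(s)=\lambda\bigl(1-(1-1/n)^{s}\bigr)$, or any public strictly concave increasing function with $\phi(0)=0$ and $\phi(n)\le\lambda/2$. A convenient explicit choice is
\[
 \phi(s)=\frac{\lambda}{2n}\,s-\frac{\lambda}{4n^2}\,s^2 .
\]
This $\phi$ is increasing on $[0,n]$, has $\phi(n)\le\lambda/2$, has marginals $\phi(s+1)-\phi(s)\ge\lambda/(4n)$, and has second differences exactly $-\lambda/(2n^2)$.

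Set $h_z=f_z+\phi(|\cdot|)$ and $h_0=g_0+\phi(|\cdot|)$. Every marginal of $h_z$ is then at least $\lambda/(4n)$. For $A\subseteq B$ with $e\notin B$ and $|B|>|A|$, the marginal at $A$ exceeds the marginal at $B$ by at least $\lambda/(2n^2)$. When $A=B$ there is nothing to prove.

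\textbf{Rounding.} Define $\widetilde h(S)=\tau\lfloor h(S)/\tau\rfloor$. Normalization survives because $h(\varnothing)=0$. The two-sided bounds hold because $-\tau\le \widetilde h-h\le 0$ and $0\le\phi\le\lambda$, which gives $f-\tau\le\widetilde f\le f+\lambda$.

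Rounding perturbs each value by less than $\tau$, so each marginal moves by less than $2\tau$. Each difference of two marginals therefore moves by less than $4\tau\le 0.4\lambda/n^2$. This is below the slack $\lambda/(2n^2)$, so strict diminishing returns persists. Since $\lambda/(4n)>2\tau$ for $n\ge1$, all marginals stay positive, which gives monotonicity. Nonnegativity follows from monotonicity together with normalization.

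For the encoding, values are at most $M+\lambda$, so the grid indices are at most $(M+1)/\tau=O(n^2(M+1)/\lambda)$. This needs $O(\log n+\log((M+1)/\lambda))$ bits.

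The main obstacle is the submodularity margin. Submodularity must be checked only on comparable contexts $A\subsetneq B$, and for these the concave term contributes slack that depends on $|B|-|A|\ge1$. The loss must be bounded by four rounding errors against a single second difference. The constant $\tau\le\lambda/(10n^2)$ is what makes $4\tau<\lambda/(2n^2)$.

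I would double-check the slack computation. The marginal of $\phi$ at size $s$ is $\frac{\lambda}{2n}-\frac{\lambda}{4n^2}(2s+1)$, which is nonnegative for $s\le n-1$ and drops by $\lambda/(2n^2)$ per unit of size. If the constants turn out tight, I would rescale $\phi$ by a constant factor, keeping $\phi(n)\le\lambda$.
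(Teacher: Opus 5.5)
Your proof is correct and is essentially the paper's. The paper adds the public quadratic $\frac{\lambda}{n^2}(2n|S|-|S|^2)$ (your $\phi$ is exactly one quarter of it) and then floors to the dyadic grid. The only difference is how diminishing returns is checked: the paper verifies the four-term inequality for adjacent contexts and chains, while you compare marginals at arbitrary $A\subsetneq B$ directly. Your route also works, since the slack $\frac{\lambda}{2n^2}(|B|-|A|)$ only grows with the gap.

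One slip needs fixing. The smallest marginal of your $\phi$ is $\lambda/(4n^2)$, attained at $s=n-1$ as your own formula shows, not $\lambda/(4n)$. Monotonicity still survives, because floor rounding moves each marginal by less than $\tau\le\lambda/(10n^2)$, and even your cruder $2\tau\le\lambda/(5n^2)<\lambda/(4n^2)$ suffices.

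Also drop the opening claim that any strictly concave $\phi$ would do. The slack must be $\Omega(\lambda/n^2)$ to beat the rounding loss, and your first example, $\lambda(1-(1-1/n)^s)$, violates $\phi(n)\le\lambda/2$.
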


\begin{proof}
Define the public function
\[
 h(S)=\frac{\lambda}{n^2}(2n|S|-|S|^2),
 \qquad
 \widetilde f_z(S)=\tau\left\lfloor\frac{f_z(S)+h(S)}{\tau}\right\rfloor,
\]
and define $\widetilde g_0$ by the identical formula with $g_0$ in place of $f_z$. To obtain the required spacing, take
\[
 \ell=\left\lceil\log_2\frac{10n^2}{\lambda}\right\rceil,
 \qquad \tau=2^{-\ell}.
\]
Indeed, $10n^2/\lambda\le2^\ell<20n^2/\lambda$, and taking reciprocals gives the stated interval. For $s=|S|\in[0,n]$,
\[
 2ns-s^2=n^2-(n-s)^2\in[0,n^2],
\]
so $0\le h(S)\le\lambda$. Write the downward rounding error as
\[
 r_z(S)=f_z(S)+h(S)-\widetilde f_z(S)\in[0,\tau).
\]
Then $\widetilde f_z(S)=f_z(S)+h(S)-r_z(S)$ lies between $f_z(S)-\tau$ and $f_z(S)+\lambda$, proving the approximation inequalities; the same calculation applies to $g_0$.

For $e\notin S$, the marginal of $h$ equals
\begin{align*}
 h(S\cup\{e\})-h(S)
 &=\frac{\lambda}{n^2}\bigl(2n(s+1)-(s+1)^2-2ns+s^2\bigr)\\
 &=\frac{\lambda}{n^2}(2n-2s-1)
 \ge\frac{\lambda}{n^2}.
\end{align*}
Here $s\le n-1$ because $e\notin S$. Consequently
\begin{align*}
 \widetilde f_z(S\cup\{e\})-\widetilde f_z(S)
 &=f_z(e\mid S)+h(e\mid S)-r_z(S\cup\{e\})+r_z(S)\\
 &\ge0+\lambda/n^2-\tau>0.
\end{align*}
Thus monotonicity survives rounding. For distinct $e,u\notin S$, put $H(s)=\lambda(2ns-s^2)/n^2$. The elementary submodularity slack of $h$ equals
\begin{align*}
 2H(s+1)-H(s)-H(s+2)
 &=\frac{\lambda}{n^2}
 \left[-2(s+1)^2+s^2+(s+2)^2\right]\\
 &=\frac{2\lambda}{n^2};
\end{align*}
the terms linear in $s$ cancel. The corresponding four-term expression for $f_z$ is nonnegative by submodularity. The rounding contribution is
\[
 -r_z(S\cup\{e\})-r_z(S\cup\{u\})
 +r_z(S)+r_z(S\cup\{e,u\})>-2\tau.
\]
The four-term expression for $\widetilde f_z$ is therefore at least $2\lambda/n^2-2\tau>0$. Equivalently,
\[
 \widetilde f_z(e\mid S)\ge\widetilde f_z(e\mid S\cup\{u\}).
\]
To obtain the full diminishing-returns inequality, take $A\subseteq B$ and $e\notin B$, enumerate $B\setminus A=\{u_1,\ldots,u_j\}$, and apply the preceding inequality successively:
\[
 \widetilde f_z(e\mid A)
 \ge\widetilde f_z(e\mid A\cup\{u_1\})
 \ge\cdots\ge\widetilde f_z(e\mid B).
\]
This proves submodularity. At the empty set, both $f_z$ and $h$ vanish, so its rounded value is zero. Rounding a nonnegative number to the grid by the floor operation preserves nonnegativity. Every argument applies with $g_0$ in place of $f_z$.

The added function and rounding rule are independent of $z$, so they preserve every equality $f_z(S)=g_0(S)$. Finally, every grid index is at most $(M+\lambda)/\tau=O(n^2(M+1)/\lambda)$, proving the register-size bound.
\end{proof}

\subsubsection{Proof of the exponential bound}

\begin{proof}[Proof of Theorem~\ref{lower:thm:lower}]
Set $c=k/n$ and write
\[
 A(c)=1-(1-c)^{1/c},\qquad \eta=\varepsilon/2,
 \qquad d=c\eta/32,\qquad\lambda=\eta/100.
\]
The assumptions imply $0<c\le\varepsilon<1/e<1/2$, so Lemma~\ref{lower:lb:fk} applies. Also $0<d<c/4$ and $0<\lambda<1$.

First we compare the desired approximation with the reference value. For $0<c<1$, integrating the geometric series on $[0,c]$ gives
\[
 -\log(1-c)=\int_0^c\frac{dt}{1-t}
 =\sum_{j=0}^{\infty}\int_0^c t^j\,dt
 =\sum_{j=0}^{\infty}\frac{c^{j+1}}{j+1}.
\]
The interchange is justified by uniform convergence on $[0,c]$. Dividing by $c$ and removing the $j=0$ term, define
\[
 u=\frac{-\log(1-c)}c-1=\sum_{j=1}^{\infty}\frac{c^j}{j+1}.
\]
All summands are nonnegative, and $1/(j+1)\le1/2$ for $j\ge1$. Therefore
\[
 0\le u\le\frac12\sum_{j=1}^{\infty}c^j
 =\frac{c}{2(1-c)}\le c.
\]
The last inequality uses $c\le1/2$, which implies $2(1-c)\ge1$. Moreover,
\[
 (1-c)^{1/c}=\exp\!\left(\frac{\log(1-c)}c\right)=e^{-1-u},
 \qquad 1-e^{-u}=\int_0^u e^{-t}\,dt\le u.
\]
Consequently
\[
 0\le A(c)-(1-1/e)=e^{-1}(1-e^{-u})\le c/e\le\varepsilon/e.
\]
Subtracting $A(c)$ from the desired ratio now yields
\[
 (1-1/e+\varepsilon)-A(c)
 \ge\varepsilon-\varepsilon/e=\varepsilon(1-1/e)>\varepsilon/2=\eta.
\]
Thus $1-1/e+\varepsilon>A(c)+\eta$.

Apply Lemma~\ref{lower:lb:digital} to $\{f_O\}$ and $g_{\mathrm c,0}$ with this $\lambda$. For $|S|\le k=cn$, one has $1-|S|/n\ge1-c$; raising both nonnegative sides to the positive power $1/c$ and subtracting from one gives $g_{\mathrm c,0}(S)\le A(c)$. Hence
\[
 \widetilde g_{\mathrm c,0}(S)\le g_{\mathrm c,0}(S)+\lambda\le A(c)+\lambda.
\]
For every feasible $S$, Lemma~\ref{lower:lb:digital} gives $\widetilde f_O(S)\ge f_O(S)-\tau$. Maximizing this inequality over feasible sets and using~\eqref{lower:lb:opt}, the hidden optimum obeys
\[
 \OPT_k(\widetilde f_O)\ge\OPT_k(f_O)-\tau
 \ge1-2d/c-\tau=1-\eta/16-\tau.
\]
This lower bound is positive because $\eta=\varepsilon/2<1/(2e)$ and $\tau\le\eta/1000$. Since $A(c)+\eta<1-1/e+\varepsilon<1$, we have
\begin{align*}
 (A(c)+\eta)(1-\eta/16-\tau)-(A(c)+\lambda)
 &=\eta-\lambda-(A(c)+\eta)(\eta/16+\tau)\\
 &\ge\eta-\lambda-\eta/16-\tau\\
 &\ge\eta\left(1-\frac1{100}-\frac1{16}-\frac1{1000}\right)>0.
\end{align*}
Here $\tau\le\lambda/(10n^2)\le\eta/1000$. Combining these inequalities gives the explicit separation
\[
 (1-1/e+\varepsilon)\OPT_k(\widetilde f_O)
 >(A(c)+\eta)(1-\eta/16-\tau)>A(c)+\lambda.
\]
Consequently every successful feasible output must satisfy
\[
 A_O:=\{\widetilde f_O(S)\ne\widetilde g_{\mathrm c,0}(S)\}.
\]
Indeed, equality would force its value to be at most $A(c)+\lambda$, strictly below the value required by the target approximation.

Digital encoding cannot introduce a new disagreement. Substituting $d=c\eta/32$ into~\eqref{lower:lb:rare} and using $1-c\ge1/2$, for every query address $S$ we obtain
\begin{align*}
 \Prb_O[\widetilde f_O(S)\ne\widetilde g_{\mathrm c,0}(S)]
 &\le\exp\!\left(-\frac{\eta^2(1-c)^2k}{512}\right)\\
 &\le\exp\!\left(-\frac{\eta^2k}{2048}\right)
 =\exp\!\left(-\frac{\varepsilon^2k}{8192}\right)=:p.
\end{align*}
When run on $\widetilde g_{\mathrm c,0}$, the output distribution is independent of $O$. Denote its probability of outputting $S$ by $q_S$. Then
\begin{align*}
 \E_O\Prb[A_O\mid\widetilde g_{\mathrm c,0}]
 &=\sum_S q_S\Prb_O[\widetilde f_O(S)\ne\widetilde g_{\mathrm c,0}(S)]\\
 &\le p\sum_S q_S=p.
\end{align*}
One can replace a failure symbol by the empty set, so the output distribution is on subsets. The bound also includes infeasible outputs, because the address-wise estimate holds for every subset.

Success on every hidden instance implies average success at least $2/3$. Lemma~\ref{lower:lb:hybrid}, with $p_{\mathrm{out}}=p$, therefore yields
\[
 \frac23\le p+2Q\sqrt p,\qquad
 \frac23\le(C+1)p.
\]
The classical inequality gives the stated bound directly. If $p\le1/3$, the quantum inequality gives $Q\ge1/(6\sqrt p)$. If $p>1/3$, the trivial bound $Q+1\ge1$ implies $Q+1\ge1/(6\sqrt p)$. Thus in both cases
\[
 Q+1\ge\frac16p^{-1/2}
 =\frac16\exp\!\left(\frac{\varepsilon^2k}{16384}\right).
\]
Finally, the original functions have $M=1$ and $\lambda=\varepsilon/200$, so Lemma~\ref{lower:lb:digital} supplies the claimed number of output bits. All estimates are uniform in $n,k,\varepsilon$ under the stated assumptions.
\end{proof}

\section*{Acknowledgment}
X.S. and J. Z. are supported by the Quantum Science and Technology-National Science and Technology Major Project (Grant No.~2024ZD0300500), and the National Natural Science Foundation of China (Grants No.~62325210, 62272441).
P.Y. is supported by the National Natural Science Foundation of China (Grant Nos. 62332009 and 12347104), the Quantum Science and Technology—National Science and Technology Major Project (Grant No. 2021ZD0302901), the NSFC/RGC Joint Research Scheme (Grant No. 12461160276), the Natural Science Foundation of Jiangsu Province (No. BK20243060), the Fundamental and Interdisciplinary Disciplines Breakthrough Plan of the Ministry of Education of China (No. JYB2025XDXM118), the "111 Center"(No. B26023), and the Fundamental Research Funds for the Central Universities (Grant no. 2026300376). Z.Y. is supported by start-up research funding from Fuzhou University (Grant No. XRC-26004). Z.Z. is supported by the National Natural Science Foundation of China (Grants No.~62402110).

\section*{AI disclosure}
This work was developed through iterative interactions with
GPT-5.6 Sol/GPT-6 Astra, which assisted in exploring algorithmic ideas,
developing and refining proofs of the approximation and running-time
guarantees, examining connections to related work, and drafting and
revising the manuscript. The authors have verified the results
presented in this paper and take full responsibility for its content
and correctness.

\clearpage
\phantomsection
\addcontentsline{toc}{section}{References}
\bibliographystyle{plainnat}
\bibliography{quantum_submodular_maximization}
\end{document}